\def\BibTeX{{\rm B\kern-.05em{\sc i\kern-.025em b}\kern-.08em
    T\kern-.1667em\lower.7ex\hbox{E}\kern-.125emX}}
\newtheorem{theorem}{Theorem}[section]
\newtheorem{definition}[theorem]{Definition}
\newtheorem{lemma}[theorem]{Lemma}
\newtheorem{proposition}[theorem]{Proposition}
\newtheorem{claim}[theorem]{Claim}
\newtheorem{corollary}[theorem]{Corollary}
\theoremstyle{remark}
\newtheorem{remark}[theorem]{Remark}
\crefname{observation}{Observation}{Observations}
\crefname{appsec}{Appendix}{Appendix}
\newcommand{\set}[1]{\ensuremath{\mathsf{set}\big({#1}\big)}}
\newcommand{\N}{\mathbb{N}}
\newcommand{\Z}{\mathbb{Z}}
\newcommand{\Q}{\mathbb{Q}}
\newcommand{\fdiv}{\mathsf{div}}
\newcommand{\PA}{\mathsf{PA}}
\newcommand{\PFnS}{\mathsf{PFnS}}
\newcommand{\BFnS}{\mathsf{BFnS}}
\newcommand{\PASyn}{\mathsf{PSyNF}}
\newcommand{\DNNF}{\mathsf{PSySyNF}}
\newcommand{\bigO}[1]{\ensuremath{\mathcal{O}\big({#1}\big)}}
\newcommand{\circuit}{\mathcal{C}} 
\newcommand{\ckt}{\mathcal{C}} 
\newcommand{\ite}{\mathsf{ite}}
\newcommand{\lexists}{\exists^{\mathsf{local}}}
\newcommand{\coNEXP}{\mathsf{coNEXP}}
\newcommand{\coNP}{\mathsf{coNP}}
\newcommand{\NP}{\mathsf{NP}}
\newcommand{\Cp}{\mathsf{C}}
\newcommand{\Eq}{\mathsf{E}}
\DeclareMathOperator{\lcm}{lcm}
\begin{document}
\title{Presburger Functional Synthesis: Complexity and Tractable Normal Forms \thanks{Full version of paper at KR 2025 (22nd International Conference on Principles of Knowledge Representation and Reasoning).}}
\author{%
	S. Akshay$^1$\and
	A. R. Balasubramanian$^2$\and
	Supratik Chakraborty$^1$\and
	Georg Zetzsche$^2$\\
	\affiliations
	$^1$Indian Institute of Technology Bombay, Mumbai, India \\
	$^2$Max Planck Institute for Software Systems (MPI-SWS), Germany
}

\maketitle

\begin{abstract}
Given a relational specification between inputs and outputs as a logic
formula, the problem of functional synthesis is to automatically
synthesize a function from inputs to outputs satisfying the
relation. Recently, a rich line of work has emerged tackling this
problem for specifications in different theories, from Boolean to
general first-order logic. In this paper, we launch an investigation
of this problem for the theory of Presburger Arithmetic, that we call
Presburger Functional Synthesis (PFnS).
We show that PFnS can be solved in EXPTIME and provide a matching
exponential lower bound. This is unlike the case for Boolean
functional synthesis (BFnS), where only conditional exponential lower
bounds are known. Further, we show that PFnS for one input and one
output variable is as hard as BFnS in general. We then identify a
special normal form, called PSyNF, for the specification formula that
guarantees poly-time and poly-size solvability of PFnS. We prove
several properties of PSyNF, including how to check and compile to
this form, and conditions under which any other form that guarantees
poly-time solvability of PFnS can be compiled in poly-time to
PSyNF. Finally, we identify a syntactic normal form that is easier to
check but is exponentially less succinct than PSyNF.

\end{abstract}

\section{Introduction}
\label{sec:introduction}

Automated synthesis, often described as a holy grail of computer
science, deals with the problem of automatically generating correct
functional implementations from relational
specifications. Specifications are typically presented as relations,
encoded as first-order logic (FOL) formulas over a set of free
variables that are partitioned into inputs and outputs.  The goal of
automated functional synthesis is to synthesize a function from inputs
to outputs such that for every valuation of the inputs, if it is
possible to satisfy the specification, then the valuation of outputs
produced by the function also satisfies it.  The existence of such
functions, also called \emph{Skolem functions}, is well-known from the
study of first-order logic~\cite{Enderton1972,spec-logic-variants}%
. However, it is not always possible to obtain succinct representations or
efficiently executable descriptions of Skolem
functions~\cite{DBLP:conf/mfcs/ChakrabortyA22}. This has motivated researchers
to study the complexity of functional synthesis in different first-order
theories, and investigate specific normal forms for specifications that
enable efficient functional synthesis.

In the simplest setting of Boolean (or propositional) specifications,
Boolean functional synthesis (henceforth called $\BFnS$) has received
significant attention in the recent past~\cite{JSCTA15,RS16,FTV16,CFTV18,manthan,DBLP:conf/cav/AkshayCJ23,LTV24} among
others. Even for this restricted class, functional synthesis cannot
be done efficiently unless long-standing complexity theoretic
conjectures are falsified~\cite{AkshayCGKS21}. Nevertheless, several
practical techniques have been developed, including counter-example
guided approaches~\cite{JSCTA15,AkshayCGKS21,manthan,manthantwo},
input-output separation based approaches~\cite{CFTV18}, machine
learning driven approaches~\cite{manthan,manthantwo}, BDD and ZDD
based approaches~\cite{FTV16,LTV22,LTV24}. Researchers
have also studied {\em knowledge representations} or normal forms for
specifications that guarantee efficient
$\BFnS$~\cite{AkshayCGKS21,AACKRS19,DBLP:conf/cav/AkshayCJ23,DBLP:journals/amai/AkshayCS24},
with~\cite{SBAC21} defining a form that precisely characterizes when
$\BFnS$ can be solved in polynomial time and space.

Compared to $\BFnS$, work on functional synthesis in theories beyond
Boolean specifications has received far less attention, even though such theories are
widely applicable in real-life specifications. One such important
extension is to theories of linear arithmetic over reals and integers.
The work of~\cite{KMPS10,KMPS13} deals with complete functional
synthesis for quantifier-free linear real arithmetic (QF\_LRA) and
linear integer arithmetic (QF\_LIA).
Similarly,~\cite{DBLP:conf/cav/Jiang09} goes beyond Boolean
specifications, and points out that Skolem functions may not always be
expressible as terms in the underlying theory of the specification,
necessitating an extended vocabulary. For specifications in
QF\_LIA,~\cite{DBLP:conf/cp/FedyukovichG19,FGG19} build tools for
synthesizing (or extracting) Skolem functions as terms.

In this paper, our goal is to study functional synthesis from specifications in Presburger arithmetic ($\PA$ for short), that extends QF\_LIA with modular constraints. $\PA$ has been extensively studied in the literature (see~\cite{DBLP:journals/siglog/Haase18} for a survey) and admits multiple interpretations, including geometric and logic-based interpretations; see, e.g.~\cite{DBLP:conf/fsttcs/000124}. Recent work has shown significant improvements in the complexity of quantifier elimination for $\PA$; see e.g.~\cite{DBLP:conf/icalp/HaaseKMMZ24,DBLP:conf/icalp/0001MS24}.  Since $\PA$ admits effective quantifier elimination, it follows from~\cite{DBLP:conf/mfcs/ChakrabortyA22} that for every $\PA$ specification, Skolem functions for all outputs can be synthesized as halting Turing machines.  Unfortunately, this does not give good complexity bounds on the time required to compute Skolem functions. Our focus in this paper is to fill this gap by providing {\em optimal complexity results for $\PFnS$ as well as normal forms for tractable synthesis}. %

Before we proceed further, let us see an example of a $\PA$
specification, and an instance of $\PFnS$. Consider a factory with two
machines $M_1$ and $M_2$. Suppose $M_1$ must pre-process newly arrived
items before they are further processed by $M_2$. Suppose further that
$M_1$ can start pre-processing an item at any integral time instant
$k$ (in appropriate time units), and takes one time unit for
pre-processing. $M_2$, on the other hand, can start processing an item
only at every 2nd unit of time, and takes one time unit to
process. Suppose items $I_1, \ldots I_n$ arrive at times
$t_{1}, \ldots t_{n}$ respectively, and we are told that the job
schedule must satisfy three constraints. First, $M_1$ must finish
pre-processing each item exactly $1$ time unit before $M_2$ picks it up
for processing; otherwise, the item risks being damaged while waiting
for $M_2$.  In general, this requires delaying the start-time of
pre-processing $I_i$ by $\delta_i~(\ge 0)$ time units so that the
end-time of pre-processing aligns with one time instant before $2r$,
for some $r \in \mathbb{N}$.
Second, the (pre-)processing windows for different items must not
overlap. Third, the total weighted padded delay
must not exceed a user-provided cap $\Delta$, where the weight
for item $i$ is $i$.
Formalizing the above constraints in $\PA$, we obtain the specification
$\varphi~\equiv~ \varphi_1 \wedge \varphi_2 \wedge \varphi_3$, where
$\varphi_1 \equiv \bigwedge_{i=1}^n \big(t_i + \delta_i + 1 \equiv 1 \pmod 2\big)$, 
$\varphi_2 \equiv \bigwedge_{1 \le i < j \le n} \big((t_i + \delta_i + 1 < t_j + \delta_j) \vee (t_j + \delta_j + 1 < t_i + \delta_i)\big)$, 
$\varphi_3 \equiv \bigwedge_{i=1}^n (\delta_i \ge 0) \wedge \big(\sum_{i=1}^n i. \delta_i \leq \Delta\big)$.
Here, $t_1, \ldots t_n$ and $\Delta$ are input variables, while
$\delta_1, \ldots \delta_n$ are output variables.  The {\em functional
synthesis problem} then asks us to synthesize the delays, i.e., functions $f_1, \ldots f_n$ that take $t_1, \ldots t_n, \Delta$ as inputs and produce values of $\delta_1, \ldots \delta_n$ such that $\varphi$ is satisfied, whenever possible.   

\noindent{\bf Our contributions.} As a first step, we need a representation for Skolem functions, for which we propose \emph{Presburger circuits}, constructed by composing basic Presburger ``gates''. We identify a (minimal) collection of these gates such that every Presburger-definable function (closely related to those defined in~\cite{DBLP:journals/siamcomp/IbarraL81}) can be represented as a circuit made of these gates. %
Using Presburger circuits as representations for Skolem functions, we examine the complexity of $\PFnS$ and develop knowledge representations that make $\PFnS$ tractable. Our main contributions are:%
\begin{enumerate}
  \item We provide a {\em tight complexity-theoretic
    characterization} for $\PFnS$. Specifically:
    \begin{enumerate}
    \item We show that for every $\PA$ specification
	    $\varphi(\bar{x},\bar{y})$, we can construct in
    $\bigO{2^{|\varphi|^{\mathcal{O}(1)}}}$ time a Presburger
    circuit of size $\bigO{2^{|\varphi|^{\mathcal{O}(1)}}}$
    that represents a Skolem function for $\bar{y}$. This exponential upper bound 
    significantly improves upon earlier
    constructions~\cite{DBLP:journals/siamcomp/Cherniavsky76,DBLP:journals/siamcomp/IbarraL81}
    for which we argue that the resulting Presburger circuits
    would be of at least triply- or even quadruply-exponential size,
    respectively.
    \item We show that the exponential blow-up above is unavoidable, by exhibiting a family $(\mu_n)_{n\ge
        0}$ of $\PA$ specifications of size polynomial in $n$, such
      that any Presburger circuit for any Skolem function for 
      $\mu_n$ must have size at least $2^{\Omega(n)}$.  This
      unconditional lower bound for $\PFnS$ stands in contrast to the Boolean case ($\BFnS$), where lower bounds are conditional on long-standing conjectures from complexity theory.%
    \item We show that $\PFnS$ from one-input-one-output
      specifications is already as hard as $\BFnS$ in general. As a
      corollary, unless $\mathsf{NP} \subseteq 
      \mathsf{P}/\mathsf{poly}$, the size of Skolem functions for 
      one-input-one-output specifications must grow super-polynomially in
      the size of the specification in the worst-case.
    \end{enumerate}
  \item The above results imply that efficient $\PFnS$ algorithms do not exist,
    and so, we turn to {\em knowledge representations}, i.e.,
    studying normal forms of $\PA$ specifications that admit efficient Skolem
    function synthesis.  

  \begin{enumerate}
  \item For one-output $\PA$ specifications, we define the
    notion of {\em modulo-tameness}, and prove that every $y$-modulo
    tame specification $\varphi(\bar{x},y)$ admits polynomial-time
    synthesis of Presburger circuits for a Skolem function.
  \item We lift this to $\PA$ specifications with multiple
    output variables, and provide a {\em semantic normal form} called
    $\PASyn$ that enjoys the following properties:
    \begin{enumerate}
    \item $\PASyn$ is universal: every $\PA$ specification can be
      compiled to $\PASyn$ in worst-case exponential time (unavoidable
      by our hardness results above).
    \item $\PASyn$ is good for existential quantification and
      synthesis: Given any specification in $\PASyn$, we can
      effectively construct Presburger circuits for Skolem functions in
      time polynomial in the size of the specification. Additionally,
      we can also existentially quantify output variables in
      polytime.
    \item $\PASyn$ is effectively checkable: Given any $\PA$
      specification, checking if it is in $\PASyn$ in
      $\coNP$-complete. As a byproduct of independent interest, we obtain that the
      $\exists^*\forall$ fragment of $\langle\Z;+,<,0,1\rangle$ is $\NP$-complete.
    \item $\PASyn$ is optimal for one output: For every universal normal form
      of single-output $\PA$ specifications that admits polynomial-time existential quantification of the output, we can compile
      formulas in that form to $\PASyn$ in polynomial time.
    \end{enumerate}
  \item We provide a {\em syntactic normal form} for $\PA$ specifications,
    called $\DNNF$, that is universal and efficiently checkable (in
    time linear in the size of the formula), but is exponentially less
    succinct than $\PASyn$.
  \end{enumerate}
\end{enumerate}

{\em Structure.} The paper is organized as follows. In
Section~\ref{sec:prelim}, we start with preliminaries and define
the problem statement and representations in Section~\ref{sec:problem}. %
Our main complexity results for $\PFnS$ are in Section~\ref{sec:synthesis}. We
present our semantic normal forms in Section~\ref{sec:semantic-nf} and
syntactic forms in Section~\ref{sec:syntactic-nf} and conclude in Section~\ref{sec:conclusion}. 
Due to lack of space, many of the proofs and some more details have been provided in the supplementary material.

{\em Related Work.} A circuit representation similar to ours, but using
a slightly different set of gates, was (implicitly) studied
in~\cite[Theorem 6]{DBLP:journals/siamcomp/IbarraL81} in the context
of representing Presburger-definable functions. However, their
formalism is closely tied to the setting of natural numbers, making it
somewhat cumbersome in the setting of integers, for which our circuit
representation appears more natural. In addition, specialized
programming languages for describing Presburger-definable functions
have been studied in the literature, examples being
SL~\cite{DBLP:journals/jacm/GurariI81} and
$L_+$~\cite{DBLP:journals/siamcomp/Cherniavsky76}, among others.
However, because of the loopy nature of these programming languages,
such programs do not guarantee as efficient evaluation of the
functions as circuits do.

The problem of functional synthesis is intimately
related to that of quantifier elimination, and our work leverages
recent advances in quantifier elimination for $\PA$~\cite{DBLP:conf/icalp/HaaseKMMZ24,DBLP:conf/icalp/0001MS24}.
However, being able to effectively eliminate quantifiers does not
automatically yield an algorithm for synthesizing Presburger circuits. Hence, although our work bootstraps on recent results in
quantifier elimination for $\PA$, and draws inspiration from knowledge
representation for Boolean functional synthesis, the core techniques
for synthesizing Skolem functions are new. In fact, our knowledge compilation results yield a new alternative approach to quantifier elimination from $\PA$ formulas, that can result in sub-exponential (even polynomial) blow-up in the size of the original formula, if the formula is in a special form.  This is in contrast to state-of-the-art quantifier
elimination techniques~\cite{DBLP:conf/icalp/HaaseKMMZ24,DBLP:conf/icalp/0001MS24} that always yield an exponential
blow-up. %

\section{Preliminaries}
\label{sec:prelim}
\noindent \emph{Presburger Arithmetic:}
Presburger arithmetic (PA) is the first-order theory of the structure
$\langle \mathbb{Z},+,<,0,1 \rangle$. Presburger arithmetic is
well-known to admit quantifier elimination, as originally shown by Moj\.zesz Presburger in 1929~\cite{presburger-original} (see~\cite{DBLP:journals/siglog/Haase18} for a modern survey). That is, every formula in PA with quantifiers can be converted into an equivalent one without
quantifiers, at the cost of introducing \emph{modulo constraints},
which are constraints of the form $\sum_{i=1}^n a_i x_i \equiv r \pmod M$, where
$x_1,\dots,x_n$ are variables, and $a_1, \ldots a_n, r, M$ are integer
constants with $0 \le r < M$.  The constraint $\sum_{i=1}^n a_i x_i
\equiv r \pmod M$ is semantically equivalent to $\exists k \in \mathbb{Z}:~
\sum_{i=1}^n a_i x_i = kM + r$.  We say $M$ is the modulus of the constraint,
and $r$ its residue.  For notational convenience, we sometimes use
$\sum_{i=1}^m a_i x_i \equiv_M r$ for $\sum_{i=1}^m a_i x_i \equiv
r \pmod M$.  Hence, technically, we are working over the structure
$\langle \mathbb{Z},+,<, (\equiv_M)_{M \in \mathbb{Z}}, 0,1 \rangle$.
For variables $\bar{x}=(x_1,\ldots,x_n)$ and vectors $\bar{r}=(r_1,\ldots,r_n)$ of constants $r_1,\ldots,r_n\in [0,M-1]$, we will use the shorthand $\bar{x}\equiv\bar{r}\pmod{M}$ to mean $\bigwedge_{i=1}^n x_i\equiv r_i\pmod{M}$.

A linear inequality is a formula of the form $a_1 x_1 + \dots + a_n
x_n + b \ge 0$ (or equivalently, $a_1 x_1 + \dots + a_n x_n + b + 1 >
0$) for variables $x_1,\dots,x_n$ and constants $a_1,\dots,a_n,b
\in \mathbb{Z}$. An atomic formula is either a linear inequality or a
modulo constraint. Note that every quantifier-free formula over
$\langle \mathbb{Z},+,<, (\equiv_M)_{M \in \mathbb{Z}}, 0,1 \rangle$
is simply a Boolean combination of atomic formulas.  Throughout this
paper, we assume that all constants appearing in formulas are encoded
in binary.  We use variables with a bar at the top, viz.  $\bar{x}$,
to denote a tuple of variables, such as $(x_1,\dots,x_n)$. With abuse
of notation, we also use $\bar{x}$ to denote the underlying set of
variables, when there is no confusion.

A quantifier-free $\PA$ formula $\varphi(\bar{x})$ is said to be in
\emph{negation normal form (NNF)} if no sub-formulas other than atomic
sub-formulas, are negated in $\varphi$.  By applying DeMorgan's rules,
a quantifier-free $\PA$ formula can be converted to NNF in time linear
in the size of the formula.  Therefore, we assume all quantifier-free
$\PA$ formulas are in NNF.  We represent such a formula as
a \emph{tree} in which each internal node is labeled by $\land$ or
$\lor$, and each leaf is labeled by a linear inequality of the form
$\sum_{k=1}^n a_kx_k + b \ge 0$, or by a modulo constraint of the form
$\sum_{k=1}^n a_kx_k \bowtie r \pmod M$, where
$\mathord{\bowtie} \in \{\equiv, \not\equiv\}$, $a_1, \ldots a_n$,
$b$, $r$ and $M$ are integers, and $0 \le r < M$. We identify every
node $v$ in the tree with the sub-formula of $\varphi$ represented by
the sub-tree rooted at $v$.  Specifically, the root of the tree is
identified with the formula $\varphi$.  The \emph{size} of a
quantifier-free $\PA$ formula $\varphi$, denoted $|\varphi|$, is the
sum of the number of nodes in the tree representation of $\varphi$,
the number of variables, and the number of bits needed to encode each
constant in the atomic formulas in the leaves.
As an example, Fig.~\ref{fig:tree-rep} shows a tree representing the
formula $\big((3x - 2 \ge 0)\land(4x + 5y \equiv 2 \pmod 3)\big)
\lor \big((-2x + 5y + 7 \ge 0) \land (y \equiv 5 \pmod
6)\big)$.

\begin{figure}
\begin{center}
	\begin{tikzpicture}[
		node distance=2mm and 5mm,
		every node/.style={draw, rounded corners, text centered, minimum size=5mm, font=\footnotesize},
		edge from parent/.style={thick}
		]
		\node (lor) {$\lor$};

		\node[below left=of lor,xshift=-5mm] (land1) {$\land$};
		\node[below left=of land1] (f1) {\scriptsize $3x - 2 \ge 0$};
		\node[below right=of land1,xshift=-12mm] (f2) {\scriptsize $4x + 5y \equiv 2\pmod{3}$};

		\node[below right=of lor, xshift = 7mm, yshift=-10mm] (land2) {$\land$};
		\node[below left=of land2,xshift=12mm] (f3) {\scriptsize $-2x + 5y + 7 \ge 0$};
		\node[below right=of land2] (not) (f4) {\scriptsize $y \equiv 5 \pmod{6}$};

		\draw (lor) -- (land1);
		\draw (lor) -- (land2);
		
		\draw (land1) -- (f1);
		\draw (land1) -- (f2);
		
		\draw (land2) -- (f3);
		\draw (land2) -- (f4);
	\end{tikzpicture}
\end{center}
\caption{Tree representation of $\big((3x - 2 \ge 0)\land(4x + 5y \equiv 2 \pmod 3)\big)
	\lor \big((-2x + 5y + 7 \ge 0) \land (y \equiv 5 \pmod
	6)\big)$. }
\label{fig:tree-rep}
\end{figure}

\section{Presburger Functional Synthesis}
\label{sec:problem}
The central problem in this paper is \emph{Presburger functional
synthesis} ($\PFnS$).  Intuitively, we have a tuple of input variables
$\bar{x}$ and a tuple of output variables $\bar{y}$, with each
variable ranging over $\mathbb{Z}$. In addition, we are also given a
quantifier-free $\PA$ formula $\varphi(\bar{x},\bar{y})$ that we
interpret as a relational specification between the inputs and
outputs. Our task is to find (and represent) a function $f$ with
inputs $\bar{x}$ and outputs $\bar{y}$ such that the specification
$\varphi$ is satisfied by this function, whenever possible. Such a
function is called a \emph{Skolem function}. More formally:
\begin{definition}
	Let $\varphi(\bar{x},\bar{y})$ be a quantifier-free $\PA$
        formula, where $\bar{x}$ denotes $(x_1, \ldots x_n)$ and
        $\bar{y}$ denotes $(y_1, \ldots y_m)$.  A function $f\colon
        \mathbb{Z}^n \rightarrow \mathbb{Z}^m$ is called a
        \emph{Skolem function for $\bar{y}$ in $\forall \bar{x}
        \exists \bar{y}\colon \varphi(\bar{x},\bar{y})$} if for every
	value $\bar{u} \in \mathbb{Z}^n$ of $\bar{x}$, $\exists
	\bar{y}\colon\varphi(\bar{u},\bar{y})$ holds if and only if $\varphi(\bar{u},f(\bar{u}))$ holds.
\end{definition}

\vspace{-0.5cm}
\paragraph{A syntax for Skolem functions}
Since our goal is to synthesize Skolem functions, we need a syntax to represent them.
We introduce such a syntax, called {\em Presburger circuits}, which are a variant of a syntax studied implicitly by Ibarra and Leininger~\shortcite{DBLP:journals/siamcomp/IbarraL81}. The notion of Presburger circuits is designed to achieve two key properties:
\begin{enumerate}
	\item Efficient evaluation: Given a Presburger circuit for a function $f\colon\Z^n\to\Z$ and a vector $\bar{u}\in\Z^n$, one can compute $f(\bar{u})$ in polynomial time.
	\item Completeness: Every Presburger formula has a Skolem function defined by some Presburger circuit.
\end{enumerate}
Let us describe Presburger circuits in detail. A Presburger circuit consists of a set of \emph{gates}, each of which computes a function from a small set called \emph{atomic functions}. The atomic functions are
\begin{enumerate}
	\item linear functions with integer coefficients, i.e.\ $\Z^n\to\Z$, $(u_1,\ldots,u_n)\mapsto a_0+\sum_{i=1}^n a_iu_i$ for $a_0,\ldots,a_n\in\Z$.
	\item the maximum function $\max\colon\Z\times\Z\to\Z$.
	\item the equality check function, i.e.\ $\Eq\colon\Z\times\Z\to\Z$ with
		$\Eq(x,y)=y$ if $x=0$ and $\Eq(x,y)=0$ if $x\ne 0$.
	\item division functions $\fdiv_m\colon x\mapsto \lfloor x/m\rfloor$ for $m\in\N\setminus\{0\}$.
\end{enumerate}
More formally, a \emph{Presburger circuit} is a collection of gates, each
labeled either with an atomic function or with an input variable $x_i$, $i=1,\ldots,n$. If a
gate $g$ is labeled by an atomic function $f\colon\Z^k\to\Z$, then there are
$n$ edges $e_1,\ldots,e_k$, each connecting some gate $g_i$ to $g$.
Intuitively, these edges provide the input to the gate $g$. Hence,
$g_1,\ldots,g_k$ are called the \emph{input gates} of $g$. Finally, there is a
list of $m$ distinguished \emph{output gates} $g_1,\ldots,g_m$. The output gates are the ones that compute the
output vector $\in\Z^m$ of the Presburger circuit.  

A Presburger circuit must be
\emph{acyclic}, meaning the edges between gates form no cycle. This acyclicity
allows us to evaluate a Presburger circuit for a given input
$(u_1,\ldots,u_n)$: First, the gates labeled by input variables evaluate to the
respective value. Then, a gate labeled with an atomic function
$f\colon\Z^n\to\Z$ evaluates to $f(u_1,\ldots,u_n)$, where $u_i$ is the result
of evaluating the $i$-th input gate of $g$. Finally, the \emph{output} of the
Presburger circuit is the output (i.e.\ evaluation result) of the distinguished
output gates. Overall, the Presburger circuit computes a function $\Z^n\to\Z^m$.

To simplify terminology, Presburger circuits that compute Skolem functions will
also be called \emph{Skolem circuits}.

\paragraph{Properties of Presburger circuits}
First, it is obvious that a Presburger circuit can be evaluated in polynomial
time.  Moreover, we will show that Presburger circuits are expressively
complete for Skolem functions in Presburger arithmetic. Indeed, the following
is a direct consequence of \cref{thm:exponential-presburger-circuit}, which
will be shown in \cref{sec:synthesis}:
\begin{restatable}{theorem}{presburgerCircuitsCompletenessSkolem}\label{presburger-circuits-completeness-skolem}
	For every quantifier-free formula $\varphi(\bar{x},\bar{y})$, there
	exists a Skolem circuit for
	$\bar{y}$ in
	$\forall\bar{x}\exists\bar{y}\colon\varphi(\bar{x},\bar{y})$.
\end{restatable}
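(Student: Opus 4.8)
The existence of a Skolem circuit will be a consequence of the quantitative construction in \cref{sec:synthesis} (\cref{thm:exponential-presburger-circuit}), which builds an explicit Presburger circuit for a Skolem function of $\bar y$; since here we only need existence, the plan is to describe that construction in its simplest form and ignore all bookkeeping about circuit size. The construction reduces to the single-output case and then does a Cooper-style case analysis.

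\textbf{Reduction to one output.} I would induct on the number $m$ of output variables, the case $m=0$ being trivial. For the step, write $\bar y=(\bar y',y_m)$. By Presburger's quantifier-elimination theorem, $\exists y_m\colon\varphi(\bar x,\bar y',y_m)$ is equivalent to a quantifier-free formula $\psi(\bar x,\bar y')$; apply the induction hypothesis to get a Skolem circuit $C'$ for $\bar y'$ in $\forall\bar x\exists\bar y'\colon\psi$, and apply the one-output construction below to $\varphi$, viewing $(\bar x,\bar y')$ as inputs and $y_m$ as the single output, to get a circuit $C_m$. Compose them: on input $\bar x$ run $C'$ to obtain $\bar v'$, then run $C_m$ on $(\bar x,\bar v')$ to obtain $v_m$, and output $(\bar v',v_m)$. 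Unfolding the two Skolem conditions and using $\exists\bar y'\colon\psi(\bar x,\bar y')\iff\exists\bar y\colon\varphi(\bar x,\bar y)$ shows the composite is a Skolem circuit for $\bar y$. (For mere existence, iterated quantifier elimination is fine; no size control is needed here.)

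\textbf{The one-output case.} Let $\varphi(\bar z,y)$ be quantifier-free. Scale $y$ so that every atom has $y$-coefficient in $\{-1,0,1\}$ (adding one divisibility constraint on the scaled variable, with scaling factor $A$), so the linear atoms become lower bounds $y\ge t_i(\bar z)$ and upper bounds $y\le s_j(\bar z)$ with $t_i,s_j$ integer-linear in $\bar z$; let $P$ be the $\lcm$ of $A$ and of all moduli. A Cooper-style analysis shows that for any fixed $\bar u$ the solution set $\{y:\varphi(\bar u,y)\}$ is, between consecutive breakpoints $t_i(\bar u),s_j(\bar u)$, a fixed union of residue classes modulo $P$; hence whenever it is nonempty it contains one of the finitely many candidates (i) $t_i(\bar z)+\delta$ or $s_j(\bar z)+\delta$ for $|\delta|\le P$, or (ii) for each residue $\delta\in[0,P)$ the value $W-((W-\delta)\bmod P)$ with $W:=\min(\{t_i(\bar z)\}\cup\{s_j(\bar z)\})-1$ (this lies below all breakpoints and is $\equiv\delta\bmod P$, covering witnesses ``at $-\infty$''), together with the symmetric ``$+\infty$'' candidates. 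Each candidate is computed by linear gates, $\max$ (hence $\min$, since $\min(a,b)=-\max(-a,-b)$) and $\fdiv$, using $z\bmod P = z - P\cdot\fdiv_P(z)$; dividing back by the scaling factor is $\fdiv_A$. For each candidate $c$ a sub-circuit computes the $0/1$ indicator $\chi_c$ of $\varphi(\bar z,\fdiv_A(c))$ — a linear atom $t\ge 0$ becomes $\min(\max(t+1,0),1)$; a modulo atom $t\equiv r\pmod M$ becomes $\Eq(\rho,1)$ where $\rho:=(t-r)-M\cdot\fdiv_M(t-r)\in[0,M)$ (and $1-\Eq(\rho,1)$ for $\not\equiv$); and $\land,\lor$ become $\min,\max$. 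Finally, fixing an order on the candidates, set $s_c:=\min(\chi_c,\min_{c'<c}(1-\chi_{c'}))$ (so at most one $s_c$ equals $1$) and output $\sum_c \Eq(1-s_c,\fdiv_A(c))$, where $\Eq(1-s_c,\cdot)$ gates candidate $c$'s value by its selection bit; the output is $0$ when no candidate satisfies $\varphi$, which is harmless since then $\exists y\colon\varphi$ is false, and otherwise completeness of the candidate list gives correctness.

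\textbf{Main obstacle.} The two ``simulations'' are where the care goes. First, turning the semantic description of the solution set into a \emph{finite} list of candidate witnesses expressible with only the four atomic functions — in particular the witnesses ``at infinity'', which is exactly what forces $\fdiv_m$ into the repertoire so that modular residues can be extracted. Second, emulating all Boolean manipulation — atom indicators, ``select-first-true'', and gating a value by a bit — using only linear maps, $\max$, $\Eq$ and $\fdiv_m$; the workhorse identities are $\mathbb 1[t\ge 0]=\min(\max(t+1,0),1)$, $\mathbb 1[\rho=0]=\Eq(\rho,1)$ and $\Eq(1-s,v)=v$ iff $s=1$. Neither step is hard for this qualitative statement; the real work in \cref{sec:synthesis} is to run the same plan within a singly-exponential size bound, which additionally requires eliminating the whole block $\exists\bar y$ in one shot (via the recent quantifier-elimination results cited above) rather than the naive iterated elimination used here.
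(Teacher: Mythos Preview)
Your proposal is correct; the Cooper-style candidate enumeration for one output, combined with iterated quantifier elimination for the multi-output reduction, does yield a Skolem circuit, and your encoding of the Boolean bookkeeping with $\max$, $\Eq$ and $\fdiv$ is sound. One tiny corner: if no linear atom mentions $y$ at all your $W$ is undefined; just throw in the constants $0,\ldots,P-1$ as additional candidates.

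The paper takes a different route for the underlying \cref{thm:exponential-presburger-circuit}. Rather than reducing to one output and running Cooper, it eliminates modulo constraints by adding fresh output variables, brings the whole formula into DNF so that each co-clause is a system $A_i\bar y\le B_i\bar x+\bar c_i$, and then invokes the geometric bound of \cref{affine-transformations} (from \cite{DBLP:conf/icalp/HaaseKMMZ24}): every solvable system has an integral solution of the form $D(B_i\bar x+\bar c_i)+\bar d$ with $D,\bar d$ bounded \emph{independently of the right-hand side}. This gives a finite list of candidate affine maps per co-clause, and the circuit just tries them in order. The advantages are complementary: your argument is self-contained and uses only classical Cooper, whereas the paper's route handles the whole output vector in one shot---which is exactly what keeps the size singly exponential and avoids the iterated blow-up you flag in your last paragraph. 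Your approach is also closer in spirit to the paper's second derivation via the normal form $\PASyn$ (Theorems~\ref{construct-circuit-for-semantic-nf} and~\ref{construct-semantic-nf}), which likewise reduces to the single-output case, though there the one-output step is done by an interval-computing circuit rather than a Cooper candidate list.
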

Equivalently, Presburger circuits describe exactly those functions that can be
defined in Presburger arithmetic. Formally, a function $f\colon\Z^n\to\Z^m$ is
\emph{Presburger-definable} if there exists a Presburger formula
$\varphi(\bar{x},\bar{y})$, $\bar{x}=(x_1,\ldots,x_n)$,
$\bar{y}=(y_1,\ldots,y_m)$, such that for all $\bar{u}\in\Z^n$ and
$\bar{v}\in\Z^m$, we have $\varphi(\bar{u},\bar{v})$ if and only if
$f(\bar{u})=\bar{v}$. The following is an alternative characterization:
\begin{restatable}{theorem}{presburgerCircuitsCompletenessDefinable}\label{presburger-circuits-completeness-definable}
	A function $\Z^n\to\Z^m$ is computable by a Presburger circuit if and
	only if it is Presburger-definable.
\end{restatable}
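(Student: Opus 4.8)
The plan is to prove the two directions separately. The forward direction (Presburger circuit $\Rightarrow$ Presburger-definable) is the routine one: given a Presburger circuit $\ckt$ computing $f\colon\Z^n\to\Z^m$, I would introduce a fresh variable $z_g$ for every gate $g$ and write a conjunction of ``local'' constraints, one per gate, asserting that $z_g$ equals the atomic function of that gate applied to the values of its input gates. Linear gates give a linear equation; the $\max$ gate $z = \max(z_1,z_2)$ is expressed as $(z_1 \ge z_2 \wedge z = z_1)\vee(z_2 > z_1 \wedge z = z_2)$; the equality-check gate $\Eq$ is expressed by a similar two-case disjunction on whether the first input is $0$; and the division gate $z = \fdiv_m(z_1)$ is expressed as $mz \le z_1 < m(z+1)$ (equivalently $\exists r\colon z_1 = mz + r \wedge 0 \le r < m$, which is a quantifier-free modulo-style constraint after bounding $r$). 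Conjoining all these with equations $z_{x_i} = x_i$ for the input gates and then existentially quantifying all the internal gate variables (Presburger is closed under $\exists$) yields a Presburger formula $\varphi(\bar x, \bar y)$ with $\varphi(\bar u,\bar v)\iff f(\bar u)=\bar v$; alternatively one keeps the formula quantifier-free by noting the internal variables are functionally determined and folding them away, but invoking closure under quantifiers is cleaner.

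For the converse direction (Presburger-definable $\Rightarrow$ computable by a Presburger circuit), the key observation is that a Presburger-definable function $f$ is in particular a Skolem function for $\bar y$ in $\forall\bar x\exists\bar y\colon\varphi(\bar x,\bar y)$ for its defining formula $\varphi$: indeed, since $f(\bar u)$ is the unique $\bar v$ with $\varphi(\bar u,\bar v)$, the existential $\exists\bar y\colon\varphi(\bar u,\bar y)$ holds iff $\varphi(\bar u, f(\bar u))$ holds (the ``only if'' is the Skolem condition, and here it holds for every $\bar u$; when no witness exists the condition is vacuous, but $f$ being total and Presburger-definable forces totality, so this case does not even arise). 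Hence \cref{presburger-circuits-completeness-skolem} immediately gives a Skolem circuit $\ckt$ for $\bar y$ in $\forall\bar x\exists\bar y\colon\varphi(\bar x,\bar y)$. It remains to check that this circuit actually computes $f$ on every input, not merely ``a'' Skolem function's worth of values: but since $f$ is the \emph{only} Skolem function (the witness is unique at every $\bar u$), any Skolem circuit computes exactly $f$. This closes the cycle.

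The main obstacle, such as it is, is bookkeeping rather than conceptual: in the forward direction one must verify that each atomic function is expressible by a quantifier-free (or at worst boundedly existential) Presburger formula, which is where the division gate needs the small amount of care noted above, and one must be careful that the circuit is acyclic so that the system of gate equations has a unique solution for each input (guaranteeing the ``if'' direction of $\varphi(\bar u,\bar v)\iff f(\bar u)=\bar v$). In the converse direction the only subtlety is the totality/uniqueness argument pinning down that the Skolem circuit handed to us by \cref{presburger-circuits-completeness-skolem} is forced to coincide with $f$ everywhere; once that is observed, there is nothing more to do. I would present the forward direction in full and dispatch the converse in a short paragraph citing \cref{presburger-circuits-completeness-skolem}.
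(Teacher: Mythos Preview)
Your proposal is correct and matches the paper's approach essentially exactly: the paper dispatches the circuit-to-formula direction with the single remark ``it is easy to construct a Presburger formula that defines the function $\circuit$ computes'' (your gate-variable encoding is precisely the natural way to flesh this out), and for the other direction it argues, just as you do, that a Presburger-definable $f$ is the \emph{unique} Skolem function for its defining formula, so the Skolem circuit supplied by \cref{presburger-circuits-completeness-skolem} must compute $f$.
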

Note that \cref{presburger-circuits-completeness-definable} follows directly
from \cref{presburger-circuits-completeness-skolem}: If a function
$f\colon\Z^n\to\Z^m$ is Presburger-definable by some Presburger formula
$\varphi(\bar{x},\bar{y})$, then clearly $f$ is the only possible Skolem
function for $\bar{y}$ in $\forall\bar{x}\exists\bar{y}\colon\varphi(\bar{x},\bar{y})$. Hence, the circuit provided by
\cref{presburger-circuits-completeness-skolem} must compute $f$. Conversely,
given a Presburger circuit $\circuit$, it is easy to construct a Presburger
formula that defines the function $\circuit$ computes.  

\begin{remark}\label{presburger-circuit-minimality}
	In \cref{app:problem},
we also show that if we restrict the division functions to those of the form
$\fdiv_p$ for primes $p$, then (i)~one can still express the same functions and
(ii)~the set of atomic functions is \emph{minimal}. This means, removing any of
the functions $\max$, $\Eq$, or $\fdiv_p$ will result in some
Presburger-definable functions being not expressible. 
\end{remark}
\vspace{-0.3cm}
\paragraph{Presburger functional synthesis, formally}
We are ready to state our main problem of interest. 
\emph{Presburger functional synthesis} ($\PFnS$) is the following task: 
\begin{description}
	\item[Given] A quantifier-free Presburger formula
          $\varphi(\bar{x},\bar{y})$ representing a relational
          specification between $\bar{x}$ and $\bar{y}$.
	\item[Output] A Presburger circuit $\circuit$ that
          computes a Skolem function for $\bar{y}$ in
		$\forall \bar{x} \exists \bar{y}\colon \varphi(\bar{x},\bar{y})$.
\end{description}

For clarity of exposition, we also call the Presburger circuit
referred to above as a \emph{Skolem circuit}.  Intuitively, for every
possible value $\bar{u} \in \mathbb{Z}^n$ of $\bar{x}$, a Skolem circuit
$\circuit$ produces $\circuit(\bar{u}) \in \mathbb{Z}^m$ with the following
guarantee: The relational specification $\varphi(\bar{u},\circuit(\bar{u}))$ is
true iff there is some $\bar{v} \in \mathbb{Z}^m$ for which $\varphi(\bar{u},\bar{v})$
is true. Hence, the value of $\circuit(\bar{u})$ matters only when
$\exists \bar{y}:
\varphi(\bar{u}, \bar{y})$ holds.  If, however, there is no $\bar{v} \in
\mathbb{Z}^m$ with $\varphi(\bar{u},\bar{v})$, then any value produced
by $\circuit(\bar{u})$ is fine.

\begin{restatable}{remark}{definableSkolem}\label{definable-skolem}
	Every Presburger specification admits a Presburger-definable function as a Skolem function.
\end{restatable}
See \cref{app:definable-skolem} for a proof.

\section{Presburger Functional Synthesis for General Formulas}
\label{sec:synthesis}
In this section, we consider Presburger functional synthesis for arbitrary quantifier-free relational specifications. Our main results here are an exponential upper bound, as well as an exponential lower bound.

\paragraph{An exponential upper bound}
Our first main result regarding $\PFnS$ is that, given a quantifier-free formula $\varphi(\bar{x},\bar{y})$, we can synthesize a Skolem function circuit for $\bar{y}$ in $\forall \bar{x} \exists \bar{y} \varphi(\bar{x},\bar{y})$ in exponential time.

\begin{restatable}{theorem}{thmExponentialPresburgerCircuit}\label{thm:exponential-presburger-circuit}
	Given a quantifier-free formula $\varphi(\bar{x},\bar{y})$, there
	exists a Skolem circuit for $\bar{y}$ in $\forall \bar{x} \exists
	\bar{y}\colon \varphi(\bar{x},\bar{y})$.  Moreover, this circuit 
	can be constructed in time $2^{|\varphi|^{\mathcal{O}(1)}}$. 
\end{restatable}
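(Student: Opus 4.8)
The plan is a two-level construction: a self-contained subroutine for a \emph{single} output variable, and then a recursion on the number of output variables that keeps the total blow-up to a single exponential by delegating the intermediate projections to recent single-exponential quantifier elimination (QE) for $\PA$. I would begin by checking that the four atomic functions simulate enough of a ``programming language'': $\min(a,b)=-\max(-a,-b)$; for a linear term $t$ the integer indicator $\llbracket t\ge 0\rrbracket\in\{0,1\}$ is $\max(0,\min(1,t+1))$; for a congruence, $\llbracket t\equiv r \pmod M\rrbracket = \Eq\big((t-M\cdot\fdiv_M(t))-r,\,1\big)$; hence every quantifier-free formula has an indicator built from $\max$, $\min$ and $b\mapsto 1-b$, and from indicators one gets a multiplexer $\ite(\psi,a,b)=\Eq(1-\llbracket\psi\rrbracket,a)+\Eq(\llbracket\psi\rrbracket,b)$. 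Thus a Presburger circuit can ``evaluate'' any quantifier-free formula on its inputs and branch on the outcome at a cost linear in the formula's size.

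Next I would treat one output variable: given quantifier-free $\chi(\bar x,y)$, produce a Skolem circuit for $y$ in $\forall\bar x\exists y\,\chi$ by running Cooper's argument \emph{inside} a circuit. First normalise each atom so that $y$ occurs with coefficient $\pm 1$, crucially \emph{without} introducing any new modulus: an inequality $cy+t\ge 0$ with $c>0$ becomes $y\ge -\fdiv_c(t)$ via a $\fdiv_c$-gate, and a congruence $ay+t\equiv r\pmod M$ becomes, after splitting on $d=\gcd(a,M)$, a congruence on $\bar x$ together with $y\equiv(\text{linear in }\bar x)\pmod{M/d}$ (the required modular inverses are constants). Let $\delta$ be the lcm of the surviving moduli, all of which divide moduli of $\chi$, and let $t^{\ast}$ be the $\max$ over the lower-bound terms. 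If a solution exists, one exists of the form $t^{\ast}+j$ with $0\le j<\delta$ --- pushing any solution down by multiples of $\delta$ preserves all lower bounds, upper bounds and congruences --- so the circuit outputs the first such $j$ by an $\ite$-cascade of length $\delta$ over the indicator of $\chi$; symmetric cases handle ``only upper bounds'' and ``no bounds on $y$''. The key point is that the resulting circuit has size polynomial in the number of atoms and variables of $\chi$ and exponential \emph{only} in the bit-length of the moduli of $\chi$.

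For the general case I would recurse on the number $m$ of outputs. Given $\varphi(\bar x,y_1,\dots,y_m)$ with $m\ge 1$, use a single-exponential QE procedure for $\PA$ to compute a quantifier-free $\chi_1(\bar x,y_1)\equiv\exists y_2\cdots\exists y_m\,\varphi$; here I rely on the fact that single-exponential QE can be arranged so that $\chi_1$ has size $2^{|\varphi|^{\mathcal{O}(1)}}$, has coefficients of polynomial bit-length, and uses a single modulus $M_1\le 2^{|\varphi|^{\mathcal{O}(1)}}$. Applying the one-output routine to $\chi_1$ yields a circuit $C_1$ for a Skolem function $f_1$ of $y_1$ in $\forall\bar x\exists y_1\,\chi_1$, and by the size bound above (polynomial in the atom count of $\chi_1$, exponential only in the bit-length of $M_1$) we get $|C_1|\le 2^{|\varphi|^{\mathcal{O}(1)}}$. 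Then view $\varphi(\bar x,z_1,y_2,\dots,y_m)$ as a specification with inputs $(\bar x,z_1)$ and outputs $(y_2,\dots,y_m)$ --- of size $|\varphi|+\mathcal{O}(1)$ --- and recurse to get a circuit $D$; the final circuit computes $\bar x\mapsto\big(f_1(\bar x),\,D(\bar x,f_1(\bar x))\big)$, i.e.\ $D$ composed with $C_1$. Correctness is the standard chained-Skolemisation argument: if $\exists\bar y\,\varphi(\bar u,\bar y)$ then $\exists y_1\,\chi_1(\bar u,y_1)$, hence $\chi_1(\bar u,f_1(\bar u))$, i.e.\ $\exists(y_2,\dots,y_m)\,\varphi(\bar u,f_1(\bar u),\dots)$, and the recursively correct $D$ completes a satisfying assignment; the converse is immediate. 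Since the specification handed to the recursion always has size $\approx|\varphi|$ and $m\le|\varphi|$, both the size and the construction time are $2^{|\varphi|^{\mathcal{O}(1)}}$.

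The main obstacle is precisely keeping the construction at a \emph{single} exponential instead of a tower: naively eliminating the $y_i$ one at a time with Cooper's algorithm roughly squares the atom count at each step, which is exactly why the earlier constructions of \cite{DBLP:journals/siamcomp/Cherniavsky76,DBLP:journals/siamcomp/IbarraL81} end up triply- or quadruply-exponential. Escaping this requires two matching ingredients that must interlock: (i) a single-exponential QE for $\PA$ for the intermediate projections $\exists y_{i+1}\cdots y_m\,\varphi$, with the additional guarantee that the output uses essentially one modulus of polynomial bit-length; and (ii) a one-output routine whose cost is polynomial in the number of atoms and only exponential in the modulus bit-length, which forces the ``make $y$'s coefficient $\pm1$'' normalisation to go through $\fdiv$-gates rather than through multiplication by an lcm of all $y$-coefficients (the latter would reintroduce an exponential modulus whenever there are exponentially many atoms). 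Verifying that these two pieces compose so the per-level bound $2^{|\varphi|^{\mathcal{O}(1)}}$ does not compound over the $\le|\varphi|$ recursion levels --- together with the routine but nontrivial bookkeeping of coefficient growth under the substitutions $y := t^{\ast}+j$ --- is the technical heart of the argument.
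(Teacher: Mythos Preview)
Your one-output routine has a genuine error. You claim that if $\exists y\,\chi(\bar x,y)$ then some $y\in\{t^*,\ldots,t^*+\delta-1\}$ works, where $t^*$ is the \emph{maximum} of the lower-bound terms, justified by ``pushing any solution down by multiples of $\delta$ preserves all lower bounds, upper bounds and congruences''. But decreasing $y$ preserves upper bounds and congruences, not lower bounds---that is precisely why Cooper's argument lands near \emph{some} lower-bound term, not the largest one. Concretely, take $\chi(y)\equiv(y\ge 0\wedge y\le 5)\vee(y\ge 100\wedge y\le 50)$ with $\delta=1$: the solution set is $\{0,\ldots,5\}$, yet $t^*=100$ and $\chi(100)$ is false. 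Your ``symmetric cases'' split (only lower bounds / only upper bounds / neither) is also not how Cooper's theorem is structured; the correct dichotomy is: either a witness lies in $\bigcup_k[t_k,t_k+\delta)$ ranging over \emph{all} lower-bound terms $t_k$, or the ``$-\infty$'' branch applies (every sufficiently negative $y$ in the right residue class works). The fix---enumerate $t_k+j$ for every lower-bound atom $t_k$ and every $j\in[0,\delta)$, plus one $-\infty$ candidate such as $(\min_k t_k)-\delta+j$---is routine, and since the number of lower-bound atoms is at most $|\chi|$, your stated size bound (polynomial in the atom count, exponential only in the modulus bit-length) survives.

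With that repair the architecture is sound, but it is a different route from the paper's direct proof. The paper eliminates modulo constraints via fresh output variables, brings the result to DNF, and for each co-clause $A_i\bar y\le B_i\bar x+\bar c_i$ invokes the geometric bound of \cref{affine-transformations}: every integral solution has the form $D(B_i\bar x+\bar c_i)+\bar d$ for one of exponentially many pairs $(D,\bar d)$ with polynomial bit-size; the circuit simply tries them all. This handles all outputs at once---no recursion, no black-box QE. Your approach (Cooper-in-a-circuit for one output, then chained Skolemisation with the intermediate projections outsourced to single-exponential QE) is instead the normal-form route the paper develops separately (\cref{construct-semantic-nf,construct-circuit-for-semantic-nf,presburger-circuit-one-output}). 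The direct proof buys self-containedness: only \cref{affine-transformations} is needed, not a full QE procedure together with the modulus-size guarantee you rely on. Your route buys modularity. Your key observation---that re-running QE on the original $\varphi$ at each level (merely reclassifying one more output as input) keeps the per-level input at size $O(|\varphi|)$, so the single-exponential bounds do not compound over the $m\le|\varphi|$ levels---is correct and is exactly what makes the normal-form route work.
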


This exponential upper bound result improves significantly on existing methods related to constructing Presburger Skolem functions. The two related lines of work that we are aware of, namely, Presburger functions defined by Ibarra and Leininger~\cite{DBLP:journals/siamcomp/IbarraL81} and translation of Presburger-definable functions into $L_+$-programs by Cherniavsky~\cite[Thm.~5]{DBLP:journals/siamcomp/Cherniavsky76} would yield, respectively, \emph{quadruply-exponential} and \emph{triply-exponential} upper bounds. See \cref{app:comparison} for an analysis.

\Cref{thm:exponential-presburger-circuit} can also be deduced from our normal
form results (i.e.\ by using \cref{construct-circuit-for-semantic-nf} and
either \cref{construct-semantic-nf}, or \cref{construct-syntactic-nf}).
However, we find it instructive to provide a direct proof without conversion
into normal forms.

We now present a sketch of the construction of \cref{thm:exponential-presburger-circuit}. Full details can be found in \cref{app:exponential-presburger-circuit}. The crux of our approach is to use the geometric insight underlying a recent quantifier elimination technique in \cite{DBLP:conf/icalp/HaaseKMMZ24}. This geometric insight refines solution bounds to systems $A\bar{x}\le\bar{b}$ of linear inequalities. Standard bounds provide a solution that is small compared to $\|A\|$ and $\|\bar{b}\|$. The bound from \cite{DBLP:conf/icalp/HaaseKMMZ24} even applies when $\|\bar{b}\|$ itself cannot be considered small. Instead, the result
provides a solution that is ``not far from $\bar{b}$'': The solution can be
expressed as an affine transformation of $\bar{b}$ with small coefficients. 
\newcommand{\binnorm}[1]{\|#1\|_{\mathsf{bin}}}
\newcommand{\fracnorm}[1]{\|#1\|_{\mathsf{frac}}}
To state the result, we need some notation. For a rational number $r\in\Q$, its
\emph{fractional norm} $\fracnorm{r}$ is defined as $|a|+|b|$, where
$\tfrac{a}{b}=r$ is the unique representation with co-prime $a,b$. The
fractional norm of vectors and matrices, written $\fracnorm{A}$ and
$\fracnorm{\bar{x}}$, is then the maximum of the fractional norms of all
entries. The geometric insight is the following, which appeared in \cite[Prop.~4.1]{DBLP:conf/icalp/HaaseKMMZ24}.
\begin{proposition}\label{affine-transformations}
	Let $A\in\Z^{\ell\times n}$ and $\bar{b}\in\Z^\ell$, and let $\Delta$
	be an upper bound on the absolute values of the subdeterminants of $A$.
	If the system $A\bar{x}\le\bar{b}$ has an integral solution, then it
	has an integral solution of the form $D\bar{b}+\bar{d}$, where
	$D\in\Q^{n\times\ell}$, $\bar{d}\in\Q^{n}$ with
	$\fracnorm{D},\fracnorm{\bar{d}}\le n\Delta^2$. 
\end{proposition}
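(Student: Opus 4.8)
The plan is to exhibit an integral point of the polyhedron $P_{\bar b}=\{\bar x\in\mathbb{R}^n:A\bar x\le\bar b\}$ (when one exists) by starting from a \emph{vertex} --- which is automatically a linear image of $\bar b$ with Cramer-type coefficients --- and then correcting it to a nearby integral point by a bounded amount that depends on $\bar b$ only through a fixed residue. As a preliminary step I would reduce to the case that $A$ has full column rank: if not, $P_{\bar b}$ has a nontrivial lineality space spanned by primitive integer vectors whose entries are minors of $A$, hence of absolute value at most $\Delta$; fixing a spanning set of ``free'' coordinates to integer values in a window of width $\le\Delta$ preserves integral feasibility and the subdeterminant bound, reduces the number of variables, and contributes only bounded constants to $\bar d$. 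So assume $P_{\bar b}\neq\emptyset$ is pointed.

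Now pick a vertex $\bar v$ of $P_{\bar b}$. It is the unique solution of $A'\bar x=\bar b'$ for some invertible $n\times n$ submatrix $A'$ of $A$ and the corresponding subvector $\bar b'$ of $\bar b$, so Cramer's rule gives $\bar v=(A')^{-1}\bar b'=D_0\bar b$, where $D_0=(A')^{-1}S$ and $S$ is the $0/1$ row-selection matrix with $S\bar b=\bar b'$. Since $(A')^{-1}=\tfrac1{\det A'}\mathrm{adj}(A')$ has every entry a cofactor of $A'$ over $\det A'$ --- both subdeterminants of $A$, of absolute value $\le\Delta$ --- we get $\fracnorm{D_0}\le 2\Delta$. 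If $\bar v\in\Z^n$ we are done with $\bar d=\bar 0$. Otherwise I would look for an integral point of $P_{\bar b}$ of the form $\bar z=(A')^{-1}(\bar b'-\bar s)=\bar v-(A')^{-1}\bar s$ with $\bar s\ge\bar 0$ a small integer vector. Here $\bar z\in\Z^n$ is equivalent to the congruence system $\mathrm{adj}(A')\bar s\equiv\mathrm{adj}(A')\bar b'\pmod{\det A'}$ on $\bar s$, which is solvable (any integral $\bar x^{\dagger}\in P_{\bar b}$ yields the integral, nonnegative solution $\bar s^{\dagger}=\bar b'-A'\bar x^{\dagger}$, using $P_{\bar b}\subseteq\{A'\bar x\le\bar b'\}$), and one may shift any solution coordinatewise by multiples of $\det A'$ into the window $[0,|\det A'|)^n\subseteq[0,\Delta)^n$ without changing the congruences. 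It then remains to choose, among the finitely many window solutions, one for which $\bar z=\bar v-(A')^{-1}\bar s$ lies in all of $P_{\bar b}$ (not just in the $A'$-cone); a proximity/sensitivity bound of Cook--Gerards--Schrijver--Tardos type --- $P_{\bar b}\cap\Z^n\neq\emptyset$ forces an integral point of $P_{\bar b}$ within $\ell_\infty$-distance $n\Delta$ of $\bar v$ --- is what makes this selection possible, and uniform in $\bar b$, so that $\bar s$ becomes a fixed constant vector depending only on the combinatorial type of $\bar v$ and on $\bar b'\bmod\det A'$. Then $\bar z=D_0\bar b+\bar d$ with $\bar d=-(A')^{-1}\bar s$; each entry of $\bar d$ is $\tfrac1{\det A'}$ times a sum of $n$ products of a cofactor ($\le\Delta$) with an entry of $\bar s$ ($<\Delta$), so $\fracnorm{\bar d}\le n\Delta^2$, which together with $\fracnorm{D_0}\le 2\Delta$ and the bounded constants from the full-rank reduction gives the asserted bound $n\Delta^2$. (An alternative route is a Gomory-style iterated rounding: minimise the coordinates over $P_{\bar b}$ one at a time, each step rounding a Cramer expression up to the next integer and using that adjoining an equality $x_i=c$ keeps all subdeterminants $\le\Delta$; this stays in $P_{\bar b}$ by construction but needs its own coefficient bookkeeping.)

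\textbf{Main obstacle.} The delicate point is exactly the correction step: one must simultaneously (i) keep the rounded point inside $P_{\bar b}$ rather than merely inside the $A'$-tangent cone at $\bar v$ --- which is why a genuine proximity estimate, not a bare cone argument, is needed --- and (ii) realise the correction as a single affine function of $\bar b$, one constant per combinatorial type and residue class modulo $\det A'$, while keeping its fractional norm within $n\Delta^2$. Getting (ii) right hinges on the double role of $\Delta$: it bounds both the ``shape'' of a vertex (via Cramer's rule, giving the $2\Delta$ for $D_0$) and the ``granularity'' of the lattice of integral points near that vertex (via the adjugate-modulo-determinant congruences, giving a residue window of width $<\Delta$), and the product of these two $\Delta$-sized quantities, with an extra factor $n$ from the length of the adjugate--residue sums, is what produces $n\Delta^2$.
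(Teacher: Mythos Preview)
The paper does not prove this proposition; it is quoted from \cite[Prop.~4.1]{DBLP:conf/icalp/HaaseKMMZ24} and used as a black box. Your overall strategy---take a vertex via Cramer's rule, then correct to a nearby integral point using Cook--Gerards--Schrijver--Tardos proximity---is the standard route and is essentially how the cited source proceeds.

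There is, however, a real gap exactly where you flag the delicate point. You restrict the correction $\bar s$ to the window $[0,|\det A'|)^n$ so that $|\bar s_j|<\Delta$ can feed into your bound on $\fracnorm{\bar d}$, and then appeal to proximity to argue that \emph{some} window $\bar s$ yields $\bar z=\bar v-(A')^{-1}\bar s\in P_{\bar b}$. But proximity only hands you an integral $\bar z^*\in P_{\bar b}$ with $\|\bar z^*-\bar v\|_\infty\le n\Delta$; the corresponding $\bar s^*=\bar b'-A'\bar z^*$ satisfies $\|\bar s^*\|_\infty\le n\cdot\|A'\|_\infty\cdot n\Delta\le n^2\Delta^2$ (entries of $A'$ are $1\times 1$ minors, hence $\le\Delta$), which need not lie in the window, and reducing $\bar s^*$ modulo $\det A'$ preserves integrality of $\bar z$ but can destroy membership in $P_{\bar b}$. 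The clean fix is to drop the window detour and set $\bar d:=\bar z^*-\bar v$ directly: its entries have absolute value $\le n\Delta$ and denominators dividing $\det A'\le\Delta$, so the reduced numerators are at most $n\Delta^2$, giving the required bound on $\fracnorm{\bar d}$. Your final bookkeeping was computing the right shape of bound, just with an unjustified premise that the correction is $<\Delta$ per coordinate in the $\bar s$-parametrisation rather than $\le n\Delta$ per coordinate in $\bar d$ itself. (The full-rank reduction is also somewhat sketchy---kernel vectors of $A$ have entries bounded in terms of $\Delta$, but they are not literally ``minors of $A$''---though that part is routine to make rigorous.)
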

Crucially, the bound $n\Delta^2$ only depends on $A$, not on $\bar{b}$.
By the Hadamard bound for the determinant~\cite{Had93}, this means the
number of bits in the description of $D$ and $\bar{d}$ is polynomial in the
number of bits in $A$.

\begin{proof}[Proof sketch of \cref{thm:exponential-presburger-circuit}]
	(A full proof can be found in \cref{app:exponential-presburger-circuit}.) To apply \cref{affine-transformations}, we first remove modulo constraints in
	$\varphi$, in favor of new output variables. For example, a constraint
	$x_1\equiv a\bmod{b}$ is replaced with $x_1=by'+a$, where $y'$ is a fresh
	output variable. These new output variables can just be ignored in the end, to
	yield a circuit for the original formula. By bringing $\varphi$ into DNF,
	we may assume that $\varphi$ is a disjunction of $r$-many systems of
	inequalities $A_i\bar{y}\le B_i\bar{x}+\bar{c}_i$%
	. Here, $r$ is at most
	exponential, and each $A_i$, $B_i$, and $\bar{c}_i$ has at most polynomially
	many bits.
	
	Now for each $i\in[1,r]$, \cref{affine-transformations} yields $s$-many
	candidate pairs $(D_{i,j},\bar{d}_{i,j})$ for solutions $\bar{y}$ to
	$A_i\bar{y}\le B_i\bar{x}+\bar{c}_i$. Here, $s$ is at most exponential, and we
	know that if the system has a solution for a given $\bar{x}$, then it has one
	of the form $D_{i,j}(B_i\bar{x}+\bar{c}_i)+\bar{d}_{i,j}$ for some
	$j=1,\ldots,s$.

	Our circuit works as follows. The idea is to try for each $(i,j)$, in
	lexicographical order, whether
	$\sigma_{i,j}(\bar{x}):=D_{i,j}(B_i\bar{x}+\bar{c}_i)+\bar{d}_{i,j}$ is
	an integral solution to $A_i\bar{y}\le B_i\bar{x}+\bar{c}_i$. In this
	case, let us say that $(i,j)$ is a \emph{solution}. If $(i,j)$ is a
	solution,  then our circuit outputs $\sigma_{i,j}(\bar{x})$. In order
	to check if $(i,j)$ is a solution, we  need to check two things:
	whether (a)~$\sigma_{i,j}(\bar{x})$ is an integer vector and
	(b)~whether it satisfies $A_i\sigma_{i,j}(\bar{x})\le
	B_i\bar{x}+\bar{c}_i$. Note that (a) is necessary because $D_{i,j}$ and
	$\bar{d}_{i,j}$ are over the rationals. However, we can check
	integrality of $\sigma_{i,j}(\bar{x})$ by way of $\fdiv$ gates. To
	check (b), our circuit computes all entries of the vector
	$B_i\bar{x}+\bar{c}_i-A_i\sigma_{i,j}(\bar{x})$. Using summation, $\max$, and
	$\Eq$ gates, it then computes the number of entries that are $\ge 0$.
	If this number is exactly the dimension of the vector (which can be
	checked with an $\Eq$ gate), $(i,j)$ is a solution.
	
	To implement the lexicographic traversal of all $(i,j)$, we have for
	each $(i,j)\in[r,s]$ a circuit that computes the function
	$F_{i,j}(\bar{x})$, which returns $1$ if and only if (i)~$(i,j)$ is a
	solution, and (ii)~for all $(r,s)$ that are lexicographically smaller
	than $(i,j)$, the pair $(r,s)$ is not a solution. Based on this, we can
	compute the function $S_{i,j}(\bar{x})$, which returns
	$\sigma_{i,j}(\bar{x})$ if $(i,j)$ is a solution, and zero otherwise.
	Note that $S_{i,j}(\bar{x})$ is non-zero for at most one pair
	$(i,j)$.  Finally, we define $f(\bar{x})$ to sum
	up $S_{i,j}(\bar{x})$ over all $(i,j)\in[1,r]\times[1,s]$.
	Then, $f$ is clearly a Skolem function for $\varphi$.
\end{proof}
\vspace{-0.3cm}
\begin{remark}
	Our construction even yields a circuit of polynomial depth,
	and where all occurring coefficients (in linear combination gates) have
	at most polynomially many bits. %
\end{remark}

\paragraph{An exponential lower bound}
The second main result of this section is a matching exponential lower bound.
\begin{restatable}{theorem}{thmExpLowerBound}\label{thm:exp-lower-bound}
	There are quantifier-free formulas $(\mu_n)_{n\ge 0}$ such that
	any Skolem circuit for $\mu_n$ has size at
	least $2^{\Omega(n)}$
\end{restatable}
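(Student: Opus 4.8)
The plan is to construct formulas $\mu_n$ whose unique (or forced) Skolem function has an output that requires exponentially many ``distinct behaviors'' in a way no polynomial-size Presburger circuit can realize. A clean route is to encode a problem whose answer, as a function of a small input, is a number with $2^{\Omega(n)}$ bits or that has a ``high-resolution'' structure — for instance, a function that on inputs $x \in \{0,1,\dots,2^n-1\}$ (encoded in $\mathcal{O}(n)$ bits) must output something like $\lfloor x^2 \rfloor$ or the reversal of the bits of $x$, which is \emph{not} Presburger-definable as a single affine-piecewise function on an exponentially large domain. Concretely, I would use a specification of the form $\mu_n(\bar x, \bar y) \equiv \psi_n(\bar x) \wedge (\bar y = g_n(\bar x))$ where $\psi_n$ pins $\bar x$ down to an exponential-size finite set and $g_n$ is forced to be a function (so the Skolem function is \emph{unique} on the relevant inputs), chosen so that $g_n$ behaves like a genuinely ``non-linear'' map on that set. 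The first step is to pick such a $g_n$ and write $\mu_n$ of size $\mathrm{poly}(n)$; this is where one must be careful that $\mu_n$ is actually expressible with polynomially many bits (binary encoding of constants helps: a constant like $2^n$ costs only $\mathcal{O}(n)$ bits, and a modulo constraint mod $2^n$ is cheap).

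The core of the argument is the lower bound on circuit size. Here the key structural fact is that a Presburger circuit of size $s$ computes a Presburger-definable function, and more importantly, a \emph{small} such circuit computes a function that is ``simple'' in a quantifiable sense: each $\fdiv_m$ gate has modulus $m$ with $\mathcal{O}(s)$ bits, each linear gate has coefficients with $\mathcal{O}(s)$ bits, and $\max$/$\Eq$ gates introduce at most polynomially many case splits each. By induction on the circuit structure, a size-$s$ circuit partitions $\Z^n$ into at most $2^{\mathcal{O}(s)}$ regions (each cut out by linear inequalities and modulo constraints with poly$(s)$-bit constants) on each of which the output is a single affine function of $\bar x$. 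So my second step is to prove this normal-form lemma: every function computed by a size-$s$ Presburger circuit is, on each piece of an explicit partition into $2^{\mathcal{O}(s)}$ affine-modular pieces, an affine map with poly$(s)$-bit coefficients. The third step is a counting/rigidity argument: the chosen $g_n$ cannot be captured by $2^{o(n)}$ affine pieces over its domain — e.g.\ if $g_n(x)$ agreed with a single affine function $ax+b$ on three consecutive values of $x$, non-linearity of $g_n$ (say $g_n(x) = \lfloor x^2/2^n\rfloor$ or a bit-reversal) is violated; or one shows that two inputs in the same affine piece would be forced to the same output, contradicting injectivity/structure of $g_n$. Combining, $2^{\mathcal{O}(s)} \ge 2^{\Omega(n)}$, hence $s \ge \Omega(n)$ — wait, that only gives linear; so the pieces-vs-inputs count must be sharper: the domain has $2^n$ points and each affine piece can contain at most $\mathcal{O}(1)$ of them (or: each piece contributes at most one value and $g_n$ takes $2^n$ distinct values), forcing $2^{\mathcal{O}(s)} \ge 2^n$, i.e.\ $s \ge \Omega(n)$; to push to $2^{\Omega(n)}$ one instead argues that the \emph{number of affine pieces a size-$s$ circuit can produce} is only polynomial in $s$ (not exponential) when depth is also controlled — but since our upper bound gave polynomial depth, the honest statement is: a circuit of size $s$ yields at most $s^{\mathcal{O}(\mathrm{depth})}$ pieces, and without a depth bound the worst case is $2^{\mathcal{O}(s)}$ pieces, so to get $s \ge 2^{\Omega(n)}$ we must design $g_n$ so that \emph{even $2^{\mathcal{O}(s)}$ affine-modular pieces} do not suffice unless $s$ is exponential. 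This is achieved by making $g_n$ require $2^{2^{\Omega(n)}}$... no — rather, by noting each affine-\emph{modular} piece is cut by constants of poly$(s)$ bits, so the pieces live on a grid of granularity $2^{\mathrm{poly}(s)}$, yet $g_n$ must ``change slope'' on a grid of granularity $2^n$ in a way requiring constants of $\ge 2^{\Omega(n)}$ bits; hence poly$(s) \ge 2^{\Omega(n)}$, i.e.\ $s \ge 2^{\Omega(n)}$.

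Putting the granularity idea front and center, the cleanest $\mu_n$ is probably one whose forced Skolem output must equal a constant of bit-length $2^{\Omega(n)}$ — e.g., using $n$ nested doublings one writes, in poly$(n)$ size, a formula forcing $y = 2^{2^n}$ (via auxiliary variables $z_0 = 1$, $z_{i+1} = z_i \cdot z_i$ unrolled as $z_{i+1} = z_i + z_i + \dots$ is not linear either; use instead $z_{i+1} \equiv 0 \bmod z_i$ chained, or the standard trick forcing a number of exponential magnitude through iterated constraints). Since a size-$s$ Presburger circuit on a \emph{fixed} (empty) input can only output a number of magnitude $2^{\mathrm{poly}(s)}$ — every gate at most polynomially increases bit-length and there are $s$ gates, so the output has $\le \mathrm{poly}(s)$ bits — outputting a number with $2^{\Omega(n)}$ bits forces $s \ge 2^{\Omega(n)}$.

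The main obstacle I anticipate is the normal-form/rigidity lemma for Presburger circuits: $\max$ and $\Eq$ gates make the ``piecewise-affine with bounded pieces'' claim nontrivial, and one must carefully track how the number of pieces and the bit-length of the defining constants grow through composition — in particular ensuring that $\fdiv_m$ gates (whose output is $\lfloor x/m \rfloor$, inherently a piecewise function with $m$ pieces) are handled without letting the piece count explode beyond $2^{\mathcal{O}(s)}$, which is fine, but one must then guarantee the target $g_n$ defeats even that many pieces. Choosing $g_n$ to be a number of doubly-exponential magnitude (equivalently, exponential bit-length) sidesteps the piece-counting subtlety entirely and reduces the obstacle to the much easier ``bit-length grows at most polynomially per gate'' bound; the trade-off is that one must then do the genuinely fiddly work of writing a poly$(n)$-size quantifier-free $\PA$ formula $\mu_n$ that \emph{forces} an output of that magnitude, which is a known but slightly delicate encoding (iterated squaring/doubling expressed via modulo and divisibility constraints on logarithmically many auxiliary output variables).
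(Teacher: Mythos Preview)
Your proposal has a genuine gap. The strategy you ultimately settle on—force the Skolem output to be a number of magnitude $2^{2^{\Omega(n)}}$ and observe that a size-$s$ circuit outputs at most $2^{\mathrm{poly}(s)}$—fails at the construction step. In quantifier-free $\PA$, moduli in constraints $t\equiv r\pmod M$ must be \emph{constants}, so your suggested chain ``$z_{i+1}\equiv 0\pmod{z_i}$'' is not a $\PA$ formula at all, and iterated squaring is unavailable since multiplication is not in the signature. More fundamentally, no $\mathrm{poly}(n)$-size quantifier-free $\PA$ specification can force doubly-exponential outputs on small inputs: for any fixed $\bar u$ with polynomially many bits, $\varphi(\bar u,\bar y)$, if satisfiable, has a witness $\bar y$ with polynomially many bits (apply the standard small-solution bound for integer linear systems to each co-clause of the DNF; constants stay polynomial-bit after substituting $\bar u$). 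Hence a Skolem function is never \emph{required} to output a large number on a small input, and the bit-length argument yields no lower bound.

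The paper uses a different invariant: not output magnitude, but the \emph{minimal period} of a definable subset of $\Z$. It takes from the recent quantifier-elimination literature a $\mathrm{poly}(n)$-size quantifier-free $\psi_n(x,\bar y)$ such that the set $S_n=\{x:\exists\bar y\,\psi_n(x,\bar y)\}$ has minimal period $2^{2^{\Omega(n)}}$. The structural lemma is that any Presburger circuit with $e$ many $\fdiv$-gates, whose divisors all lie in a set $D$, defines only sets with period dividing $\lcm(D)^e$. A Skolem circuit $\ckt$ for $\psi_n$ composes (with polynomial overhead) with a checker for $\psi_n$ into a circuit deciding membership in $S_n$; hence $2^{2^{\Omega(n)}}$ divides $\lcm(D)^e\le 2^{|\ckt|^{O(1)}}$, forcing $|\ckt|\ge 2^{\Omega(n)}$. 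Your ``granularity'' paragraph was in fact groping toward this—periodicity/modular granularity is the right measure, not the magnitude of a single output—but you abandoned it for the unworkable bit-length route.
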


\newcommand{\PTIME}{\mathsf{P}}
\newcommand{\Ppoly}{\mathsf{P}/\mathsf{poly}}
Let us point out that usually it
is extremely difficult to prove lower bounds for the size of circuits. Indeed,
proving an (unconditional) exponential lower bound for the size of circuits for
Boolean functional synthesis is equivalent to one of the major open problems in
complexity theory---whether the class $\NP$ is included in $\Ppoly$ (which, in
turn, is closely related to whether $\PTIME$ equals $\NP$):
\newcommand{\impl}[2]{``\labelcref{#1}$\Rightarrow$\labelcref{#2}''}
\newcommand{\implc}[2]{``#1$\Rightarrow$#2''}
\begin{restatable}{observation}{booleanCircuitSize}\label{obs:Bool:short}
The following are equivalent:
		(i)~Every Boolean formula $\varphi$ has a Skolem function computed by a Boolean circuit of size polynomial in $|\varphi|$.
			(ii)~$\NP\subseteq\Ppoly$.
\end{restatable}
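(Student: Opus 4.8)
The plan is to prove the two directions separately, phrasing Boolean functional synthesis as a projection-elimination task. Fix a Boolean formula $\varphi(\bar x,\bar y)$ with $|\bar y|=m$; a Skolem function for $\bar y$ is a map $f\colon\{0,1\}^n\to\{0,1\}^m$ such that for every $\bar u$, $\varphi(\bar u,f(\bar u))\Leftrightarrow \exists\bar y\,\varphi(\bar u,\bar y)$.

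\emph{Direction \labelcref{obs:Bool:short}(ii)$\Rightarrow$\labelcref{obs:Bool:short}(i).} Assume $\NP\subseteq\Ppoly$. I would synthesize $f$ one output bit at a time, using a standard self-reduction. Consider the language $L=\{(\varphi,\bar u,\bar b)\mid \exists \bar y\,(\bar y \text{ extends } \bar b \wedge \varphi(\bar u,\bar y))\}$, where $\bar b$ is a partial assignment to a prefix $y_1,\dots,y_k$ of the output bits. This language is in $\NP$, so by hypothesis it is decided by a family of polynomial-size circuits $C$. Now build the Skolem circuit: given $\bar x$, compute $y_1$ as $C(\varphi,\bar x,1)$ — i.e. "is there a satisfying $\bar y$ with $y_1=1$?" — then recursively compute $y_{k+1}=C(\varphi,\bar x, (y_1,\dots,y_k,1))$ given the already-determined prefix. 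Hardwiring $\varphi$ (a fixed input) into each copy of $C$, and cascading $m$ such copies, gives a single circuit of size $\mathrm{poly}(|\varphi|)$. Correctness: if $\exists\bar y\,\varphi(\bar u,\bar y)$, a straightforward induction shows the greedily chosen prefix always remains extendable, so the final $\bar y$ satisfies $\varphi$; if no such $\bar y$ exists, the value is irrelevant. One subtlety to handle carefully: $\Ppoly$ circuit families are non-uniform, so I cannot "compute" $C$ from $\varphi$ — instead I treat $\varphi$ as part of the input length, so that a single non-uniform polynomial-size family of Skolem circuits, one per input size $|\varphi|+n$, exists. This matches statement (i), which only asserts existence of a small circuit per $\varphi$.

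\emph{Direction \labelcref{obs:Bool:short}(i)$\Rightarrow$\labelcref{obs:Bool:short}(ii).} Assume every Boolean $\varphi$ has a poly-size Skolem circuit. It suffices to put $\mathsf{SAT}$ in $\Ppoly$. Given a propositional formula $\psi(\bar y)$ on $m$ variables, regard it as a specification $\varphi(\bar x,\bar y):=\psi(\bar y)$ with no input variables (or a single dummy input). A Skolem function $f$ then must output some satisfying assignment of $\psi$ whenever one exists. So feed $f$ to $\psi$: $\psi$ is satisfiable iff $\psi(f())$ holds. The circuit "evaluate $\psi$ on the output of the Skolem circuit for $\psi$" has size polynomial in $|\psi|$ by hypothesis, and decides $\mathsf{SAT}$ at length $|\psi|$. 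Since $\mathsf{SAT}$ is $\NP$-complete and $\Ppoly$ is closed under polynomial-time (even $\Ppoly$) reductions, $\NP\subseteq\Ppoly$ follows.

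\emph{Main obstacle.} The only real care needed is the uniformity bookkeeping in the forward direction: "Skolem function computed by a circuit of size polynomial in $|\varphi|$" is a per-instance existence statement, so I must set up the $\NP$-oracle circuit family indexed by input length (with $\varphi$ part of the input) and then, for each fixed $\varphi$, hardwire it — exactly the standard trick relating $\Ppoly$ to non-uniform circuit families. Everything else is the classic prefix self-reduction for search-to-decision, adapted to the synthesis formulation; the backward direction is essentially immediate. I would also remark that this is precisely why the lower bound \cref{thm:exp-lower-bound} for Presburger circuits is unconditional while its Boolean analogue is open: in $\PA$ we can exhibit an explicit hard family $(\mu_n)$, whereas in the Boolean world such a family would settle $\PTIME$ vs.\ $\NP$.
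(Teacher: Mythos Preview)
Your direction (ii)$\Rightarrow$(i) is correct and matches the paper's approach: both use the standard search-to-decision self-reduction (the paper invokes the known fact that $\NP\subseteq\Ppoly$ yields poly-size circuits that output satisfying assignments, which is exactly what your prefix argument unpacks).

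Your direction (i)$\Rightarrow$(ii), however, has a genuine gap. You fix a formula $\psi(\bar y)$ with \emph{no} input variables, take its Skolem circuit $f$, and form the circuit ``evaluate $\psi$ on $f()$''. But this circuit has no inputs at all---it simply outputs the constant bit $\mathsf{sat}(\psi)\in\{0,1\}$. Showing that for each $\psi$ there exists a small circuit outputting $\mathsf{sat}(\psi)$ is vacuous (the constant $0$ or $1$ suffices), and it does not give a single circuit $C_n$ that takes an \emph{arbitrary} formula of length $n$ as input and decides its satisfiability, which is what $\mathsf{SAT}\in\Ppoly$ requires. You correctly flagged the non-uniformity bookkeeping for the other direction, but the same issue bites here and you missed it.

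The fix is to make the formula itself an input variable of the specification. For each $n$, build a polynomial-size \emph{universal} formula $\Phi_n(\bar z,\bar y)$ where $\bar z$ encodes an arbitrary Boolean formula of size $n$ and $\Phi_n(\bar z,\bar y)$ holds iff the formula encoded by $\bar z$ is satisfied by $\bar y$. Now apply hypothesis (i) to $\Phi_n$ (with $\bar z$ as inputs, $\bar y$ as outputs): the resulting poly-size Skolem circuit $f_n(\bar z)$ outputs a satisfying assignment of the encoded formula whenever one exists, and the circuit $\bar z\mapsto \Phi_n(\bar z,f_n(\bar z))$ decides $\mathsf{SAT}$ on all length-$n$ instances. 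This is the content of \cite[Theorem 1]{AkshayCGKS21}, which the paper cites for this direction.
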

Here, $\Ppoly$ is the class of all problems solvable in polynomial time with a
polynomial amount of advice (see e.g.,~\cite{AB}).  The implication 
\implc{(i)}{(ii)} had been shown in \cite[Theorem 1]{AkshayCGKS21}. We prove \implc{(ii)}{(i)} in \cref{app:Bool}.

Nevertheless, we will prove an exponential lower bound for circuits for
Presburger functional synthesis.
\paragraph{Proof sketch exponential lower bound}
For space reasons, we can only provide a rough sketch---with a full proof
in \cref{app:exp-lower-bound}.  Following a construction from \cite[Section
6]{DBLP:conf/icalp/HaaseKMMZ24}, we choose $\mu_n\equiv \forall
x\colon\exists\bar{y}\colon\psi_n(x,\bar{y})$ so that the formula
$\exists\bar{y}\colon\psi_n(x,\bar{y})$ defines a subset $S\subseteq\Z$ whose
minimal period is doubly exponential. Here, the \emph{minimal period} of $S$,
is the smallest $p$ so that for all but finitely many $u$, we
have $u+p\in S$ if and only if $u\in S$.  Moreover, we show that if $\mu_n$ had
a Skolem function with a Presburger circuit $\circuit_n$ with $e_n$-many
$\fdiv$-gates, and $M$ is the least common multiple of all divisors occurring
in that gate, then $p$ must divide $M^{e_n}$.  This proves that $e_n$ is at least
exponential, hence $\circuit_n$ must contain an exponential number
of $\fdiv$-gates.

\vspace{-0.4cm}

\paragraph{Hardness for one input, one output}

We now show that synthesizing Skolem functions for Presburger specifications with even just one input and one output variable is as hard as the general Boolean functional synthesis problem:
\begin{restatable}{observation}{hardnessOneOneLong}\label{hardness-one-one-long}
	Suppose every one-input one-output quantifier-free Presburger formula
	has a polynomial size Skolem circuit. Then every Boolean formula has a
	polynomial size Skolem circuit---impossible unless $\NP\subseteq\Ppoly$.
\end{restatable}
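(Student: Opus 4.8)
The plan is to encode an arbitrary Boolean specification $\psi(\bar p,\bar q)$, with input bits $\bar p=(p_1,\dots,p_k)$ and output bits $\bar q=(q_1,\dots,q_\ell)$, into a one-input one-output Presburger specification $\varphi(x,y)$ by packing the bit-vectors into the binary digits of the single integers $x$ and $y$. Concretely, I would let $x$ range over $[0,2^k-1]$ and interpret its $i$-th binary digit as $p_i$, and similarly let $y$ encode $\bar q$ (in the low $\ell$ bits) together with whatever extra certificate bits a witnessing circuit would need. The $i$-th bit of $x$ is $\PA$-definable: $p_i(x)$ is true iff $\lfloor x/2^{i-1}\rfloor \equiv 1 \pmod 2$, which is a legal atomic modulo constraint once $\lfloor x/2^{i-1}\rfloor$ is handled — but since we are writing a \emph{specification formula} and not a circuit, we can instead introduce the digits implicitly: $\varphi$ asserts $0\le x < 2^k$, $0\le y$, and the conjunction of $\psi$ rewritten with each $p_i$ replaced by the predicate ``bit $i$ of $x$ is set'' and each $q_j$ by ``bit $j$ of $y$ is set''. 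All of these bit-extraction predicates are expressible in $\PA$ with constants of size polynomial in $k$ and $\ell$ (using $\equiv \pmod{2^i}$ and inequalities to isolate each digit), so $|\varphi|$ is polynomial in $|\psi|$.

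The key correctness observations are: (1) for every value $u\in[0,2^k-1]$ of $x$, $\exists y\colon\varphi(u,y)$ holds iff $\exists \bar q\colon\psi(\bar p(u),\bar q)$ holds, because the bit-unpacking is a bijection between $[0,2^k-1]$ and $\{0,1\}^k$; and (2) from any Skolem \emph{circuit} $\circuit$ for $y$ in $\forall x\exists y\colon\varphi$, we can extract a Boolean circuit for a Skolem function of $\bar q$ in $\forall\bar p\exists\bar q\colon\psi$ of size polynomial in $|\circuit|$ plus $k+\ell$. For this extraction I would prepend to $\circuit$ a small sub-circuit that computes $u=\sum_i p_i 2^{i-1}$ from the Boolean inputs $p_i$ (a single linear-combination gate), feed $u$ into $\circuit$, and append a sub-circuit that extracts bit $j$ of the integer output $\circuit(u)$ via $\fdiv_{2^{j-1}}$ and an $\Eq$-based parity check; then replace every Presburger gate by an equivalent Boolean circuit on the relevant bit-length — each atomic Presburger gate ($\max$, $\Eq$, $\fdiv_m$, linear combination) on inputs of polynomially many bits has a polynomial-size Boolean implementation (ripple-carry addition, comparison, division by a constant), and by the remark after Theorem~\ref{thm:exponential-presburger-circuit} one may assume all intermediate coefficients and hence bit-lengths in $\circuit$ are polynomially bounded. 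Composing everything yields a polynomial-size Boolean Skolem circuit for $\psi$.

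Putting the two directions together: if every one-input one-output quantifier-free $\PA$ formula had a polynomial-size Skolem circuit, then applying this to $\varphi$ and extracting as above would give every Boolean formula a polynomial-size Skolem circuit, which by Observation~\ref{obs:Bool:short} forces $\NP\subseteq\Ppoly$. The main obstacle I anticipate is the bookkeeping in step~(2): ensuring that the bit-width of every wire inside the given Presburger Skolem circuit is polynomially bounded so that the gate-by-gate Booleanization stays polynomial. This is where I would lean on the bounded-coefficient remark following Theorem~\ref{thm:exponential-presburger-circuit} (or, failing a black-box appeal, on the fact that any circuit whose \emph{output on inputs in $[0,2^k-1]$} must lie in $[0,2^{O(\ell)})$ can be normalized to have polynomial intermediate widths, since $\fdiv$ and $\max$ do not increase magnitude and linear-combination gates with polynomially-bounded coefficients increase bit-length only additively-polynomially along a polynomial-depth circuit). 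A secondary, more cosmetic point is to double-check that the bit-extraction predicates are written with residues $r$ in the required range $0\le r<M$ and with binary-encoded constants, so that $\varphi$ is a syntactically legal quantifier-free $\PA$ formula of polynomial size.
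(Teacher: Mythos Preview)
Your reduction has a real gap in the encoding direction: the claim that ``bit $i$ of $x$ is set'' can be written as a polynomial-size quantifier-free $\PA$ formula is unjustified and in fact fails. There is no term for $\lfloor x/2^i\rfloor$ in the signature, and a single modulo atom $ax\equiv r\pmod{M}$ picks out one residue class; saying that $x\bmod 2^{i+1}$ lies in $[2^i,2^{i+1}-1]$ unrolls to a disjunction over $2^i$ residues. Inequalities do not rescue this. Restrict $x$ to a fixed residue $r\pmod{2^i}$ and write $x=r+j\cdot 2^i$: then ``bit $i$'' becomes $j\bmod 2$, every threshold atom becomes a threshold in $j$, and every modulo atom whose modulus has $2$-part at most $2^i$ becomes constant or has odd period in $j$. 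Hence to detect $j\bmod 2$ you need a modulo atom with $2$-part $\ge 2^{i+1}$ whose residue agrees with $r\pmod{2^i}$, and each such atom agrees with exactly one $r$; so you need $2^i$ of them. The same obstruction applies to extracting bits of $y$, so your $\varphi$ has size exponential in $k+\ell$, not polynomial in $|\psi|$.

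This is exactly why the paper encodes via Chinese remaindering rather than binary packing: choose distinct primes $p_1,\dots,p_n,q_1,\dots,q_m$ of polynomial magnitude and translate the literal $x_i$ to the single atom $a\equiv 0\pmod{p_i}$ (and $y_j$ to $b\equiv 0\pmod{q_j}$); now each Boolean variable costs one atom and $|\varphi|$ is polynomial. The paper's backward direction is also cleaner than your gate-by-gate Booleanization: since Presburger circuits evaluate in polynomial time, the composite ``assignment $X\mapsto$ smallest integer $N$ with the prescribed divisibilities $\mapsto \circuit(N)\mapsto$ assignment $Y$'' is a polynomial-time procedure and therefore has a polynomial-size Boolean circuit by the standard Turing-machine-to-circuit translation, with no bit-width bookkeeping required. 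On that point, your appeal to the remark after Theorem~\ref{thm:exponential-presburger-circuit} is illegitimate---that remark concerns the specific circuit built there, whereas here you are handed an \emph{arbitrary} polynomial-size circuit; your fallback observation (a size-$s$ circuit on $B$-bit inputs has all intermediate values of bit-length $O(B+s)$, since each gate adds at most its own encoding size) is essentially right, but the detour is unnecessary once you go through polynomial-time evaluation.
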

This follows using the ``Chinese Remaindering'' technique, by which one can encode an assignment of $n$ Boolean variables in a single integer: in the residues modulo the first $n$ primes. See \cref{app:hardness-one-one-long} for details.

\section{Semantic Normal Form for $\PFnS$}
\label{sec:semantic-nf}
We now present a semantic normal form for $\PA$ specifications, called
$\PASyn$, that guarantees efficient Skolem function synthesis. The
normal form definition has two key ingredients, (i)~modulo-tameness
and (ii)~local quantification.

\paragraph*{Ingredient I: Modulo-tameness}
Recall that we represent quantifier-free $\PA$ formulas as trees.  A $\land$-labeled node in the tree representing $\varphi$ is said to be a \emph{maximal
conjunction} if there are no $\land$-labeled ancestors of the node in
the tree.  A subformula is \emph{maximal conjunctive}\label{maximal-conjunctive} if it is the sub-formula rooted at some maximal
conjunction. %

\begin{definition}\label{def:y-mod-tame}
A quantifier-free $\PA$ formula $\varphi(\bar{x},y)$ is called
\emph{$y$-modulo-tame}, if it is in NNF, and for every maximal conjunctive sub-formula
$\psi$ of $\varphi$, there is an integer $M^\psi$ such that all modulo constraints
involving $y$ in $\psi$ are of the form $y\equiv r\pmod {M^\psi}$
for some $r\in[0,M^\psi-1]$.
\end{definition}
Hence, the definition admits $y \equiv r_1 \pmod M$ and $y \equiv
r_2 \pmod M$ in the same maximal conjunctive sub-formula, even if $r_1
\neq r_2$. It does not admit $y \equiv r_1 \pmod {M_1}$ and
$y \equiv r_2 \pmod {M_2}$ in a maximal conjunctive sub-formula, if
$M_1 \neq M_2$.  The value of $M$ can vary from one maximal
conjunctive sub-formula to another; so the definition admits $y \equiv
r_1 \pmod {M_1}$ and $y \equiv r_2 \pmod {M_2}$ in sub-trees rooted at
two different maximal conjunctions.

As an example, the formula represented in Fig.~\ref{fig:tree-rep} is
not $y$-modulo tame.  This is because the maximal conjunctive
sub-formula to the left of the root has the atomic formula $4x + 5y
\equiv 2 \pmod 3$, which is not of the form $y \equiv r \pmod M$.
However, if we replace $4x + 5y \equiv 2 \pmod 3$ by the
semantically equivalent formula $\big((y \equiv 0 \pmod 3 \,\land\, 4x
\equiv 2 \pmod 3) \lor (y \equiv 1 \pmod 3 \,\land\, 4x \equiv 0 \pmod
3) \lor (y \equiv 2 \pmod 3 \,\land\, 4x \equiv 1 \pmod 3)\big)$,
then every maximal conjunctive sub-formula in the new formula
satisfies the condition of Definition~\ref{def:y-mod-tame}.  Hence,
the resulting semantically equivalent formula is $y$-modulo tame.

Checking if a given formula $\varphi(\bar{x},y)$ is $y$-modulo-tame is
easy: look at each maximal conjunction in the tree
representation of $\varphi$ and check if all modulo constraints
involving $y$ are of the form $y \equiv r \pmod M$ for the same
modulus $M$. Furthermore, this form is universal: any formula can be
made $y$-modulo tame for any $y$, albeit at the cost of a worst-case
exponential blow-up (with proof in
\cref{app:ModTameUniv}):
\begin{restatable}{proposition}{propModTameUniv}\label{prop:ModTameUniv}
Given a quantifier-free formula $\varphi(\bar{x},y)$, let
  $\mathfrak{M}$ be the set of all moduli appearing in
  modular constraints involving $y$.  We can construct an 
  equivalent $y$-modulo-tame formula in
  $\bigO{|\varphi|.\big(\prod_{M \in \mathfrak{M}} M\big)}$ time.
\end{restatable}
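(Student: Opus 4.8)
The plan is to transform the formula leaf-by-leaf, replacing every modulo constraint involving $y$ so that its modulus becomes a single fixed value $M^\star := \lcm(\mathfrak{M})$, and then split on the residue of $y$ modulo $M^\star$. Concretely, let $M^\star$ be the least common multiple of all moduli in $\mathfrak{M}$; note $M^\star \le \prod_{M\in\mathfrak{M}} M$. The first step is to observe that any modulo constraint $\sum_k a_k x_k + a\, y \equiv r \pmod{M}$ with $M \mid M^\star$ is equivalent to a disjunction $\bigvee_{s=0}^{M^\star-1}\big(y\equiv s\pmod{M^\star}\ \wedge\ (\sum_k a_k x_k + a\,s \equiv r\pmod M)\big)$; the second conjunct no longer mentions $y$. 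The same works for the negated form $\not\equiv$. After this rewrite, in every branch of the resulting (still NNF) tree, the only modulo constraints mentioning $y$ are of the shape $y\equiv s\pmod{M^\star}$, so the formula is $y$-modulo-tame (with $M^\psi = M^\star$ for each maximal conjunctive subformula $\psi$ that contains any such constraint).

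The key steps in order: (1) compute $M^\star = \lcm(\mathfrak{M})$; (2) walk the tree of $\varphi$ once, and at each leaf that is a modulo constraint involving $y$, replace it in place by the $M^\star$-fold disjunction described above — this is a local substitution that turns a leaf into a subtree of $\bigO{M^\star}$ nodes, each new leaf being either $y\equiv s\pmod{M^\star}$ or a modulo constraint not involving $y$; (3) leave linear inequalities and all $y$-free leaves untouched; (4) argue equivalence by substituting semantic equivalences at the leaves (equivalence is preserved under replacing subformulas by equivalent ones, even inside $\land/\lor$ and inside NNF), and argue $y$-modulo-tameness by inspecting the new leaves. For the size/time bound: the original tree has $\le |\varphi|$ nodes, each leaf blows up by a factor $\bigO{M^\star} = \bigO{\prod_{M\in\mathfrak M} M}$, and the new constants (e.g.\ $\sum_k a_k x_k + a\,s \equiv r \pmod M$ and $y \equiv s\pmod{M^\star}$) are each encodable in $\bigO{|\varphi|}$ bits since $M^\star \le \prod_{M} M$ and each $M$ already appears in $\varphi$; hence the total construction time is $\bigO{|\varphi|\cdot\prod_{M\in\mathfrak M}M}$.

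I do not expect a serious obstacle here — the argument is essentially a bookkeeping exercise. The one point that needs a little care is confirming that the rewrite keeps the formula in NNF: the replacement disjunction I wrote is built only from $\land$, $\lor$, and atomic constraints, with no negations introduced, and the $\not\equiv$ case is handled directly at the atomic level (we never need to push a negation through the new disjunction), so NNF is preserved. A second minor point is that the constants in the rewritten atoms stay polynomially bounded: the coefficients $a_k, a, r$ are copied from the original atom and $s < M^\star$, so the bit-size of each new atom is $\bigO{|\varphi|}$, which is absorbed into the stated bound. Everything else — equivalence, the tameness check on the output, and the running-time accounting — is routine.
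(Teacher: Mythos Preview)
Your proof is correct and follows essentially the same approach as the paper's: case-split each $y$-involving modulo constraint on the residue of $y$ modulo an lcm, replacing the atom by a disjunction in which $y$ appears only in constraints of the form $y\equiv s\pmod{M^\star}$. The only cosmetic differences are that the paper uses a per-maximal-conjunction modulus $M^\psi$ rather than a single global $M^\star$, and rewrites the $y$-free conjunct to also use modulus $M^\psi$; your version is slightly simpler and gives the same bound.
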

Since the moduli $M$'s in $\varphi$ are represented in binary,
Proposition~\ref{prop:ModTameUniv} implies an exponential blow-up in
the formula size, when making it $y$-modulo tame.  This blow-up is
however unavoidable, by virtue of the hardness result in
Observation~\ref{hardness-one-one-long} and a key result of this
section (Theorem~\ref{presburger-circuit-one-output}).

\paragraph{Ingredient II: Local quantification}
For $\PASyn$, we also need the concept of local quantification, which we introduce now.
For a quantifier-free $\varphi(\bar{x},y)$ in NNF and $y$-modulo-tame, we define $\lexists y\colon\varphi(\bar{x},y)$ as the formula obtained by replacing each atomic subformula in $\varphi$ that mentions $y$ with $\top$.
Clearly, $\exists y\colon\varphi(\bar{x},y)$ implies $\lexists y\colon \varphi(\bar{x},y)$.
\paragraph{Definition of $\PASyn$}
Suppose $\varphi(\bar{x},\bar{y})$ is a quantifier-free Presburger
formula in NNF with free variables $\bar{x}=(x_1,\ldots,x_n)$ and
$\bar{y}=(y_1,\ldots,y_m)$. We define {\em $\varphi$ to be in $\PASyn$ w.r.t.\ the ordering $y_1\preceq\cdots\preceq y_m$}, if (i)~$\varphi$ is $y_i$-modulo-tame for each $i\in[1,m]$ and (ii)~for every $i\in[1,m-1]$, the formula
\begin{align}
  \forall \bar{x}\forall y_1,\ldots,y_i\colon (\lexists y_{i+1},\ldots,y_m\colon\varphi(\bar{x},\bar{y})\to\nonumber\\
	\exists y_{i+1},\ldots,y_m\colon \varphi(\bar{x},\bar{y}))\label{condition-local-quantification}
\end{align}
denoted $\varphi^{(i)}$, holds.
In the following, we assume that every specification formula is annotated with
an ordering on the output variables, and that $\PASyn$ is w.r.t.\ that
ordering.

To see an example of a $\PASyn$ specification, consider a variant of
the job scheduling problem discussed in
Section~\ref{sec:introduction}. In this variant, we have only two
items, and we require item $2$ to be (pre-)processed before item $1$.
The variant specification is
$\psi \equiv \psi_1 \wedge \psi_2 \wedge \psi_3$, where
$\psi_1 \equiv \bigwedge_{i=1}^2(t_i + \delta_i + 1 \equiv 1 \pmod
2)$, $\psi_2 \equiv t_2 + \delta_2 + 1 < t_1 + \delta_1$ and
$\psi_3 \equiv \bigwedge_{i=1}^2(\delta_i \ge 0) \wedge (\delta_1 +
2\delta_2 \leq \Delta)$.  It is easy to see that $\psi$ is not
$\delta_i$-modulo tame for any $\delta_i$; hence it is not in
$\PASyn$. If we replace $\psi_1$ with the equivalent formula
$\psi_1' \equiv \bigwedge_{i=1}^2 \bigvee_{r=0}^1 \big((\delta_i \equiv
r \pmod 2)\wedge (t_i \equiv r \pmod 2)\big)$, the resulting
specification $\psi' \equiv \psi_1' \wedge \psi_2 \wedge \psi_3$ is
$\delta_i$-modulo tame for $i \in \{1, 2\}$.  However $\psi'$ is not
in $\PASyn$ w.r.t. any ordering of $\delta_1, \delta_2$, since local
quantifier elimination replaces the constraint $\delta_1 +
2\delta_2 \le \Delta$ with $\top$, removing the cap on the cumulative
weighted delay. To remedy this situation, consider
$\psi'' \equiv \psi_1' \wedge \psi_2 \wedge \psi_3 \wedge
(\psi_4 \vee \psi_5)$, where $\psi_4 \equiv (t_2 + \delta_2 + 1 <
t_1) \wedge \bigvee_{r=0}^1 \big( (t_1 \equiv r \pmod 2) \wedge
(2\delta_2 + r \le \Delta)\big)$, and $\psi_5 \equiv (t_2 + \delta_2 +
1 \ge t_1) \wedge (t_2 - t_1 + 3\delta_2 + 2\le \Delta)$. It can be
verified that $\psi''$ is semantically equivalent to $\psi$, and
satisfies all conditions for $\PASyn$ w.r.t. the order
$\delta_1 \prec \delta_2$ (but not w.r.t.
$\delta_2 \prec \delta_1$). Note that $(\psi_4 \vee \psi_5)$
constrains $t_1, t_2, \delta_2, \Delta$ in such a way that
$(\psi_4 \vee \psi_5) \wedge \lexists \delta_1 \psi ~\leftrightarrow~
\exists \delta_1 \psi$ holds.

\paragraph{Main results about $\PASyn$.}
The first main result is that for formulas in $\PASyn$, we can easily
solve $\PFnS$.
\begin{theorem}\label{construct-circuit-for-semantic-nf}
	Given a Presburger formula $\varphi(\bar{x},\bar{y})$ in $\PASyn$, we
        can compute in time polynomial in the size of $\varphi$, a
        Skolem circuit for each $y_i$ in
        $\forall \bar{x} \exists \bar{y}\, \varphi(\bar{x},\bar{y})$.
\end{theorem}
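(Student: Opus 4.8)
The plan is to synthesize the output variables one at a time, in the order $y_1,y_2,\ldots,y_m$ given by the $\PASyn$ ordering, reducing each stage to the single-output case handled by \cref{presburger-circuit-one-output}, and using local quantification to stay inside the quantifier-free, polynomial-size, modulo-tame world in which that result is applicable. Concretely, for $i\in[1,m]$ let $\varphi_i$ be the formula obtained from $\varphi$ by replacing every atomic subformula that mentions some $y_j$ with $j>i$ by $\top$; thus $\varphi_m=\varphi$, and $\varphi_i$ is exactly $\lexists y_{i+1},\ldots,y_m\colon\varphi$ (the iterated local quantification is well defined, since at each step the formula is still modulo-tame in the variable being eliminated, and the elimination order is irrelevant, both producing the same formula).

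First I would record the easy structural facts: each $\varphi_i$ is in NNF and is $y_j$-modulo-tame for every $j\le i$. Indeed, replacing atoms by $\top$ introduces no negations and no new atoms and leaves the $\land/\lor$-skeleton — hence the set of maximal conjunctive subformulas — unchanged; and by \cref{def:y-mod-tame} every modulo constraint on $y_j$ occurring in $\varphi$ has the shape $y_j\equiv r\pmod{M^\psi}$, which mentions no $y_k$ with $k\ne j$, so for $j\le i$ all such constraints survive untouched. Next, since $\exists y\colon\psi$ always implies $\lexists y\colon\psi$, and since the $\PASyn$ conditions~\eqref{condition-local-quantification} supply precisely the converse implications for the corresponding nested forms, I get for every $i\in[1,m]$ the equivalence
\[ \varphi_i(\bar{x},y_1,\ldots,y_i)\;\equiv\;\exists y_{i+1},\ldots,y_m\colon\varphi(\bar{x},\bar{y}), \]
and in particular $\exists y_i\colon\varphi_i\;\equiv\;\exists y_i,\ldots,y_m\colon\varphi$.

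Now I would invoke \cref{presburger-circuit-one-output} on $\varphi_i$, viewed as a single-output specification with output $y_i$ and inputs $(\bar{x},y_1,\ldots,y_{i-1})$ — legitimate because $\varphi_i$ is $y_i$-modulo-tame — to obtain, in time polynomial in $|\varphi_i|\le|\varphi|$, a Presburger circuit $C_i$ computing a Skolem function for $y_i$ in $\forall(\bar{x},y_1,\ldots,y_{i-1})\,\exists y_i\colon\varphi_i$. I then build the required output circuits by composition: set $f_1(\bar{x}):=C_1(\bar{x})$ and, inductively, $f_i(\bar{x}):=C_i(\bar{x},f_1(\bar{x}),\ldots,f_{i-1}(\bar{x}))$, realized as one Presburger circuit in which the $\bar{x}$-input gates and the already-built output gates computing $f_1,\ldots,f_{i-1}$ are shared; the total size is then at most $n+\sum_{i=1}^m|C_i|$, which is polynomial in $|\varphi|$ since $m\le|\varphi|$ and each $|C_i|$ is. Correctness is the standard chaining argument: fix $\bar{u}$ with $\exists\bar{y}\colon\varphi(\bar{u},\bar{y})$ (the converse direction is trivial) and prove by induction on $i$ that $\varphi_i(\bar{u},f_1(\bar{u}),\ldots,f_i(\bar{u}))$ holds — the base case from $\exists y_1\colon\varphi_1(\bar{u},y_1)$ together with the Skolem property of $C_1$, and the inductive step from $\varphi_{i-1}(\bar{u},f_1(\bar{u}),\ldots,f_{i-1}(\bar{u}))$, which by the equivalence above is $\exists y_i,\ldots,y_m\colon\varphi(\bar{u},f_1(\bar{u}),\ldots,f_{i-1}(\bar{u}),y_i,\ldots,y_m)$, hence $\exists y_i\colon\varphi_i(\bar{u},f_1(\bar{u}),\ldots,f_{i-1}(\bar{u}),y_i)$, and then the Skolem property of $C_i$ yields $\varphi_i(\bar{u},f_1(\bar{u}),\ldots,f_i(\bar{u}))$. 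For $i=m$ this is exactly $\varphi(\bar{u},f_1(\bar{u}),\ldots,f_m(\bar{u}))$.

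The crux — and essentially the only place work happens beyond \cref{presburger-circuit-one-output} itself — is the equivalence $\varphi_i\equiv\exists y_{i+1},\ldots,y_m\colon\varphi$: this is precisely what the $\PASyn$ condition~\eqref{condition-local-quantification} is designed to guarantee, and it is what lets the peeling stay quantifier-free, polynomial-size, and modulo-tame at every stage. Everything else — stability of NNF and of $y_j$-modulo-tameness under replacing atoms by $\top$, polynomial-size circuit composition with sharing, and the chaining argument for correctness — is routine bookkeeping.
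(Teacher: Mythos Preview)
Your proposal is correct and follows essentially the same approach as the paper: reduce to the single-output case (\cref{presburger-circuit-one-output}) by peeling off one output variable at a time, using the $\PASyn$ condition~\eqref{condition-local-quantification} to replace $\exists y_{i+1},\ldots,y_m\colon\varphi$ by the quantifier-free, polynomial-size, modulo-tame formula $\lexists y_{i+1},\ldots,y_m\colon\varphi$, and then composing the resulting circuits. Your write-up is in fact more explicit than the paper's own (terse) proof, which simply applies \cref{presburger-circuit-one-output} to $\widehat{\varphi}^{(i)}:=\exists y_{i+1},\ldots,y_m\colon\varphi$ without spelling out that one must first pass to the locally-quantified equivalent to get a quantifier-free $y_i$-modulo-tame formula.
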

Second, every formula has an equivalent in $\PASyn$, albeit with an exponential blow-up
(unavoidable by Thm.~\ref{construct-circuit-for-semantic-nf}, \ref{thm:exp-lower-bound}):
\begin{theorem}\label{construct-semantic-nf}
	For every quantifier-free formula $\varphi(\bar{x},\bar{y})$, there is an equivalent
	formula $\psi$ in $\PASyn$, such that $\psi$ is at most 
	exponential in the size of $\varphi$.
\end{theorem}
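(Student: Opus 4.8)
The plan is to prove Theorem~\ref{construct-semantic-nf} by an induction on the number of output variables $m$, peeling off one output variable at a time, and using the single-output machinery (modulo-tameness and the ``local quantification'' witness construction) as the base step. First I would handle the single-output case: given $\varphi(\bar{x},y)$, apply Proposition~\ref{prop:ModTameUniv} to obtain an equivalent $y$-modulo-tame formula $\varphi'$ of size $\bigO{|\varphi|\cdot\prod_{M\in\mathfrak{M}}M}$, which is exponential in $|\varphi|$ since the moduli are written in binary. A single-output formula is automatically in $\PASyn$ once it is $y$-modulo-tame (condition~(ii) in the definition of $\PASyn$ is vacuous when $m=1$), so $\varphi'$ is the required equivalent. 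This also establishes Theorem~\ref{presburger-circuit-one-output} territory implicitly, but here I only need the exponential size bound.

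For the inductive step with $m$ outputs $y_1\preceq\cdots\preceq y_m$, the key is to engineer condition~\eqref{condition-local-quantification}, i.e.\ to add ``enough'' structure so that after replacing the atomic subformulas mentioning $y_m$ by $\top$ (local quantification $\lexists y_m$), the resulting formula is actually equivalent to the genuine $\exists y_m\colon\varphi$. The natural approach, mirroring the worked example $\psi''$ in the excerpt, is: (a)~first make $\varphi$ $y_m$-modulo-tame via Proposition~\ref{prop:ModTameUniv}, giving an at-most-exponential blow-up; (b)~bring the $y_m$-mentioning part into a disjunctive ``case'' form indexed by the residue class of $y_m$ modulo the relevant modulus $M^\psi$ in each maximal conjunctive subformula, and within each residue class the constraints on $y_m$ become a system of linear inequalities; (c)~use the standard Presburger quantifier-elimination/Omega-test-style reasoning — or better, the affine-transformation bound of Proposition~\ref{affine-transformations} — to compute, for each case, an explicit quantifier-free condition $\chi$ on $(\bar{x},y_1,\ldots,y_{m-1})$ that is equivalent to solvability for $y_m$; and (d)~conjoin $\chi$ (in a suitable position, disjunct by disjunct) so that $\lexists y_m\colon\varphi$ now entails $\exists y_m\colon\varphi$ on the nose. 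Crucially, $\chi$ is a quantifier-free formula over the remaining variables whose size is at most exponential in $|\varphi|$ (this is exactly the one-variable elimination bound from the recent QE results the paper leans on), so after this step we have an equivalent $\varphi_{m-1}(\bar{x},y_1,\ldots,y_{m-1},y_m)$ that is $y_m$-modulo-tame, satisfies condition~\eqref{condition-local-quantification} for $i=m-1$, and is at most singly-exponential in $|\varphi|$. Then I recurse on the first $m-1$ output variables, treating $y_m$ as part of the tail that may already be there; the recursion must be organized so that the exponential blow-ups do not stack multiplicatively — here one should do all $m$ eliminations/case-splits ``simultaneously'' against the original formula rather than iterating a blow-up $m$ times, keeping the total at $2^{|\varphi|^{\mathcal{O}(1)}}$.

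Concretely, I would phrase it as: process $y_m, y_{m-1},\ldots,y_1$ in that order, where at stage $i$ we already have an equivalent formula that is $y_j$-modulo-tame and satisfies~\eqref{condition-local-quantification} for all $j\ge i$, and we extend it to also work for $j=i-1$ by adjoining the (at-most-exponential) quantifier-free formula computed by eliminating $y_i,\ldots,y_m$ from a $y_i,\ldots,y_m$-modulo-tame version of the current formula. Because each adjoined formula is exponential in the \emph{original} $|\varphi|$ (the bound in Proposition~\ref{prop:ModTameUniv} and in one-variable PA elimination depends on the binary encodings of the original coefficients and moduli, not on the growing formula, provided we are careful to compute it against a controlled representation), and there are only $m\le |\varphi|$ stages, the final size is at most $m\cdot 2^{|\varphi|^{\mathcal{O}(1)}}=2^{|\varphi|^{\mathcal{O}(1)}}$. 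Finally I would verify that the end product is in NNF (push negations in after each manipulation, linear cost) and is $y_i$-modulo-tame for every $i$ — the latter may require re-applying the residue-class case split for indices other than the one currently being eliminated, but since all these splits are bounded by the same product-of-moduli quantity they compose within a single exponential.

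The main obstacle I anticipate is controlling the size blow-up across the $m$ eliminations: a naive induction ``make $y_m$-tame, add the $y_m$-elimination witness, then recurse'' risks an $m$-fold iterated exponential. The fix is to argue that the quantifier-free formula equivalent to $\exists y_i,\ldots,y_m\colon\varphi$ — needed to build the $\lexists$-to-$\exists$ bridge at index $i-1$ — has size at most singly-exponential in $|\varphi|$ uniformly in $i$, which is precisely what the modern Presburger QE bounds (the ones the paper already invokes for Theorem~\ref{thm:exponential-presburger-circuit}) give, and then to assemble all $m$ bridging formulas in one pass rather than recursively re-compiling. A secondary subtlety is ensuring condition~\eqref{condition-local-quantification} holds for the \emph{intermediate} prefixes $y_1,\ldots,y_i$ and not just the full elimination; this is handled by adjoining, for each $i$, the formula asserting the equivalence $\lexists y_{i+1},\ldots,y_m\colon\varphi \leftrightarrow \exists y_{i+1},\ldots,y_m\colon\varphi$, each of which is again quantifier-free of at-most-exponential size, and conjoining all of them simultaneously.
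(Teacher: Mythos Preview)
Your high-level approach converges to the paper's: compute, for each $i\in[1,m-1]$, a quantifier-free $\psi_i$ equivalent to $\exists y_{i+1},\ldots,y_m\colon\varphi$ directly from the \emph{original} $\varphi$ (using the singly-exponential QE bounds), and conjoin all the $\psi_i$'s to $\varphi$ simultaneously rather than iteratively. This gives $\eta=\varphi\wedge\bigwedge_i\psi_i$, which is equivalent to $\varphi$ and satisfies condition~\eqref{condition-local-quantification} for every $i$. Your identification of the iterated-exponential trap and the fix are both right.

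There is, however, a genuine gap in your treatment of modulo-tameness. You write that ``all these splits are bounded by the same product-of-moduli quantity,'' but the $\psi_i$'s introduce \emph{new} moduli, and the formula $\eta$ has exponential size, so it may contain exponentially many distinct moduli, each up to exponential in value. Applying Proposition~\ref{prop:ModTameUniv} to $\eta$ as a whole therefore risks a blow-up by $\prod_{M\in\mathfrak{M}} M$, which can be doubly exponential. The paper avoids this with a structural trick you are missing: the recent QE procedures produce each $\psi_i$ as an \emph{exponential disjunction of polynomial-sized conjuncts} $\psi_i=\bigvee_{j=1}^s\psi_{i,j}$. Distributing the conjunction over these disjunctions rewrites $\eta$ as $\bigvee_{f}\tau_f$ with $\tau_f=\varphi\wedge\bigwedge_i\psi_{i,f(i)}$, and each $\tau_f$ is of \emph{polynomial} size in $|\varphi|$. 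Now the product of moduli inside any single $\tau_f$ is at most singly exponential, so Proposition~\ref{prop:ModTameUniv} applied to each $\tau_f$ separately (for all $y_k$ at once) stays within a single exponential per disjunct, and the number of disjuncts $s^{m-1}$ is $2^{|\varphi|^{\mathcal{O}(1)}}$ as well. Without this distribution step, your size bound does not go through.
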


As a third important result, we show that checking whether a formula is in $\PASyn$ has reasonable complexity:
\begin{theorem}\label{check-synthesis-form}
	Given a quantifier-free formula $\varphi(\bar{x},\bar{y})$ in NNF, it is $\coNP$-complete to decide whether $\varphi$ is in $\PASyn$.
\end{theorem}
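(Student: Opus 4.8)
The plan is to establish the two directions of $\coNP$-completeness separately: membership in $\coNP$ and $\coNP$-hardness.

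For membership in $\coNP$, I would argue that the complement — ``$\varphi$ is \emph{not} in $\PASyn$ w.r.t.\ the given ordering'' — is in $\NP$. There are two ways $\varphi$ can fail to be in $\PASyn$: (a) it is not $y_i$-modulo-tame for some $i$, or (b) some formula $\varphi^{(i)}$ from \labelcref{condition-local-quantification} fails to hold. Case (a) is checkable in polynomial (indeed linear) time as already noted after \cref{def:y-mod-tame}, so it certainly does not obstruct an $\NP$ bound. For case (b), the negation of $\varphi^{(i)}$ is of the shape $\exists \bar{x}\,\exists y_1,\ldots,y_i\colon \big(\lexists y_{i+1},\ldots,y_m\colon\varphi \;\wedge\; \neg\exists y_{i+1},\ldots,y_m\colon\varphi\big)$. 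The first conjunct is quantifier-free (local quantification just syntactically substitutes $\top$), and the second conjunct is a $\forall y_{i+1},\ldots,y_m\colon \neg\varphi$ statement. So $\neg\varphi^{(i)}$ lies in the $\exists^*\forall^*$ fragment of Presburger arithmetic. The key technical ingredient here is that this fragment has an $\NP$ satisfiability procedure: since $\neg\varphi$ has a polynomial number of variables and the $\forall$-block can be handled by a small-model/quantifier-elimination argument with singly-exponential constants, one guesses a witness for $\bar{x},y_1,\ldots,y_i$ of polynomially many bits and verifies. (The paper already flags the byproduct that ``the $\exists^*\forall$ fragment of $\langle\Z;+,<,0,1\rangle$ is $\NP$-complete'', which is exactly the engine needed; I would cite/prove that and then note that guessing which $i$ fails and which failure mode costs only a polynomial overhead.)

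For $\coNP$-hardness, I would reduce from the complement of an $\NP$-complete problem — most naturally from the validity (i.e.\ tautology) problem for propositional formulas, or equivalently reduce $\mathsf{UNSAT}$ to ``$\varphi \in \PASyn$''. Given a propositional formula $\beta$ over variables $z_1,\ldots,z_k$, encode each $z_j$ by an integer variable constrained to $\{0,1\}$ via $0\le z_j\le 1$, turning $\beta$ into a quantifier-free $\PA$ formula $\hat\beta(\bar z)$ with no modulo constraints (hence trivially $y$-modulo-tame in any variable). I then build a specification $\varphi(\bar x,\bar y)$ with a carefully chosen output ordering so that exactly one of the conditions $\varphi^{(i)}$ becomes equivalent to ``$\hat\beta$ is unsatisfiable'', while the modulo-tameness conditions and all other $\varphi^{(i)}$ are made to hold for free. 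Concretely, I would take a single output $y$ (or two, with a dummy leading output) and arrange the formula so that $\lexists y\colon\varphi$ is always true (every $y$-atom is erased) but $\exists y\colon\varphi$ is true iff $\hat\beta(\bar z)$ is \emph{unsatisfiable} over the inputs $\bar z$ — e.g.\ by letting $\varphi \equiv \hat\beta(\bar z) \wedge (y \neq y)$ combined disjunctively with a clause that forces feasibility precisely when $\hat\beta$ fails; the substitution $\top$ for the $y$-atoms then makes the local version trivially true, so $\varphi^{(i)}$ holds iff $\forall \bar z\colon \neg\hat\beta(\bar z)$, i.e.\ iff $\beta$ is unsatisfiable. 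Thus $\varphi\in\PASyn$ iff $\beta\in\mathsf{UNSAT}$, giving $\coNP$-hardness.

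\textbf{The main obstacle} I expect is the membership direction — specifically, pinning down the $\NP$ upper bound for deciding satisfiability of $\neg\varphi^{(i)}$, which is an $\exists^*\forall^*$ Presburger sentence with a nested quantifier-free part of polynomial size but with binary-encoded constants. The subtlety is that naively the inner $\forall$-block over $y_{i+1},\ldots,y_m$ could require reasoning whose witnesses are doubly-exponential in magnitude; I would need the refined small-model bounds (e.g.\ via \cref{affine-transformations} and the Haase et al.\ quantifier-elimination machinery cited in the paper) to guarantee that whenever $\exists\bar x\exists y_1\ldots y_i$ part has a witness at all, it has one with polynomially many bits, so that an $\NP$ machine can guess-and-check. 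Handling the case $m>1$ cleanly — so that the guesses for $\bar x, y_1,\ldots,y_i$ and the refutation of the inner existential all fit in polynomial size simultaneously, and that we correctly account for the multiple conditions $\varphi^{(i)}$, $i\in[1,m-1]$ — is the place where care is required, but it is a polynomial bookkeeping layer on top of the single $\exists^*\forall^*$ decision procedure.
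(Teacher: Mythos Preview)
Your membership argument has a genuine gap. You correctly observe that $\neg\varphi^{(i)}$ is an $\exists^*\forall^*$ sentence, with the inner block $\forall y_{i+1},\ldots,y_m$. You then want to invoke an $\NP$ bound for this fragment, citing the paper's byproduct that $\exists^*\forall$ over $\langle\Z;+,<,0,1\rangle$ is $\NP$-complete. But that result is for a \emph{single} universal quantifier and \emph{without} modulo constraints, and neither restriction applies to your sentence. In fact, the fragment you land in is provably too hard: the paper itself notes that $\forall\exists^*$ over $\langle\Z;+,<,0,1\rangle$ is $\coNEXP$-complete, so by negation $\exists\forall^*$ is $\mathsf{NEXP}$-complete; and with modulo constraints (which $\varphi$ may contain) even $\exists\forall$ is $\Sigma_2^p$-hard. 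So ``polynomial bookkeeping'' cannot close this; the direct route through $\exists^*\forall^*$ is blocked by complexity lower bounds.

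The paper's proof avoids this by a reformulation you are missing. Instead of checking $\varphi^{(i)}$, it checks
\[
\varphi^{\dagger(i)}:=\forall \bar{x}\,\forall y_1,\ldots,y_i\colon\big(\lexists y_{i+1},\ldots,y_m\colon\varphi \to \exists y_{i+1}\colon \lexists y_{i+2},\ldots,y_m\colon\varphi\big),
\]
which has only \emph{one} genuine existential quantifier (the remaining $y_{i+2},\ldots,y_m$ are locally quantified, i.e.\ syntactically replaced by $\top$). An easy induction from $i=m-1$ downward shows $\bigwedge_i\varphi^{\dagger(i)}$ is equivalent to $\bigwedge_i\varphi^{(i)}$. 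Each $\varphi^{\dagger(i)}$ is then a $\forall^*\exists$ sentence that is modulo-tame in its single existential variable, so \cref{modulo-tame-conp} (proved via the polynomial-size Skolem circuit of \cref{presburger-circuit-one-output}) puts it in $\coNP$. This reduction to a single inner quantifier is the missing idea.

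Your hardness sketch is in the right spirit but not quite worked out; the paper's reduction is simpler: it observes that for a formula with no $y$-atoms at all, membership in $\PASyn$ degenerates to validity of a universal $\PA$ sentence, which is already $\coNP$-hard.
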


Finally, we have a corollary of independent interest:
\begin{corollary}\label{exists-star-forall}
	The $\exists^*\forall$ fragment of formulas over the structure $\langle
	\Z;+,<,0,1\rangle$ is $\NP$-complete.
\end{corollary}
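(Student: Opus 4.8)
The plan is to derive both membership in $\NP$ and $\NP$-hardness. Hardness is immediate: already the $\exists^*$ fragment (existential Presburger without modulo constraints, i.e.\ integer programming feasibility) is $\NP$-hard, and $\exists^*\varphi$ is a special case of $\exists^*\forall\,\varphi$ with a vacuous universal block. So the entire content is the upper bound, and here the idea is to piggyback on Theorem~\ref{check-synthesis-form}: the proof of that theorem must already show that the complement of ``$\varphi\in\PASyn$'' — which involves deciding satisfiability of formulas of the shape~\eqref{condition-local-quantification}, i.e.\ $\exists\bar{x}\,\exists\bar{y}_{\le i}\colon(\lexists\,\varphi)\wedge\neg\exists\bar{y}_{>i}\colon\varphi$ — lies in $\NP$. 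After pushing the negation in, $\neg\exists\bar{y}_{>i}\colon\varphi$ is a $\forall$ over a quantifier-free formula, so that satisfiability test is precisely an $\exists^*\forall$ satisfiability question over $\langle\Z;+,<,0,1\rangle$ (with modulo constraints, but those are themselves expressible with one extra existential quantifier and a fresh variable, so they add nothing). Hence the machinery behind Theorem~\ref{check-synthesis-form} gives the corollary; I would state it as a direct consequence of (the proof of) that theorem.

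For a self-contained argument I would instead argue the upper bound directly, as follows. Take $\Phi\equiv\exists\bar{x}\,\forall\bar{y}\colon\psi(\bar{x},\bar{y})$ with $\psi$ quantifier-free; first eliminate modulo constraints $\sum a_ix_i+\sum b_jy_j\equiv r\pmod M$ from $\psi$ by the standard trick of replacing each with an equation $\sum a_ix_i+\sum b_jy_j = Mz + r$ for a fresh existentially-quantified variable $z$ ranging over $\Z$ (the number of such $z$'s, and the magnitude of the moduli in binary, are polynomially bounded), so it suffices to treat $\Phi$ over the pure structure $\langle\Z;+,<,0,1\rangle$. Now the point is that the universally quantified formula $\forall\bar{y}\colon\psi(\bar{x},\bar{y})$, although over $\Z$, can be analyzed by a small-model / quantifier-elimination argument with only polynomial blow-up in the right parameters: negating, $\exists\bar{y}\colon\neg\psi(\bar{x},\bar{y})$ is an existential Presburger sentence whose solution set in $\bar{y}$, for fixed $\bar{x}$, is a union of polyhedra whose facet inequalities have coefficients of polynomial bit-length (they come verbatim from $\psi$) — and by classical bounds (or by Proposition~\ref{affine-transformations} applied per polyhedron) such a system, if satisfiable, has a solution $\bar{y}$ that is an affine image $D\bar{x}+\bar{d}$ of $\bar{x}$ with $\fracnorm{D},\fracnorm{\bar{d}}$ polynomially bounded in $|\psi|$. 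So $\exists\bar{x}\colon\forall\bar{y}\colon\psi$ holds iff there is $\bar{x}$ such that none of the (exponentially many, but each polynomially described) candidate witnesses $\bar{y}=D\bar{x}+\bar{d}$ falsifies $\psi$; one then shows this $\bar{x}$ itself can be taken of polynomial bit-length, so an $\NP$ machine guesses $\bar{x}$ in binary and verifies $\forall\bar{y}\colon\psi(\bar{x},\bar{y})$ — the latter check being itself in $\coNP$, hence the whole thing is $\exists\cdot\coNP$; the extra work is to collapse this to $\NP$, which is exactly where one re-uses that the falsifying $\bar{y}$'s come from a polynomial-size family of affine templates, turning the $\coNP$ check into a polynomial-size certificate that can be folded into the $\NP$ guess.

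The main obstacle is precisely this last collapse: naively, deciding an $\exists\forall$ Presburger sentence sits in $\Sigma_2^p$ (guess $\bar{x}$, then a $\coNP$ check), and the nontrivial part is showing it is already in $\NP$. The key technical lemma needed is that the set of ``dangerous'' $\bar{y}$'s — those that could witness $\neg\psi$ for \emph{some} $\bar{x}$ — is governed by polynomially many affine templates $\bar{y}=D_k\bar{x}+\bar{d}_k$ (with small fractional norms), so that the $\NP$ machine can, together with $\bar{x}$, also verify directly that each of these polynomially many candidate $\bar{y}$-values satisfies $\psi$, rather than having to range over all $\bar{y}\in\Z^{|\bar{y}|}$. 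Establishing that template bound — and that the bit-length of a good $\bar{x}$ is polynomial — is the crux; it follows the same geometric reasoning as Proposition~\ref{affine-transformations} and the quantifier-elimination technique of \cite{DBLP:conf/icalp/HaaseKMMZ24}, and is in fact what the proof of Theorem~\ref{check-synthesis-form} sets up, which is why deducing the corollary from that proof is the cleanest route.
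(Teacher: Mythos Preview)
Your proposal has two substantive errors and takes a needlessly indirect route.

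First, the claim that modulo constraints ``are themselves expressible with one extra existential quantifier and a fresh variable, so they add nothing'' is wrong in this context. A constraint $t(\bar{x},y)\equiv r\pmod{M}$ that involves the universally quantified variable $y$ would require $\exists z\colon t(\bar{x},y)=Mz+r$ where $z$ depends on $y$; this existential must go \emph{inside} the universal, changing the quantifier shape to $\exists^*\forall\exists$, not $\exists^*\forall$. The paper explicitly notes, right after the corollary, that with modulo constraints the $\exists\forall$ fragment is already $\Sigma_2^p$-hard. So restricting to the signature $\langle\Z;+,<,0,1\rangle$ is essential, not cosmetic.

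Second, throughout your self-contained argument you write $\forall\bar{y}$, but the fragment is $\exists^*\forall$ with a \emph{single} universal variable. This matters: for a vector $\bar{y}$, \cref{affine-transformations} yields exponentially many affine templates (as you yourself note), and your proposed collapse of the $\coNP$ verification to an $\NP$ guess would then require checking exponentially many candidates. The ``polynomially many templates'' you need is only available when $y$ is a single variable, where the candidate witnesses are simply the interval endpoints coming from the atomic inequalities---this is what \cref{presburger-circuit-one-output} establishes.

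The paper's route is much shorter and avoids both issues. It derives the corollary from \cref{modulo-tame-conp} (not \cref{check-synthesis-form}, which is downstream): given $\exists\bar{x}\forall y\colon\psi$ over $\langle\Z;+,<,0,1\rangle$, negate to get $\forall\bar{x}\exists y\colon\neg\psi$. Since there are no modulo constraints at all, $\neg\psi$ is trivially $y$-modulo-tame, so \cref{modulo-tame-conp} places the negated sentence in $\coNP$, hence the original in $\NP$. Hardness is as you say. The polynomial Skolem circuit of \cref{presburger-circuit-one-output} is doing exactly the work you gesture at with ``affine templates,'' but packaged so that no separate small-model argument for $\bar{x}$ is needed: one just writes the condition $\neg\psi(\bar{x},f(\bar{x}))$ (with $f$ the circuit) as an existential Presburger formula and invokes $\NP$-completeness of that fragment.
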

In \cref{exists-star-forall}, it is crucial that the input formula is over the
structure $\langle\Z;+,<,0,1\rangle$, meaning it cannot contain modulo
constraints.  Indeed, a reduction similar to \cref{hardness-one-one-long} shows that
with modulo constraints, even the $\exists\forall$ fragment is
$\Sigma_2^p$-hard. \cref{exists-star-forall} is somewhat surprising, since the $\forall\exists^*$ fragment of $\langle\Z;+,<,0,1\rangle$ is
$\coNEXP$-complete~\cite[Thm.~1]{DBLP:conf/csl/Haase14} (the lower bound was
already shown in \cite[Thm.~4.2]{DBLP:journals/apal/Gradel89}). Hence, in this setting, allowing an unbounded number of inner quantifiers is more expensive than allowing an unbounded number of outer quantifiers.
Furthermore, \cref{exists-star-forall} complements a result of \citeauthor{DBLP:journals/mst/Schoning97} (\citeyear{DBLP:journals/mst/Schoning97}, Corollary), which states that the $\exists\forall$ fragment for the structure $\langle\Z;+,<,0,1\rangle$ is $\NP$-complete: Together, \cref{exists-star-forall} and Sch\"{o}ning's result imply that for every $i\ge 1$, the fragment $\exists^i\forall$ is $\NP$-complete.

The remainder of this section is devoted to proving \cref{construct-circuit-for-semantic-nf,construct-semantic-nf,check-synthesis-form,exists-star-forall}.
Of these proofs, \cref{construct-circuit-for-semantic-nf} is the most involved. It is shown in two steps: First, we prove \cref{construct-circuit-for-semantic-nf} in the case of one output variable (i.e.\ $m=1$). Then, we show that this procedure can be used repeatedly to solve $\PFnS$ in general in polynomial time.
\paragraph{The case of one output}
We first prove \cref{construct-circuit-for-semantic-nf} when $m=1$. In this setting, $\PASyn$ is equivalent to modulo-tameness w.r.t.\ the only output variable.
\begin{theorem}\label{presburger-circuit-one-output}
Let $\varphi(\bar{x},y)$ be a $y$-modulo-tame quantifier-free $\PA$
formula.  A Skolem circuit for $y$ in $\forall \bar{x} \exists
y: \varphi(\bar{x}, y)$ can be computed in time polynomial in
$|\varphi|$.
\end{theorem}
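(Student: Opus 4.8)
The plan is to build, directly from the NNF tree of $\varphi$, a polynomially-sized family of \emph{candidate} circuits $g_1,\dots,g_N\colon\Z^n\to\Z$, an \emph{evaluation circuit} $\mathrm{ev}_\varphi(\bar x,y)$ that outputs $1$ if $\varphi(\bar x,y)$ holds and $0$ otherwise, and then a ``first satisfying candidate wins'' combinator: on input $\bar u$ the Skolem circuit returns $g_{c}(\bar u)$ for the least $c$ with $\mathrm{ev}_\varphi(\bar u,g_c(\bar u))=1$, and $0$ if no such $c$ exists. Since $0$ will itself be one of the candidates, correctness reduces to a single lemma: \emph{whenever $\exists y\colon\varphi(\bar u,y)$, at least one candidate $g_c(\bar u)$ satisfies $\varphi(\bar u,\cdot)$.} This lemma is the only place where $y$-modulo-tameness is needed.

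\textbf{Evaluation circuit and combinator (routine).} I would obtain $\mathrm{ev}_\varphi$ by bottom-up recursion on the tree: a leaf $L(\bar x,y)\ge 0$ becomes $\max(L+1,0)-\max(L,0)$; a leaf $E\equiv r\pmod M$ (with $E$ the linear form) becomes $\Eq\big(E-r-M\cdot\fdiv_M(E-r),\,1\big)$ (the argument of $\Eq$ is $(E-r)\bmod M\in[0,M)$, which vanishes iff the congruence holds), its negation being $1$ minus this; an $\land$-node is the minimum of its children's values, written $-\max(-\cdot,-\cdot)$ over a balanced tree, and an $\lor$-node their maximum. This uses only linear, $\max$, $\Eq$, $\fdiv$ gates and has size $O(|\varphi|)$. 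The combinator is then standard: with $b_c:=\mathrm{ev}_\varphi(\bar x,g_c(\bar x))$ and $\mathrm{first}_c:=\min\!\big(b_c,\,1-\max_{c'<c}b_{c'}\big)$ (the max over the empty set read as $0$), at most one $\mathrm{first}_c$ equals $1$, and $\sum_c\Eq(1-\mathrm{first}_c,\,g_c(\bar x))$ returns $g_{c^\ast}(\bar x)$ for the least $c^\ast$ with $b_{c^\ast}=1$, or $0$ if there is none --- giving exactly the Skolem property once the lemma above holds.

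\textbf{The candidates.} For each inequality in $\varphi$ in which $y$ has a nonzero coefficient $a$, its \emph{breakpoint} $\beta(\bar x)$ --- the value of $y$ at which the inequality flips --- is $\pm\lfloor(\cdot)/|a|\rfloor$ of a linear form in $\bar x$, hence a small circuit using one $\fdiv_{|a|}$ gate; there are $\le|\varphi|$ of these. For each modulo constraint on $y$ in $\varphi$ record the pair $(r,M)$ of residue and modulus ($\le|\varphi|$ pairs). Let $B_{\max}(\bar x)$ (resp.\ $B_{\min}(\bar x)$) be the max (resp.\ min) of $0$ and all breakpoints, via a $\max/\min$ tree. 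The candidate list is: the constants $0$ and each recorded $r$; for each breakpoint $\beta$ the values $\beta$ and $\beta\pm1$; for each breakpoint $\beta$ and each pair $(r,M)$, the least $y\ge\beta$ with $y\equiv r\pmod M$ (namely $\beta+((r-\beta)\bmod M)$, one $\fdiv_M$ gate), and likewise the least $y>\beta$, the greatest $y\le\beta$, and the greatest $y<\beta$ congruent to $r$ modulo $M$; and for each pair $(r,M)$ the least $y>B_{\max}$ and the greatest $y<B_{\min}$ congruent to $r$ modulo $M$. This is $O(|\varphi|^2)$ candidates, each a polynomial-size circuit with poly-bit coefficients, and the whole construction runs in polynomial time.

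\textbf{The key lemma --- the main obstacle.} Fix $\bar u$ with $\varphi(\bar u,\cdot)$ satisfiable and write $\varphi$ as the disjunction of its top-level disjuncts (maximal conjunctive subformulas and leaves); some disjunct $\psi$ is satisfied by some $y$, and by $y$-modulo-tameness \emph{every} modulo constraint on $y$ inside $\psi$ has the same modulus $M$ and is positive. Hence, for the fixed $\bar u$, $\psi(\bar u,y)$ is a \emph{positive} Boolean combination of (i)~the truth values of the finitely many inequalities on $y$, which are step functions of $y$ changing only at breakpoints, (ii)~the literals $[y\equiv r_i\pmod M]$ whose residues $r_i$ range over the $O(|\psi|)$ residue-constants of $\psi$, and (iii)~$\bar u$-determined constants. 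Therefore, on each maximal interval $I$ delimited by consecutive breakpoints, $\{y\in I:\psi(\bar u,y)\}$ is $I$ intersected with a union of residue classes modulo $M$; and positivity forces this residue set to be \emph{either} all of $\Z/M$ (when setting all $[y\equiv r_i]$ to false already makes $\psi$ true on $I$) \emph{or} a subset of $\{r_1,\dots,r_t\}$. In the first case a point of $I$ adjacent to an endpoint --- covered by the $\beta,\beta\pm1$ and $B_{\max}+1$, $B_{\min}-1$ candidates --- lies in $\{y\in I:\psi(\bar u,y)\}$; in the second case the candidate ``least $y$ in $I$ congruent to the right $r_i$ modulo $M$'' does (an unbounded $I$ is handled by the $B_{\max}$/$B_{\min}$ candidates, while if $|I|\ge M$ every residue class is met within $M$ of an endpoint, so this candidate genuinely lies in $I$). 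Either way some candidate satisfies $\psi$, hence $\varphi$. The crux --- and the only real work --- is exactly this: $y$-modulo-tameness collapses each disjunct's modular behaviour to a \emph{single} modulus, so the useful residues are the $O(|\varphi|)$ residue-constants rather than an exponential set modulo an lcm of many moduli; the leftover bookkeeping (strict vs.\ non-strict inequalities, open vs.\ closed intervals) is routine.
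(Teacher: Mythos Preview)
Your argument is correct, and it takes a genuinely different route from the paper's proof. The paper, after reducing to a single maximal conjunctive subformula and a single residue $r$ (so that $y$-modulo constraints can be replaced by $\top$/$\bot$), builds an \emph{interval-computing circuit} that outputs, for each $\bar u$, the endpoints of the finitely many intervals making up $\{v:\varphi(\bar u,v)\}$; the hard step is handling conjunctions without a quadratic blow-up in the number of intervals, which the paper resolves with a sorting-network--inspired ``coalesce-and-sort'' gadget that takes $st$ pairwise intersections and compresses them back down to $\max(s,t)$ disjoint intervals. A witness congruent to $r$ is then extracted from one of the intervals.

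Your approach sidesteps this machinery entirely: rather than computing the interval structure, you enumerate $O(|\varphi|^2)$ candidate values (breakpoints shifted into each recorded residue class) and test each one with an evaluation circuit. The correctness hinges on your monotonicity observation --- that on a breakpoint-free interval the truth of a maximal conjunctive $\psi$ depends only on $y\bmod M$, and either holds everywhere or only at recorded residues --- which is exactly where $y$-modulo-tameness does its work. This is more elementary and arguably shorter. What the paper's approach buys in exchange for the extra work is a complete description of the solution set $\{v:\varphi(\bar u,v)\}$ as a circuit-computed union of intervals; the paper exploits this to characterize \emph{all} Presburger-definable Skolem functions for $\varphi$ (their Corollary on choice functions $\rho,\pi,\sigma$), something your candidate-testing method does not directly yield.
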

Below, we give an outline of the proof of
\cref{presburger-circuit-one-output}, leaving the details to
\cref{app:presburger-circuit-one-output}.
\subparagraph{Step I: Simplify modulo constraints}
We restrict ourselves to the case of $\varphi$ being conjunctive
(i.e.\ its top-most connective is a conjunction): else, one can first
compute a Skolem circuit for each maximal conjunctive subformula, and
then easily combine these circuits into a Skolem circuit for
$\varphi$. Since $\varphi$ is modulo-tame, there is an $M\in\N$ such
that all modulo constraints on $y$ in $\varphi$ are of the form
$y\equiv r\pmod M$ for some $r\in\N$. Let $R$ denote the set of all
such $r$; clearly, $|R|\le|\varphi|$. Now, it suffices to construct a
Skolem circuit $\circuit_r$ for each formula $\varphi_r:= \big(\varphi\wedge
y\equiv r\pmod{M}\big)$ for $r\in R$: This is because from these $|R|$
circuits, we can easily construct one for $\varphi$: Just compute
$\circuit_r(u)$ for each $r\in R$, and check
whether $\varphi(u,\circuit_r(u))$ holds; if it does, then output
$\circuit_r(u)$ (if no $\circuit_r(u)$ works, then the output can be
arbitrary).

However, $\varphi\wedge y\equiv r\pmod{M}$ is equivalent to a formula
$\varphi'\wedge y\equiv r\pmod{M}$, where $\varphi'$ contains no
modulo constraints on $y$. Indeed, a modulo constraint on $y$ in $\varphi$ is
either consistent with $y\equiv r\pmod{M}$ and can be replaced with
$\top$, or it is inconsistent with $y\equiv r\pmod{M}$ and can be
replaced with $\bot$.  Thus, we may assume that our input formula is
of the form $\varphi\wedge y\equiv r\pmod{M}$, where $\varphi$
is \emph{$y$-modulo-free}, meaning $\varphi$ contains no modulo
constraints on $y$.
\paragraph{Step II: Computing interval ends}
First, note that for any $\bar{u}\in\Z^n$, the set $V_{\bar{u}}$ of
all $v\in\Z$ for which $\varphi(\bar{u},v)$ holds can be represented
as a finite union of intervals. This is because $\varphi(\bar{x},y)$
has no modulo constraints on $y$, and thus every atomic formula is an
inequality that either yields (i)~an upper bound or (ii)~a lower bound
on $y$, given a value of $\bar{x}$.

Next, we \emph{construct Presburger circuits that compute the ends of
these finitely many intervals}. Once we do this, it is easy to
construct a Skolem circuit for $\varphi\wedge y\equiv r\pmod{M}$: For
each interval in some order, check (using $\fdiv_M$) whether it
contains a number $\equiv r\pmod{M}$, and if so, output one.

\newcommand{\Powerset}[1]{2^{#1}}
Let us describe more precisely how a circuit computes the interval
union $V_{\bar{u}}$.  An \emph{interval-computing circuit} is a
Presburger circuit $\circuit$ that computes a function $\Z^n\to
(\Z\times\Z)^{k+2}$ for some $k\in\N$. It induces a function
$F_\circuit\colon \Z^n\to\Powerset{\Z}$ as follows. If
$\circuit(\bar{u})=\langle
r_0,s_0,r_1,s_1,\ldots,r_{k+1},s_{k+1}\rangle$ for some
$\bar{u}\in\Z^n$, then we set
$F_\circuit(\bar{u}):=I~\cup~J_1~\cup~\cdots~\cup~J_k~\cup~K$, where
$J_i$ is the closed interval $[r_i,s_i]=\{v\in\Z \mid r_i\le v\le
s_i\}$; and $I$ is the left-open interval $(-\infty,s_0]$ if $r_0=1$
and $I=\emptyset$ if $r_0\ne 1$; and $K$ is the right-open interval
$K=[r_{k+1},\infty)$ if $s_{k+1}=1$ and $K=\emptyset$ if $s_{k+1}\ne
1$. Thus, while $r_1, s_1, \ldots r_k, s_k$ represent end-points of
$k$ (possibly overlapping) intervals, $r_0$ (resp.\ $s_{k+1}$)
serves as a flag indicating whether the left-open interval $(-\infty,
s_0]$ (resp.\ right-open interval $[r_{k+1}, \infty)$) is to be
included in $V_{\bar{u}}$.

For a formula $\varphi(\bar{x},y)$ with one output $y$ and no
modulo-constraints on $y$, we say that an interval-computing circuit $\circuit$
is \emph{equivalent to $\varphi$} if for every $\bar{u}\in\Z^n$ and every $v\in\Z$,
$\varphi(\bar{u},v)$ holds \emph{if and only if} $v\in F_\circuit(\bar{u})$.
The most technical ingredient in our construction is to show:
\vspace{-0.1cm}
\begin{claim}\label{comput-interval-endpoints}
	Given a quantifier-free $y$-modulo-free Presburger formula, we can
	compute in polynomial time an equivalent interval-computing circuit.
\end{claim}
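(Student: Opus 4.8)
The plan is to proceed by structural induction on the NNF tree of the $y$-modulo-free formula $\varphi(\bar{x},y)$, showing that at every node we can build, in polynomial time and size, an interval-computing circuit equivalent to the subformula at that node. The base case is a single linear inequality $\sum_k a_k x_k + a y + b \ge 0$. If $a = 0$ the inequality does not mention $y$; we output a circuit whose induced set is either all of $\Z$ or $\emptyset$, controlled by the sign of $\sum_k a_k x_k + b$ (computed by a linear gate, then converted to the appropriate flag pattern $r_0 = s_0 = 1$, $s_{k+1} = r_{k+1} = 1$, and all $k$ inner intervals empty, using an $\Eq$ gate on $\max(\sum a_k x_k + b, 0)$-type combinations to produce the boolean flags). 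If $a > 0$ the constraint is a lower bound $y \ge \lceil -(\sum a_k x_k + b)/a \rceil$, which we compute with a $\fdiv_a$ gate (taking care of the ceiling-versus-floor and the sign of $a$ by a fixed affine adjustment), and emit it as a single right-open interval $K = [r_{k+1}, \infty)$ with $s_{k+1} = 1$, all other components empty; symmetrically for $a < 0$ we emit a single left-open interval $I = (-\infty, s_0]$.

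The inductive step has two cases. For a $\lor$-node with children $\psi_1,\ldots,\psi_t$, by induction we have interval-computing circuits $\circuit_1,\ldots,\circuit_t$ whose induced sets union to the desired set; here we simply \emph{concatenate} the output lists — the new circuit outputs all the $(r_i,s_i)$ pairs of all children (and ORs together the left-open flags, and ORs the right-open flags, using $\max$ and $\Eq$ gates to compute the boolean disjunction of the flag bits). Since the definition of $F_\circuit$ already allows overlapping intervals, the union is handled for free, and the size and the number of inner intervals grow only additively. The harder case is the $\land$-node: given $\circuit_1,\ldots,\circuit_t$ computing sets $S_1,\ldots,S_t$, we must build a circuit for $S_1 \cap \cdots \cap S_t$. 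Here we use the observation that each $S_j$ is a union of at most polynomially many intervals whose \emph{endpoints are already available as gate outputs}, and the intersection of these unions is again a union of intervals whose endpoints are among the $\max$ of the left endpoints and $\min$ of the right endpoints over the various choices — i.e. each resulting interval has the form $[\max_j \ell_j,\ \min_j r_j]$ for one interval $[\ell_j, r_j]$ chosen from each $S_j$. Concretely, for every tuple $(e_1,\ldots,e_t)$ selecting one interval from each child we add one output interval $[\max(\ell_{1,e_1},\ldots,\ell_{t,e_t}),\ \min(r_{1,e_1},\ldots,r_{t,e_t})]$, computed with $\max$ gates (and $\min(a,b) = -\max(-a,-b)$); the left-open/right-open pieces are folded in by treating $-\infty$ and $+\infty$ appropriately via the flag bits. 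A degenerate (empty) resulting interval with $\ell > r$ simply contributes no elements to $F_\circuit$, so no pruning is needed. The left-open part of the intersection is the left-open interval of the child with the smallest such endpoint \emph{and} only if all children have a left-open part, which is again a short $\max$/$\Eq$ computation on the flags.

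The main obstacle — and the place where care is genuinely needed — is the \emph{polynomial size bound through nested conjunctions}: the naive product-of-choices construction above is exponential if conjunctions are nested or have many children, because the number of inner intervals multiplies. The fix is to bound the number of intervals that actually \emph{matter}: at any node, the induced set $S$ is determined by the finitely many endpoints produced so far, and the number of \emph{maximal} intervals in $S$ is at most the number of distinct endpoints, which is polynomial in $|\varphi|$. So, rather than carrying around an ever-growing list of possibly-redundant intervals, we should normalize: at each $\land$-node we keep only $O(L)$ output pairs where $L$ is the number of leaf inequalities in the whole subtree, by a sorting-network-style circuit that takes all candidate endpoints, and for each candidate endpoint $e$ uses $\Eq$/$\max$ gates to test "is $e \in S$?" and "is $e-1 \notin S$?" (membership in a union of known intervals is a polynomial-size $\max$/$\Eq$/summation computation, exactly as in the Skolem-circuit construction in the proof of \cref{thm:exponential-presburger-circuit}), thereby extracting the genuine left endpoints, and dually the right endpoints. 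This membership-test-and-re-extract step is what keeps every intermediate circuit of size polynomial in $|\varphi|$, and getting its gate-level details right (especially handling the two unbounded ends uniformly and making the endpoint candidates come out in sorted order so they can be paired up) is the technical heart of the argument.
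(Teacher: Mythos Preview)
Your proposal is correct and shares the paper's overall structure (induction on the NNF tree, with the same treatment of atoms and of $\lor$-nodes), but you handle the $\land$-case by a genuinely different mechanism. The paper never tests membership pointwise; instead it first \emph{coalesces-and-sorts} each child's interval list into a list of pairwise disjoint, $\prec$-ordered intervals using a sorting-network-style gadget on \emph{interval pairs} (the $\scomp$ comparator), then forms all $s\cdot t$ pairwise intersections, and finally coalesces-and-sorts again, exploiting the fact that the intersection of a union of $s$ disjoint intervals with a union of $t$ disjoint intervals has at most $\max(s,t)$ non-empty pieces; this directly yields the additive invariant $|L_p|\le |L_{lc}|+|L_{rc}|$. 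Your route---collect all candidate endpoints, use a polynomial-size membership oracle for $S=\bigcap_j S_j$ to flag which candidates are genuine left (resp.\ right) endpoints of maximal intervals, then sort-and-pair---achieves the same invariant by a more ``first-principles'' argument and avoids ever materialising the $s\cdot t$ intermediate intervals. The trade-off is that the paper's interval-level comparator makes the pairing of left and right endpoints automatic, whereas in your approach the compaction of flagged endpoints and their alignment into $[\ell_i,r_i]$ pairs is, as you note, the genuinely delicate step (though certainly doable with a standard sorting network keyed on $(\text{flag},\text{value})$). One small omission: your base case should also cover atomic modulo constraints on $\bar{x}$ (the formula is only $y$-modulo-free), but these are handled exactly like the $a=0$ inequality case.
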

\begin{proof}[Proof sketch]
	We build the circuit by structural induction, beginning with atomic
	formulas. Each atomic formula imposes either a lower bound or an upper
	bound on $y$, which can be computed using linear functions and $\fdiv$.
	For example, if the formula is $-x_1+3x_2+5y\ge 0$, then this is
	equivalent to $y\ge\tfrac{1}{5}(x_1-3x_2)$, and thus we compute
	$\fdiv_5(x_1-3x_2)$ as the only lower bound.

	Building the circuit for a disjunction
	$\varphi_1\vee\varphi_2$ is easy: Starting from circuits
	$\circuit_1$ and $\circuit_2$, we simply output all the closed
	intervals output by each circuit.  The open intervals output
	by the circuits are combined slightly differently depending on
	the values of the flags. For example, if $\circuit_1(u)$ and
	$\circuit_2(u)$ include intervals $[t_1,\infty)$ and
	$[t_2,\infty)$, then the new circuit will produce the interval
	$[\min(t_1,t_2),\infty)$.

	The difficult step is to treat conjunctions
	$\varphi_1\wedge\varphi_2$.  Here, we follow a strategy
	inspired from sorting networks~\cite{CLRS,AKS83}
	to \emph{coalesce-and-sort} the intervals output by each of
	$\circuit_1$ and $\circuit_2$. A
	basic \emph{coalesce-and-sort} gadget takes as input a pair of
	(possibly overlapping) intervals $[r, s]$ and $[r', s']$, and
	coalesces them into one interval if they overlap; otherwise it
	leaves them unchanged.  The gadget outputs two disjoint
	intervals $[t, u]$ and $[t', u']$, with $[t, u]$ ``ordered
	below'' $[t',u']$, such that $[t,u] \cup [t',u'] = [r,s] \cup
	[r',s']$, and either $[t, u] = \emptyset$ or $u < t'$.  Thus,
	empty intervals are ordered below non-empty ones, and
	non-empty intervals are ordered by their end-points.  A
	coalesce-and-sort network is a sorting network built using
	these gadgets. If $\circuit_i$ outputs $q_i$ (possibly
	overlapping) intervals, feeding these to a coalesce-and-sort
	network yields at most $q_i$ disjoint sorted intervals.  The
	interval-computing circuit for $\varphi_1\wedge\varphi_2$ now
	computes the $q_1q_2$ pairwise intersections of these disjoint
	intervals, coalesce-and-sorts the resulting intervals, and
	returns the $\max(q_1, q_2)$ intervals at the top of the
	sorted order. This is sound because intersecting the union of
	$q_1$ disjoint intervals with the union of some other $q_2$
	disjoint intervals yields at most $\max(q_1, q_2)$
	non-empty disjoint intervals.

	To keep the interval-computing circuit size under check, our
	construction maintains carefully chosen size
	invariants. Specifically, we ensure that the number of
	interval endpoints at the output of, and indeed the total size
	of the interval-computing circuit for
	$\varphi_1 \vee \varphi_2$ or $\varphi_1 \wedge \varphi_2$ is
	always bounded by a polynomial in $|\varphi_1| + |\varphi_2|$.
	Intuitively, since each interval endpoint at the output of the
	interval-computing circuit must originate from an atomic
	formula at a leaf in the tree representation of the
	specification, there are atmost a polynomial number of
	interval endpoints to track.

        The reader is referred to \cref{app:semantic-nf} for
        details of the proof, and a pictorial depiction.
\end{proof}
\paragraph{The case of multiple output variables}
It remains to prove \cref{construct-circuit-for-semantic-nf} in the general case (i.e.~$m\ge 1$).
\begin{proof}[Proof of \cref{construct-circuit-for-semantic-nf}]
 Let $\widehat{\varphi}^{(i)}$ denote $\exists y_{i+1} \ldots y_m\colon$ $\varphi(\bar{x}, \bar{y})$, for $i \in \{1, \ldots m-1\}$.  For each $i$ in $m$ down to $1$, we obtain a Presburger circuit for a Skolem function $f_i$ for $y_i$ in $\forall \bar{x} \forall y_1, \ldots y_{i-1} \exists y_i\colon\widehat{\varphi}^{(i)}(\bar{x},y_1, \ldots y_i)$ using Theorem~\ref{presburger-circuit-one-output} for single output specifications. Each such $f_i$ expresses $y_i$ in terms of $\bar{x}$ and $y_1, \ldots y_{i-1}$.  It is easy to see that by composing the resulting Presburger circuits, we can obtain Presburger circuits for Skolem functions for all $y_i$'s in $\forall \bar{x} \exists \bar{y}\colon\varphi(\bar{x}, \bar{y})$.  Each resulting Skolem function is of course expressed only in terms of $\bar{x}$. 
\end{proof}
\paragraph{Achieving $\PASyn$}
We now prove \cref{construct-semantic-nf}.
using (either of) the recent QE procedures.%
\begin{proof}[Proof of \cref{construct-semantic-nf}]
	For each $i\in[1,m-1]$, let $\psi_i$ be a quantifier-free equivalent to 
	$\exists y_{i+1},\ldots,y_{m}\colon \varphi(\bar{x},\bar{y})$.
	By recent results on
	quantifier-elimination~\cite{DBLP:conf/icalp/0001MS24,DBLP:conf/icalp/HaaseKMMZ24},
	we can obtain such a $\psi_i$ of at most exponential size in
	$|\varphi|$. The formula
        $\eta=\varphi\wedge \bigwedge_{i\in[1,m-1]} \psi_i$%
	is equivalent to $\varphi$, and satisfies the equivalence condition
	regarding local and global quantification. It remains to achieve
	modulo-tameness. For this, we notice that both recent QE procedures,
	\cite[Thm.~3]{DBLP:conf/icalp/0001MS24} and
	\cite[Thm.~3.1]{DBLP:conf/icalp/HaaseKMMZ24} produce an exponential
	disjunction of polynomial-sized formulas. We may thus assume that
	$\psi_i=\bigvee_{j=1}^s\psi_{i,j}$ for some exponential $s$ for each
	$i\in[1,m-1]$. We can now write $\eta$
        equivalently as
	$\bigvee_{f\in F} \left(\varphi\wedge\bigwedge_{i\in[1,m-1]} \psi_{i,f(i)}\right)$, where $F$ is the set of functions $f\colon [1,m-1]\to[1,s]$. Observe that each formula $\tau_f:=\varphi\wedge\bigwedge_{i\in[1,m-1]}
	\psi_{i,f(i)}$ is of polynomial size, and thus product $M$ of all
	moduli ocurring in $\tau_f$ is at most exponential.  We thus
	rewrite all modulo constraints in $\tau_f$ for variables $y_k$ w.r.t.\
	$M$, yielding an exponential-sized equivalent of $\tau_f$ which is
	$y_k$-modulo-tame for all $k$. The resulting formula has at most
	exponential size and is in $\PASyn$.
\end{proof}
\paragraph{Checking $\PASyn$: Step I}
Finally, we prove \cref{check-synthesis-form}. We begin with an auxiliary result:
\begin{theorem}\label{modulo-tame-conp}
	Given a $y$-modulo-tame quantifier-free formula $\varphi(\bar{x},y)$,
it is $\coNP$-complete to decide whether $\forall\bar{x}\exists
y\colon\varphi(\bar{x},y)$ holds. 
\end{theorem}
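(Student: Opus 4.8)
The plan is to prove both directions: $\coNP$-hardness by an easy reduction, and $\coNP$-membership by exploiting the polynomial-time Skolem circuit construction of \cref{presburger-circuit-one-output}. The only real content is the membership direction; the hardness is essentially a sanity check.

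\textbf{Hardness.} Recall that the validity problem for quantifier-free (equivalently, universal) Presburger sentences $\forall\bar x\colon\psi(\bar x)$ is $\coNP$-complete: its complement is satisfiability of an existential Presburger formula, which is $\NP$-complete (membership via small solutions of linear Diophantine systems; hardness by reduction from, e.g., subset sum). Given such a $\psi(\bar x)$ in NNF, define $\varphi(\bar x,y):=\psi(\bar x)\wedge\bigl((y\ge 0)\vee(-y-1\ge 0)\bigr)$. The second conjunct is a tautology in $y$, so $\exists y\colon\varphi(\bar x,y)$ is equivalent to $\psi(\bar x)$, whence $\forall\bar x\exists y\colon\varphi(\bar x,y)$ holds iff $\forall\bar x\colon\psi(\bar x)$ holds. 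Moreover $\varphi$ is in NNF and contains no modulo constraint on $y$, so it is trivially $y$-modulo-tame. The reduction is computable in linear time, giving $\coNP$-hardness.

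\textbf{Membership.} We show the complement, $\exists\bar x\forall y\colon\neg\varphi(\bar x,y)$, lies in $\NP$. Since $\varphi$ is $y$-modulo-tame, \cref{presburger-circuit-one-output} produces in polynomial time a Skolem circuit $\circuit$ for $y$ in $\forall\bar x\exists y\colon\varphi$. By the Skolem property, for every $\bar u$, $\exists y\colon\varphi(\bar u,y)$ holds iff $\varphi(\bar u,\circuit(\bar u))$ holds; hence $\forall\bar x\exists y\colon\varphi(\bar x,y)$ holds iff $\forall\bar x\colon\varphi(\bar x,\circuit(\bar x))$ holds, and its negation is $\exists\bar x\colon\neg\varphi(\bar x,\circuit(\bar x))$. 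Now introduce a fresh variable $z_g$ for every gate $g$ of $\circuit$ and write a polynomial-size quantifier-free formula $\Gamma(\bar x,\bar z)$ asserting that $\bar z$ is the correct evaluation of $\circuit$ on $\bar x$: an input gate gives $z_g=x_i$; a linear gate gives a linear equation; a $\max$ gate $z_g=\max(z_a,z_b)$ gives $(z_g=z_a\wedge z_a\ge z_b)\vee(z_g=z_b\wedge z_b\ge z_a)$; an $\Eq$-gate gives $(z_a=0\wedge z_g=z_b)\vee\bigl((z_a\ge 1\vee z_a\le -1)\wedge z_g=0\bigr)$; and a $\fdiv_m$-gate $z_g=\lfloor z_a/m\rfloor$ gives $0\le z_a-mz_g\le m-1$. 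Letting $z_{\mathrm{out}}$ be the variable of the output gate, $\exists\bar x\colon\neg\varphi(\bar x,\circuit(\bar x))$ is equivalent to $\exists\bar x\,\exists\bar z\colon\bigl(\Gamma(\bar x,\bar z)\wedge\neg\varphi(\bar x,z_{\mathrm{out}})\bigr)$, a polynomial-size existential Presburger sentence (the remaining modulo constraints are absorbed into $\exists$ in the standard way). Satisfiability of existential Presburger formulas is in $\NP$, so the complement of our problem is in $\NP$, i.e.\ the problem is in $\coNP$.

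\textbf{Main obstacle.} The delicate point is purely conceptual rather than technical: the question $\forall\bar x\exists y\colon\varphi$ is an $\exists\forall$-type problem over $\langle\Z;+,<,(\equiv_M)_M,0,1\rangle$, which in general is $\Sigma_2^p$-hard (cf.\ the remark after \cref{exists-star-forall}). Modulo-tameness is precisely the hypothesis that lets \cref{presburger-circuit-one-output} collapse the inner quantifier into a polynomial-size circuit, reducing the whole query to a single existential Presburger satisfiability check. Once that step is in place, the gate-by-gate encoding of $\circuit$ and the $\NP$ bound for existential Presburger satisfiability are routine.
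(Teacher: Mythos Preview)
Your proposal is correct and follows essentially the same approach as the paper: use \cref{presburger-circuit-one-output} to obtain a polynomial-size Skolem circuit, observe that $\forall\bar{x}\exists y\colon\varphi$ fails iff $\neg\varphi(\bar{x},\circuit(\bar{x}))$ is satisfiable, and encode this as an existential Presburger instance by introducing one variable per gate. Your write-up is a bit more explicit than the paper's (you spell out the per-gate encodings and give a concrete hardness reduction, which the paper leaves implicit), but the argument is the same.
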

Note that \cref{modulo-tame-conp} implies \cref{exists-star-forall}, since a
formula over the signature $\langle\Z;+,<,0,1\rangle$ is automatically
$y$-modulo-tame for each variable $y$. 
\begin{proof}[Proof of \cref{modulo-tame-conp}]
	Since $\varphi$ is
	$y$-modulo-tame, \cref{presburger-circuit-one-output} allows
	us to compute a polynomial-sized Presburger circuit $\ckt$
	that computes a Skolem function $f$ for $y$ in
	$\forall \bar{x} \exists y\colon\varphi$. %
        Now, we can build a
	polynomial-sized circuit $\ckt'$ for the function $g$ with
	$g(\bar{x})=1$ if $\varphi(\bar{x},f(\bar{x}))$, and
	$g(\bar{x})=0$ otherwise (see \Cref{prop:pa-formula-as-function} for details). Then, we have $\forall\bar{x}\exists
	y\colon\varphi(\bar{x},y)$ if and only if the circuit $\ckt'$
	returns $1$ true for every vector $\bar{x}$. Equivalently,
	$\forall\bar{x}\exists y\colon\varphi(\bar{x},y)$ does not
	hold if and only if there is $\bar{x}$ such that $\ckt'$
	evaluates to $0$.
	The existence of such an $\bar{x}$ can be decided in $\NP$ by a
	reduction to existential Presburger arithmetic: Given $\ckt'$, we introduce
	a variable for the output of each gate, and require 
	that (i)~each gate is evaluated correctly and (ii)~the circuit outputs $0$.
\end{proof}
\paragraph{Checking $\PASyn$: Step II}
We can now show Thm.~\ref{check-synthesis-form}:
\begin{proof}[Proof of \cref{check-synthesis-form}]
	We can clearly check whether $\varphi$ is in NNF and whether $\varphi$ is $y_i$-modulo-tame for every $i\in[1,m]$.
	It remains to check whether $\varphi^{(i)}$ in eq.~\eqref{condition-local-quantification} holds for every $i\in[1,m-1]$.
	This is the case iff each formula 
	\begin{align*} \varphi^{\dagger(i)}:=\forall \bar{x}&\forall y_1,\ldots,y_i\colon (\lexists y_{i+1},\ldots,y_m\colon \varphi(\bar{x},\bar{y})\\
          &\to \exists y_{i+1}\colon \lexists y_{i+2},\ldots,y_m\colon \varphi(\bar{x},\bar{y}))
          \end{align*}
	holds for $i$ in $m-1$ down to $1$.  Indeed, since we know
	from $\varphi^{\dagger(m-1)}=\varphi^{(m-1)}$ that $y_m$
	can be eliminated locally, we can plug that equivalence into
	$\varphi^{(m-2)}$ to obtain $\varphi^{\dagger(m-2)}$. By repeating this
	argument, we can see that the conjunction of all $\varphi^{\dagger(i)}$
	implies the conjunction of all $\varphi^{(i)}$.

	Note that $\varphi^{\dagger(i)}$ belongs to the $\forall^*\exists$
	fragment, and the formula is modulo-tame w.r.t.\ the existentially
	quantified variable. By \cref{modulo-tame-conp}, we can decide the
	truth of $\varphi^{\dagger(i)}$ in $\coNP$.
	For $\coNP$-hardness, note that an NNF formula $\varphi$
	with free variables in $\bar{x}$ is in $\PASyn$ if
	and only if $\forall \bar{x}\colon\varphi(\bar{x})$.  Moreover,
	universality for NNF formulas is $\coNP$-hard.
\end{proof}

Our final result in this section is that the $\PASyn$ normal form is ``optimal'' for existential quantification and synthesis for single-output
modulo-tame specifications. Specifically,
\begin{restatable}{theorem}{thmOptimal}
\label{thm:optimal}
  Let $\mathfrak{S}$ be a class of quantifier-free $\PA$-formulas in NNF on 
  free variables $\bar{x}$ and $y$ such that:
  \begin{enumerate}
  \item  $\mathfrak{S}$ is universal, i.e. for every quantifier-free $\PA$
  formula $\psi(\bar{x},y)$, there is a semantically equivalent formula in  $\mathfrak{S}$
  \item For every formula $\varphi(\bar{x}, y)$ in $\mathfrak{S}$,
  \begin{itemize}
  \item $\varphi$ is $y$-modulo tame, and
  \item There is a poly (in $|\varphi|$) time algorithm for
  computing a quantifier-free formula equivalent to $\exists y: \varphi(\bar{x},y)$.
  \end{itemize}
\end{enumerate}
Then there exists a poly (in $|\varphi|$) time algorithm that
compiles $\varphi(\bar{x},y) \in \mathfrak{S}$ to $\varphi'$ that is in $\PASyn$ wrt $y$.
\end{restatable}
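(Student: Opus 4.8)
The plan is to unfold the definition of $\PASyn$ in the special case of a single output variable and observe that it reduces to exactly the hypothesis already imposed on $\mathfrak{S}$. Recall that a quantifier-free NNF formula $\varphi(\bar{x},\bar{y})$ with $\bar{y}=(y_1,\dots,y_m)$ is in $\PASyn$ w.r.t.\ $y_1\preceq\cdots\preceq y_m$ precisely when (i)~$\varphi$ is $y_i$-modulo-tame for every $i\in[1,m]$, and (ii)~the quantification-compatibility formula $\varphi^{(i)}$ of~\eqref{condition-local-quantification} holds for every $i\in[1,m-1]$. When $m=1$ the index set $[1,m-1]$ is empty, so clause~(ii) imposes nothing, and ``$\varphi$ is in $\PASyn$ w.r.t.\ $y$'' is literally the same as ``$\varphi$ is a quantifier-free NNF formula that is $y$-modulo-tame'' --- the same observation already used just before \cref{presburger-circuit-one-output}.

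Given this, the compilation algorithm I would exhibit is simply the identity: on input $\varphi(\bar{x},y)\in\mathfrak{S}$ it returns $\varphi':=\varphi$. Correctness is immediate: by hypothesis~2 every member of $\mathfrak{S}$ is a quantifier-free NNF formula that is $y$-modulo-tame, hence already in $\PASyn$ w.r.t.\ $y$ by the previous paragraph, and $\varphi'\equiv\varphi$ trivially. The running time is constant --- or linear, if one re-scans the tree of $\varphi$ to re-certify NNF and $y$-modulo-tameness --- which is in any case polynomial in $|\varphi|$.

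I would then spend a short paragraph explaining that the remaining hypotheses on $\mathfrak{S}$ --- universality, and polynomial-time elimination of $\exists y$ --- are not invoked by the compilation itself but are exactly what makes the statement a meaningful optimality claim: universality says $\mathfrak{S}$ can represent every specification, and cheap $\exists y$-elimination is the property one relies on when iterating over output variables, as in the multi-output construction of \cref{construct-semantic-nf}; the theorem then says that $\PASyn$ already subsumes, with no blow-up, every single-output form enjoying these features. The $y$-modulo-tameness clause of hypothesis~2 is load-bearing here: were it dropped, compiling $\varphi$ to an equivalent $y$-modulo-tame (hence $\PASyn$) formula would in general require the construction of \cref{prop:ModTameUniv}, and by \cref{hardness-one-one-long} together with \cref{presburger-circuit-one-output} that blow-up is genuinely unavoidable in the worst case, so no polynomial compilation could exist. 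The ``hard part'' is therefore conceptual rather than technical: one must confirm that polynomial-time $\exists y$-elimination is dispensable for the compilation itself (so the theorem does not secretly hinge on it), and phrase the conclusion so that optimality is not vacuous --- namely that $\PASyn$ is optimal \emph{among} single-output forms that are universal, $y$-modulo-tame, and admit polynomial-time quantification of $y$, with relaxing the tameness requirement defeating the polynomial bound.
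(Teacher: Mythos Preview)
Your argument is correct under the paper's own definition: as the text states just before \cref{presburger-circuit-one-output}, for $m=1$ the index range $[1,m-1]$ in condition~\eqref{condition-local-quantification} is empty, so $\PASyn$ collapses to NNF plus $y$-modulo-tameness, and the identity map already compiles $\mathfrak{S}$ into $\PASyn$.

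The paper takes a genuinely different route. It actually \emph{uses} the polynomial-time quantifier-elimination hypothesis: it computes $\widetilde{\varphi}\equiv\exists y\colon\varphi$, writes $\varphi=\bigvee_i\varphi_i$ with $\varphi_i$ maximal conjunctive, sets $\varphi^\star:=\bigvee_i(\varphi_i\wedge\widetilde{\varphi})$, and then verifies not just $y$-modulo-tameness but also the implication $\lexists y\colon\varphi^\star\to\exists y\colon\varphi^\star$. That last property is exactly the $i=0$ instance of~\eqref{condition-local-quantification}, which the formal definition does \emph{not} require for $m=1$. What the paper's construction buys is a formula in which local quantification of $y$ genuinely coincides with global quantification---the property one would want when plugging the single-output result into a multi-output pipeline, and the property the authors evidently have in mind when they call $\PASyn$ ``optimal'' for existential quantification. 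Your argument, by contrast, exposes that as literally stated the theorem never invokes the quantifier-elimination hypothesis, which makes the optimality claim read more like a tautology than the authors presumably intended. Both proofs are valid for the statement on the page; yours is shorter, theirs establishes the morally intended (stronger) conclusion.
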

The proof is in \cref{app:optimal}. Assumption 2) is weaker than requiring $\PFnS$ to be efficiently solvable. This is due to the difference in vocabulary between Presburger formulas and circuits (unlike the Boolean case). 

\section{Syntactic Normal Form for $\PFnS$}
\label{sec:syntactic-nf}

  We now present a \emph{syntactic normal form} for $\PFnS$. This means, it has three properties:
  \begin{enumerate*}[(i)]
	  \item It is \emph{syntactic}, meaning one can check in polynomial time whether a given formula is in this normal form,
	  \item it facilitates $\PFnS$, meaning for formulas in normal form, $\PFnS$ is in polynomial time, and
	  \item every formula can be brought into normal-form (and even in exponential time). \end{enumerate*}
  We call our normal form $\DNNF$.
  \vspace{-0.3cm}
  \paragraph{Definition of $\DNNF$} Recall that an \emph{affine transformation}
  (from $\Q^k$ to $\Q^\ell$) is a map $\Q^k\to\Q^\ell$ of the form
  $\bar{x}\mapsto Bx+\bar{b}$, where $B\in\Q^{k\times\ell}$ is a $k\times\ell$
  matrix over $\Q$ and $\bar{b}\in\Q^\ell$ is a vector in $\Q^\ell$. In
  particular, the affine transformation is described by the entries of $B$ and
  $\bar{b}$.
   Consider a quantifier-free PA formula
  $\varphi(\bar{x},\bar{y})$, $\bar{x}=(x_1,\ldots,x_n)$,
  $\bar{y}=(y_1,\ldots,y_m)$. To simplify notation, we define for any vector $(\bar{u},\bar{v})\in\Z^{n+m}$ with $\bar{u}\in\Z^n$, $\bar{v}\in\Z^m$:
\begin{align*}
	\bar{u}^{i}&:=(u_1,\ldots,u_n,v_1,\ldots,v_i), &
	\bar{v}^i&:=(v_{i+1},\ldots,v_m).
\end{align*}
  The idea of $\DNNF$ is to encode Skolem functions in the
  formula: Each maximal conjunctive subformula (see \cref{maximal-conjunctive}) is annotated
  with affine transformations $A_1,\ldots,A_m$, where $A_i$ could serve as
  Skolem functions for this subformula, when $\bar{y}^i$ are considered as
  output variables. This can be viewed as an analogue of the wDNNF in the
  Boolean setting~\cite{AkshayCGKS21}, where each maximal conjunctive subformula provides
  for each output variable a truth value for a Skolem function. Instead
  of concrete truth values, $\DNNF$ has affine transformations in $\bar{x}^i$.

  We say that $\varphi$ is in \emph{$\DNNF$} (``syntactic synthesis normal form'') if 
for every maximal conjunctive subformula $\varphi'$, there exists an $M\in\Z$ and for every $i\in[1,m]$, there exists an affine transformation $A_i\colon\Q^{n+i}\to\Q^{m-i}$ such that (a)~$\varphi$ is $y_i$-modulo-tame for every $i\in[1,m]$ and (b)~every denominator in the coefficients in $A_i$ divides $M$ and (c)~$\varphi'$ is a positive Boolean combination of formulas of the form
	\begin{multline}
		\left(\psi(\bar{x},\bar{y})\vee\bigvee_{i=0}^m \bar{y}^{i}=A_i(\bar{x}^{i})\right)\wedge  \\
		\bigwedge_{i=0}^m \psi(\bar{x}^{i},A_i(\bar{x}^{i}))\wedge (\bar{x},\bar{y})\equiv (\bar{r},\bar{s})\pmod{M}, \label{piece-dnnf}
	\end{multline}

where $\psi(\bar{x},\bar{y})$ is an atomic formula and where $A_i(\bar{r}^i)\in\Z^{m-i}$. Note that assuming (b) and $(\bar{x},\bar{y})\equiv (\bar{r},\bar{s})\pmod{M}$, the condition $A_i(\bar{r}^i)\in\Z^{m-i}$ is equivalent to $A_i(\bar{x}^i)\in\Z^{m-i}$ (see \cref{app:syntactic-nf-preserve-integrality}).
\vspace{-0.3cm}
\paragraph{Properties of $\DNNF$} 
Let us now show that the $\DNNF$ indeed has the properties (i)--(iii) above. 

First, one can easily check (in polynomial time) whether a formula is in $\DNNF$: In each parenthesis, the disjunction over $\bar{y}^i=A_i(\bar{x}^i)$ means the formula explicitly contains all coefficients of the affine transformation $A_i$, for every $i\in[1,m]$. Once these are looked up, one can verify that the subformulas $\psi(\bar{x}^i,A_i(\bar{x}^i))$ are obtained by plugging $A_i(\bar{x}^i)$ into $\psi(\bar{x}^i,\bar{y}^i)$ in place of $\bar{y}^i$.

Property (ii) is due to $\DNNF$ implying $\PASyn$:
\begin{restatable}{theorem}{syntacticImpliesSemanticNF}\label{syntactic-implies-semantic-nf}
	Every formula in $\DNNF$ is also in $\PASyn$.
\end{restatable}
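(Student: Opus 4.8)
The plan is to check the two conditions in the definition of $\PASyn$ directly. Condition (i), that $\varphi$ be $y_i$-modulo-tame for every $i\in[1,m]$, is literally part (a) of the definition of $\DNNF$, so there is nothing to do. All the work goes into condition (ii): for each $i\in[1,m-1]$ the formula $\varphi^{(i)}$ of \eqref{condition-local-quantification} holds, i.e., after fixing $\bar x$ and $y_1,\ldots,y_i$ to values $\bar u$ and $v_1,\ldots,v_i$, truth of $\lexists y_{i+1},\ldots,y_m\colon\varphi$ implies truth of $\exists y_{i+1},\ldots,y_m\colon\varphi$.

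First I would localize the problem to a single maximal conjunctive subformula. Since $\varphi$ is in NNF, it is a disjunction $\varphi'_1\vee\cdots\vee\varphi'_K$ of its maximal conjunctive subformulas, and the local quantifier $\lexists y_{i+1},\ldots,y_m$ — being just the syntactic replacement of every leaf mentioning one of $y_{i+1},\ldots,y_m$ by $\top$ — commutes with $\wedge$ and $\vee$. So if $\lexists y_{i+1},\ldots,y_m\colon\varphi$ is true at $(\bar u, v_1,\ldots,v_i)$, some $\varphi'_k$ satisfies $\lexists y_{i+1},\ldots,y_m\colon\varphi'_k$ there, and it suffices to produce integers $w_{i+1},\ldots,w_m$ with $\varphi'_k(\bar u, v_1,\ldots,v_i, w_{i+1},\ldots,w_m)$ true (which then satisfies $\varphi$). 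The annotation on $\varphi'_k$ dictates the choice: let $A_i$ be the affine transformation attached to $\varphi'_k$ and set $(w_{i+1},\ldots,w_m):=A_i(\bar u, v_1,\ldots,v_i)$.

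The next step is to show this witness works. Integrality first: truth of $\lexists y_{i+1},\ldots,y_m\colon\varphi'_k$ forces the surviving part of the modulo constraint of at least one piece to hold at $(\bar u, v_1,\ldots,v_i)$, and since none of its surviving atoms mentions $y_{i+1},\ldots,y_m$, this pins $\bar u^i$ into a fixed residue class $\bar r^i$ modulo $M$; by part (b) of the definition together with $A_i(\bar r^i)\in\Z^{m-i}$ and the integrality remark following \eqref{piece-dnnf}, $A_i(\bar u, v_1,\ldots,v_i)\in\Z^{m-i}$. For the satisfaction of $\varphi'_k$ itself I would use that $\varphi'_k$ is a positive — hence monotone — Boolean combination of pieces of the form \eqref{piece-dnnf}, all sharing this same $A_i$ and $M$, and that $\lexists$ commutes with that combination; so it is enough to show, for a single piece $P$, that $\lexists y_{i+1},\ldots,y_m\colon P$ true at $(\bar u, v_1,\ldots,v_i)$ implies $P$ true at the witness point $(\bar u, v_1,\ldots,v_i, A_i(\bar u, v_1,\ldots,v_i))$. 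Three of the conjuncts of $P$ are easy: the disjunction $\psi\vee\bigvee_j(\bar y^j=A_j(\bar x^j))$ is satisfied at the witness point by its $j=i$ disjunct, true by construction; every conjunct $\psi(\bar x^j,A_j(\bar x^j))$ with $j\le i$ is left untouched by $\lexists$, depends only on $\bar x,y_1,\ldots,y_i$, and hence is true at the witness point precisely because it is true at $(\bar u, v_1,\ldots,v_i)$; and the part of the modulo constraint on $\bar x$ and $y_1,\ldots,y_i$ is handled identically.

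The main obstacle is the remaining conjuncts of $P$: the formulas $\psi(\bar x^j,A_j(\bar x^j))$ with $j>i$ and the part of the modulo constraint on $y_{i+1},\ldots,y_m$. These are rewritten to $\top$ by $\lexists y_{i+1},\ldots,y_m$, so their truth at the witness point is not supplied by the hypothesis and must be extracted from the special shape of \eqref{piece-dnnf}. The key observation to develop is that within a piece all $m{+}1$ conjuncts $\psi(\bar x^j,A_j(\bar x^j))$ use the \emph{same} atom $\psi$ and the disjuncts $\bar y^j=A_j(\bar x^j)$ link the $A_j$ together, which forces a compatibility relation among $A_0,\ldots,A_m$ (informally: $A_j$, fed the first $j-i$ coordinates that $A_i$ produces, reproduces the remaining coordinates of $A_i$) and forces the residue vector $\bar s$ to equal $A_i(\bar r^i)\bmod M$. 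Granting this, $\psi(\bar x^j,A_j(\bar x^j))$ evaluated at the witness point collapses to $\psi(\bar x^i,A_i(\bar x^i))$ evaluated at $(\bar u,v_1,\ldots,v_i)$, already known to be true, and the output part of the modulo constraint reduces to the congruence $\bar u^i\equiv\bar r^i\pmod M$. Making this compatibility precise from the definition of $\DNNF$ — and in particular checking that the affine transformations interact correctly with the modulo constraints — is the technically delicate heart of the argument; everything else is bookkeeping about how $\lexists$ distributes over the piece structure.
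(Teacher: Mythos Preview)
Your overall outline is the paper's: condition~(a) of $\DNNF$ gives $y_i$-modulo-tameness; you localize to a single maximal conjunctive subformula $\varphi'$ and use its annotated map $A_i$ to manufacture the witness $(\bar x^i,A_i(\bar x^i))$; then you argue building-block by building-block that each piece whose locally-quantified version holds is satisfied at the witness. The paper's proof does this last step briefly: it notes that the conjunct $\psi(\bar x^i,A_i(\bar x^i))$ and the partial constraint $\bar x^i\equiv\bar r^i\pmod M$ survive $\lexists$ (whence $A_i(\bar x^i)\in\Z^{m-i}$ via \cref{preserve-integrality}), and then asserts that this suffices for the piece to hold at the witness, without separating out the $j>i$ conjuncts or the $\bar y^i$-residues as you do.

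Your proposed treatment of that ``main obstacle'' has a real gap. The disjunction $\psi\vee\bigvee_j(\bar y^j=A_j(\bar x^j))$ cannot ``link the $A_j$ together'': it is a disjunction, so no single equality $\bar y^j=A_j(\bar x^j)$ is asserted by the piece, and nothing in the $\DNNF$ definition forces $A_j$ restricted to the image of $A_i$ to agree with $A_i$, nor forces $\bar s\equiv A_i(\bar r^i)\pmod M$. Concretely, take $n=0$, $m=3$, $M=1$, $\psi=(y_3\ge0)$, $A_1(y_1)=(0,0)$, $A_2(y_1,y_2)=-1-y_2$: the $j=2$ conjunct of the piece is $-1-y_2\ge0$, which syntactically mentions $y_2$, is replaced by $\top$ under $\lexists y_2,y_3$, and yet is false at the witness $(y_1,0,0)$. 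So the collapse $\psi(\bar x^j,A_j(\bar x^j))\rightsquigarrow\psi(\bar x^i,A_i(\bar x^i))$ you are hoping for simply does not follow from the $\DNNF$ syntax, and your compatibility idea cannot close the gap as stated.
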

Essentially, this is because the annotated affine transformations yield
valuations for satisfying globally quantified subformulas. For space reasons, the proof is in \cref{app:syntactic-implies-semantic-nf}. Thus, \cref{construct-circuit-for-semantic-nf}
yields a polynomial-time algorithm for $\PFnS$ for $\DNNF$ formulas.

Finally, we have property (iii): $\DNNF$ can be achieved with at most an exponential blow-up:
\begin{restatable}{theorem}{constructSyntacticNF}\label{construct-syntactic-nf}
	Every quantifier-free PA formula has an equivalent in $\DNNF$\ of at most exponential size.
\end{restatable}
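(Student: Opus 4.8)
The plan is to mimic the structure of the proof of Theorem \ref{construct-semantic-nf}, but to additionally attach the affine transformations that the $\DNNF$ format demands. The starting point is Proposition \ref{affine-transformations}: after eliminating modulo constraints on the output variables in favor of fresh output variables (as in the proof sketch of Theorem \ref{thm:exponential-presburger-circuit}), and after bringing $\varphi$ into DNF, each disjunct is a system of inequalities $A\bar y\le B\bar x+\bar c$, and Proposition \ref{affine-transformations} gives a polynomially-bounded family of candidate affine maps $\bar x\mapsto D_j(B\bar x+\bar c)+\bar d_j$ whose range contains a solution whenever one exists. The key point is that these candidate maps are exactly the affine transformations $A_i$ that we want to annotate the conjunctive subformulas with. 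So the overall strategy is: (1) for each disjunct, and for each choice of ``which candidate map works,'' build a conjunctive subformula; (2) inside it, record the appropriate affine transformation for each prefix length $i$; (3) take the disjunction over disjuncts and over choices of candidate map.

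First I would make precise what affine map to annotate the subformula with. Given a disjunct with candidate maps $\sigma_1,\dots,\sigma_s$, I fix, for each $i$, an affine transformation $A_i\colon\Q^{n+i}\to\Q^{m-i}$ obtained from one of the $\sigma_j$'s by projecting away the first $i$ output coordinates and substituting in the concrete values $\bar y^{\le i}$; for this to be well-defined we need the candidate solution to be consistent with the partial assignment $\bar y^{\le i}$, which is exactly the kind of case-split the disjunction over ``which $(i,j)$ is a solution'' handles. Concretely I would, just as in Theorem \ref{construct-semantic-nf}, enumerate functions $f$ selecting a disjunct-index and candidate-index per prefix, producing polynomially many polynomial-sized pieces and an exponential disjunction over them. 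Within each piece, the required shape \eqref{piece-dnnf} is built atom by atom: for each atomic subformula $\psi$ appearing in that disjunct, one emits the conjunct $\bigl(\psi(\bar x,\bar y)\vee\bigvee_i \bar y^{i}=A_i(\bar x^{i})\bigr)\wedge\bigwedge_i\psi(\bar x^i,A_i(\bar x^i))\wedge(\bar x,\bar y)\equiv(\bar r,\bar s)\pmod M$, where $M$ is the lcm of all denominators in all the $A_i$ (at most exponential, since by the Hadamard bound each $D_j,\bar d_j$ has polynomially many bits, and there are exponentially many pieces but each piece only uses polynomially many of them). The residues $(\bar r,\bar s)$ are then chosen to range over all residues mod $M$, which reintroduces the modulo constraints that were stripped at the start and also guarantees, via the observation cited after \eqref{piece-dnnf}, that $A_i(\bar r^i)\in\Z$ implies $A_i(\bar x^i)\in\Z$ on the relevant residue class. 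Modulo-tameness for each $y_i$ is then automatic because every modulo constraint touching $y_i$ is of the form $y_i\equiv s_i\pmod M$ with a single shared modulus $M$ per conjunctive subformula.

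The correctness argument has two directions. Soundness: each piece, as a formula, is equivalent on its residue class to the conjunction of the original atoms restricted to that class (the disjuncts $\bar y^i=A_i(\bar x^i)$ are only ``extra'' satisfying assignments, but the conjuncts $\psi(\bar x^i,A_i(\bar x^i))$ force that those assignments genuinely satisfy the restricted system — this is precisely the wDNNF-style bookkeeping, and it is where the structure of \eqref{piece-dnnf} pays off). Completeness: whenever the original system has a solution for a given $\bar x$ on a given residue class, Proposition \ref{affine-transformations} guarantees one of the candidate maps hits it, so the corresponding piece is satisfied. I expect the main obstacle to be the second, $\bar y^i$-prefixed, layer: Proposition \ref{affine-transformations} gives solutions of the full output vector, whereas $\DNNF$ wants, for every prefix length $i$, an affine map producing the remaining coordinates $\bar y^i$ as a function of $\bar x$ and the already-fixed prefix $\bar x^i=(\bar x,\bar y^{\le i})$. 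Getting a uniform such family requires applying the affine-solution bound to the system obtained after substituting the partial assignment — and then arguing that the bound stays polynomial and that the finitely many resulting maps can all be packed into one exponential disjunction with a common modulus $M$. Handling this prefix-parametrized version of Proposition \ref{affine-transformations} cleanly, and verifying that the integrality side-condition $A_i(\bar r^i)\in\Z^{m-i}$ survives the substitution, is the technical heart; the rest is bookkeeping analogous to the proof of Theorem \ref{construct-semantic-nf}.
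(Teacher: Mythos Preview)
Your proposal is essentially the paper's approach: bring $\varphi$ into DNF, apply the affine-solution bound (Proposition~\ref{affine-transformations}) once for each prefix length $i$ with $(\bar x,y_1,\ldots,y_i)$ treated as input, enumerate over all choices $f\colon[1,m]\to[1,s]$ of one candidate map per prefix, and assemble the building blocks~\eqref{piece-dnnf} over all residue classes modulo the lcm $M$ of the relevant denominators. You also correctly identify the key point---that the bound in Proposition~\ref{affine-transformations} depends only on the coefficient matrix $A$, not on $\bar b$, so treating a prefix $y_1,\ldots,y_i$ as part of the right-hand side keeps all candidates of polynomial bit-size.

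Two remarks. First, your early description of obtaining $A_i$ by ``projecting away the first $i$ output coordinates'' from a full-output candidate $\sigma_j$ is not what you want (projecting $(y_1,\ldots,y_m)=\sigma_j(\bar x)$ to the last $m-i$ coordinates gives a function of $\bar x$ alone, not of $\bar x^i$); but you correct this yourself a few lines later by re-applying the bound to the system with the prefix substituted, which is exactly what the paper does.

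Second, and more substantively: your detour through ``eliminate modulo constraints via fresh output variables'' leaves a loose end. Introducing fresh $\bar z$ gives $\varphi'(\bar x,\bar y,\bar z)$ with $\exists\bar z\,\varphi'\Leftrightarrow\varphi$; a $\DNNF$ equivalent of $\varphi'$ is then over $(\bar x,\bar y,\bar z)$, and you still need to get rid of $\bar z$ while staying in $\DNNF$ over the original variables. This is fixable but fiddly. The paper avoids the detour by first proving a short extension of Proposition~\ref{affine-transformations} to conjunctions that include modulo constraints (via the substitution $\bar x=M\bar z+\bar r$ inside a fixed residue class), and then applying that directly to the DNF disjuncts. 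That packaging makes the integrality condition $A_i(\bar r^i)\in\Z^{m-i}$ and the residue enumeration fall out cleanly, and removes the need to introduce and later eliminate auxiliary variables.
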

Note that 
\cref{syntactic-implies-semantic-nf,construct-syntactic-nf} yield an alternative proof for
\cref{construct-semantic-nf}. While \cref{construct-semantic-nf} could be shown using
either of the two recent quantifier elimination
techniques~\cite{DBLP:conf/icalp/0001MS24,DBLP:conf/icalp/HaaseKMMZ24},
\cref{construct-syntactic-nf} depends on the specific geometric insight from
\cite{DBLP:conf/icalp/HaaseKMMZ24}, namely \cref{affine-transformations}.

Roughly speaking, the idea for proving \cref{construct-syntactic-nf} is to bring the formula into a DNF where each co-clause only contains linear inequalities. For each co-clause we can then apply \cref{affine-transformations} to yield exponentially many affine transformations $A_i$ that yield candidate assignments for $\bar{y}$. From these $A_i$, we then construct the subformulas of the form \eqref{piece-dnnf}. The proof is in \cref{app:construct-syntactic-nf}.
\vspace{-0.4cm}
\paragraph{Succinctness}
We have seen that compared to $\PASyn$, the form $\DNNF$ has the advantage that
it is  syntactic (i.e.\ easy to check). However, as we show now, $\PASyn$ has
the advantage that it can be \emph{exponentially more succinct}. More
specifically, there are formulas in $\PASyn$ whose smallest equivalent in
$\DNNF$ are exponentially larger:
\begin{restatable}{theorem}{succinctnessSemanticSyntactic}\label{succinctness-semantic-syntactic}
	There is a family $(\Psi_n)_{n\ge 0}$ of $\PASyn$ formulas such
	that any equivalent $\DNNF$\ has size $2^{\Omega(|\Phi_n|)}$.
\end{restatable}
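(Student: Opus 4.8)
The plan is to exhibit a family $(\Psi_n)$ that is already in $\PASyn$ but whose every $\DNNF$-equivalent is forced to be exponentially large. The key observation driving the construction is the structural rigidity of $\DNNF$: by condition (b) of the definition of $\DNNF$, every maximal conjunctive subformula $\varphi'$ comes with a single modulus $M$, and every affine transformation $A_i$ annotating it has all denominators dividing $M$; moreover $\varphi'$ entails $(\bar{x},\bar{y})\equiv(\bar{r},\bar{s})\pmod M$, so $\varphi'$ pins the residues of \emph{all} variables modulo $M$. Consequently, if a formula in $\DNNF$ is equivalent to a specification whose satisfying set (projected onto the inputs, or onto the inputs together with the Skolem choices) has \emph{minimal period} $P$, then the top-level disjunction of maximal conjunctive subformulas must contain, in aggregate, enough distinct residue classes modulo (the lcm of the) $M$'s to reconstruct period $P$. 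If each individual $M$ is small, we need exponentially many disjuncts. This is exactly the leverage already used in the lower-bound proof sketch for \cref{thm:exp-lower-bound} (via \cite[Section 6]{DBLP:conf/icalp/HaaseKMMZ24}), where a family of formulas is built whose existential projection has doubly-exponential minimal period.

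Concretely, I would take $\Psi_n$ to be (a one-output variant of) the Haase--Krishna--Madnani--Mishra--Zetzsche family $\psi_n(x,y)$ used in \cref{thm:exp-lower-bound}, suitably massaged so that it lies in $\PASyn$. First I would check that $\Psi_n$ (or a semantically equivalent rewriting that only renames residues inside modulo constraints, as in \cref{prop:ModTameUniv} but applied locally so as not to blow up the size) is $y$-modulo-tame with polynomial-size witnesses, and that the single-output condition makes the $\PASyn$ conditions \eqref{condition-local-quantification} vacuous — so $\Psi_n$ is in $\PASyn$ and has size polynomial in $n$. The nontrivial direction is the lower bound for $\DNNF$-equivalents. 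Suppose $\Phi$ is in $\DNNF$ and equivalent to $\Psi_n$. Let $M^\star$ be the lcm of all the moduli $M$ attached to the maximal conjunctive subformulas of $\Phi$, and let $K$ be the number of those subformulas. I would argue two things: (1) the set defined by $\exists y\colon\Psi_n(x,y)$ has minimal period $2^{2^{\Omega(n)}}$ (this is the content imported from \cite{DBLP:conf/icalp/HaaseKMMZ24}); and (2) because every maximal conjunctive subformula of $\Phi$ fixes the residue of $x$ modulo some divisor of $M^\star$, and affine transformations cannot manufacture a period not already dividing a power of the denominators, the period of the set defined by $\Phi$ divides $(M^\star)^{c}$ for a polynomial-in-$|\Phi|$ bound $c$ on circuit depth, in the spirit of the $\fdiv$-gate counting argument in the lower-bound sketch. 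Combining, $(M^\star)^{c}\ge 2^{2^{\Omega(n)}}$, hence $\log M^\star\ge 2^{\Omega(n)}/c$, which already forces $M^\star$ — and therefore $|\Phi|$, since the binary encodings of the residues $\bar r,\bar s$ modulo $M^\star$ appear literally in $\Phi$ — to be of size $2^{\Omega(n)}=2^{\Omega(|\Psi_n|)}$.

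Let me re-examine which quantity absorbs the blow-up, since the theorem as stated asserts size $2^{\Omega(|\Phi_n|)}$ (I read this as $2^{\Omega(|\Psi_n|)}$, i.e. exponential in the size of the $\PASyn$ formula). There are two candidate sources of exponential size in the $\DNNF$-equivalent: a large modulus $M$ appearing in a single maximal conjunctive subformula (forcing many bits per subformula), or a large \emph{number} of maximal conjunctive subformulas (forcing the disjunction to be wide). The argument above shows that whichever way $\Phi$ chooses to realize the huge period, it pays: either $M^\star$ is huge — but then even a single residue constraint $(\bar x,\bar y)\equiv(\bar r,\bar s)\pmod{M^\star}$ costs $\log M^\star$ bits — or $M^\star$ is moderate but then $K$ must be huge to cover enough residue classes modulo $M^\star$ whose union has the required period; a counting argument (at most $(M^\star)^{n+m}$ residue vectors, and each subformula contributes one) gives $K\cdot(M^\star)^{\text{poly}}\ge P$. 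I would unify the two cases by lower-bounding $|\Phi|$ directly: $|\Phi|\ge \log(\text{period of the set defined by }\Phi)^{1/\text{poly}(|\Phi|)}$, which after rearranging yields $|\Phi|=2^{\Omega(n)}$.

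The main obstacle I anticipate is step (2): making precise and watertight the claim that a $\DNNF$ formula can only define a set whose minimal period divides a bounded power of the lcm of its annotation moduli. In the lower-bound sketch for \cref{thm:exp-lower-bound} this is done at the level of \emph{circuits} (period divides $M^{e_n}$ where $e_n$ counts $\fdiv$-gates), so the cleanest route is probably \emph{not} to argue about $\DNNF$ directly but to go through \cref{syntactic-implies-semantic-nf} and \cref{construct-circuit-for-semantic-nf}: a polynomial-size $\DNNF$ formula would yield, via those two results, a polynomial-size Skolem \emph{circuit} for $\Psi_n$, and then the $\fdiv$-gate/period argument of \cref{thm:exp-lower-bound} applies verbatim to contradict the doubly-exponential minimal period — but one has to be careful, because $\Psi_n$ here is one-output and the lower bound in \cref{thm:exp-lower-bound} is about a specific family; so I would instead keep the period bound intrinsic to $\DNNF$. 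The delicate points are (a) correctly handling the affine transformations (which are over $\Q$, with denominators dividing $M$) so that their effect on periodicity is genuinely captured by powers of $M$, and (b) ensuring the chosen $\Psi_n$ really is in $\PASyn$ after the modulo-tame rewriting without itself incurring exponential blow-up — this needs the rewriting to be applied only to the (polynomially many, polynomially-bounded-modulus) modulo constraints actually present in $\Psi_n$, not to a product of all moduli across the formula. Once these are nailed down, the succinctness gap follows.
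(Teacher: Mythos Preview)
Your approach is substantially more complicated than the paper's, and while the indirect route you sketch (polynomial $\DNNF$ $\Rightarrow$ polynomial Skolem circuit via \cref{syntactic-implies-semantic-nf} and \cref{construct-circuit-for-semantic-nf} $\Rightarrow$ contradiction with \cref{thm:exp-lower-bound}) could in principle be made to work, the specific plan you lay out has real gaps you do not close.

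The paper uses an entirely elementary argument with the explicit one-input, one-output formula $\Psi_n(x,y):=x<y\le x+2^n \wedge y\equiv 0\pmod{2^n}$. This is trivially $y$-modulo-tame, and since for every $x$ there exists a (unique) $y$, the local/global condition in \eqref{condition-local-quantification} is vacuous; so $\Psi_n$ is in $\PASyn$ with size $O(n)$. For the lower bound, the paper simply observes that $\Psi_n$ defines the function $f(x)=$ ``smallest multiple of $2^n$ strictly above $x$''. Any $\DNNF$ equivalent must carry, in its maximal conjunctive subformulas, affine maps $A_1,\ldots,A_\ell$ such that for every $x$ some $A_j(x)=f(x)$. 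Writing $A_j(x)=a_jx+b_j$, any $A_j$ with $a_j\neq 1$ can match $f$ only on a bounded interval (since $|A_j(x)-x|\le 2^n$), so for $|x|$ large only maps $x\mapsto x+b_j$ survive; but $f(x)-x$ takes all $2^n$ values in $[1,2^n]$ depending on $x\bmod 2^n$, forcing $\ell\ge 2^n$. That is the whole proof.

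By contrast, your plan imports the doubly-exponential-period family from \cite{DBLP:conf/icalp/HaaseKMMZ24}. Two concrete problems: (i) that family has \emph{multiple} output variables, so the $\PASyn$ conditions in \eqref{condition-local-quantification} are \emph{not} vacuous --- you assert they are by calling it a ``one-output variant'', but you never construct such a variant, and there is no reason the required local/global equivalences hold without the exponential-size conjuncts that \cref{construct-semantic-nf} adds; (ii) your direct period bound ``a $\DNNF$ formula defines a set whose period divides $(M^\star)^c$'' is not established anywhere --- the building blocks \eqref{piece-dnnf} contain arbitrary atomic $\psi$, including modulo constraints not tied to $M$, so the period of the defined relation is not governed by the annotation modulus alone. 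You recognize this yourself and retreat to the circuit route, but then you would still owe the verification in (i). The paper sidesteps all of this: a singly-exponential separation needs only a single exponential-modulus constraint, not doubly-exponential periods or circuit lower bounds.
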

One can take $\Psi_n(x,y):=x<y\le x+2^n \wedge y\equiv
0\pmod{2^n}$.  For each $x$, there is exactly one $y\in[x,x+2^n]$ with
$\Psi_n(x,y)$, and there are exponentially many ($2^n$) possible differences
between $x$ and $y$. One can argue that for each such difference, a separate
affine map has to appear in any $\DNNF$. A full proof is in
\cref{app:succinctness-semantic-syntactic}.

\section{Discussion and Conclusion}
\label{sec:conclusion}
Our work maps out the landscape of functional synthesis for Presburger specifications, setting up a new research agenda towards normal forms for such specifications and compilation to them. In doing so, it exposes fundamental differences between functional synthesis from Boolean and Presburger specifications. Specifically the complexity upper bounds for $\PFnS$ match the best known algorithms
for $\BFnS$ (EXPTIME), though for one-output specifications, $\BFnS$ is known
to be poly-time solvable, while $\PFnS$ is at least $\NP$-hard. 
This makes it necessary to design new normal forms for $\PA$ specifications using  new concepts of modulo-tameness and local quantification.  Interestingly,
the condition of local quantification may be viewed as a generalization of the SynNNF form used in $\BFnS$~\cite{AACKRS19}. It is also surprising that we can characterize the space of all Skolem functions for Presburger specifications using a set of intervals that can be represented by Presburger circuits, while the corresponding characterization of the space of all Boolean Skolem functions using Skolem basis in~\cite{DBLP:conf/cav/AkshayCJ23} has the flavour of an on-set and a don't-care set.  A priori, there doesn't seem to be a natural connection between these two representations, although Boolean functional synthesis can be encoded as Presburger functional synthesis.  This warrants further investigation into the relation between these representations.

There are further questions that arise from the comparison with Boolean specifications. For instance, in~\cite{SBAC21}, it was shown that there exists a precise characterization for polynomial-time and size solvable Boolean functional synthesis. We do not have such a characterization for Presburger arithmetic in general, though Theorem~\ref{thm:optimal} does provide such a result for the restricted case of single-output specifications. We leave the development of such a necessary and sufficient condition for polynomial time synthesis for Presburger formulas as a challenging open problem.

As another such instance, Conjunctive Normal Form (CNF) is well-accepted as a standard form for Boolean formulas, and state-of-the-art SAT-solvers are often highly optimized for CNF formulas. Some Boolean functional synthesis engines also exploit the CNF representation for efficient processing. However, for Presburger Arithmetic, there is no such dominant representation form that we are aware of. For example, QF\_LIA (quantifier-free linear integer arithmetic) benchmarks used by the SMT community are Presburger formulas sans modulo constraints, that are not always represented in CNF. We remark that not all knowledge compilation based approaches for synthesis require CNF representation to start with. For example, in the Boolean case~\cite{DBLP:conf/cav/AkshayCJ23,DBLP:journals/amai/AkshayCS24} work with formula in Negation Normal Form directly.

Finally, we would like to improve our constructions to make them more efficient in theory and in practice. For instance, the modulo-tameness definition currently can lead to blowups that can potentially be avoided by finding alternate characterizations and normal forms. Thus, we expect our results to be a stepping stone towards practical implementability of Skolem function synthesis algorithms for Presburger arithmetic via knowledge compilation in the future and their wider use within the KR and SMT communities.

\section*{Acknowledgments}
We are grateful to Christoph Haase for answering questions about the literature
around \cref{exists-star-forall} and to Dmitry Chistikov for pointing out the work of
\citeauthor{DBLP:journals/mst/Schoning97} (\citeyear{DBLP:journals/mst/Schoning97}).

\begin{wrapfigure}[3]{l}{0.1\textwidth}
        \vspace{-15pt}
        \begin{center}
                \includegraphics[width=1.8cm]{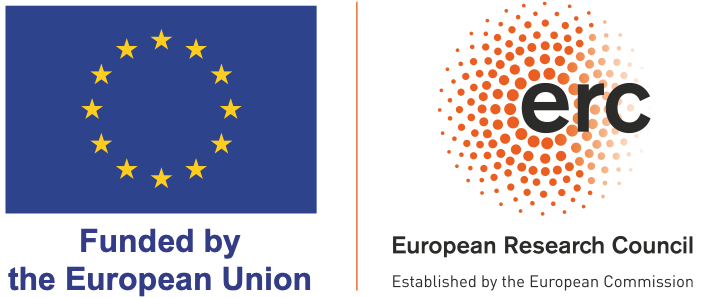}
        \end{center}
\end{wrapfigure}
Funded by the European Union (ERC, FINABIS, 101077902).  Views and opinions
expressed are however those of the authors only and do not necessarily reflect
those of the European Union or the European Research Council Executive Agency.
Neither the European Union nor the granting authority can be held responsible
for them.

\label{beforebibliography}
\newoutputstream{pages}
\openoutputfile{main.pages.ctr}{pages}
\addtostream{pages}{\getpagerefnumber{beforebibliography}}
\closeoutputstream{pages}

\bibliographystyle{kr}
\bibliography{bib}

\begin{thebibliography}{}

\bibitem[\protect\citeauthoryear{Ajtai, Koml\'{o}s, and
  Szemer\'{e}di}{1983}]{AKS83}
Ajtai, M.; Koml\'{o}s, J.; and Szemer\'{e}di, E.
\newblock 1983.
\newblock An 0(n log n) sorting network.
\newblock In {\em Proceedings of the Fifteenth Annual ACM Symposium on Theory
  of Computing}, STOC '83,  1–9.
\newblock New York, NY, USA: Association for Computing Machinery.

\bibitem[\protect\citeauthoryear{Akshay \bgroup et al\mbox.\egroup
  }{2019}]{AACKRS19}
Akshay, S.; Arora, J.; Chakraborty, S.; Krishna, S.~N.; Raghunathan, D.; and
  Shah, S.
\newblock 2019.
\newblock Knowledge compilation for {Boolean} functional synthesis.
\newblock In Barrett, C.~W., and Yang, J., eds., {\em 2019 Formal Methods in
  Computer Aided Design, {FMCAD} 2019, San Jose, CA, USA, October 22-25, 2019},
   161--169.
\newblock {IEEE}.

\bibitem[\protect\citeauthoryear{Akshay \bgroup et al\mbox.\egroup
  }{2021}]{AkshayCGKS21}
Akshay, S.; Chakraborty, S.; Goel, S.; Kulal, S.; and Shah, S.
\newblock 2021.
\newblock Boolean functional synthesis: hardness and practical algorithms.
\newblock {\em Formal Methods Syst. Des.} 57(1):53--86.

\bibitem[\protect\citeauthoryear{Akshay, Chakraborty, and
  Jain}{2023}]{DBLP:conf/cav/AkshayCJ23}
Akshay, S.; Chakraborty, S.; and Jain, S.
\newblock 2023.
\newblock Counterexample guided knowledge compilation for {Boolean} functional
  synthesis.
\newblock In Enea, C., and Lal, A., eds., {\em Computer Aided Verification -
  35th International Conference, {CAV} 2023, Paris, France, July 17-22, 2023,
  Proceedings, Part {I}}, volume 13964 of {\em Lecture Notes in Computer
  Science},  367--389.
\newblock Springer.

\bibitem[\protect\citeauthoryear{Akshay, Chakraborty, and
  Shah}{2024}]{DBLP:journals/amai/AkshayCS24}
Akshay, S.; Chakraborty, S.; and Shah, S.
\newblock 2024.
\newblock Tractable representations for {Boolean} functional synthesis.
\newblock {\em Ann. Math. Artif. Intell.} 92(5):1051--1096.

\bibitem[\protect\citeauthoryear{Arora and Barak}{2009}]{AB}
Arora, S., and Barak, B.
\newblock 2009.
\newblock {\em Computational Complexity - {A} Modern Approach}.
\newblock Cambridge University Press.

\bibitem[\protect\citeauthoryear{Chakraborty and
  Akshay}{2022}]{DBLP:conf/mfcs/ChakrabortyA22}
Chakraborty, S., and Akshay, S.
\newblock 2022.
\newblock On synthesizing computable {Skolem} functions for first order logic.
\newblock In Szeider, S.; Ganian, R.; and Silva, A., eds., {\em 47th
  International Symposium on Mathematical Foundations of Computer Science,
  {MFCS} 2022, August 22-26, 2022, Vienna, Austria}, volume 241 of {\em
  LIPIcs},  30:1--30:15.
\newblock Schloss Dagstuhl - Leibniz-Zentrum f{\"{u}}r Informatik.

\bibitem[\protect\citeauthoryear{Chakraborty \bgroup et al\mbox.\egroup
  }{2018}]{CFTV18}
Chakraborty, S.; Fried, D.; Tabajara, L.~M.; and Vardi, M.~Y.
\newblock 2018.
\newblock Functional synthesis via input-output separation.
\newblock In {\em Proc. of FMCAD}.

\bibitem[\protect\citeauthoryear{Cherniavsky}{1976}]{DBLP:journals/siamcomp/Cherniavsky76}
Cherniavsky, J.~C.
\newblock 1976.
\newblock Simple programs realize exactly {Presburger} formulas.
\newblock {\em {SIAM} J. Comput.} 5(4):666--677.

\bibitem[\protect\citeauthoryear{Chistikov, Mansutti, and
  Starchak}{2024}]{DBLP:conf/icalp/0001MS24}
Chistikov, D.; Mansutti, A.; and Starchak, M.~R.
\newblock 2024.
\newblock Integer linear-exponential programming in {NP} by quantifier
  elimination.
\newblock In Bringmann, K.; Grohe, M.; Puppis, G.; and Svensson, O., eds., {\em
  51st International Colloquium on Automata, Languages, and Programming,
  {ICALP} 2024, July 8-12, 2024, Tallinn, Estonia}, volume 297 of {\em LIPIcs},
   132:1--132:20.
\newblock Schloss Dagstuhl - Leibniz-Zentrum f{\"{u}}r Informatik.

\bibitem[\protect\citeauthoryear{Chistikov}{2024}]{DBLP:conf/fsttcs/000124}
Chistikov, D.
\newblock 2024.
\newblock An introduction to the theory of linear integer arithmetic (invited
  paper).
\newblock In Barman, S., and Lasota, S., eds., {\em 44th {IARCS} Annual
  Conference on Foundations of Software Technology and Theoretical Computer
  Science, {FSTTCS} 2024, December 16-18, 2024, Gandhinagar, Gujarat, India},
  volume 323 of {\em LIPIcs},  1:1--1:36.
\newblock Schloss Dagstuhl - Leibniz-Zentrum f{\"{u}}r Informatik.

\bibitem[\protect\citeauthoryear{Cormen \bgroup et al\mbox.\egroup
  }{2009}]{CLRS}
Cormen, T.~H.; Leiserson, C.~E.; Rivest, R.~L.; and Stein, C.
\newblock 2009.
\newblock {\em Introduction to Algorithms, Third Edition}.
\newblock The MIT Press, 3rd edition.

\bibitem[\protect\citeauthoryear{Enderton}{1972}]{Enderton1972}
Enderton, H.~B.
\newblock 1972.
\newblock {\em A Mathematical Introduction to Logic}.
\newblock New York,: Academic Press.

\bibitem[\protect\citeauthoryear{Fedyukovich and
  Gupta}{2019}]{DBLP:conf/cp/FedyukovichG19}
Fedyukovich, G., and Gupta, A.
\newblock 2019.
\newblock Functional synthesis with examples.
\newblock In Schiex, T., and de~Givry, S., eds., {\em Principles and Practice
  of Constraint Programming - 25th International Conference, {CP} 2019,
  Stamford, CT, USA, September 30 - October 4, 2019, Proceedings}, volume 11802
  of {\em Lecture Notes in Computer Science},  547--564.
\newblock Springer.

\bibitem[\protect\citeauthoryear{Fedyukovich, Gurfinkel, and
  Gupta}{2019}]{FGG19}
Fedyukovich, G.; Gurfinkel, A.; and Gupta, A.
\newblock 2019.
\newblock Lazy but effective functional synthesis.
\newblock In Enea, C., and Piskac, R., eds., {\em Verification, Model Checking,
  and Abstract Interpretation},  92--113.
\newblock Cham: Springer International Publishing.

\bibitem[\protect\citeauthoryear{Fried, Tabajara, and Vardi}{2016}]{FTV16}
Fried, D.; Tabajara, L.~M.; and Vardi, M.~Y.
\newblock 2016.
\newblock {BDD}-based {Boolean} functional synthesis.
\newblock In {\em Computer Aided Verification - 28th International Conference,
  {CAV} 2016, Toronto, ON, Canada, July 17-23, 2016, Proceedings, Part {II}},
  402--421.

\bibitem[\protect\citeauthoryear{Golia \bgroup et al\mbox.\egroup
  }{2021}]{manthantwo}
Golia, P.; Slivovsky, F.; Roy, S.; and Meel, K.~S.
\newblock 2021.
\newblock Engineering an efficient {Boolean} functional synthesis engine.
\newblock In {\em {IEEE/ACM} International Conference On Computer Aided Design,
  {ICCAD} 2021, Munich, Germany, November 1-4, 2021},  1--9.
\newblock {IEEE}.

\bibitem[\protect\citeauthoryear{Golia, Roy, and Meel}{2020}]{manthan}
Golia, P.; Roy, S.; and Meel, K.~S.
\newblock 2020.
\newblock Manthan: A data-driven approach for {Boolean} function synthesis.
\newblock {\em Computer Aided Verification} 12225:611 -- 633.

\bibitem[\protect\citeauthoryear{Gr{\"{a}}del}{1989}]{DBLP:journals/apal/Gradel89}
Gr{\"{a}}del, E.
\newblock 1989.
\newblock Dominoes and the complexity of subclasses of logical theories.
\newblock {\em Ann. Pure Appl. Log.} 43(1):1--30.

\bibitem[\protect\citeauthoryear{Gurari and
  Ibarra}{1981a}]{DBLP:journals/jacm/GurariI81}
Gurari, E.~M., and Ibarra, O.~H.
\newblock 1981a.
\newblock The complexity of the equivalence problem for simple programs.
\newblock {\em J. {ACM}} 28(3):535--560.

\bibitem[\protect\citeauthoryear{Gurari and
  Ibarra}{1981b}]{DBLP:journals/tcs/GurariI81}
Gurari, E.~M., and Ibarra, O.~H.
\newblock 1981b.
\newblock The complexity of the equivalence problem for two characterizations
  of {Presburger} sets.
\newblock {\em Theor. Comput. Sci.} 13:295--314.

\bibitem[\protect\citeauthoryear{Haase \bgroup et al\mbox.\egroup
  }{2024}]{DBLP:conf/icalp/HaaseKMMZ24}
Haase, C.; Krishna, S.~N.; Madnani, K.; Mishra, O.~S.; and Zetzsche, G.
\newblock 2024.
\newblock An efficient quantifier elimination procedure for {Presburger}
  arithmetic.
\newblock In Bringmann, K.; Grohe, M.; Puppis, G.; and Svensson, O., eds., {\em
  51st International Colloquium on Automata, Languages, and Programming,
  {ICALP} 2024, July 8-12, 2024, Tallinn, Estonia}, volume 297 of {\em LIPIcs},
   142:1--142:17.
\newblock Schloss Dagstuhl - Leibniz-Zentrum f{\"{u}}r Informatik.

\bibitem[\protect\citeauthoryear{Haase}{2014}]{DBLP:conf/csl/Haase14}
Haase, C.
\newblock 2014.
\newblock Subclasses of {Presburger} arithmetic and the weak {EXP} hierarchy.
\newblock In {\em Proc. CSL-LICS 2014},  47:1--47:10.
\newblock {ACM}.

\bibitem[\protect\citeauthoryear{Haase}{2018}]{DBLP:journals/siglog/Haase18}
Haase, C.
\newblock 2018.
\newblock A survival guide to {Presburger} arithmetic.
\newblock {\em {ACM} {SIGLOG} News} 5(3):67--82.

\bibitem[\protect\citeauthoryear{Hadamard}{1893}]{Had93}
Hadamard, J.
\newblock 1893.
\newblock R{\`e}solution d'une question relative aux d{\`e}terminants.
\newblock {\em B. Sci. Math.} 2(17):240--246.

\bibitem[\protect\citeauthoryear{Huth and Ryan}{2004}]{spec-logic-variants}
Huth, M., and Ryan, M.
\newblock 2004.
\newblock {\em Logic in Computer Science: Modelling and Reasoning about
  Systems}.
\newblock USA: Cambridge University Press.

\bibitem[\protect\citeauthoryear{Ibarra and
  Leininger}{1981}]{DBLP:journals/siamcomp/IbarraL81}
Ibarra, O.~H., and Leininger, B.~S.
\newblock 1981.
\newblock Characterizations of {Presburger} functions.
\newblock {\em {SIAM} J. Comput.} 10(1):22--39.

\bibitem[\protect\citeauthoryear{Jiang}{2009}]{DBLP:conf/cav/Jiang09}
Jiang, J.~R.
\newblock 2009.
\newblock Quantifier elimination via functional composition.
\newblock In Bouajjani, A., and Maler, O., eds., {\em Computer Aided
  Verification, 21st International Conference, {CAV} 2009, Grenoble, France,
  June 26 - July 2, 2009. Proceedings}, volume 5643 of {\em Lecture Notes in
  Computer Science},  383--397.
\newblock Springer.

\bibitem[\protect\citeauthoryear{John \bgroup et al\mbox.\egroup
  }{2015}]{JSCTA15}
John, A.~K.; Shah, S.; Chakraborty, S.; Trivedi, A.; and Akshay, S.
\newblock 2015.
\newblock Skolem functions for factored formulas.
\newblock In {\em 2015 Formal Methods in Computer-Aided Design (FMCAD)},
  73--80.
\newblock IEEE.

\bibitem[\protect\citeauthoryear{Kuncak \bgroup et al\mbox.\egroup
  }{2010}]{KMPS10}
Kuncak, V.; Mayer, M.; Piskac, R.; and Suter, P.
\newblock 2010.
\newblock Complete functional synthesis.
\newblock {\em SIGPLAN Not.} 45(6):316--329.

\bibitem[\protect\citeauthoryear{Kuncak \bgroup et al\mbox.\egroup
  }{2013}]{KMPS13}
Kuncak, V.; Mayer, M.; Piskac, R.; and Suter, P.
\newblock 2013.
\newblock Functional synthesis for linear arithmetic and sets.
\newblock {\em Int. J. Softw. Tools Technol. Transf.} 15(5-6):455--474.

\bibitem[\protect\citeauthoryear{Lin, Tabajara, and Vardi}{2022}]{LTV22}
Lin, Y.; Tabajara, L.~M.; and Vardi, M.~Y.
\newblock 2022.
\newblock {ZDD} {Boolean} synthesis.
\newblock In Fisman, D., and Rosu, G., eds., {\em Tools and Algorithms for the
  Construction and Analysis of Systems - 28th International Conference, {TACAS}
  2022, Held as Part of the European Joint Conferences on Theory and Practice
  of Software, {ETAPS} 2022, Munich, Germany, April 2-7, 2022, Proceedings,
  Part {I}}, volume 13243 of {\em Lecture Notes in Computer Science},  64--83.
\newblock Springer.

\bibitem[\protect\citeauthoryear{Lin, Tabajara, and Vardi}{2024}]{LTV24}
Lin, Y.; Tabajara, L.~M.; and Vardi, M.~Y.
\newblock 2024.
\newblock Dynamic programming for symbolic {Boolean} realizability and
  synthesis.
\newblock In Gurfinkel, A., and Ganesh, V., eds., {\em Computer Aided
  Verification - 36th International Conference, {CAV} 2024, Montreal, QC,
  Canada, July 24-27, 2024, Proceedings, Part {III}}, volume 14683 of {\em
  Lecture Notes in Computer Science},  112--134.
\newblock Springer.

\bibitem[\protect\citeauthoryear{Presburger}{1929}]{presburger-original}
Presburger, M.
\newblock 1929.
\newblock \"{U}ber die vollständigkeit eines gewissen systems der arithmetik
  ganzer zahlen, in welchem die addition als einzige operation hervortritt.
\newblock {\em Comptes Rendus du I congres de Mathematiciens des Pays Slaves}
  92--101.

\bibitem[\protect\citeauthoryear{Rabe and Seshia}{2016}]{RS16}
Rabe, M.~N., and Seshia, S.~A.
\newblock 2016.
\newblock Incremental determinization.
\newblock In {\em Theory and Applications of Satisfiability Testing - {SAT}
  2016 - 19th International Conference, Bordeaux, France, July 5-8, 2016,
  Proceedings},  375--392.

\bibitem[\protect\citeauthoryear{Sch{\"{o}}ning}{1997}]{DBLP:journals/mst/Schoning97}
Sch{\"{o}}ning, U.
\newblock 1997.
\newblock Complexity of {Presburger} arithmetic with fixed quantifier
  dimension.
\newblock {\em Theory Comput. Syst.} 30(4):423--428.

\bibitem[\protect\citeauthoryear{Shah \bgroup et al\mbox.\egroup
  }{2021}]{SBAC21}
Shah, P.; Bansal, A.; Akshay, S.; and Chakraborty, S.
\newblock 2021.
\newblock A normal form characterization for efficient {Boolean} {Skolem}
  function synthesis.
\newblock In {\em 36th Annual {ACM/IEEE} Symposium on Logic in Computer
  Science, {LICS} 2021, Rome, Italy, June 29 - July 2, 2021},  1--13.
\newblock {IEEE}.

\end{thebibliography}

\clearpage
\appendix
\crefalias{section}{appsec}
\crefalias{subsection}{appsec}

\gdef\thesubsection{\Roman{subsection}.}%
\gdef\thesubsectiondis{\Roman{subsection}.}%
\renewcommand\thesubsection{\thesection.\Roman{subsection}}

\section{Additional material on Section~\ref{sec:problem}}
\label[appsec]{app:problem}
In this section, we prove \cref{presburger-circuit-minimality,definable-skolem}.

In \cref{app:formal-statement-presburger-circuit-minimality}, we provide a formal statement for \cref{presburger-circuit-minimality} (namely, \cref{basis-complete-minimal}), prove a key lemma in \cref{app:minimal-period}, and finally prove \cref{basis-complete-minimal} in \cref{app:proof-formal-statement-presburger-circuit-minimality}. 

Then in \cref{app:definable-skolem}, we prove \cref{definable-skolem}.

\subsection{Remark~\ref{presburger-circuit-minimality}: Formal statement}\label[appsec]{app:formal-statement-presburger-circuit-minimality}
To formally state it, we need some terminology.

\begin{definition}\label{def:f-circuit}
	For a set of functions $F$, we use $F^{\circ}$ to denote the 
	set of all functions computed by circuits having as gates i) integer linear functions, and ii) functions from $F$. We say that
	$F$ is \emph{Presburger-complete} if $F^\circ$ is exactly the set of Presburger-definable functions.
\end{definition}

A circuit constructed as in Definition~\ref{def:f-circuit} is called an $F$-circuit. If there is no danger of confusion, we also write ``$\circuit\in F^\circ$'' to mean that $\circuit$ is an $F$-circuit.
We say that a set of Presburger-complete functions $F$ is \emph{minimally Presburger-complete} if for every $f\in F$, the set $F\setminus f$ is not Presburger-complete.

\newcommand{\fBasis}{\mathcal{B}}
\newcommand{\fBasisPrime}{\mathcal{B}^{\mathsf{p}}}
In the rest of this section, we consider two such sets $F$. We define
\begin{align*}
\fBasis &= \{\max, E, \fdiv_m \mid m\in\Z \}, \\
\fBasisPrime &= \{\max, E, \fdiv_p \mid \text{$p\in\Z$ is a prime}\}.
\end{align*}
Hence, a $\fBasis$-circuit is exactly what we define as a Presburger circuit. 
In this section, we also consider $\fBasisPrime$ to complete the picture of Presburger-completeness.
Of course in practice, one would not use $\fBasisPrime$ for PFnS, but rather
$\fBasis$. We will prove the following:
\begin{theorem}\label{basis-complete-minimal}
The set $\fBasis$ is Presburger-complete, and the set $\fBasisPrime$ is
minimally Presburger-complete. Moreover, there is no finite Presburger-complete
set of functions. 
\end{theorem}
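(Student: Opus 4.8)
The plan is to establish the statement in four steps: (1)~$\fBasis$ is Presburger-complete; (2a)~$\fBasisPrime$ is Presburger-complete; (2b)~no single function may be dropped from $\fBasisPrime$; and (3)~no finite set of functions is Presburger-complete. For (1), soundness ($\fBasis^\circ$ consists of Presburger-definable functions) is immediate, since each atomic function ($\max$, $\Eq$, $\fdiv_m$, integer-linear) is Presburger-definable and Presburger-definability is preserved under composition (encode the circuit by an existential formula with one fresh variable per gate); completeness is exactly \cref{presburger-circuits-completeness-definable}. For (2a), soundness is the same, and for completeness it suffices to simulate an arbitrary $\fdiv_m$-gate by $\fBasisPrime$-gates: $\fdiv_1$ is the identity (an integer-linear function), and for $m\ge 2$ with prime factorization $m=p_1\cdots p_k$ (primes not necessarily distinct), the identity $\lfloor\lfloor x/a\rfloor/b\rfloor=\lfloor x/(ab)\rfloor$ (for $x\in\Z$, $a,b\ge 1$) gives $\fdiv_m=\fdiv_{p_k}\circ\cdots\circ\fdiv_{p_1}$. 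Hence $\fBasis^\circ\subseteq\fBasisPrime^\circ$, so $\fBasisPrime$ is Presburger-complete.

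The core is (2b). For each $f\in\fBasisPrime$ we identify a property of functions $\Z^n\to\Z$ that is preserved by all remaining atomic functions and under composition, and is violated by a Presburger-definable function with one or two inputs. Since restricting a circuit to an affine line $t\mapsto\bar u_0+t\bar w$ (substitute $t\mapsto(\bar u_0)_i+t(\bar w)_i$ for each input gate $x_i$) again yields a circuit over the same gates, it suffices to reason about single-variable functions.
\begin{itemize}
  \item \emph{$\max$ is needed.} Property: there is $M\ge 1$ such that on each residue class modulo $M$, $g$ agrees with one rational-affine map on all but finitely many points. This is preserved by integer-linear combinations (take lcm of moduli), by $\fdiv_p$ (refine the modulus to $Mp$ so that the argument's residue mod $p$ is constant on a class), and by $\Eq$ (if the first argument's affine map is nonzero it vanishes at $\le 1$ point, so $\Eq$ is eventually $0$ on the class; otherwise it copies the second argument). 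But $\max(t,0)$ violates it: on any residue class it equals $t$ for infinitely many $t>0$ and $0$ for infinitely many $t<0$, which no affine map matches. As $\max(t,0)$ is Presburger-definable, $\fBasisPrime\setminus\{\max\}$ is not Presburger-complete.
  \item \emph{$\Eq$ is needed.} Property: Lipschitzness, $|G(\bar u)-G(\bar v)|\le L\|\bar u-\bar v\|_\infty$ for some $L$. Integer-linear functions, $\max$, and $\fdiv_p$ are each Lipschitz, and a finite composition of Lipschitz functions is Lipschitz; but $\Eq$ is not, since $|\Eq(0,n)-\Eq(1,n)|=|n|$ is unbounded while its arguments differ by $1$. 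As $\Eq$ is Presburger-definable, $\fBasisPrime\setminus\{\Eq\}$ is not Presburger-complete.
  \item \emph{each $\fdiv_p$ is needed.} Property (for a set $\Pi$ of primes): there is $M\ge 1$ with all prime factors in $\Pi$ such that $g$ is \emph{eventually affine} on each residue class modulo $M$, i.e.\ affine on the class intersected with $\{t>N\}$ and on the class intersected with $\{t<-N\}$ for some $N$ (the two maps may differ). This is preserved by integer-linear combinations, by $\Eq$, by $\fdiv_q$ for $q\in\Pi$ (refine the modulus by a factor $q$), and by $\max$ (far out in each direction, the sign of the difference of two functions eventually affine on a class stabilizes, so $\max$ equals one argument there). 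Hence every $(\fBasisPrime\setminus\{\fdiv_p\})$-circuit restricted to a line computes a function with this property for $\Pi=\{q\text{ prime}\mid q\ne p\}$. But the indicator of $p\Z$ is Presburger-definable and fails it for every $\Pi\not\ni p$: on a class modulo an $M$ coprime to $p$, ``$p\mid t$'' takes both truth values infinitely often, so $g$ is bounded, nonconstant, and takes both values infinitely often on the class, contradicting eventual affineness. (Alternatively, this is immediate from the minimal-period lemma of \cref{app:minimal-period}: the $\fdiv$-divisors of such a circuit have lcm $D$ coprime to $p$, so the minimal period $p$ of $p\Z$ would divide $D^e$, where $e$ is the number of $\fdiv$-gates.)
\end{itemize}

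For (3), we boost the last bullet. Let $F$ be any finite set; we may assume each $f\in F$ is Presburger-definable, so its graph is a finite union of periodic-affine pieces with one governing modulus $M_f$, and the set $\mathfrak{P}$ of primes dividing some $M_f$ is finite. Restricting an $F$-circuit to a line and inducting on its structure --- an $F$-gate $f$ evaluated on finitely many functions that are eventually affine on residue classes modulo a $\mathfrak{P}$-number sees its inputs trace an affine line in $\Z^k$, on which $f$ eventually lies in one affine piece and one residue class modulo $M_f$, so the output is again eventually affine on residue classes modulo a $\mathfrak{P}$-number --- shows that every function in $F^\circ$, restricted to a line, has this property. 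Choosing a prime $p\notin\mathfrak{P}$ (possible since $\mathfrak{P}$ is finite), the indicator of $p\Z$ is not in $F^\circ$, so $F$ is not Presburger-complete.

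The main obstacle is the bookkeeping in the invariance arguments of (2b) and (3): verifying that $\max$-gates, which introduce polyhedral breakpoints unaligned to any lattice, do not disturb the residue-class/period structure exploited in the $\max$- and $\fdiv_p$-cases, and getting the modulus refinements right (the passage $M\mapsto Mp$, and the ``$\mathfrak{P}$-number'' bound on the period of any $\{0,1\}$-valued function a circuit can compute). The reduction to a single variable via line restrictions is routine, but it is what makes the simple witnesses $\max(t,0)$, $\Eq$, and the indicator of $p\Z$ suffice.
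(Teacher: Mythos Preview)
Your proposal is correct, and in several places it takes a genuinely different route from the paper's proof. For necessity of $\max$, the paper argues that every function in $\{\Eq,\fdiv_m\}^\circ$ is existentially definable in the structure $(\Z;+,0,1)$ (without order), and that $\N$---hence $\max$---is not; you instead isolate the invariant ``eventually a single rational-affine map on each residue class modulo some $M$'' and show $\max(t,0)$ breaks it. Both work; yours stays closer to the periodicity machinery you reuse elsewhere. For necessity of $\Eq$, the paper proves that every function in $\{\max,\fdiv_m\}^\circ$ is \emph{pseudo-linear} (bounded deviation from a linear map on each half-line) and exhibits a definable non-pseudo-linear function; your Lipschitz argument is shorter and more elementary, and is a nice simplification. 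For $\fdiv_p$, your second argument (via the minimal-period lemma) is exactly the paper's; your first is a mild rephrasing of the same idea. Where the paper is notably cleaner is step~(3): rather than redoing an induction over arbitrary Presburger-definable gates (which requires invoking the piecewise-affine/periodic structure of semilinear functions), the paper simply observes that each function in a putative finite Presburger-complete set $F$ is itself expressible by a $\fBasisPrime$-circuit, hence some \emph{finite} subset of $\fBasisPrime$ would already be Presburger-complete, contradicting the minimality you just established in~(2b). You could replace your entire step~(3) by this two-line reduction.
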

Here, Presburger-completeness of $\fBasis$ is already mentioned in \cref{presburger-circuits-completeness-definable}, which will be proven later in \cref{thm:exponential-presburger-circuit}, where
this will also be accompanied by complexity bounds.

Note that if we prove that $\fBasisPrime$ is minimally Presburger-complete then the second statement follows immediately: If there were a
finite Presburger-complete set $F$, then a finite subset of $\fBasisPrime$
would suffice to compute all functions in $F$: This is because each function in
$F$ is expressible using finitely many functions in $\fBasisPrime$. However, a
finite Presburger-complete subset of $\fBasisPrime$ contradicts minimality of
$\fBasisPrime$.

The remainder of this section is therefore devoted to showing minimal
Presburger-completeness of $\fBasisPrime$.

\subsection{Minimal periods of definable sets}\label[appsec]{app:minimal-period}
In the proof of \cref{basis-complete-minimal} (and thus
\cref{presburger-circuit-minimality})---specifically the fact that $\fdiv_p$ is
necessary for each prime $p$ (see \cref{each-prime-necessary})---and also later
in \cref{thm:exp-lower-bound}, we will need a result about the minimal period
of sets definable by Presburger circuits. We prove this here.

We say that a set $S\subseteq\Z^k$ is \emph{definable using $F$} if the
characteristic function of $S$ belongs to $F^\circ$. Recall that every
Presburger-definable set $S\subseteq\Z$ is \emph{ultimately periodic}, meaning
there are $x_0,p\in\N$ such that for every $x\in\Z$, $|x|\ge x_0$, we have
$x+p\in S$ if and only if $x\in S$. In this case, $p$ is a \emph{period} of
$S$. Because of B\'{e}zout's identity, if a set $S$ has periods $p_1$ and
$p_2$, then $\gcd(p_1,p_2)$ is also a period of $S$. Therefore, the smallest
period of $S$ is the greatest common divisor of all periods.

The following \lcnamecref{minimal-period} is the main result of this subsection:
\begin{restatable}{lemma}{minimalPeriod}\label{minimal-period}
Let $D\subseteq\Z$ and $F=\{\max, E, \fdiv_m \mid m\in D\}$. Suppose
$S\subseteq\Z$ can be defined by an $F$-circuit with at most $e$ many $\fdiv$-gates.
Then $\lcm(D)^e$ is a period of $S$.
\end{restatable}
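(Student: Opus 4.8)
\textbf{Proof plan for \cref{minimal-period}.}
The plan is to proceed by structural induction on the $F$-circuit $\circuit$ defining $S$, but to prove a stronger statement that tracks a period for \emph{every} gate of the circuit, not just the output gate. Concretely, I would show the following claim: for each gate $g$ of $\circuit$, if $g$ lies below $e_g$ many $\fdiv$-gates (counting $g$ itself if it is a $\fdiv$-gate) in the sub-circuit rooted at $g$, then the function $h_g\colon\Z^n\to\Z$ computed at $g$ is ``ultimately periodic with period $\lcm(D)^{e_g}$ in each input coordinate,'' in the following precise sense: for every coordinate $j$ and every fixing of the other coordinates, $h_g$ as a function of the $j$-th coordinate alone is ultimately periodic with period dividing $\lcm(D)^{e_g}$. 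Once this is established, applying it to the output gate (which computes the characteristic function of $S$, a function of a single variable) gives that $\lcm(D)^e$ is a period of $S$, since the characteristic function is $\{0,1\}$-valued and its ultimate periodicity with period $p$ is exactly the statement that $p$ is a period of $S$.

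The inductive step splits by the label of $g$. For an input-variable gate or a constant, $e_g=0$ and the function is either a projection or constant, which is trivially periodic with period $1=\lcm(D)^0$ in every coordinate. For a linear-function gate, periodicity of each input with a common period $p$ does not quite transfer directly (a linear combination of ultimately periodic functions need not be ultimately periodic unless the combination ``closes up''), so here I would instead carry a slightly different invariant that plays well with linearity — the cleanest route is to observe that the relevant functions are all Presburger-definable, and to track the period via the \emph{graph} of the function: I would maintain that the graph $\{(\bar{u},h_g(\bar{u}))\}$ is, outside a bounded region, invariant under the shift that adds $\lcm(D)^{e_g}$ to the $j$-th coordinate of $\bar u$ \emph{and simultaneously applies the corresponding linear shift to the output value}. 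Under this formulation, linear gates, $\max$ gates, and $\Eq$ gates all preserve the invariant with $e_g=\max$ of the children's $e$-values (no new $\fdiv$), because $\max$ and $\Eq$ commute with the relevant shifts. The crucial case is a $\fdiv_m$ gate with child $g'$: here $e_g = e_{g'}+1$, and from the child's invariant with period $q:=\lcm(D)^{e_{g'}}$ on coordinate $j$, a shift of the input by $mq$ changes the child's output by a multiple of $m$ (since $m\mid\lcm(D)$, the linear shift amount is a multiple of $m$), and hence changes $\fdiv_m$ of it by exactly the quotient of that multiple — so the output picks up a period of $mq$, which divides $\lcm(D)^{e_{g'}+1}=\lcm(D)^{e_g}$.

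I expect the main obstacle to be the linear-function gates, precisely because a $\Z$-linear combination of ultimately-periodic single-variable functions is \emph{not} in general ultimately periodic, so the naive invariant ``each coordinate has period $p$'' is not closed under the circuit operations. The fix — carrying the invariant as a shift-symmetry of the graph that couples the input shift with an induced affine shift of the output — is the right conceptual device, but making it airtight requires care in the bookkeeping of the ``bounded exceptional region'': at each gate the region outside of which the symmetry holds must be controlled, and one must check that finitely many such regions (one per gate) still leave cofinitely many points where the output-gate symmetry holds, which is exactly what ultimate periodicity of $S$ needs. A clean way to package this is to note that all functions in $F^\circ$ are Presburger-definable (by \cref{presburger-circuits-completeness-definable}), hence each $h_g$ is ultimately periodic in each coordinate with \emph{some} period, so the only content is the divisibility bound on that period; this lets me phrase the induction purely in terms of ``a period of $h_g$ in coordinate $j$ divides $\lcm(D)^{e_g}$'' together with the coupled-shift statement for the graph, sidestepping any need to re-prove ultimate periodicity from scratch.
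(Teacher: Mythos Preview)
Your inductive strategy is sound, but the specific invariant you propose---a single ``coupled shift'' constant $c_g$ with $h_g(x+p)=h_g(x)+c_g$ outside a bounded region---is not preserved by $\Eq$-gates, so the induction breaks precisely there.

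Concretely: take $h_1(x)=x-2\fdiv_2(x)$ (the residue of $x$ modulo $2$, built with one $\fdiv$-gate, so $p=2$) and $h_2(x)=x$. Both satisfy your invariant with shift constants $c_1=0$ and $c_2=2$ respectively. But $\Eq(h_1(x),h_2(x))$ equals $x$ for even $x$ and $0$ for odd $x$, so its difference with step $2$ is $2$ on even inputs and $0$ on odd inputs---no single shift constant exists. Your claim that ``$\Eq$ commutes with the relevant shifts'' tacitly needs the two children to have the \emph{same} shift constant, which you cannot guarantee. (For $\max$ the situation is subtler but salvageable: if the children's shift constants differ, one child eventually dominates on each side, so a two-sided version of the invariant survives. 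For $\Eq$ this rescue is unavailable, because the zero-set of the first argument can be an infinite residue class.)

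What is actually needed is a finer invariant that already stratifies by residue class. The paper does exactly this: it shows by structural induction that every one-input gate function with at most $e$ many $\fdiv$-gates below it is given by an \emph{$M^e$-modular assignment}---a finite list of tuples $(r,I,c,d)$ with $r\in[0,M^e{-}1]$, $I\subseteq\Z$ an interval, and $c,d\in\Z$, such that on $\{x:x\equiv r\pmod{M^e}\}\cap I$ the gate computes $(cx+d)/M^e$. This piecewise-affine-per-residue structure is closed under linear combinations, $\max$, and $\Eq$ (each handled by refining the interval partition), and under $\fdiv_M$ after bumping $e$ to $e{+}1$; all cases are elementary once the invariant is right. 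A characteristic function with an $M^e$-modular assignment is then visibly periodic with period $M^e$.

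So the missing idea is not in the induction mechanics but in the choice of invariant: you must carry the full piecewise-linear-mod-$M^e$ decomposition, not just a single shift slope.
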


\begin{proof}%
We use
structural induction w.r.t.\ $F$-circuits. Let $M:=\lcm(D)$. 
Given $m\in\Z$, an \emph{$m$-modular assignment} consists of a sequence
$(r_1,I_1,c_1,d_1),\ldots,(r_n,I_n,c_n,d_n)$, where $r_i\in[0,m-1]$,
	$I_i\subseteq\Z$ is an interval (finite or infinite), and $c_i,d_i\in\Z$. Such a conditional assignment
\emph{defines} the function $f\colon\Z\to\Z$, where $f(x)=(c_ix+d_i)/m$, where $i$
is the smallest $j$ such that $x\equiv r_j\pmod{m}$ and $x\in I_j$.

Now we perform a structural induction that shows that every function defined by an $F$-circuit with at most $e$ $\fdiv$-gates is also defined by an $M^e$-modular assignment.

Observe that it suffices to show this in the case that every occurring
$\fdiv$-gate is a $\fdiv_M$-gate: This is because by the identity
\eqref{k-using-kl}, we can replace each $\fdiv_m$-gate for $m\in D$ by a
$\fdiv_M$-gate. 

Moreover, to simplify notation, we assume that the gates computing affine
functions $\Z\to\Z$ are of the following form:
\begin{description}
	\item[Sum] A gate with two inputs that computes $x+y$ for input values $x,y\in\Z$.
	\item[Multiplication with constant] A gate with one input $x$ that computes $a\cdot x$, for some constant $a\in\Z$.
	\item[Constant One] A gate with no input that always yields $1\in\Z$.
\end{description}
Clearly, every gate computing an affine function of its inputs can be
decomposed into a circuit of such gates.

Note that our claim implies the \lcnamecref{minimal-period}: A characteristic
function that has an $M^e$-modular assignment must clearly have period $M^e$.
We prove the statement by induction on the circuit size. We make a case
distinction according to the type of the output gate.
\paragraph{Max}
If the output gate is $\max$, and the two input functions have $M^e$-modular
assignments, then we construct an $M^e$-modular assignment.  Suppose we have
two functions $f,f'\colon\Z\to\Z$ that each have an $M^e$-modular assignment.
We want to construct an $M^e$-modular assignment for the function $g\colon
x\mapsto \max(f(x),f'(x))$. Let $(r_1,I_1,c_1,d_1),\ldots,(r_n,I_n,c_n,d_n)$ be
an $M^e$-modular assignment for $f$, and
$(r'_1,I'_1,c'_1,d'_1),\ldots,(r'_{n'},I'_{n'},c'_{n'},d'_{n'})$ be an
$M^e$-modular assignment for $f'$.

The $M^e$-modular assignment for $g$ will be constructed as follows. For each
$i\in[1,n]$ and $j\in[1,n']$ where $r_i=r'_j$, we consider the interval
$I_i\cap I'_j$ and divide it further according to whether $f$ dominates $f'$:
\begin{align*}
K_{i,j} &= \{x\in I_i\cap I'_j \mid (c_ix+d_i)/{M^e}\ge (c'_jx+d'_j)/{M^e}\}, \\
K'_{i,j} &= \{x\in I_i\cap I'_j \mid (c_ix+d_i)/{M^e}< (c'_jx+d'_j)/{M^e}\}.
\end{align*}
Note that since $x\mapsto (c_ix+d_i)/{M^e}$ and $x\mapsto (c'_jx+d'_j)/{M^e}$
are linear functions, $K_{i,j}$ and $K'_{i,j}$ are intervals with $I_i\cap
I'_j=K_{i,j}\uplus K'_{i,j}$: For Two linear functions, there can be at most
one point where the domination changes from one function to another, and so the
set of points where one function dominates the other is convex. Thus, our new $M^e$-modular assignment for $g$ consists of the assignments
\begin{align*} &(r_i,K_{i,j},c_i,d_i), & &(r_i,K'_{i,j},c'_j,d'_j) \end{align*}
for each $i\in[1,n]$ and $j\in[1,n']$ such that $r_i=r_j$. Thus, we have
constructed an $M^e$-modular assignment for $g\colon x\mapsto
\max(f(x),f'(x))$.

\paragraph{Zero conditional}
If the output gate is $E$, and the two input functions have $M^e$-modular assignments, then we construct an $M^e$-modular assignment. Suppose the gate computes $g(x):=E(f(x), f'(x))$ for functions $f,f'\colon\Z\to\Z$. Let
$(r_1,I_1,c_1,d_1),\ldots,(r_n,I_n,c_n,d_n)$ be an $M^e$-modular assignment for $f$, and $(r'_1,I'_1,c'_1,d'_1),\ldots,(r'_{n'},I'_{n'},c'_{n'},d'_{n'})$ be an $M^e$-modular assignment for $f'$.

The $M^e$-modular assignment for $g$ will be constructed as follows. For each
$i\in[1,n]$ and $j\in[1,n']$ where $r_i=r'_j$, we consider the interval
$I_i\cap I'_j$ and divide it further according to the sign of $f$:
\begin{align*}
K_{i,j,-1} &= \{x\in I_i\cap I'_j \mid (c_ix+d_i)/{M^e}<0\}, \\
K_{i,j,0} &= \{x\in I_i\cap I'_j \mid (c_ix+d_i)/{M^e}=0\}, \\
K_{i,j,1} &= \{x\in I_i\cap I'_j \mid (c_ix+d_i)/{M^e}>0\}.
\end{align*}
Note that since $x\mapsto (c_ix+d_i)/{M^e}$ is a linear function, $K_{i,j,-1}$,
$K_{i,j,0}$, and $K_{i,j,1}$ are intervals with $I_i\cap I'_j=K_{i,j,-1}\uplus
K_{i,j,0}\uplus K_{i,j,1}$: A linear functions can change its sign at most
once, and so the sets $K_{i,j,-1}$, $K_{i,j,0}$, and $K_{i,j,1}$ are convex.
Thus, our new $M^e$-modular assignment for $g$ consists of the assignments
\begin{align*} &(r_i,K_{i,j,-1},0,0), & &(r_i,K_{i,j,0},c'_j,d'_j), &&(r_i,K_{i,j,1},0,0) \end{align*}
for each $i\in[1,n]$ and $j\in[1,n']$ such that $r_i=r_j$. Thus, we have
constructed an $M^e$-modular assignment for $g\colon x\mapsto E(f(x),f'(x))$.

\paragraph{Division function}
If the output is $\fdiv_M$, and the input function has an $M^e$-modular assignment, then we construct an $M^{e+1}$-modular assignment. Suppose overall, we compute $g(x):=\fdiv_M(f(x))$ for a function $f\colon\Z\to\Z$. Let
$(r_1,I_1,c_1,d_1),\ldots,(r_n,I_n,c_n,d_n)$ be an $M^e$-modular assignment for $f$.

Our goal is to construct an $M^{e+1}$-modular assignment to compute $g(x)$. This means, we need to ascertain the following data about $x$:
\begin{itemize}
\item The remainder of $x$ modulo $M^e$, in order to pick the right assignment used for computing $f(x)$. The remainder of $x$ modulo $M^{e}$ clearly only depends on $x$'s remainder modulo $M^{e+1}$. Let $\alpha\colon [0,M^{e+1}-1]\to [0,M^e-1]$ be the function where for every $m\in[0,M^{e+1}]$, the $\alpha(m)$ is the remainder of $m$ modulo $M^e$.

\item If $f(x)$ used the $i$-th assignment, then we need the remainder of $(c_ix+d_i)/M^e$ modulo $M$. This means, if $(c_ix+d_i)/M^e=sM+t$ with $t\in[0,M-1]$, then we want to determine $t$. Note that this implies $c_ix+d_i=sM^{e+1}+tM$. Thus, we can compute $t$ by taking the remainder of $c_ix+d_i$ modulo $M^{e+1}$, and dividing it by $M$. Let $\beta_i\colon[0,M^{e+1}-1]\to[0,M]$ be the function so that if $x\equiv m\bmod{M^{e+1}}$, then $c_ix+d_i\equiv \beta(m)\bmod{M^{e+1}}$. 
\end{itemize}

The $M^{e+1}$-modular assignment for $g$ will be constructed as follows. For each
$m\in[0,M^{e+1}-1]$, and every $i\in[1,n]$ with $r_i=m$, then we add an assignment
\[ (m,I_i,c_i,d_i-M^e\beta_i(m)). \]
This is correct, since $((c_ix+d_i)/M^e-\beta_i(m))/M=(c_ix+d_i-M^e\beta_i(m))/M^{e+1}$.
\paragraph{Sum}
If the output gate is $+$, and the two input functions have an $M^e$-modular assignment, then we construct an $M^e$-modular assignment.
Suppose the output gate computes $g(x):=f(x)+f'(x)$ for functions $f,f'\colon\Z\to\Z$. Let
$(r_1,I_1,c_1,d_1),\ldots,(r_n,I_n,c_n,d_n)$ be an $M^e$-modular assignment for $f$, and $(r'_1,I'_1,c'_1,d'_1),\ldots,(r'_{n'},I'_{n'},c'_{n'},d'_{n'})$ be an $M^e$-modular assignment for $f'$. In the $M^e$-modular assignment for $g$, we proceed as follows. For any $i\in[1,n]$ and $j\in[1,n']$ such that $r_i=r_j$, we include the assignment
\[ (r_i,I_i\cap I'_j,c_i+c'_j,d_i+d_j). \]
Then clearly, the resulting $M^e$-modular assignment computes $g$.
\paragraph{Multiplication with a constant}
If the output performs a multiplication with a constant $a\in\Z$, and the input function has an $M^e$-modular assignment, then we construct an $M^{e}$-modular assignment. Suppose overall, we compute $g(x):=a\cdot f(x)$ for a function $f\colon\Z\to\Z$. Let
$(r_1,I_1,c_1,d_1),\ldots,(r_n,I_n,c_n,d_n)$ be an $M^e$-modular assignment for $f$.
Then clearly, we obtain an $M^e$-modular assignment by using the assignments
\[ (r_i,I_i,a\cdot c_i, a\cdot d_i) \]
for all $i\in[1,n]$.

\paragraph{Constant One}
Finally, suppose the output gate computes the function $g\colon\Z\to\Z$ with $g(x)=1$ for all $x\in\Z$. This is easily achieved by the $M^e$-modular assignment 
\[ (r,\Z,0,M) \]
for every $r\in[0,M-1]$.

\end{proof}

\subsection{Proof of Formal Statement of Remark~\ref{presburger-circuit-minimality}}\label[appsec]{app:proof-formal-statement-presburger-circuit-minimality}
In this subsection, we prove \cref{basis-complete-minimal}, the formal statement of \cref{presburger-circuit-minimality}.

For Presburger-completeness of
$\fBasisPrime$, we first observe the following:
\begin{lemma}
	For every $k,\ell\in\Z$, we have $\{\Eq,\fdiv_{k\ell}\}^\circ=\{\Eq,\fdiv_k,\fdiv_\ell\}^\circ$.
\end{lemma}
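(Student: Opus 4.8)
The plan is to prove the two inclusions $\{\Eq,\fdiv_{k\ell}\}^\circ \subseteq \{\Eq,\fdiv_k,\fdiv_\ell\}^\circ$ and $\{\Eq,\fdiv_k,\fdiv_\ell\}^\circ \supseteq \{\Eq,\fdiv_{k\ell}\}^\circ$ separately, where recall that both sides always include integer linear functions as gates (by Definition~\ref{def:f-circuit}). For both directions, it suffices to show that each ``new'' division gate on the left can be simulated by a subcircuit built from gates on the right, since a circuit is just a composition of gates. Concretely, for $\subseteq$, I need to express $x\mapsto\lfloor x/(k\ell)\rfloor$ using $\fdiv_k$, $\fdiv_\ell$, $\Eq$, and linear functions; for $\supseteq$, I need to express $x\mapsto \lfloor x/k\rfloor$ and $x\mapsto\lfloor x/\ell\rfloor$ using $\fdiv_{k\ell}$, $\Eq$, and linear functions.

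The first direction is the identity the paper already alludes to as ``\eqref{k-using-kl}'' in the appendix proof of \cref{minimal-period}: $\lfloor x/(k\ell)\rfloor = \lfloor \lfloor x/k\rfloor / \ell\rfloor$, i.e.\ $\fdiv_{k\ell} = \fdiv_\ell \circ \fdiv_k$. This is the standard fact that iterated floor division composes, valid for positive integers $k,\ell$; I would state it and note its elementary verification (write $x = q k\ell + r$ with $0\le r < k\ell$, then $\lfloor x/k\rfloor = q\ell + \lfloor r/k\rfloor$ and $0\le \lfloor r/k\rfloor < \ell$, so dividing by $\ell$ again yields $q$). Hence any $\{\Eq,\fdiv_{k\ell}\}$-circuit is turned into a $\{\Eq,\fdiv_k,\fdiv_\ell\}$-circuit by replacing every $\fdiv_{k\ell}$-gate with the two-gate chain $\fdiv_\ell\circ\fdiv_k$.

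The second direction is the one that actually needs the $\Eq$ gate and is where the real work lies. To compute $\lfloor x/k\rfloor$ from $\fdiv_{k\ell}$: we have $\lfloor x/k\rfloor = \ell\cdot\lfloor x/(k\ell)\rfloor + \lfloor (x \bmod k\ell)/k\rfloor$, and $x\bmod k\ell = x - k\ell\cdot\fdiv_{k\ell}(x)$ is computable linearly once we have $\fdiv_{k\ell}(x)$. The remaining term $\lfloor (x\bmod k\ell)/k\rfloor$ is a function of $x\bmod k\ell$ taking one of only $\ell$ possible values $\{0,1,\dots,\ell-1\}$ as $x\bmod k\ell$ ranges over $\{0,\dots,k\ell-1\}$, so it can be realized as a finite ``lookup'': for each residue class we use an $\Eq$-gate to test whether $x\bmod k\ell$ equals a given value (or lies in a given block) and output the corresponding constant, summing the results. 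A clean way is: for $j=0,\dots,\ell-1$, the value $\lfloor r/k\rfloor = j$ exactly when $jk \le r \le (j+1)k-1$; detecting membership in this interval with $\Eq$ and $\max$-style gadgets — but since the claim only allows $\Eq$ and linear functions here, I would instead use the cruder but sufficient pointwise test $\sum_{r:\,\lfloor r/k\rfloor = j}[x\bmod k\ell = r]$, each equality testable by $\Eq(x\bmod k\ell - r,\ 1)$, giving $1$ iff the argument is $0$. This is a circuit of size $O(k\ell)$, which is fine since $k,\ell$ are fixed constants for the gate in question. The symmetric construction handles $\lfloor x/\ell\rfloor$. The main obstacle is purely bookkeeping: making sure the interval/residue tests are expressed using only the allowed gate set ($\Eq$ plus linear functions — and whatever is available, noting $\max$ is also in $\mathcal{B}^{\mathsf p}$) and that the finitely-many-cases selection is written correctly so that exactly one branch fires; everything else is routine arithmetic on floors. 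Once both directions are in place, the equality $\{\Eq,\fdiv_{k\ell}\}^\circ = \{\Eq,\fdiv_k,\fdiv_\ell\}^\circ$ follows.
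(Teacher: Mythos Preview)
Your proof is correct, and for the direction $\supseteq$ (expressing $\fdiv_k$ via $\fdiv_{k\ell}$) it is essentially the same as the paper's: the paper writes the single closed-form sum $\sum_{r\in[0,k\ell-1]}\Eq(\fdiv_{k\ell}(x)\cdot k\ell-(x-r),\ \fdiv_{k\ell}(x)\cdot\ell+\fdiv_k(r))$, which is exactly your lookup-table idea with the two pieces $\ell\cdot\fdiv_{k\ell}(x)$ and the constant $\lfloor r/k\rfloor$ bundled into the second argument of $\Eq$.

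For the direction $\subseteq$, your argument is actually \emph{simpler} than the paper's. You use the clean composition identity $\fdiv_{k\ell}=\fdiv_\ell\circ\fdiv_k$, which needs neither $\Eq$ nor a case split. The paper instead gives another residue-indexed sum $\sum_{r\in[0,k\ell-1]}\Eq(\fdiv_\ell(\fdiv_k(x-r))\cdot k\ell-(x-r),\ \fdiv_\ell(\fdiv_k(x-r)))$, presumably for uniformity with the other direction. Your version is more economical and highlights that $\Eq$ is only genuinely needed to go from the coarser division to the finer one.

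Two minor remarks. First, your reference to \eqref{k-using-kl} is slightly off: in the paper that label tags the $\fdiv_k$-from-$\fdiv_{k\ell}$ formula (your second direction), not the composition identity. Second, your parenthetical that ``$\max$ is also in $\mathcal{B}^{\mathsf{p}}$'' is true but irrelevant here, since neither gate set in the lemma contains $\max$; fortunately your actual construction avoids it and uses only $\Eq$ and linear functions, as required.
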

\begin{proof}
	To show that $\fdiv_{k\ell}\in \{\Eq,\fdiv_k,\fdiv_\ell\}^\circ$, observe that
	$\fdiv_{k\ell}(x)$ is exactly the following function
$$
\sum_{r\in[0,k\ell-1]} \Eq(\fdiv_\ell(\fdiv_k(x-r))\cdot k\ell-(x-r),~\fdiv_\ell(\fdiv_k(x-r))),
$$
since the first argument to $\Eq$ vanishes if and only if $x\equiv r\bmod{k\ell}$.
Conversely, one can express $\fdiv_k(x)$ as
	\begin{equation*} \sum_{r\in[0,k\ell-1]} \Eq(\fdiv_{k\ell}(x)\cdot k\ell-(x-r), \fdiv_{k\ell}(x)\cdot \ell+\fdiv_k(r)). \label{k-using-kl}\end{equation*}
	As above, the first argument to $\Eq$ vanishes if and only if $x\equiv
	r\pmod{k\ell}$. This means $x=yk\ell+r$ for some $y\in\Z$. If we write $r=sk+t$ for some
	$t\in[0,k-1]$, then $x\equiv t\pmod{k}$, and also
	$\fdiv_{k}(x)=y\ell+s=\fdiv_{k\ell}(x)\cdot\ell+\fdiv_k(r)$. Finally, note that here $\fdiv_k(r)$ is a constant and thus need not invoke $\fdiv_k(\cdot)$. Similarly, we can also express $\fdiv_\ell$ in the same manner.
\end{proof}
Since every integer is the product of primes, the preceding lemma shows that
Presburger-completenss of $\fBasis$ yields Presburger-completeness of
$\fBasisPrime$. Let us now show minimality, i.e.\ that every function in
$\fBasisPrime$ is necessary.

\subsection*{Maximum is necessary}
\begin{proposition}\label{max-necessary}
	$\max$ is not in $\{\Eq, \fdiv_m\mid m\in\Z\}^\circ$.
\end{proposition}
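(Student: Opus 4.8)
The plan is to show that every function in $\{\Eq,\fdiv_m\mid m\in\Z\}^\circ$ is, in an appropriate sense, \emph{piecewise affine with bounded slopes}, whereas $\max$ cannot be so represented. First I would make precise what the relevant structural invariant is. Consider a single-variable restriction: for a circuit $\circuit$ over gates from $\{\text{integer linear functions},\Eq,\fdiv_m\}$, and for any affine line $t\mapsto(\bar a t+\bar b)\in\Z^n$, the composed function $t\mapsto\circuit(\bar a t+\bar b)$ should be eventually affine in $t$ with a slope that is controlled by the coefficients appearing in the circuit (in fact: a rational with bounded denominator, and the function is \emph{ultimately affine}, i.e.\ affine for all sufficiently large $t$ and all sufficiently small $t$, on each residue class modulo some $M$). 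The key closure facts are: (i)~linear gates preserve ultimate affineness and only scale slopes by constants; (ii)~$\fdiv_m$ applied to an ultimately affine function of slope $c$ yields an ultimately affine function of slope $c/m$ (on each residue class mod a refined modulus) — this is exactly the flavor of the $M^e$-modular assignment argument of \cref{minimal-period}; (iii)~crucially, $\Eq(f,g)$, where $f,g$ are ultimately affine, is \emph{eventually equal to $0$} whenever $f$ is not eventually $0$, and is ultimately affine otherwise — because a nonzero affine function is nonzero for all large $|t|$, so $\Eq(f,g)$ vanishes there. Thus $\Eq$ never \emph{creates} a ``kink'': far out along the line, $\Eq(f,g)$ is either identically $0$ or identically $g$.

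Next I would isolate the precise obstruction exhibited by $\max$. Take the line $t\mapsto(t,-t)\in\Z^2$ and the function $h(t)=\max(t,-t)=|t|$. This function has slope $-1$ for $t\to-\infty$ and slope $+1$ for $t\to+\infty$: it is \emph{not} ultimately affine in the two-sided sense (there is no single affine function agreeing with it both for all large $t$ and for all small $t$, even after fixing a residue class mod any $M$). So if I can prove by structural induction that every $F$-circuit with $F=\{\Eq,\fdiv_m\mid m\in\Z\}$ computes, along any affine line, a function that \emph{is} ultimately affine on each residue class — in particular has the same slope at $+\infty$ and at $-\infty$ on that class — then $\max\notin F^\circ$, since $|t|$ fails this. (One must phrase the invariant per residue class mod $M$ because $\fdiv_m$ genuinely introduces periodic corrections; but the \emph{slope} at $+\infty$ and at $-\infty$ on a fixed class stays equal, which is all that is needed.)

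Concretely, the induction hypothesis I would carry is: for every $F$-circuit $\circuit\colon\Z^n\to\Z$ and every affine map $\ell\colon\Z\to\Z^n$, there exist $M\in\N$ and, for each $r\in[0,M-1]$, rationals $c_r,d_r,c_r',d_r'$ and a threshold $T$ such that $\circuit(\ell(t))=c_r t+d_r$ for all $t\ge T$ with $t\equiv r\pmod M$, and $\circuit(\ell(t))=c_r' t+d_r'$ for all $t\le -T$ with $t\equiv r\pmod M$, \emph{and} $c_r=c_r'$ for every $r$ (the two-sided-slope-agreement condition). Input/constant gates and linear gates are immediate. For $\fdiv_m$: apply it to $c_r t+d_r$ on class $r$ mod $M$; refining to $M'=\operatorname{lcm}(m,M)\cdot(\text{something})$ the quantity $\lfloor(c_rt+d_r)/m\rfloor$ is ultimately $\tfrac{c_r}{m}t+(\text{const})$ on each subclass, with the same slope $c_r/m$ at both ends since $c_r=c_r'$. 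For $\Eq(\circuit_1,\circuit_2)$: on each class, if $\circuit_1$'s asymptotic slope or its asymptotic constant is nonzero, then $\Eq$ is $0$ for large $|t|$ on that class (slope $0=0$, consistent); if $\circuit_1$ is ultimately $0$ on that class at both ends, then $\Eq(\circuit_1,\circuit_2)=\circuit_2$ there, inheriting its two-sided-agreeing slope. So the invariant is preserved. The main obstacle — and the part deserving care in the full write-up — is the bookkeeping of moduli when composing gates (each $\fdiv$ and each $\Eq$ forces a refinement of $M$ and a possible re-splitting into subclasses, as in \cref{minimal-period}), and making sure the ``two-sided slope agreement'' genuinely survives these refinements; everything else is routine. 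Since $t\mapsto|t|=\max(t,-t)$ violates two-sided slope agreement (slope $+1$ vs.\ $-1$ on every residue class), it follows that $\max\notin\{\Eq,\fdiv_m\mid m\in\Z\}^\circ$.
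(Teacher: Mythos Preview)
Your approach is correct in spirit and takes a genuinely different route from the paper. The paper's proof is short and model-theoretic: it observes that every function built from integer linear maps, $\Eq$, and the $\fdiv_m$ has a graph that is existentially definable in the structure $\langle\Z;+,0,1\rangle$ (crucially, with no order relation); hence if $\max$ were so expressible, the set $\N=\{x:\max(x,0)=x\}$ would be existentially definable there. It then argues this is impossible: an infinite existentially definable subset of $\Z$ over $\langle\Z;+,0,1\rangle$ must contain a negative number, since an infinite projection of the solution set of a linear Diophantine system admits a nonzero homogeneous solution in the projected coordinate, which one can add or subtract freely. Your argument is instead analytic, and in fact closer in flavour to how the paper handles \cref{E-necessary} via pseudo-linearity (\cref{max-div-pseudo-linear}). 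The trade-off: your route is self-contained and needs no facts about definable sets; the paper's is a one-paragraph argument once that model-theoretic fact is available.

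There is one small gap in your write-up: the invariant you state (only $c_r=c_r'$) is not literally inductive under $\Eq$. If on some residue class $\circuit_1$ has slope $0$ at both ends but constants $d_r=0$ and $d_r'\ne 0$---which your invariant does not exclude---then $\Eq(\circuit_1,\circuit_2)$ equals $\circuit_2$ at $+\infty$ and $0$ at $-\infty$, which can break slope agreement when $\circuit_2$ has nonzero slope. Your case split (``asymptotic slope or asymptotic constant is nonzero'' versus ``ultimately $0$ at both ends'') silently assumes the two constants coincide. The fix is to carry the \emph{stronger} invariant that the entire affine function agrees at the two ends, i.e.\ $c_r=c_r'$ and $d_r=d_r'$. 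Your arguments for the linear and $\fdiv_m$ gates already establish this stronger statement verbatim, and with it the $\Eq$ case becomes exactly the clean dichotomy you describe; $|t|=\max(t,-t)$ still violates it (slope $+1$ versus $-1$). With that adjustment the proof is complete.
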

\begin{proof}
Suppose $\max$ belongs to $\{\Eq,\fdiv_m\mid m\in\Z\}^\circ$. Observe that a function $\Z^n \to \Z$ that is expressible using $\Eq$, integer linear functions, and $\fdiv_m$ for $m\in\Z$ is existentially definable in the structure $(\Z;+,0,1)$ (importantly: without $\le$), i.e., 
any such function can be defined by a formula that only has existential quantifiers and no occurrence of $\le$ .
Now since $\max$ is definable existentially, so is the set $\N$ of natural numbers.

However, $\N$ is not existentially definable in $(\Z;+,0,1)$: An existentially definable set in $(\Z;+,0,1)$ is a finite union of projections of solution sets of linear Diophantine equation systems. If such a projection is infinite, it contains a negative number (because infiniteness implies a solution of the homogeneous equation system that is non-zero in our component, thus we can add or subtract it to get a negative entry). Hence, the natural numbers cannot be existentially defined in $(\Z;+,0,1)$.
\end{proof}

\subsection*{Zero conditional $\Eq$ is necessary}
\begin{proposition}\label{E-necessary}
	$\Eq$ is not in $\{\max,\fdiv_m\mid m\in\Z\}^\circ$.
\end{proposition}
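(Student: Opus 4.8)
## Proof proposal for Proposition \ref{E-necessary}

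\textbf{Overall approach.} I would mirror the structure of the proof that $\max$ is necessary (Proposition \ref{max-necessary}): identify a structural property that every function in $\{\max,\fdiv_m\mid m\in\Z\}^\circ$ possesses but that $\Eq$ lacks. For $\max$ the property was ``existentially definable in $(\Z;+,0,1)$ without order''; here the natural candidate is \emph{monotonicity}. Each atomic function in the set $\{\max,\fdiv_m\mid m\in\Z\}$, together with integer linear functions with \emph{nonnegative} coefficients, is monotone non-decreasing in each argument. But linear functions are allowed arbitrary integer coefficients, so plain monotonicity fails. The right invariant is therefore something like: every $f\in\{\max,\fdiv_m\mid m\in\Z\}^\circ$ is a function that is \emph{piecewise linear and continuous} when extended to $\Q^n$ (or $\R^n$), whereas $\Eq$ is not continuous --- e.g.\ $\Eq(x,1)$ equals $1$ at $x=0$ and $0$ elsewhere, so it has a jump discontinuity.

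\textbf{Key steps, in order.}
\begin{enumerate}
\item Define, for each atomic function $g\in\{\max,\fdiv_m\}$ and each integer linear function, a continuous $\Q$-extension (equivalently $\R$-extension): linear functions extend to themselves; $\max\colon\Q^2\to\Q$ extends to the usual real max; and $\fdiv_m(x)=\lfloor x/m\rfloor$ extends to $\lfloor \cdot/m\rfloor$ on $\R$, which is \emph{not} continuous. So I cannot literally use continuity of all gates. I would instead restrict attention to the \emph{graph of the function on $\Z^n$} and argue via a period/ultimate-periodicity-of-level-sets argument, or switch to the cleaner invariant below.
\item The clean invariant: every $f\in\{\max,\fdiv_m\mid m\in\Z\}^\circ$ has the property that each of its \emph{fibers} $f^{-1}(c)$, and more usefully each ``superlevel set'' $\{u : f(u)\ge c\}$, is a finite union of sets of the form (linear constraints) $\cap$ (a coset of a sublattice) --- i.e.\ is \emph{Presburger-definable by a quantifier-free formula using inequalities and modulo constraints}, which is of course automatic. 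The genuinely restrictive property I want is: for the one-variable case, $f\colon\Z\to\Z$ built from these gates is \emph{eventually linear on each residue class modulo some $M$ in both directions, and ultimately non-decreasing}. Crucially $\fdiv_m$ is non-decreasing, $\max$ is non-decreasing in each argument, but a linear function $x\mapsto ax$ with $a<0$ is decreasing. To rule out $\Eq$ I exploit that $\Eq(x,1)$ is a function that is $1$ at exactly one point and $0$ everywhere else --- it is \emph{not} eventually linear with a single linear piece per residue class of a \emph{fixed modulus}; rather, it changes value at a single isolated point.
\item Therefore the sharpest and simplest route: show that every one-argument function $f\colon\Z\to\Z$ in $\{\max,\fdiv_m\mid m\in\Z\}^\circ$ agrees, outside a finite set, with a function of the form $x\mapsto (c_i x + d_i)$ on each residue class $i$ of some fixed modulus $M$ --- i.e.\ an ``$M$-modular \emph{linear} assignment'' with only \emph{infinite} intervals (the whole residue class). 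This is proved by structural induction exactly as in Lemma \ref{minimal-period}, except that, \emph{because there is no $\Eq$ gate to create finite intervals}, the interval structure never gets refined into bounded pieces --- each $\max$ gate can split a residue class into at most two infinite rays, and we argue these coalesce at infinity. Then observe $\Eq(x,1)$ takes the value $1$ only at $x=0$: on the residue class of $0$ mod $M$ (any $M$), it is $1$ at $x=0$ and $0$ at $x=M$, so it cannot be eventually linear on that class, contradiction.
\end{enumerate}

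\textbf{Main obstacle.} The delicate point is step 3: controlling the interaction of $\max$ with the affine pieces. A single $\max$ of two linear functions on a residue class produces two rays with \emph{different} slopes meeting at a breakpoint, so the function is genuinely piecewise-linear with a bounded breakpoint, not globally linear per class. The fix is to weaken the target invariant to ``\emph{ultimately} (in $|x|\to\infty$) linear on each residue class of a fixed modulus, in each direction'' --- i.e.\ eventually a single linear piece as $x\to+\infty$ and a single linear piece as $x\to-\infty$, with only finitely many breakpoints in between. The induction then goes through: $\max$, $\fdiv_m$, sum, scalar multiple, and constants all preserve ``eventually linear per residue class in each direction''. (This is essentially Lemma \ref{minimal-period}'s bookkeeping with ``interval'' replaced by ``eventual ray''.) Finally, $\Eq(x,1)$ is manifestly \emph{not} eventually linear on the residue class $\{0,\pm M,\pm 2M,\ldots\}$ for any $M$, since it is $0$ for all large $|x|$ but $1$ at $x=0$ --- wait, that \emph{is} eventually linear (the zero function). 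The actual contradiction must use a two-variable argument: consider $h(x,y):=\Eq(x,y)$ and note that along the line $x=0$ it equals $y$ (unbounded, slope $1$) while along $x=1$ it equals $0$; if $\Eq\in\{\max,\fdiv_m\}^\circ$ then restricting the second input to a fresh variable and the first to $x$, the resulting $\Z^2\to\Z$ function would have to be ``eventually linear per residue class of $(x,y)$'', and one shows this forces the two restrictions $h(0,\cdot)$ and $h(1,\cdot)$ to have the \emph{same} eventual slope in $y$ --- but they are $y$ and $0$, contradiction. Making this last comparison precise (that the eventual slope in $y$ cannot depend on the residue of $x$ once $|y|$ is large) is the real content and the step I expect to require the most care; it is proved by pushing the structural induction on circuits building $h$ and tracking, for each gate, how its eventual $y$-slope depends on $x$.
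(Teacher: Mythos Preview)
Your proposal circles the right territory but never lands on a working invariant. You correctly notice that ``eventually linear per residue class'' fails, since $\Eq(x,1)$ is eventually zero and hence satisfies it. Your pivot to the two-variable argument is then internally inconsistent: you posit the invariant ``eventually linear per residue class of $(x,y)$'' and immediately claim it forces $h(0,\cdot)$ and $h(1,\cdot)$ to have the \emph{same} eventual $y$-slope --- but an invariant that is explicitly \emph{per residue class} cannot forbid residue-dependent slopes, and $0,1$ lie in different residue classes modulo any $M>1$. The claim you label as ``the step requiring the most care'' is not just delicate; as stated it does not follow from your invariant at all. You would need a genuinely different invariant (something asserting the eventual slope is residue-\emph{independent}), and you never formulate one.

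The paper's invariant is \emph{pseudo-linearity}: a one-variable $f\colon\Z\to\Z$ is pseudo-linear if there are linear $g,h\colon\Z\to\Q$ and constants $M,B$ with $|f(x)-g(x)|\le B$ for $x\le -M$ and $|f(x)-h(x)|\le B$ for $x\ge M$. Crucially this is a \emph{single} linear approximant per direction, not one per residue class; $\fdiv_m$ is still pseudo-linear because $|\fdiv_m(x)-x/m|<1$, and a routine structural induction shows $\max$, $\fdiv_m$, sum, scaling, and constants all preserve pseudo-linearity. The second idea you were missing is this: rather than show $\Eq$ itself violates the invariant (it need not --- $\Eq(x,1)$ is pseudo-linear too), argue that if $\Eq\in\{\max,\fdiv_m\}^\circ$ then $\fBasis^\circ=\{\max,\fdiv_m\}^\circ$, so by Presburger-completeness \emph{every} Presburger-definable function would be pseudo-linear. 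Now exhibit a Presburger-definable witness that is not: $f(2x)=2x$, $f(2x+1)=3x$. Pseudo-linearity would force $|f(x+1)-f(x)|$ to be bounded for large $x$, but $f(2x+1)-f(2x)=x$. This one-variable argument entirely replaces your two-variable detour.
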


A function $f\colon\Z\to\Q$ is \emph{linear} if there are $a,b\in\Q$
with $f(x)=ax+b$ for every $x\in\Z$. We say that $f$ is
\emph{pseudo-linear} if there are constants $M,B\ge 0$ and linear functions
$g,h\in\Z\to\Q$ such that:
\begin{enumerate}
	\item for every $x\le-M$, we have $|f(x)-g(x)|\le B$, and
	\item for every $x\ge M$, we have $|f(x)-h(x)|\le B$.
\end{enumerate}
\begin{restatable}{lemma}{maxDivPseudoLinear}\label{max-div-pseudo-linear}
	Every function in $\{\max,\fdiv_m\mid m\in\Z\}^\circ$ is pseudo-linear.
\end{restatable}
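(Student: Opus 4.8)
The plan is to prove the \lcnamecref{max-div-pseudo-linear} by structural induction on $F$-circuits, where $F=\{\max,\fdiv_m\mid m\in\Z\}$. Since ``pseudo-linear'' is a property of one-argument functions, we consider an arbitrary $F$-circuit with a single input variable $x$ (equivalently, an arbitrary one-variable function in $F^{\circ}$), and show that the function $\Z\to\Z$ computed \emph{at every gate} is pseudo-linear, processing the gates in topological order so that the input gates of the current gate already satisfy the claim. The base cases are immediate: a gate labelled by $x$ computes the identity, and a constant gate computes a constant; both are linear, i.e.\ pseudo-linear with error bound $B=0$, $M=0$, and $g=h$ the function itself.

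For the inductive step there are exactly three gate types, since $F$ contains no $\Eq$ gate. If the gate computes an integer linear function $a_0+\sum_{i=1}^k a_iu_i$ of pseudo-linear inputs $f_1,\ldots,f_k$, take $M:=\max_i M_i$, on the right tail put $h:=a_0+\sum_i a_ih_i$ (linear), and note $|f(x)-h(x)|\le\sum_i|a_i|B_i$ for $x\ge M$; the left tail is symmetric. If the gate computes $\max(f_1,f_2)$, use the elementary bound $|\max(a,b)-\max(a',b')|\le\max(|a-a'|,|b-b'|)$ to get $|f(x)-\max(h_1,h_2)(x)|\le\max(B_1,B_2)$ for $x\ge\max(M_1,M_2)$; the crucial point is that the pointwise maximum of the two linear functions $h_1,h_2\colon\Z\to\Q$ coincides, for all sufficiently large $x$, with a single linear function (the summand of larger slope, or of larger intercept in case of equal slopes), so enlarging $M$ beyond that finite threshold makes $f$ uniformly close to a linear function on the right tail; the left tail is handled symmetrically, using $g_1,g_2$. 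If the gate computes $\fdiv_m(f_1)=\lfloor f_1/m\rfloor$ with $m\ne 0$ and $f_1$ pseudo-linear, then for $x\ge M_1$ we have $|\lfloor f_1(x)/m\rfloor-h_1(x)/m|<B_1/|m|+1$, and $h_1/m$ is linear over $\Q$; again the left tail is symmetric. This exhausts the cases, so the output gate computes a pseudo-linear function.

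The induction is routine; the only place that needs a little attention is the $\max$ case, where one has to observe that $\max(h_1,h_2)$ of two linear functions is itself linear on each of the two tails (not merely bounded by a linear function) and that the threshold past which this holds is finite, so that it can be absorbed into the constant $M$ of the pseudo-linearity witness. I would record the inequality $|\max(a,b)-\max(a',b')|\le\max(|a-a'|,|b-b'|)$ as a one-line observation up front. Finally, this \lcnamecref{max-div-pseudo-linear} is exactly what is needed for \cref{E-necessary}: if $\Eq$ were in $F^{\circ}$, then the one-variable function $x\mapsto\Eq\big(x-2\fdiv_2(x),\,x\big)$ --- which equals $x$ for even $x$ and $0$ for odd $x$ --- would lie in $F^{\circ}$ and hence, by this \lcnamecref{max-div-pseudo-linear}, be pseudo-linear, which it visibly is not.
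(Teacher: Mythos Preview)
Your proof is correct and follows essentially the same structural-induction approach as the paper, with the same case split over gate types (linear, $\max$, $\fdiv_m$). Your treatment of the $\max$ case via the inequality $|\max(a,b)-\max(a',b')|\le\max(|a-a'|,|b-b'|)$ is a bit cleaner than the paper's explicit case split on the slopes of $h_1,h_2$, but the underlying observation---that $\max$ of two linear functions is eventually linear on each tail---is identical.
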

\begin{proof}
To simplify notation, we assume that the gates computing affine
functions $\Z\to\Z$ are of the following form:
\begin{description}
	\item[Sum] A gate with two inputs that computes $x+y$ for input values $x,y\in\Z$.
	\item[Multipilication with constant] A gate with one input $x$ that computes $a\cdot x$, for some constant $a\in\Z$.
	\item[Constant One] A gate with no input that always yields $1\in\Z$.
\end{description}
Clearly, every gate computing an affine function of its inputs can be
decomposed into a circuit of such gates.

We prove the lemma by induction on the size of the circuit, and we make a case
	distinction according to the type of the output gate.

\paragraph{Max}
Suppose the output gate is $\max$, and its two input functions are pseudo-linear.
	Thus, our circuit computes the function $m\colon\Z\to\Z$ with $x\mapsto \max(f(x),f'(x))$, where $f,f'\colon\Z\to\Z$ are pseudo-linear functions. Hence, we have constants $M,B,M',B'\ge 0$ and linear functions $g,h,g',h'\colon\Z\to\Z$ such that:
	\begin{itemize}
		\item for $x\le -M$, we have $|f(x)-g(x)|\le B$,
		\item for $x\ge M$, we have $|f(x)-h(x)|\le B$,
		\item for $x\le -M'$, we have $|f'(x)-g'(x)|\le B'$, and
		\item for $x\ge M'$, we have $|f'(x)-h'(x)|\le B'$.
	\end{itemize}
	We now construct a linear function $h''\colon\Z\to\Q$ and constants $M'',B''\ge 0$ such that $|m(x)-h''(x)|\le B''$ for every $x\ge M''$. By symmetry, this implies one can also construct $M''$ and a linear $g''\colon\Z\to\Q$ with $|m(x)-g''(x)|\le B'$ for $x\le -M''$. Together this, implies that $k$ is pseudo-linear. Thus, we focus on constructing $h''$ and $M''$.

	Since $h,h'\colon\Z\to\Q$ are linear functions, we have $h(x)=ax+b$ and $h'(x)=a'x+b'$ for all $x\in\Z$, for some coefficients $a,a',b,b'\in\Q$. We distinguish three cases:
	\begin{enumerate}
		\item Suppose $a=a'$. In this case, $h(x)$ and $h'(x)$ only ever differ by $b-b'$. In particular, for $x\ge M'':=M$, we have 
			\begin{align*}
				|m(x)-h(x)|&\le \max (|f(x)-h(x)|, |f'(x)-h(x)|) \\
			&\le \max(B,B')+\lceil|b-b'|\rceil. \end{align*}
		Thus, picking $h'':=h$ and $B'':=\max(B,B')+\lceil
		|b-b'|\rceil$ satisfies our conditions.
	\item Suppose $a>a'$. Then for $x>(B+B'+|\lfloor b-b'\rfloor|)/\lceil (a-a')\rceil$, we have 
		\begin{align*}
			h(x)-h'(x)&=(a-a')x+b-b' \\
			&\ge \lceil a-a'\rceil\cdot x-|\lfloor b-b'\rfloor| \\
			&>B+B'
		\end{align*}
		and thus 
		\begin{align*}
			f(x)-f'(x)&\ge (h(x)-B)-(h'(x)+B') \\
			&=h(x)-h'(x)-(B+B')>0
		\end{align*}
		which implies $m(x)=f(x)$. Therefore, we set
		$M''=\max(M,(B+B'+|\lfloor b-b'\rfloor|)/\lceil
		(a-a')\rceil)$, and $h''=h$, and $B''=B$. Then, as we
		have seen above, $x\ge M''$ implies $m(x)=f(x)$ and
		thus $|m(x)-h''(x)|=|f(x)-h(x)|\le B=B''$, as desired.
		\item Suppose $a<a'$. This is symmetric to the case $a>a'$, so it can be shown the same way.
	\end{enumerate}

\paragraph{Division}
	Suppose the output gate is $\fdiv_k$, and the function computed by the circuit below its input gate is the pseudo-linear $f\colon\Z\to\Z$. Let $M,B\ge 0$ and let $g,h\colon\Z\to\Q$ be linear functions such that if $x\le -M$, then $|f(x)-g(x)|\le B$ and if $x\ge M$, then $|f(x)-h(x)|\le B$. Note that
	\[ |\fdiv_k(f(x))-\tfrac{f(x)}{k}|<k \]
	for every $x\in\Z$. Therefore, for $x\le -M$, we have
	\begin{align*}
		\left|\fdiv_k(f(x))-\tfrac{g(x)}{k}\right|&\le \left|\fdiv_k(f(x))-\tfrac{f(x)}{k}\right|+\left|\tfrac{f(x)}{k}-\tfrac{g(x)}{k}\right|\\
		&\le k+\tfrac{B}{k}\le k+B
	\end{align*}
	and in the same way, we can show that $|\fdiv_k(f(x))-\tfrac{h(x)}{k}|\le k+B$. for $x\ge M$. Since $x\mapsto \tfrac{g(x)}{k}$ and $x\mapsto \tfrac{h(x)}{k}$ are linear functions, this shows that $x\mapsto\fdiv_k(f(x))$ is pseudo-linear.
\paragraph{Sum}
Suppose the output gate is a $+$-gate, and its two input functions are pseudo-linear.
	Thus, our circuit computes the function $s\colon\Z\to\Z$ with $x\mapsto f(x)+f'(x)$, where $f,f'\colon\Z\to\Z$ are pseudo-linear functions. Hence, we have constants $M,B,M',B'\ge 0$ and linear functions $g,h,g',h'\colon\Z\to\Z$ such that:
	\begin{itemize}
		\item for $x\le -M$, we have $|f(x)-g(x)|\le B$,
		\item for $x\ge M$, we have $|f(x)-h(x)|\le B$,
		\item for $x\le -M'$, we have $|f'(x)-g'(x)|\le B'$, and
		\item for $x\ge M'$, we have $|f'(x)-h'(x)|\le B'$.
	\end{itemize}
	Observe that then
	\begin{align*}
		|s(x)-(g(x)+g'(x))|&=|f(x)+f'(x)-g(x)-g'(x)| \\
		&\le |f(x)-g(x)|+|f'(x)-g'(x)| \\
		&\le 2B
	\end{align*}
	for $x\le -M$. Similarly, we have $|s(x)-(h(x)+h'(x))|\le 2B$ for $x\ge
	M$.  Since $x\mapsto g(x)+g'(x)$ and $x\mapsto h(x)+h'(x)$ are linear
	functions, this shows that $s$ is pseudo-linear as well.
\paragraph{Multiplication by constant}
	Suppose the output gate is the multiplication by the constant $a\in\Z$, and the function computed by the circuit below its input gate is the pseudo-linear $f\colon\Z\to\Z$. Let $M,B\ge 0$ and let $g,h\colon\Z\to\Q$ be linear functions such that if $x\le -M$, then $|f(x)-g(x)|\le B$ and if $x\ge M$, then $|f(x)-h(x)|\le B$. Note that
\begin{align*}
	|a\cdot f(x)-a\cdot g(x)|&=|a|\cdot |f(x)-g(x)|\le |a|\cdot B
\end{align*}
for $x\le -M$. Likewise, we have $|a\cdot f(x)-a\cdot h(x)|\le |a|\cdot B$ for $x\ge M$.
Since the functions $x\mapsto a\cdot g(x)$ and $x\mapsto a\cdot h(x)$ are linear, this shows that $x\mapsto a\cdot f(x)$ is pseudo-linear.
\paragraph{Constant One}
The ``constant 1'' function is linear itself, and thus pseudo-linear.
\end{proof}

Hence, if $\Eq \in \{\max,\fdiv_m\mid m\in\Z\}^\circ$, then every function in ${\fBasis}^\circ = \{\Eq,\max, \fdiv_m \mid m \in \Z \}^\circ$ must be
pseudo-linear. Since $\fBasis$ is Presburger-complete, it follows that all Presburger-definable functions are pseudo-linear, in particular $f \colon \Z \to \Z$ with $f(2x) = 2x$ and
$f(2x+1) = 3x$. This is clearly Presburger-definable. However, we show: 

\begin{restatable}{lemma}{notPseudoLinear}\label{not-pseudo-linear}
	The function $f\colon\Z\to\Z$ with $f(2x)=2x$ and $f(2x+1)=3x$ for $x\in\Z$ is not pseudo-linear.
\end{restatable}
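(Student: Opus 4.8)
The plan is to derive a contradiction from the assumption that $f$ is pseudo-linear by examining $f$ along two arithmetic progressions — the even integers and the odd integers — on which $f$ coincides with two \emph{linear functions of different slopes} ($1$ on the evens, $\tfrac32$ on the odds). Pseudo-linearity constrains $f$ on the ray $x\ge M$ by a single linear function $h$ up to an additive error $B$; but two progressions with different asymptotic slopes cannot both stay within bounded distance of the same line.

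Concretely, suppose $f$ is pseudo-linear, witnessed by $M,B\ge 0$ and a linear $h(x)=ax+b$ with $a,b\in\Q$ such that $|f(x)-h(x)|\le B$ for all $x\ge M$. First I would restrict attention to even arguments: for every $k$ with $2k\ge M$ we have $f(2k)=2k$, hence $|(2-2a)k-b|=|2k-h(2k)|\le B$; letting $k\to\infty$ forces $2-2a=0$, i.e.\ $a=1$ (and then $|b|\le B$). Next I would restrict to odd arguments: for every $k$ with $2k+1\ge M$ we have $f(2k+1)=3k$, so with $a=1$ the bound gives $|k-1-b|=|3k-(2k+1+b)|=|f(2k+1)-h(2k+1)|\le B$ for all such $k$; but $|k-1-b|\to\infty$ as $k\to\infty$, a contradiction. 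Hence no such $h$ exists, so $f$ is not pseudo-linear (it already fails the condition on the ray $x\ge M$, so the negative ray need not even be considered).

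There is essentially no obstacle here; the only points requiring minor care are that both progressions $(2k)_k$ and $(2k+1)_k$ have infinitely many terms in $[M,\infty)$ (immediate), and that the two linear "pieces" of $f$ genuinely have different slopes, so that pinning $h$'s slope to match one piece necessarily makes the error on the other piece grow without bound.
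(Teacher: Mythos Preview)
Your proof is correct. The approach differs slightly from the paper's: the paper first derives a general consequence of pseudo-linearity---that $|f(x+1)-f(x)|\le 2B+|a|$ for all $x\ge M$, so consecutive differences are bounded---and then observes that $f(2x+1)-f(2x)=3x-2x=x$ is unbounded. You instead work directly on the two residue classes, first pinning the slope $a=1$ from the even arguments and then showing the odd arguments violate the error bound. Both arguments exploit the same underlying phenomenon (slope $1$ on evens versus slope $\tfrac32$ on odds), but the paper's bounded-difference lemma is a reusable fact about pseudo-linear functions, whereas your argument is a clean direct computation tailored to this specific $f$.
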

\begin{proof}
	Essentially, we will argue that if $f$ were pseudo-linear, then the expression $f(2x+1)-f(2x)$ could
	only attain finitely many values, but it equals $x$.  

	If $f$ is pseudo-linear, then in particular, there are $M,B\ge 0$ and a linear $g\colon\Z\to\Q$, say $g(x)=ax+b$, such that for $x\ge M$, we have $|f(x)-g(x)|\le B$. This implies
	\begin{align*}
		|f(x+1)-f(x)|&\le|f(x+1)-g(x+1)|+|g(x+1)-f(x)| \\
		&\le B+|g(x)+a-f(x)| \\
		&\le 2B+|a|
	\end{align*}
	if $x\ge M$.
	Thus, the expression $f(x+1)-f(x)$ can only assume finitely many values
	when $x$ ranges over $[M,\infty)$. However, we have $f(2x+1)-f(2x)=3x-2x=x$, a contradiction. 
\end{proof}
This implies that $\Eq \notin \{\max,\fdiv_m\mid m\in\Z\}^\circ$, meaning
$\Eq$ is necessary.

\subsection*{Division functions are necessary}
\begin{proposition}\label{each-prime-necessary}
	For every prime $p\in\Z$, we have $\fdiv_p\notin(\fBasisPrime\setminus\{\fdiv_p\})^\circ$.
\end{proposition}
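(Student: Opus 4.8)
The plan is to argue via the minimal-period obstruction provided by \cref{minimal-period}. Concretely, I would exhibit a single Presburger-definable set whose characteristic function is computable by a Presburger circuit using exactly one $\fdiv$-gate, namely $\fdiv_p$, but whose minimal period is divisible by $p$; by \cref{minimal-period}, no $F$-circuit with $F\subseteq\fBasisPrime\setminus\{\fdiv_p\}$ can define such a set, and this forces $\fdiv_p\notin(\fBasisPrime\setminus\{\fdiv_p\})^\circ$. The set I would use is $S_p=p\Z=\{x\in\Z\mid x\equiv 0\pmod p\}$. Its minimal period is exactly $p$: if $d$ is a period, then applying it at a large multiple $Np$ of $p$ gives $p\mid (Np+d)$, hence $p\mid d$. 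And its characteristic function is computed by the Presburger circuit $x\mapsto\Eq\big(x-p\cdot\fdiv_p(x),\,1\big)$, because $x-p\cdot\fdiv_p(x)$ equals $x\bmod p\in[0,p-1]$, which vanishes iff $p\mid x$; this circuit uses only integer linear functions, one $\Eq$-gate, and a single $\fdiv$-gate, and that gate is $\fdiv_p$.

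Next, I would assume for contradiction that $\fdiv_p\in(\fBasisPrime\setminus\{\fdiv_p\})^\circ$, i.e.\ there is a circuit $\ckt_p$ over $\{\max,\Eq\}\cup\{\fdiv_q\mid q\text{ prime},\ q\ne p\}$ together with integer linear functions that computes $x\mapsto\lfloor x/p\rfloor$. Substituting $\ckt_p$ in place of the unique $\fdiv_p$-gate of the circuit above yields a circuit $\ckt$ that still computes $\chi_{S_p}$ but now uses only gates from $\fBasisPrime\setminus\{\fdiv_p\}$; in particular $\ckt$ contains no $\fdiv_p$-gate at all. Since $\ckt$ is finite, only finitely many division functions occur in it; let $D$ be the finite set of their moduli (each a prime $\ne p$) and let $e$ be the number of $\fdiv$-gates of $\ckt$. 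Then $\ckt$ is an $F$-circuit for $F=\{\max,\Eq,\fdiv_m\mid m\in D\}$ with at most $e$ many $\fdiv$-gates, so \cref{minimal-period} applies and $\lcm(D)^e$ is a period of $S_p$.

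Finally, since every element of $D$ is a prime distinct from $p$, we have $p\nmid\lcm(D)$ and hence $p\nmid\lcm(D)^e$. But the minimal period of $S_p=p\Z$ is $p$, and the minimal period divides every period, so $p\mid\lcm(D)^e$ — a contradiction. Hence $\fdiv_p\notin(\fBasisPrime\setminus\{\fdiv_p\})^\circ$, as claimed. Since \cref{minimal-period} already carries the analytic weight, the argument is essentially a divisibility count; the only points needing care are (i)~verifying that $\chi_{p\Z}$ really is computed by a circuit whose only $\fdiv$-gate is $\fdiv_p$, so that the substitution step leaves a genuinely $\fdiv_p$-free circuit, and (ii)~observing that the substituted circuit is finite, so that $\lcm(D)$ is well-defined and \cref{minimal-period} is applicable.
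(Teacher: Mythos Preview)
Your proof is correct and follows essentially the same approach as the paper: both arguments use the set $p\Z$ of multiples of $p$, observe that a hypothetical $(\fBasisPrime\setminus\{\fdiv_p\})$-circuit for $\fdiv_p$ would yield one for $\chi_{p\Z}$ using only finitely many primes $\ne p$, and then invoke \cref{minimal-period} to derive the contradiction that $p$ divides a power of an integer coprime to $p$. Your version is more explicit about the substitution step and the construction of the circuit for $\chi_{p\Z}$, but the underlying idea is identical.
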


\begin{proof}[Proof of \cref{each-prime-necessary}]
If the function $\fdiv_p$ were definable using
$\fBasisPrime\setminus\{\fdiv_p\}$, then so would the set $D_p\subseteq\Z$ of
integers divisible by $p$. But then there is a finite set $P$ of primes, with
$p\notin P$, such that $D_p$ is definable using $F=\{\max, E, \fdiv_q\mid q\in
P\}$. By \cref{minimal-period}, $\lcm(P)^e$ is a period of $D_p$, for some
$e\in\N$. But this implies that the smallest period of $D_p$, which is $p$,
divides $\lcm(P)^e$, in contradiction to $p\notin P$.
\end{proof}

Hence, we have proved that all the functions in $\fBasisPrime$ are needed to achieve Presburger-completeness, thereby proving its minimality.

\subsection{Proof of Remark~\ref{definable-skolem}}\label[appsec]{app:definable-skolem}
Here, we prove:
\definableSkolem*

First, note that
there is a Presburger-definable well-order on $\Z^m$. For example,
pick an arbitrary linear order on the $2^m$ orthants in $\Z^m$, and
order the vectors inside each orthant lexicographically. Suppose
$\text{lex}(\bar{w},\bar{y})$ is a Presburger formula defining such a
well-order on $\Z^m$, i.e. for every $u, v \in \mathbb{Z}^m$,
$\text{lex}(u,v)$ holds iff $u$ is equal to or ordered before $v$ in
this order.  Now, for any quantifier-free formula
$\varphi(\bar{x},\bar{y})$, consider the formula
$\Psi(\bar{x},\bar{y})$ defined as follows:
\begin{multline*}
(\lnot \exists \bar{z}  \ \varphi(\bar{x},\bar{z})) \implies \bar{y} = \bar{0}  \ \land \\
(\exists \bar{z} \ \varphi(\bar{x},\bar{z})) \implies (\varphi(\bar{x},\bar{y}) \land \forall \bar{w} \varphi(\bar{x},\bar{w}) \implies \text{lex}(\bar{y},\bar{w}))	
\end{multline*}
This formula states that if $\varphi(\bar{x},\bar{z})$ is not
satisfiable for any $\bar{z}$, then $\bar{y}$ must be $\bar{0}$;
otherwise $\bar{y}$ must be assigned the lexicographically smallest
tuple of values that makes $\varphi$ true. Thus,
$\Psi(\bar{x},\bar{y})$ uniquely defines a function.  Morover, by
virtue of its definition, this is also a Skolem function for $\bar{y}$
in $\forall \bar{x} \exists \bar{y}: \varphi(\bar{x},\bar{y})$. The required
quantifier-free $\PA$ formula $\psi(\bar{x},\bar{y})$ is obtained by
eliminating quantifiers from $\Psi(\bar{x},\bar{y})$.

\section{Additional material on Section~\ref{sec:synthesis}}
\label[appsec]{app:synthesis}

\subsection{Comparison with existing complexity upper bounds for Skolem function synthesis}\label[appsec]{app:comparison}
In this section, we discuss two related lines of work on Presburger functions that we are aware of which would lead to alternative Skolem function synthesis algorithms.

The first by Ibarra and Leininger~\cite{DBLP:journals/siamcomp/IbarraL81} is in the context of representing Presburger functions and shows that a set of functions similar to our $\fBasis$ is sufficient to express all Presburger-definable functions. Their methods can be used for Skolem synthesis, but this would yield a \emph{quadruply-exponential} upper bound: Starting from a (quantifier-free) specification, one would first convert it into a definition of a Skolem function as in \cref{sec:prelim}, which yields a $\Pi_1$-formula. Ibarra and Leininger's construction would then convert (i)~the $\Pi_1$ formula into a semilinear representation, (ii)~the semilinear representation into a counter machine~\cite[Thm. 2.2 and Lemma 4.1]{DBLP:journals/tcs/GurariI81}, (iii)~the counter machine into an SL program~\cite[Thm. 3]{DBLP:journals/jacm/GurariI81}, and finally (iv)~the SL program into a circuit~\cite{DBLP:journals/siamcomp/IbarraL81}. For step (i), only a doubly exponential upper bound is known, and for (ii),(iii), the authors of \cite{DBLP:journals/tcs/GurariI81,DBLP:journals/jacm/GurariI81} each provide an exponential upper bound. Overall, this yields a quadruply-exponential upper bound.

A second line of work is the translation of Presburger-definable functions into
$L_+$-programs by
Cherniavsky~\cite[Thm.~5]{DBLP:journals/siamcomp/Cherniavsky76}
.  This would
only yield a \emph{triply-exponential} upper bound for an $L_+$ program (which
would not even be a circuit).  Starting from a quantifier-free specification,
one would first convert it into a function as above, yielding a
$\Pi_1$-formula. In addition, Cherniavsky's construction requires the function
to only have one output. Reducing the number of output variables requires
existentially quantifying the other outputs, turning our $\Pi_1$-formula into a
$\Sigma_2$-formula. Then, the approach requires the formula to be
quantifier-free; even applying recent
techniques~\cite{DBLP:conf/icalp/HaaseKMMZ24,DBLP:conf/icalp/0001MS24} to a
$\Sigma_2$-formula, this would yield a doubly-exponential formula. Finally,
Cherniavsky's construction of an $L_+$ program itself is exponential (and
translating the program into a circuit might incur another blowup). Overall,
this only yields a triply-exponential upper bound.

\subsection{Proof of Theorem~\ref{thm:exponential-presburger-circuit}}\label[appsec]{app:exponential-presburger-circuit}
\thmExponentialPresburgerCircuit*

In the proof of \cref{thm:exponential-presburger-circuit}, we will simplify the description of constructed circuits, by also allowing the function $\Cp$ in gates, where
	\[ \Cp(x,y)=\begin{cases} y & \text{if $x\ge 0$} \\ 0 & \text{otherwise}\end{cases} \]
Using gates labeled with $\Cp$ is clearly just syntactic sugar, because
$\Cp$-gates can be replaced by $E$-gates and $\max$-gates: We have
$\Cp(x,y)=\Eq(\min(\max(x+1,0),1)-1,y)$ and $\min(x,y)=-\max(-x,-y)$ for any
$x,y\in\Z$.

\begin{remark}\label{c-vs-e-max}
	Removing both $\max$ and $\Eq$ from the set of atomic functions
	and instead allowing $\Cp$ would yield the same expressive power of Presburger circuits.
To see this, recall the definition of $F^\circ$ for a set $F$ of functions (\cref{def:f-circuit}).
First, note that $\{\max,\Eq\}^\circ \supseteq \{\Cp\}^\circ$, because 
\[ \Cp(x,y)=\Eq(\min(\max(x+1,0),1)-1,y) \] and $\min(x,y)=-\max(-x,-y)$.
Similarly, we also have that $\max(x,y) = \Cp(x-y,x) + \Cp(y-x-1,y)$
and $\Eq(x,y) = \Cp(\min(\Cp(x,-x)+\Cp(-x,x),0),y)$ and so $\{\max,\Eq\}^\circ = \{\Cp\}^\circ$.
Hence the $\Cp$ function should be thought of as simply a syntactic sugar in the place of $\Eq$ and $\max$.
Thus, $\{\Cp,\fdiv_m \mid m \in \Z\}$ is also a Presburger-complete collection of functions.
\end{remark}

Let us first observe that it suffices to prove \cref{thm:exponential-presburger-circuit} for quantifier-free formulas over the signature $(\Z;+,\le,0,1)$ (i.e. without modulo constraints). This is because given a quantifier-free formula $\varphi(\bar{x},\bar{y})$ with modulo-constraints, we can construct in polynomial time an equivalent existential formula $\exists \bar{z}\colon \varphi'(\bar{x},\bar{y},\bar{z})$, where $\varphi'$ has no modulo-constraints. If we can construct Presburger circuits for quantifier-free formulas over $(\Z;+,\le,0,1)$, then we can view $\varphi'(\bar{x},\bar{y},\bar{z})$ as having input $\bar{x}$ and output $(\bar{y},\bar{z})$, construct a circuit for a Skolem function for $\varphi'$. Then, one obtains a a Presburger circuit for a Skolem function for $\varphi$ by projecting away the output variables $\bar{z}$ and only outputting $\bar{y}$.

Therefore, we now assume that our input formula $\varphi(\bar{x},\bar{y})$ is
quantifier-free over the signature $(\Z;+,\le,0,1)$. Suppose
$\bar{x}=(x_1,\ldots,x_n)$ and $\bar{y}=(y_1,\ldots,y_m)$. By bringing
$\varphi$ in disjunctive normal form, we obtain a disjunction
$\bigvee_{i=1}^r\varphi_i$, where (i)~each $\varphi_i$ is a conjunction of
atoms, (ii)~each $\varphi_i$ has polynomial size in $\varphi$, and (iii)~$r$ is
at most exponential in the size of $\varphi$. Now each $\varphi_i$ can be
written as $A_i\bar{y}\le B_i\bar{x}+\bar{c}_i$ for some $A_i\in\Z^{\ell\times
m}$, $B_i\in\Z^{\ell\times n}$, and $\bar{c}_i\in\Z^{\ell}$. To simplify
notation, we assumed that the number $\ell$ of inequalities is the same for
each $i$ (this can easily be achieved by introducing trivial inequalities).

The idea of our circuit construction is to find the smallest $i$ such that for
our given $\bar{x}$, the system $A_i\bar{y}\le B_i\bar{x}+\bar{c}_i$ has a
solution $\bar{y}$, and then output such a solution. To check whether this
system has a solution (and to find one), we use \cref{affine-transformations}.
It implies that for every $i$, there is an exponential-sized set $P_i$ of pairs
$(D,\bar{d})$ where $D\in\Q^{n\times\ell}$ and $\bar{b}\in\Q^n$ with
$\fracnorm{D},\fracnorm{\bar{d}}$ being bounded exponentially, such that there is
a solution $\bar{y}$ if and only if there is one of the form
$D(B_i\bar{x}+\bar{c}_i)+\bar{d}$ for some $(D,\bar{d})\in P_i$.  Since all the
sets $P_i$ contain at most exponentially many pairs, we may assume that each
$P_i$ contains $s$ elements, for some $s\in\N$ that is bounded exponentially in
$\varphi$. Moreover, we order all the pairs in $P_i$ and write
\[ P_i=\{(D_{i,j},\bar{d}_{i,j}) \mid j=1,\ldots,s\}. \]
Our circuit will check whether $A_i\bar{y}\le B_i\bar{x}+\bar{c}_i$ has a
solution by trying all pairs in $P_i$. Then, when it has found the smallest $i$
for which there is a solution, it outputs
$D_{i,j}(B_i\bar{x}+\bar{c}_i)+\bar{d}_{i,j}$, where $j$ is the smallest $j$
for which this expression is a solution. In slight abuse of terminology, we say that the pair $(i,j)$ is a \emph{solution} if 
\begin{align}
	A_i(D_{i,j}B_i\bar{x}+\bar{c}_i)+\bar{d}_{i,j})&\le B_i\bar{x}+\bar{c}_i,\label{solution-inequalities} \\
	D_{i,j}(B_i\bar{x}+\bar{c}_i)+\bar{d}_{i,j}&\in\Z^{m}, \label{solution-modulo}
\end{align}
in other words, when $D_{i,j}(B_i\bar{x}+\bar{c}_i)+\bar{d}_{i,j}$ is an integral solution to $A_i\bar{y}\le B_i\bar{x}+\bar{c}_i$ for $\bar{y}$.
Let $\ll$ be the lexicographic ordering on $[1,r]\times[1,s]$, meaning $(i,j)\ll (i',j')$ if and only if (a)~$i<i'$ or (b)~$i=i'$ and $j<j'$. Then our circuit finds the $\ll$-minimal solution.

This means, we consider the function:
\begin{align*}
	F_{i,j}(\bar{x})=\begin{cases} 1 & \text{if $(i,j)$ is the minimal solution} \\
		0 & \text{otherwise}
	\end{cases}
\end{align*}
and construct a circuit for $F_{i,j}$. To this end, we first construct a circuit for the function
\begin{align*}
	G_{i,j}(\bar{x})=\begin{cases} 1 & \text{if $(i,j)$ is a solution} \\ 
		0 & \text{otherwise}
	\end{cases}
\end{align*}
For this, in turn, we construct a circuit that checks whether \eqref{solution-inequalities} is satisfied, and another to check \eqref{solution-modulo}. 

We write the system \eqref{solution-inequalities} of inequalities as $\bar{a}_k^\top\bar{x}\le\bar{b}_k$ for $k=1,\ldots,\ell$ for some vectors $\bar{a}_k,\bar{b}_k\in\Q^n$. Each of these vectors has polynomially many bits, so we can construct a polynomial-sized circuit for the function
\begin{align*}
	I_{i,j}(\bar{x})=\Cp\left(\sum_{k=1}^\ell \Cp(\bar{b}_k-\bar{a}_k^\top\bar{x},1)-\ell,1\right)
\end{align*}
which returns $1$ if and only if $\bar{a}_k^\top\bar{x}\le\bar{b}_k$ for each $k\in\{1,\ldots,\ell\}$. To check \eqref{solution-modulo}, we write the entries of the vector $D_{i,j}(B_i\bar{x}+\bar{c}_i)+\bar{d}_{i,j}$ as $\bar{e}_k^\top\bar{x}+f_k$ for some $\bar{e}_k\in\Q^n,f_k\in\Q$ for $k=1,\ldots,m$. Note that the condition ``$\bar{e}_k^\top\bar{x}+f_k\in\Z$'' is a modulo constraint and can be rewritten as $\bar{g}_k^\top\bar{x}+h_k\equiv 0\pmod{m_k}$ for some $\bar{g}_k\in\Z^n$ and $h_k,m_k\in\Z$. We thus construct a polynomial-sized circuit for the function
\begin{align*}
	M_{i,j}(\bar{x})=\Eq\left(m-\sum_{k=1}^m \Eq(\fdiv_{m_k}(\bar{g}_k^\top\bar{x}+h_k),1),1\right),
\end{align*}
which returns $1$ if and only if $m_k$ divides $\fdiv_{m_k}(\bar{g}_k^\top\bar{x}+h_k$ for each $k=1,\ldots,m$; and otherwise returns $0$. 

With $I_{i,j}$ and $M_{i,j}$, we can construct a circuit for $G_{i,j}$, since:
\[ G_{i,j}(\bar{x})=\Eq(I_{i,j}(\bar{x})+M_{i,j}(\bar{x})-2,1). \]
This allows us to construct a circuit for $F_{i,j}$, since
\[ F_{i,j}(\bar{x})=\Eq\left(\sum_{\stackrel{(t,u)\in[1,r]\times[1,s],}{(t,u)\ll(i,j)}} G_{t,u}(\bar{x}), G_{i,j}(\bar{x})\right). \]
This is because the first argument to $\Eq$ is zero if and only if $G_{t,u}(\bar{x})=0$ for all $(t,u)\ll(i,j)$. If that first argument is zero, then we evaluate the second argument. The latter, in turn, is $1$ if and only if $G_{i,j}(\bar{x})=1$.

Finally, with a circuit for $F_{i,j}$, we can now construct a circuit for a
Skolem function for $\varphi$. Here, we use the fact that if for some $\bar{x}$, there exists a $\bar{y}$ with $\varphi(\bar{x},\bar{y})$, then by definition of $F_{i,j}$, there is exactly one pair $(i,j)\in[1,r]\times[1,s]$ such that $F_{i,j}(\bar{x})=1$. Moreover, if $F_{i,j}(\bar{x})$ holds, then we can pick $D_{i,j}(B_i\bar{x}+\bar{c}_i)+\bar{d}_{i,j})$ for $\bar{y}$. Therefore, the following is a Skolem function for $\varphi$:
\[ f(\bar{x})=\sum_{i=1}^r\sum_{j=1}^s \Eq(1-F_{i,j}(\bar{x}), D_{i,j}(B_i\bar{x}+\bar{c}_i)+\bar{d}_{i,j}). \]

Let us now estimate the size of the circuit. Since $m$ and $\ell$ are
polynomial in the input, and all matrix entries of each $D_{i,j}$ and each
$\bar{d}_{i,j}$ have polynomially many bits, the circuits for each $I_{i,j}$
and each $M_{i,j}$ are polynomial-sized. Therefore, the circuit for each
$G_{i,j}$ is also polynomial-sized. The circuit for each $F_{i,j}$, however, is
exponential, because $r$ and $s$ are exponential in the size of $\varphi$.
Finally, in the circuit for $f$, the subcircuit for computing
$D_{i,j}(B_i\bar{x}+\bar{c}_i)+\bar{d}_{i,j})$ is again polynomial-sized, and
thus each $\Eq$ term in $f$ only adds polynomially many gates. Then, the sum
is a single linear combination gate with exponentially many wires to lower
gates. In total, we obtain exponentially many gates.

\subsection{Proof of Observation~\ref{obs:Bool:short}}\label[appsec]{app:Bool}
In this subsection, we prove:
\booleanCircuitSize*
To prove \cref{obs:Bool:short}, we will first state it more formally (as \cref{obs:Bool}), which requires some notation. We define:
\newcommand{\SPA}{\mathsf{S}_{\mathsf{PA}}}
\newcommand{\SBool}{\mathsf{S}_{\mathsf{bool}}}
\begin{align*}
	\begin{split}
	\SPA(n) = \max\{|\ckt| \mid \text{$\ckt$ is a minimum-size Presburger circuit } \\
	~~~~~\text{Skolem function for a $\PA$ formula of size $\le n$}\}
	\end{split} \\
	\begin{split}
	\SBool(n) = \max\{|\ckt| \mid \text{$\ckt$ is a minimum-size Boolean circuit} \\
	~~~~~\text{Skolem function for a Boolean formula of size $\le n$}\}
	\end{split}
\end{align*}
It has been pointed out before that if all Boolean formulas admit a polynomial-size circuits for Skolem functions, then $\NP\subseteq\Ppoly$~\cite[Theorem 1]{AkshayCGKS21}. Hence most likely, small Skolem function circuits do not always exist. However, apriori, it could be easier to prove a super-polynomial lower bound on Skolem function circuits than to refute $\NP\subseteq\Ppoly$. 

We observe here that the existence of small Boolean Skolem function circuits is
in fact \emph{equivalent} to $\NP\subseteq\Ppoly$. This means, proving a
super-polynomial lower bound for Skolem function circuits is as difficult as
refuting $\NP\subseteq\Ppoly$.

For this, recall that $\mathsf{P}/\mathsf{poly}$ is the 
class of all languages $L$ that could be decided by polynomial-sized Boolean circuits $\{\ckt_n\}_{n \in \mathbb{N}}$, one for each possible length of the string. More precisely, there is a polynomial $p$
such that each Boolean circuit $\ckt_n$ has $n$ input gates, is of size at most $p(n)$ and
$\ckt_n$ outputs 1 for a string $x \in \{0,1\}^n$ iff $x \in L$. Based on this complexity class,
we make the following observation.
\begin{restatable}{observation}{obsBool}\label{obs:Bool}
	$\SBool(n)$ is bounded by a polynomial if and only if $\mathsf{NP} \subseteq 
	\mathsf{P}/\mathsf{poly}$.
\end{restatable}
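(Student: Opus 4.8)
The plan is to prove the two directions separately. The implication ``$\SBool$ polynomial $\Rightarrow\NP\subseteq\Ppoly$'' is essentially \cite[Theorem~1]{AkshayCGKS21}, so I would recall its argument briefly and spend the real effort on the converse.

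\emph{From $\SBool$ polynomial to $\NP\subseteq\Ppoly$.} I would reduce from $\mathsf{CircuitSAT}$. Fix parameters $n,s$ and build a single propositional formula $\Phi_{n,s}(\bar e,\bar y,\bar z)$ of size $\mathrm{poly}(n,s)$ using a universal circuit together with a Tseitin encoding: $\bar e$ encodes an arbitrary Boolean circuit $C$ of size $\le s$ on $n$ inputs, $\bar y$ is a candidate input for $C$, $\bar z$ holds the gate values, and $\Phi_{n,s}$ asserts both that $\bar z$ is the consistent evaluation of $C$ on $\bar y$ and that $C$'s output gate is $1$. Then $\exists\bar y\,\bar z\colon\Phi_{n,s}(\bar e,\bar y,\bar z)$ holds iff the circuit encoded by $\bar e$ is satisfiable. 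A Skolem circuit for $(\bar y,\bar z)$ in $\forall\bar e\,\exists\bar y\,\bar z\colon\Phi_{n,s}$ has, by hypothesis, size $\le\SBool(\mathrm{poly}(n,s))=\mathrm{poly}(n,s)$; projecting it to its $\bar y$-outputs yields a polynomial-size circuit $f$. Using $f$ as advice, one decides $\mathsf{CircuitSAT}$ in polynomial time: encode the input circuit $C$ as $\bar e$, compute $\bar v:=f(\bar e)$, and accept iff $C(\bar v)=1$. Correctness is immediate from the Skolem property: if $C$ is satisfiable then $\Phi_{n,s}(\bar e,f(\bar e),\cdot)$ holds, so $C(\bar v)=1$; if $C$ is unsatisfiable then $C(\bar v)=1$ is impossible for any $\bar v$. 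Hence $\mathsf{CircuitSAT}\in\Ppoly$ and $\NP\subseteq\Ppoly$.

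\emph{From $\NP\subseteq\Ppoly$ to $\SBool$ polynomial.} Given an upper bound $n$ on the formula size, I would realize a Skolem circuit by greedy bit-fixing, using an $\NP$ oracle that $\NP\subseteq\Ppoly$ converts into polynomial-size circuits. Consider the language
\[
  L=\{(\varphi,\bar u,\bar v')\mid \varphi(\bar x,\bar y)\text{ a Boolean formula},\ \exists\bar y''\colon\varphi(\bar u,\bar v',\bar y'')\},
\]
where $\bar v'$ is a prefix of an assignment to $\bar y$ and $\bar y''$ the remaining coordinates; clearly $L\in\NP$. By $\NP\subseteq\Ppoly$ there is a polynomial-size circuit family $\{D_N\}_{N\in\N}$ deciding $L$. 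Now fix any Boolean specification $\varphi(\bar x,\bar y)$ with $|\varphi|\le n$, $\bar x$ of length $n_x\le n$, $\bar y$ of length $m\le n$. The Skolem circuit takes $\bar x$ as input, hard-wires the encoding of $\varphi$, and computes $\bar v$ coordinate by coordinate: having fixed $v_1,\dots,v_{k-1}$, it feeds $(\varphi,\bar x,v_1\cdots v_{k-1}1)$ (suitably padded) to $D_N$ to obtain the bit ``$\exists y_{k+1},\dots,y_m\colon\varphi(\bar x,v_1\cdots v_{k-1}1,y_{k+1},\dots,y_m)$'' and sets $v_k$ to that bit. (If $(\varphi,\bar x,\varepsilon)\notin L$, the output is irrelevant.) A routine induction on $k$ shows that $\varphi(\bar x,\bar v)$ holds whenever $\exists\bar y\colon\varphi(\bar x,\bar y)$ does, so $\bar v$ is a Skolem function for $\bar y$. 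The circuit is a composition of at most $m+1$ copies of $D_N$ with $N\le\mathrm{poly}(n)$, wired together with a constant amount of indexing/multiplexing logic, hence has size $\mathrm{poly}(n)$; crucially this bound depends only on $n$ and not on $\varphi$, since $\varphi$ is supplied as \emph{data} to the fixed circuits $D_N$. Therefore $\SBool(n)=\mathrm{poly}(n)$.

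\emph{Main obstacle.} The delicate point is the bookkeeping in the second direction: $L$ must be phrased so that $\varphi$ is part of its \emph{input} rather than baked into a circuit, so that one polynomial-size advice family $\{D_N\}$ simultaneously serves all formulas of size $\le n$, and then one must check that hard-wiring a fixed $\varphi$ plus the multiplexing overhead still yields a circuit whose size is bounded purely in terms of $n$. The universal-circuit/Tseitin details in the first direction, and verifying that ``prefix-extendability'' lies in $\NP$, are routine by comparison.
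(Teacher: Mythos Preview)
Your proposal is correct and follows essentially the same approach as the paper: the forward direction is the cited \cite[Theorem~1]{AkshayCGKS21} argument (you route through $\mathsf{CircuitSAT}$ via a universal circuit rather than $\mathsf{SAT}$, but the idea is identical), and the converse is the standard search-to-decision self-reduction for $\mathsf{SAT}$, which you unfold explicitly as greedy bit-fixing while the paper invokes it as a known lemma. The only cosmetic difference is that the paper first substitutes $\bar{x}$ into $\varphi$ and then calls a single search circuit, whereas you keep $\varphi$ and $\bar{u}$ as data and call decision circuits $m$ times---both yield polynomial-size Skolem circuits for the same reason.
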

The ``only if'' was shown in \cite[Theorem 1]{AkshayCGKS21}. The ``if'' holds
because small circuits for SAT can be used to compute Skolem functions
bit-by-bit. See \cref{app:Bool} for details.

\begin{proof}
	Suppose $\SBool(n)$ is bounded by a polynomial. Under this assumption,
	it was proven in~\cite[Theorem 1]{AkshayCGKS21},
	that $\mathsf{NP} \subseteq \mathsf{P}/\mathsf{poly}$.
	For the converse direction, suppose $\mathsf{NP} \subseteq  \mathsf{P}/\mathsf{poly}$.
	This implies that Boolean satisfiability can be decided by polynomial-sized circuits. 
	It is well known that if Boolean satisfiability can be solved by polynomial-sized circuits,
	then there are polynomial-sized circuits $\{C_n\}_{n \in \mathbb{N}}$ such that not only does $C_n$ correctly solve satisfiability instances of size $n$,
	but $C_n$ also outputs a satisfying assignment for satisfiable formulas of size $n$~\cite[Chapter 6]{AB}.
	Now, consider the following Turing machine: On input $\psi = \forall \bar{x} \exists \bar{y} \varphi(\bar{x},\bar{y})$ where $\varphi$ is a formula of length $n$, the circuit $C_n$ and an assignment $A$ to the variables $\bar{x}$, it 
	\begin{itemize}
		\item First fixes this assignment
		in the formula $\varphi$ to get a formula $\varphi_X(\bar{y})$ depending only on $\bar{y}$.
		\item Then it runs the circuit $C_n$ on $\varphi_X(\bar{y})$ to get an assignment $B$ for $\bar{y}$
		and outputs it.
	\end{itemize}
	
	Note that by assumption on the circuit $C_n$, it follows that this Turing machine, always outputs an assignment $Y$ such that $\exists \bar{y} \varphi(A,\bar{y}) \iff \varphi(A,B)$ is true. 
	Hence, this Turing machine is a Skolem function for $\bar{y}$ in $\psi$.
	Furthermore, it runs in polynomial time in the size of $X, \psi$ and $C_n$. 
	By standard complexity arguments, any polynomial-time Turing machine
	can be converted into polynomial-sized Boolean circuits and so 
	there is a polynomial-sized Boolean circuit which for inputs $X, \psi$ and $C_n$
	acts as a Skolem function for $\bar{y}$ in $\psi$. This means that if we fix $\psi$ and $C_n$,
	then we get a polynomial-sized Boolean circuit Skolem function for $\bar{y}$ in $\psi$.
	Since $\psi$ was any arbitrary formula, it follows that $\SBool(n)$ is bounded by a polynomial.
\end{proof}

\subsection{Proof of Theorem~\ref{thm:exp-lower-bound}}\label[appsec]{app:exp-lower-bound}
The proof of \cref{thm:exp-lower-bound} relies on \cref{prop:pa-formula-as-function}, which lets us convert 
formulas into circuits. If $\varphi(\bar{x})$ has $n$
free variables $\bar{x}=(x_1,\ldots,x_n)$, then its \emph{characteristic function}
is the map $\xi_\varphi\colon \Z^n\to\{0,1\}$, where $\xi_\varphi(\bar{u})=1$
if and only if $\varphi(\bar{u})$ holds.
\begin{restatable}{proposition}{PAFormulaAsFunction}\label{prop:pa-formula-as-function}
	Given an existential $\PA$ formula $\varphi$, we can construct a
	Presburger circuit for $\xi_\varphi$ in time $\bigO{2^{|\varphi|}}$. If
	$\varphi$ is quantifier-free, the time bount becomes 
	$\bigO{|\varphi|}$.
\end{restatable}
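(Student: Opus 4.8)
The plan is to treat the two parts of the statement separately. The quantifier-free part is a direct structural induction, and the existential part then reduces to \cref{thm:exponential-presburger-circuit}.

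For the quantifier-free part, I would assume $\varphi(\bar x)$ is given as an NNF tree, so its leaves are linear inequalities $L(\bar x)\ge 0$ or modular constraints $L(\bar x)\bowtie r\pmod M$ with $\mathord{\bowtie}\in\{\equiv,\not\equiv\}$, and its internal nodes are labelled $\land$ or $\lor$. I build, by induction on the tree, a Presburger circuit whose output gate evaluates to $1$ exactly when the corresponding subformula holds and to $0$ otherwise, maintaining this $0/1$ invariant throughout. For a leaf $L(\bar x)\ge 0$: compute $L(\bar x)$ with a linear-combination gate and feed it into $\Cp(\cdot,1)$ (the syntactic sugar from \cref{c-vs-e-max}), which outputs $1$ iff $L(\bar x)\ge 0$. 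For a leaf $L(\bar x)\equiv r\pmod M$: using one $\fdiv_M$-gate and a linear-combination gate, compute the remainder $\rho=\big(L(\bar x)-r\big)-M\cdot\fdiv_M\big(L(\bar x)-r\big)$, which lies in $[0,M-1]$, and output $\Eq(\rho,1)$; this is $1$ iff $\rho=0$, i.e.\ iff $M\mid L(\bar x)-r$. For a leaf with $\not\equiv$, output $\Cp(\rho-1,1)$ instead. For an internal $\land$-node whose $k$ children evaluate to $b_1,\dots,b_k\in\{0,1\}$: the conjunction holds iff $\sum_{i=1}^k b_i=k$, i.e.\ iff $\sum_{i=1}^k b_i-k\ge 0$, so output $\Cp\big(\sum_{i=1}^k b_i-k,\,1\big)$; for an $\lor$-node output $\Cp\big(\sum_{i=1}^k b_i-1,\,1\big)$. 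Every node of the tree contributes $\bigO{1}$ gates whose coefficients are read directly off the formula, so the circuit has size $\bigO{|\varphi|}$ and is produced in time $\bigO{|\varphi|}$.

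For the existential part, first bring $\varphi$ to prenex form $\exists\bar z\colon\psi(\bar x,\bar z)$ with $\psi$ quantifier-free and in NNF (a linear-time preprocessing step). View $\psi$ as a relational specification with inputs $\bar x$ and outputs $\bar z$, and apply \cref{thm:exponential-presburger-circuit} to obtain a Skolem circuit $\ckt_S$ for $\bar z$ in $\forall\bar x\exists\bar z\colon\psi(\bar x,\bar z)$. By the defining property of a Skolem function, for every $\bar u$ we have $\exists\bar z\colon\psi(\bar u,\bar z)$ iff $\psi(\bar u,\ckt_S(\bar u))$, and hence $\xi_\varphi(\bar u)=\xi_\psi(\bar u,\ckt_S(\bar u))$. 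I would therefore take the circuit for $\xi_\psi$ produced in the first part --- treating $\bar z$ as additional inputs there --- and wire the outputs of $\ckt_S$ into its $\bar z$-input gates. Both the size and the construction time are dominated by those of $\ckt_S$, which by \cref{thm:exponential-presburger-circuit} is exponential in $|\psi|$; since $|\psi|\le|\varphi|$, composing with the $\bigO{|\psi|}$-sized circuit for $\xi_\psi$ gives the stated bound.

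I do not expect a genuine obstacle here: the quantifier-free part is a short catalogue of gadgets, and the existential part is a black-box call to \cref{thm:exponential-presburger-circuit} followed by a composition. The two things to get right are (i) maintaining the $0/1$ invariant so that the thresholding gadgets $\Cp(\cdot,1)$ and $\Eq(\cdot,1)$ behave correctly at $\land$/$\lor$ nodes (this is exactly why the sums $\sum_i b_i$ admit the clean thresholds $k$ and $1$), and (ii) invoking \cref{thm:exponential-presburger-circuit} on the matrix $\psi$ rather than on $\varphi$, so that the exponential bound is against $|\psi|\le|\varphi|$ and no further quantifier-elimination blow-up occurs; note that \cref{thm:exponential-presburger-circuit} already absorbs the modular constraints of $\psi$ by introducing fresh output variables, so no separate handling of modular constraints is required in the existential part.
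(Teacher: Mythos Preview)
Your proposal is correct and follows essentially the same approach as the paper: a structural induction for the quantifier-free case (with $\Cp$/$\Eq$/$\fdiv$ gadgets at leaves and threshold gadgets at internal nodes), and a black-box call to \cref{thm:exponential-presburger-circuit} plus composition for the existential case. The specific gadgets you chose differ cosmetically from the paper's (e.g.\ the paper uses $\Cp(\xi_{\varphi_1}-1,\xi_{\varphi_2})$ for binary $\land$ rather than your summing threshold, and a slightly different expression for $\equiv$), but all are equivalent and of constant size per node, so the argument and bounds are the same.
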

\begin{proof}
	The first statement follows from the second, together with
	\cref{thm:exponential-presburger-circuit}: Given an existential
	$\varphi$, we view the quantified variables as output variables, and
	apply \cref{thm:exponential-presburger-circuit} to the resulting
	$\Pi_2$ formula. The resulting Presburger circuit allows us to compute
	an assignment of the quantified variables such that if
	$\varphi(\bar{x})$ is satisfied, then with these values. We can
	therefore use the second statement to check whether the output of the
	circuit makes $\varphi$ true. This proves our first statement. Thus, it
	remains to prove the second statement.

	We can construct it by structural induction on the tree
	representation of $\varphi$. For the base case
	(i.e. $\varphi$ is an atomic formulas in $\PA$), let $t$ denote the
	term $\sum_{i=1}^n a_i x_i + a_0$. Referring to
	\cref{app:exponential-presburger-circuit} for the definitions of $\Cp$, if $\varphi$ is $t \ge 0$, we use the Presburger circuit
	for $\Cp(t, 1)$.  If $\varphi$ is $t \equiv r \pmod M$, we use the
	circuit for $\Eq(\fdiv_M(t-r) - \fdiv_M(t-r-1) - 1, 1)$. Finally,
        if $\varphi$ is $t \not\equiv r \pmod M$, we use the circuit
        for $1-\Eq(t-r - M\fdiv_M(t-r),1)$. 
	
	Given a circuit for $\xi_\phi$ for a formula $\varphi$, the circuit for $\xi_{\neg\varphi}$
	is simply the circuit for $1 - \xi_\varphi$. Finally, given circuits for
	$\xi_{\varphi_1}$ and $\xi_{\varphi_2}$ for two formulas
	$\varphi_1, \varphi_2$, the circuits
	for conjunction and disjunction of $\varphi_1$ and $\varphi_2$ are
	those for $\Cp(\xi_{\varphi_1}-1,\xi_{\varphi_2})$ and $1 - \Cp(-\xi_{\varphi_1}, 1 - \xi_{\varphi_2})$ 
	respectively. In this way, we can construct the circuit for the formula $\varphi$.

	Since the circuits corresponding to the atomic formulas can be
	constructed in time linear in the sizes of the atomic formulas
	themselves, and since each induction step introduces
	exactly one $C$ function and at most a constant number of linear
	terms, the overall circuit can clearly be constructed in time
	$\bigO{|\varphi|}$.
\end{proof}

\begin{proof}[Proof of~\cref{thm:exp-lower-bound}]
	In \cite[Section 6]{DBLP:conf/icalp/HaaseKMMZ24}, the authors construct
	a family of existential formulas $\varphi_n(x)$ such that the set
	$\llbracket\varphi_n\rrbracket$ defined by $\varphi_n$ has a minimal
	period of $2^{2^{\Omega(n)}}$.
	
	Suppose $\varphi_n(x)=\exists\bar{y}\colon \psi(\bar{y},x)$. Consider the $\Pi_2$ formula
	\[ \mu_n\equiv\forall x\colon \exists \bar{y}\colon \psi_n(\bar{y},x). \]
	Let $\ckt_n$ be a Presburger circuit for a Skolem function for $\mu_n$.
	Such a Skolem function for $\mu_n$ yields, given $x\in\N$, a vector
	$\bar{y}$ such that if $\varphi_n(x)$, then $\psi_n(\bar{y},x)$. Using
	$\psi$, we can easily turn $\ckt_n$ into a circuit $\ckt'_n$ of size
	polynomial in $\ckt_n$, such that $\ckt'_n$ defines the set $\llbracket
	\varphi_n\rrbracket$: On input $x\in\Z$, $\ckt'_n$ first uses $\ckt_n$ to
	compute $\bar{y}$, and then simulate the circuit for $\psi_n$ from Proposition~\ref{prop:pa-formula-as-function}
	to output $\psi_n(\bar{y},x)$.

	Let $M_n\subseteq\Z$ be the set of divisors $m\in\Z$ for which
	$\fdiv_m$ gates occur in $\ckt'_n$. Moreover, let $e_n$ be the number of
	$\fdiv$ gates in $\ckt'_n$. Then, by \cref{minimal-period}, the number
	$\lcm(M_n)^{e_n}$ is a period of $\llbracket\varphi_n\rrbracket$.  This
	implies that the minimal period of $\llbracket\varphi_n\rrbracket$,
	which is at least $2^{2^{\Omega(n)}}$, divides $\lcm(M_n)^{e_n}$.
	Hence, $\lcm(M_n)^{e_n}\ge 2^{2^{\Omega(n)}}$. However, 
	$\lcm(M_n)^{e_n}$ is at most exponential in  $|\ckt'_n|$, and thus at most
	exponential in $|\ckt_n|$. Therefore, for some $c>0$, we have
	$2^{|\ckt_n|^c}\ge 2^{|\ckt'_n|}\ge \lcm(M_n)^{e_n}\ge 2^{2^{\Omega(n)}}$,
	hence $|\ckt_n|$ must be at least exponential.
\end{proof}

\subsection{Proof of Observation~\ref{hardness-one-one-long}}\label[appsec]{app:hardness-one-one-long}
Here, we prove:
\hardnessOneOneLong*
We first rephrase \cref{hardness-one-one-long} more formally.
Let%
\begin{align*}
	&\mathsf{S}_{\mathsf{1PA}}(n) = \max\{|\ckt| \mid \text{$\ckt$ is a minimum-size Presburger circuit } \\
	&\text{ for $\PA$ formulas over one input/output variable of size $\le n$}\}
\end{align*}

Then \cref{hardness-one-one-long} can be phrased as follows:
\begin{restatable}{observation}{hardnessOneOne}\label{hardness-one-one}
	If $\mathsf{S}_{\mathsf{1PA}}(n)$ is bounded by a polynomial then so is $\SBool(n)$. 
	Consequently, if $\mathsf{S}_{\mathsf{1PA}}(n)$ is bounded by a polynomial, then 
	$\mathsf{NP} \subseteq 
	\mathsf{P}/\mathsf{poly}$.
\end{restatable}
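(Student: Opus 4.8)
The plan is to reduce Boolean functional synthesis to one-input one-output Presburger functional synthesis by Chinese remaindering, and then to push the resulting Presburger Skolem circuit back down to a Boolean one. Fix a Boolean formula $\psi(X_1,\dots,X_k,Y_1,\dots,Y_\ell)$ with inputs $\bar X$ and outputs $\bar Y$, put it in NNF, and let $p_1,\dots,p_k,q_1,\dots,q_\ell$ be the first $k+\ell$ primes. Encode a Boolean assignment $a\in\{0,1\}^k$ by the unique $x_a\in[0,\prod_i p_i)$ with $x_a\equiv a_i\pmod{p_i}$, and $b\in\{0,1\}^\ell$ by the unique $y_b\in[0,\prod_j q_j)$ with $y_b\equiv b_j\pmod{q_j}$. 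Let $\varphi(x,y)$ be obtained from $\psi$ by replacing the literals $X_i,\neg X_i,Y_j,\neg Y_j$ by the Presburger atoms $x\equiv 1\pmod{p_i}$, $x\not\equiv 1\pmod{p_i}$, $y\equiv 1\pmod{q_j}$, $y\not\equiv 1\pmod{q_j}$ respectively, keeping $\wedge,\vee$ unchanged. Since the $(k+\ell)$-th prime is $O((k+\ell)\log(k+\ell))$ by the prime number theorem, each modulus needs only $O(\log(k+\ell))$ bits, so $\varphi$ is a quantifier-free one-input one-output PA formula with $|\varphi|=\mathrm{poly}(|\psi|)$.

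The second step records the semantic correspondence. Fix an input $a$. As $y$ ranges over $\Z$, the tuple whose $j$-th bit is $1$ iff $y\equiv 1\pmod{q_j}$ takes every value in $\{0,1\}^\ell$ (by CRT), and for each such $y$ the formula $\varphi(x_a,y)$ holds iff that tuple, together with $a$, satisfies $\psi$. Hence $\exists y\colon\varphi(x_a,y)$ holds iff $\exists\bar Y\colon\psi(a,\bar Y)$ holds, and whenever $\varphi(x_a,y^\ast)$ holds the decoded assignment $b$ (with $b_j=1$ iff $y^\ast\equiv 1\pmod{q_j}$) satisfies $\psi(a,b)$.

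The core step: assuming $\mathsf{S}_{\mathsf{1PA}}$ is polynomially bounded, $\varphi$ has a Presburger Skolem circuit $\ckt$ with $|\ckt|=\mathrm{poly}(|\psi|)$, and I would convert it into a polynomial-size Boolean Skolem circuit for $\psi$. On Boolean input $a$ this circuit (i)~computes $x_a$ in binary, which is $\bigl(\sum_i a_i e_i\bigr)\bmod N$ for hard-wired constants $e_i,N$, hence an iterated-addition-and-reduction circuit of polynomial size, and $x_a$ has $O(k\log k)$ bits; (ii)~simulates $\ckt$ on $x_a$ to obtain $y^\ast=\ckt(x_a)$ in binary; (iii)~outputs the bits $b_j$, where $b_j=1$ iff $y^\ast\bmod q_j=1$. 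For (ii) the essential point is that along any path of $\ckt$ the magnitude of a gate value grows by at most a factor $2^{O(|\ckt|)}$ per linear gate and never grows at a $\max$, $\Eq$, or $\fdiv$ gate, so every gate value has $\mathrm{poly}(|\ckt|)+O(\log|x_a|)=\mathrm{poly}(|\psi|)$ bits; and each atomic Presburger function on poly-bit integers --- an integer linear combination with poly-bit coefficients, $\max$, $\Eq$, or $\fdiv_m$ (division by the constant $m$, even when $m$ has exponentially large magnitude) --- is computed by a polynomial-size Boolean sub-circuit, working with a signed (two's-complement) representation and floor division. Composing (i)--(iii), and invoking the correspondence of the previous paragraph together with the Skolem property of $\ckt$ at $x=x_a$, shows that $\SBool$ is polynomially bounded; by \cref{obs:Bool} this yields $\NP\subseteq\Ppoly$.

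The main --- and essentially the only non-routine --- obstacle is the bit-length analysis in step (ii): bounding the magnitude of all intermediate values of the Presburger circuit and arguing that each Presburger gate, in particular $\fdiv_m$ for a constant $m$ of exponential magnitude (i.e.\ division by a fixed constant), admits a polynomial-size Boolean implementation on such inputs. Everything else (the CRT encoding and its size, the translation of $\psi$, the semantic equivalence, and the final appeal to \cref{obs:Bool}) is routine bookkeeping; one only has to be mildly careful with signs, since gate values may be negative.
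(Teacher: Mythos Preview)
Your proposal is correct and follows essentially the same approach as the paper: encode Boolean assignments into single integers via Chinese remaindering (the paper uses residue $0$ rather than $1$, a cosmetic difference), translate the Boolean formula literal-by-literal into modular constraints, and then simulate the assumed small Presburger Skolem circuit to recover a Boolean Skolem function. The paper phrases the simulation step as a polynomial-time Turing machine (relying on the earlier remark that Presburger circuits evaluate in polynomial time) and then invokes the standard TM-to-circuit conversion, whereas you build the Boolean circuit directly and make the bit-length bound explicit; your version is in fact more careful on exactly the point you flag as non-routine.
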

Here, recall the definition of $\SBool(n)$ from \cref{app:Bool}.
\begin{proof}
	First, we will give a polynomial-time reduction from $\Pi_2$ Boolean formulas
	to $\Pi_2$ PA formulas over one input and one output variable.
	Let $\forall \bar{x} \ \exists \bar{y} \ \psi(\bar{x},\bar{y})$ be a $\Pi_2$ Boolean formula where $\bar{x} = x_1,\dots,x_n$ and $\bar{y} = y_1,\dots,y_m$ and $\psi$ is a 3CNF formula. 
	
	Choose $n+m$ distinct primes $p_1,\dots,p_n, q_1,\dots, q_m$. By the prime number theorem, 
	we have $n+m$ distinct primes in the range $[0,O(n+m)]$, which can be found and verified in polynomial time, since they have logarithmically many bits. We will now construct a
	Presburger formula
	$\forall a \ \exists b \ \varphi(a,b)$ over one input variable and one output variable in the following way.
	
	For each clause $C_i = \ell_i^1 \lor \ell_i^2 \lor \ell_i^3$ in $\psi$, $\varphi$ will have a clause
	of the form $F(C_i) := F(\ell_i^1) \lor F(\ell_i^2) \lor F(\ell_i^3)$ where $F(\ell_i^j)$ is defined as follows.
	\begin{itemize}
		\item If $\ell_i^j$ is $x_k$ for some $k$, then $F(\ell_i^j)$ is $a \equiv 0 (\bmod \ p_k)$.
		\item If $\ell_i^j$ is $\overline{x_k}$ for some $k$, then $F(\ell_i^j)$ is $a \not \equiv 0 (\bmod \ p_k)$.
		\item If $\ell_i^j$ is $y_k$ for some $k$, then $F(\ell_i^j)$ is $b \equiv 0 (\bmod \ q_k)$.
		\item If $\ell_i^j$ is $\overline{y_k}$ for some $k$, then $F(\ell_i^j)$ is $b \not \equiv 0 (\bmod \ q_k)$.
	\end{itemize}
	
	Notice that the size of the formula $\varphi$ is $O((n+m)^2)$.
	Intuitively any assignment $X$ of $\bar{x}$ in the formula $\psi$ corresponds 
	to the following set of numbers for the variable $a$: $S(X) := \{k : \forall 1 \le i \le n, \ p_i \text{ divides } k \text{ if and only if } X(x_i) = 1\}$. Since each $p_i$ is a prime, it is easy to see that $S(X) \neq \emptyset$ for any $X$  and also that $S(X) \cap S(X') = \emptyset$ for any two distinct assignments $X'$.
	Furthermore, the union of $S(X)$ over all possible assignments covers all of the natural numbers.
	
	Similarly, any assignment $Y$ of $\mathbf{y}$
	corresponds to the following set of numbers for the variable $b$: $S(Y) := \{k : \forall 1 \le i \le m, q_i \text{ divides } k \text{ if and only if } Y(y_i) = 1\}$. Once again, it is easy to see
	that $S(Y) \neq \emptyset, S(Y) \cap S(Y')$ for $Y \neq Y'$ and also that the union of $S(Y)$ over all possible assignments covers all of the natural numbers.
	From the construction of the Presburger formula $\varphi$ it is then clear that
	$\psi(X,Y)$ is true for any two assignments $X,Y$ iff $\varphi(e,r)$ is true
	for any $e \in S(X), r \in S(Y)$.

	Now, suppose $\mathsf{S}_{\mathsf{1PA}}(n)$ is bounded by a polynomial of the form $n^c$ for some fixed $c$. So, in particular, this means for the $\Pi_2$ formula $\forall a \ \exists b \ \varphi(a,b)$
	there is a polynomial-sized PA circuit $C_\varphi$ which acts as a Skolem function for this $\Pi_2$ formula.
	Using this circuit, we now synthesize a Boolean circuit of polynomial size for the
	for the Boolean formula $\forall \bar{x} \exists \bar{y} \psi(\bar{x}, \bar{y})$. 
	To this end, consider the following polynomial-time Turing machine: On input $\forall \bar{x} \exists \bar{y} \psi(\bar{x}, \bar{y}), \forall a \exists b \varphi(a,b)$, the circuit $C_\varphi$ and an
	assignment $X$ to the variables $\bar{x}$, it
	\begin{itemize}
		\item First computes the smallest number $N$ in $S(X)$ which is given by exactly the product of the primes in the set $\{p_i : X(x_i) = 1\}$. 
		\item Then, it runs the circuit $C_\varphi$ on the number $N$ and produces an output number $M$.
		\item It then converts $M$ into an assignment of $\bar{y}$ in the following manner:
		$Y(y_i) = 1$ iff the prime $q_i$ divides $M$.
		\item Finally, it outputs $Y$.
	\end{itemize}
	
	By the discussion above and by the assumption that $C_\varphi$ is a Skolem function for $\forall a \exists b \varphi(a,b)$, it follows that this Turing machine always outputs an assignment $Y$
	such that $\exists \bar{y} \psi(X,\bar{y}) \iff \psi(X,Y)$ is true. Hence, when we fix the inputs $\forall\bar{x}\exists\bar{y}\psi(\bar{x},\bar{y})$ and $C_\varphi$ (hence, only leave $X$ as input), then the Turing computes
	a Skolem function fo $\bar{y}$ in $\forall \bar{x} \ \exists \bar{y} \ \psi(\bar{x},\bar{y})$.
	Now, by the same arguments as the ones given in~\cref{obs:Bool}, it follows
	that it is possible to convert this Turing machine into a polynomial-sized
	Boolean circuit that acts as a Skolem function for $\bar{y}$ in $\forall \bar{x} \exists \bar{y} \psi(\bar{x}, \bar{y})$. 
	Hence, $\SBool$ is bounded by a polynomial. 
\end{proof}

\section{Additional material on Section~\ref{sec:semantic-nf}}
\label[appsec]{app:semantic-nf}
\subsection{Proof of Proposition~\ref{prop:ModTameUniv}}\label[appsec]{app:ModTameUniv}
\propModTameUniv*
\begin{proof}
  Let $\psi$ be a maximal conjunctive sub-formula of $\varphi$, and
  let $M^{\psi}$ be the least common multiple (lcm) of all moduli that
  appear in modular constraints involving $y$ in $\psi$. Clearly,
  $M^\psi \le \prod_{M \in \mathfrak{M}} M$. For every atomic formula
  of the form $a y + t_x \equiv b \pmod M$ in $\psi$, where $t_x$ is a
  linear term in $\bar{x}$ and $a \in \mathbb{Z}$, we replace it with
  the semantically equivalent formula $\bigvee_{0\le r <
    M^\psi}\big((y\equiv r \pmod {M^\psi})\wedge (\mu t_x \equiv \mu
  (b - a r) \pmod {M^\psi})\big)$, where $\mu = \frac{M^\psi}{M}$.
  Since the replacement is done only at the leaves of the sub-tree
  rooted at the maximal conjunction node corresponding to $\psi$, the
  sub-tree resulting from the substitution represents a maximal
  conjunctive sub-formula of the new formula $\varphi'$.  By virtue of
  the construction, this sub-formula is also $y$-modulo tame. By
  repeating the above process for all maximal conjunctive sub-formulas
  of $\varphi$, we obtain a new formula $\varphi'$ that is
  semantically equivalent to $\varphi$, and is $y$-modulo tame.  It is
  easy to see that the above technique for transforming $\varphi$ to
  $\varphi'$ takes $\bigO{|\varphi|\big(\prod_{M\in
      \mathfrak{M}}M\big)}$ time.
\end{proof}

\subsection{Lemma~\ref{prop:sk-fns-of-disj} and its proof}\label[appsec]{app:sk-fns-of-disj}
Next, we present a
helper lemma that is useful for proving the main result of the
subsection, and is of independent interest.%
. The lemma essentially explains how a $\PA$ circuit for the Skolem function of a disjunction of $\PA$ specifications can be obtained by efficiently combining the $\PA$ circuits for the Skolem functions at each of the disjuncts.

\begin{restatable}{lemma}{lemDisj}\label{prop:sk-fns-of-disj}
Let $\varphi(\bar{x},y)$ be the formula $\bigvee_{i=1}^k
\varphi_i(\bar{x},y)$, where each $\varphi_i$ is a $\PA$ formula.  Let
$\ckt_i$ be a Presburger circuit representing a Skolem function $f_i$
for $y$ in $\forall \bar{x} \exists y:\, \varphi_i$, for each $i \in
\{1, \ldots r\}$.  Then, a Presburger circuit for a Skolem function
$f$ for $y$ in $\forall \bar{x} \exists y:\, \varphi$ can be
constructed in time $\bigO{k^2 + |\varphi| + \sum_{i=1}^k
  |\circuit_i|}$.
\end{restatable}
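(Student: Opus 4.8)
The plan is to build the circuit for $f$ by a simple ``try each disjunct in turn'' strategy. On input $\bar{u}$, we compute the candidate values $f_1(\bar{u}),\ldots,f_k(\bar{u})$ using the given circuits $\ckt_1,\ldots,\ckt_k$; for each $i$ we test whether $\varphi_i(\bar{u},f_i(\bar{u}))$ holds; and we output $f_j(\bar{u})$ for the least index $j$ that passes its test (and, say, $0$ if no test passes). Correctness is immediate from the definition of Skolem function: if $\exists y\colon\varphi(\bar{u},y)$, then since $\varphi=\bigvee_i\varphi_i$ there is some $i$ with $\exists y\colon\varphi_i(\bar{u},y)$, hence $\varphi_i(\bar{u},f_i(\bar{u}))$ holds by the Skolem property of $\ckt_i$; so at least one test passes, and the output $f_j(\bar{u})$ satisfies $\varphi_j(\bar{u},f_j(\bar{u}))$, which implies $\varphi(\bar{u},f_j(\bar{u}))$. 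Conversely, if $\neg\exists y\colon\varphi(\bar{u},y)$, then no value of $y$ satisfies $\varphi(\bar{u},\cdot)$, so whatever the output is, it does not satisfy $\varphi(\bar{u},\cdot)$. Note that we test the cheap formula $\varphi_i$ rather than $\varphi$ itself: this is both sound (since $\varphi_i(\bar{u},v)$ implies $\varphi(\bar{u},v)$) and crucial for the size bound, since it lets us reuse each $\varphi_i$ only once.

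To realise this as a Presburger circuit, I would first apply \cref{prop:pa-formula-as-function} to each quantifier-free subformula $\varphi_i$, obtaining in time $\bigO{|\varphi_i|}$ a circuit computing its characteristic function $\xi_{\varphi_i}$. Composing this with $\ckt_i$ (feeding $\bar{x}$ from the input gates and $f_i(\bar{x})$ from $\ckt_i$) gives a circuit computing $b_i(\bar{x}):=\xi_{\varphi_i}(\bar{x},f_i(\bar{x}))\in\{0,1\}$. Next, using one summation gate and one $\Eq$-gate per index, I would compute the indicator
\[
  g_i(\bar{x})\;:=\;\Eq\!\big(\textstyle\sum_{t<i}b_t(\bar{x}),\;b_i(\bar{x})\big),
\]
which equals $1$ exactly when $b_i(\bar{x})=1$ and $b_t(\bar{x})=0$ for all $t<i$; thus at most one $g_i(\bar{x})$ is $1$, namely the one for the least passing index. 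Finally I would output
\[
  f(\bar{x})\;=\;\sum_{i=1}^k \Eq\!\big(1-g_i(\bar{x}),\;f_i(\bar{x})\big),
\]
since $\Eq(1-g_i,f_i)$ equals $f_i(\bar{x})$ if $g_i=1$ and $0$ otherwise, so the sum collapses to $f_j(\bar{x})$ for the least passing $j$, and to $0$ if there is none.

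For the size/time accounting: the circuits for the $\xi_{\varphi_i}$ have total size $\bigO{\sum_i|\varphi_i|}=\bigO{|\varphi|}$ (the $\varphi_i$ are the disjuncts of $\varphi$), and together with the $\ckt_i$ this accounts for $\bigO{|\varphi|+\sum_{i=1}^k|\ckt_i|}$ gates for computing all the $b_i$. The selection logic adds, for each $i$, one summation gate with $i-1$ inputs plus a constant number of further gates, giving $\bigO{k^2}$ extra gates in total (this is where the $k^2$ term comes from; one could reduce it to $\bigO{k}$ by computing the prefix sums incrementally, but that is not needed). The final output is one summation of $k$ terms. Hence the overall construction time and circuit size are $\bigO{k^2+|\varphi|+\sum_{i=1}^k|\ckt_i|}$, as claimed.

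I do not expect a genuine obstacle here: the construction is a routine composition of known gadgets. The only point needing care is the size bookkeeping — specifically, observing that the per-disjunct characteristic-function circuits sum to $\bigO{|\varphi|}$ (which is why one must test $\varphi_i$, not $\varphi$), and that the ``first index that works'' selection can be implemented with only $\bigO{k^2}$ overhead using $\Eq$ and summation gates, without re-evaluating any $\ckt_i$.
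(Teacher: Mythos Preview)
Your proposal is correct and follows essentially the same approach as the paper: compute the characteristic functions $\xi_{\varphi_i}(\bar{x},f_i(\bar{x}))$ using \cref{prop:pa-formula-as-function}, then use $\Eq$-based selection logic to output one $f_i(\bar{x})$ whose test passes. The only cosmetic difference is in the selection gadget---the paper encodes all the test bits into a single integer $\gamma(\bar{x})=\sum_i 2^i\xi_{\varphi_i}(\bar{x},f_i(\bar{x}))$ and uses thresholding against $2^i$ together with a telescoping sum, whereas you use explicit prefix sums; both realisations give the same $\bigO{k^2+|\varphi|+\sum_i|\ckt_i|}$ bound.
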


  \begin{proof}
  Let $\circuit_{\varphi}$ be a Presburger circuit for the
  characteristic function $\xi_{\varphi_i}(\bar{x},y)$ of
  $\varphi_i(\bar{x},y)$.  Since Presburger functions are closed under
  composition, $\xi_{\varphi_i}(\bar{x}, f_i(\bar{x}))$ is a
  Presburger function.  Moreover, a Presburger circuit for
  $\xi_{\varphi_i}(\bar{x}, f_i(\bar{x}))$ can be constructed in time
  $\bigO{|\varphi_i| + |\circuit_i|}$.

  We now construct the Presburger function $\gamma(\bar{x}) =
  \sum_{i=1}^k 2^i \xi_{\varphi_i}(\bar{x}, f_i(\bar{x}))$.  For every
  $\bar{x} \in \mathbb{Z}^n$, the $i^{th}$ bit in the binary
  representation of $\gamma(\bar{x})$ is $1$ iff $\varphi_i(\bar{x},
  f_i(\bar{x}))$ holds.  Thus, $\gamma(\bar{x})$ encodes the truth
  values of all $\varphi_i(\bar{x}, f_i(\bar{x}))$ for $1 \le i \le k$
  in a single function.  Since $2^i$ is a constant for each $i$, and
  since constants are represented in binary in Presburger functions,
  it follows that a Presburger circuit $\circuit_\gamma$ for $\gamma$
  can be constructed in time $\bigO{k^2 +
    \sum_{i=1}^k(|\circuit_{\varphi_i}| + |\circuit_i|)}$, i.e. in
  $\bigO{k^2 + |\varphi| + \sum_{i=1}^k |\circuit_i|}$.

  Finally, we construct the Presburger function $f(\bar{x}) =
  \sum_{i=1}^k C\big(\gamma(\bar{x}) - 2^i, f_i(\bar{x}) -
  f_{i-1}(\bar{x})\big)$, where $f_{-1}(\bar{x})$ is defined to be
  $0$.  We claim that for every $\bar{x} \in \mathbb{Z}^n$, if there
  is some $y \in \mathbb{Z}$ such that $\varphi(\bar{x}, y)$ holds,
  then $\varphi(\bar{x}, f(\bar{x}))$ holds as well.  In other words,
  $f(\bar{x})$ is a Skolem function for $y$ in $\forall \bar{x}
  \exists y:\, \varphi(\bar{x},y)$.  To prove this, we consider an
  arbitrary $\bar{x} \in \mathbb{Z}^n$ and assume that there exists $y
  \in \mathbb{Z}$ such that $\varphi(\bar{x},y)$ holds.  By definition
  of $\varphi$, there exists some $i \in \{1, \ldots k\}$ such that
  $\varphi_i(\bar{x}, y)$ holds.  Since $f_i(\bar{x})$ is a Skolem
  function for $y$ in $\forall \bar{x} \exists y:\,
  \varphi_i(\bar{x},y)$, it follows that $\varphi_i(\bar{x},
  f_i(\bar{x})$ holds as well.  Let $i^\star$ be the smallest $i$ in
  $\{1, \ldots k\}$ such that $\varphi_i(\bar{x}, f_i(\bar{x}))$
  holds.  From the definition of $\gamma$, it follows that
  $2^{i^\star} \le \gamma(\bar{x}) < 2^{i^\star + 1}$.  Therefore,
  $C\big(\gamma(\bar{x}) - 2^i, f_i(\bar{x}) - f_{i-1}(\bar{x})\big)$
  evaluates to $f_i(\bar{x}) - f_{i-1}(\bar{x})$ for all $i \in \{1,
  \ldots i^\star\}$, and to $0$ for all $i \in \{i^\star+1, \ldots
  k\}$.  Hence, $f(\bar{x})$ evaluates to
  $\sum_{i=1}^{i^\star}\big(f_i(\bar{x}) - f_{i-1}(\bar{x})\big) =
  f_{i^\star}(\bar{x})$. Since $\varphi(\bar{x}, f(\bar{x})) =
  \bigvee_{i=1}^r \varphi_i(\bar{x}, f_i(\bar{x})$, and since
  $\varphi_{i^\star}(\bar{x}, f_{i^\star}(\bar{x}))$ holds, it follows
  that $\varphi(\bar(x), f(\bar{x})$ holds as well.

  From the definition of $f(\bar{x})$, it is easy to see that a
  Presburger circuit for this function can be constructed in time
  $\bigO{k^2 + |\circuit_\gamma| + \sum_{i=1}^k |\circuit_i|}$, which
  in turn is in $\bigO{k^2 + |\varphi| + \sum_{i=1}^k |\circuit_i|}$.
\end{proof}

\subsection{Proof of Theorem~\ref{presburger-circuit-one-output}}\label[appsec]{app:presburger-circuit-one-output}

\paragraph{Some syntactic sugar}
Before we go into the proof of \cref{presburger-circuit-one-output}, let us introduce some syntactic sugar in Presburger circuits that will be useful.
A particularly useful class of Presburger functions that comes in
handy in various contexts is ``if-then-else'', or $\ite$, functions.
If $\varphi$ is a Presburger formula, and $f_1$ and $f_2$ are
Presburger functions, we use $\ite(\varphi, f_1, f_2)$ as shorthand
for $\Eq(1-\xi_\varphi,f_1)+\Eq(\xi_\varphi,f_2)$.  Here, $\xi_\varphi$ is the characteristic function of the set defined by $\varphi$, see the remarks at the beginning of \cref{app:exp-lower-bound}. 

Notice that $\ite(\varphi, f_1, f_2)$ evaluates to $f_1$ if $\varphi$ holds;
else it evaluates to $f_2$.  Furthermore, the size of $\ite(\varphi, f_1, f_2)$
is linear in $|\varphi|+|f_1|+|f_2|$.  With slight abuse of notation, and when
there is no confusion, we also use $\ite(f_1 \mathord{=} f_2, f_3, f_4)$ as
shorthand for $\Eq(f_1-f_2, f_3-f_4) + f_4$, and $\ite(f_1 \ge f_2, f_3, f_4)$
as shorthand for $\Cp(f_1-f_2, f_3-f_4) + f_4$. Here, $\Cp$ is the function
introduced at the beginning of \cref{app:exponential-presburger-circuit}. The
size of each of these functions is clearly linear in $|f_1| + |f_2| + |f_3| +
|f_4|$. Notice that neither $f_1$ nor $f_2$ may be terms in the syntax of
Presburger arithmetic, hence neither $f_1=f_2$ nor $f_1 > f_2$ may be formulas
in Presburger arithmetic.

Equipped with $\ite$, we are ready to present the proof of \cref{presburger-circuit-one-output}.
\paragraph{Start of the proof}
  W.l.o.g., $\varphi(\bar{x},y)$ can be written as
  $\bigvee_{i=1}^k \varphi_i(\bar{x},y)$ for some $r \ge 1$, where the
  top-most connective of each $\varphi_i$ (or label of the root of the
  sub-tree representing $\varphi_i$) is $\land$. The $\PA$ circuit for the Skolem function of a disjunction of $\PA$ specifications can be obtained by efficiently combining the $\PA$ circuits for the Skolem functions at each of the disjuncts (see  Lemma~\ref{prop:sk-fns-of-disj}). Thus, it suffices to focus on Skolem functions for maximal conjunctive formulas $\varphi_i$.

The $y$-modulo-tameness of $\varphi_i$ means there is
a single $M\in \mathbb{N}$ such that all modulo constraints involving
$y$ are of the form $y\equiv r \mod M$ for some $r\in [0,M-1]$.  Let
$R\subseteq \{0, 1, \ldots M-1\}$ be the set of residues that appear
in some modulo constraint involving $y$ in $\varphi_i$, i.e., for each
$r\in R$, there is an atomic formula $y\equiv r\mod M$ in
$\varphi_i$. Then $|R|\leq |\varphi_i|$.

  \newcommand{\lflag}{\mathrm{lf}}
  \newcommand{\uflag}{\mathrm{uf}}
  \newcommand{\lbound}{\mathrm{lb}}
  \newcommand{\ubound}{\mathrm{ub}}

We show how to obtain a set of integer intervals with end-points
parameterized by $\bar{x}$, such that the value of any Skolem function
for $y$ must lie within one of these intervals if $\varphi_i(\bar{x},
y)$ is to be satisfied.  Formally, a \emph{bounded interval} $I$ is an
ordered pair of Presburger functions, written as
$[\alpha(\bar{x}), \beta(\bar{x})]$.  We write $\set{I}:=\{z \in \mathbb{Z}
\mid \alpha(\bar{x}) \le z \le \beta(\bar{x})\}$.
Observe that if $\alpha(\bar{u}) > \beta(\bar{u})$ for some $\bar{u} \in
\mathbb{Z}^n$, then $\set{I(\bar{u})} = \emptyset$; such an interval
is called an \emph{empty interval}.  Given a set of bounded intervals
$L = \{[\alpha_{1}(\bar{x}), \beta_{1}(\bar{x})], \ldots
[\alpha_{s}(\bar{x}), \beta_{s}(\bar{x})]\}$, we abuse notation and
use $\set{L}$ to denote
$\bigcup_{i=1}^s \set{[\alpha_i(\bar{x}), \beta_i(\bar{x})]}$.  In
order to represent \emph{unbounded intervals}, we use a $4$-tuple
$B
= \langle \lflag(\bar{x}), \lbound(\bar{x}), \uflag(\bar{x}), \ubound(\bar{x})\rangle$,
where $\lflag(\bar{x})$ and $\uflag(\bar{x})$ are characteristic
functions of suitable Presburger formulas, and $\lbound(\bar{x})$ and
$\ubound(\bar{x})$ are Presburger functions giving the upper bound of
a left-open interval and lower bound of a right-open interval,
respectively. Let $\set{B}$ be $\{v \in \mathbb{Z} \mid $
either $(v \le \lbound(\bar{x}) ~\text{and}~ \lflag(\bar{x}) = 1)$, or
$(v \ge \ubound(\bar{x}) ~\text{and}~ \uflag(\bar{x}) = 1) \}$.  Thus,
for an assignment $\bar{u}$ of $\bar{x}$, depending on the values of
$\lflag(\bar{u})$ and $\uflag(\bar{u})$, $\set{B}$ may contain only a left-open
interval or only a right-open interval, or an interval containing all
integers, or even an empty interval. For brevity, we omit
$\bar{x}$ as arguments of $I, \alpha, \beta, \lflag, \lbound, \uflag$
and $\ubound$ when there is no confusion.

\begin{restatable}{claim}{clintervals}\label{cl:intervals}
  For each $r\in R$, there exist
    bounded intervals
    $L^r=\{[\alpha^r_1(\bar{x}),\beta^r_1(\bar{x})],\ldots,
    [\alpha^r_{k_r}(\bar{x}),\beta^r_{k_r}(\bar{x})]\}$, and unbounded
    intervals
    $B^r=\langle\lflag^r(\bar{x}),\lbound^r(\bar{x}),\uflag^r(\bar{x}),\ubound^r(\bar{x})\rangle$,
    s.t. for all assignments $\bar{u}$ to
    $\bar{x}$, \begin{align}& \{v\in\mathbb{Z}\mid \varphi_i(\bar{u},v)=1\}= \set{L^r}\cup \set{B^r} \label{eq:intervals}.  \end{align}
    Also, $k_r \le |\varphi_i|$ and $\PA$ circuits for
    $\alpha_i^r, \beta_i^r, \lflag^r, \lbound^r, \uflag^r, \ubound^r$
    can be constructed in time polynomial in $|\varphi_i|$.
    \end{restatable}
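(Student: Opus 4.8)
The plan is to first eliminate the modulo constraints on $y$ and then construct the claimed intervals by a structural induction on the NNF tree of $\varphi_i$, realizing in detail the interval-computing-circuit idea from the \cref{comput-interval-endpoints} sketch. Fix $r\in R$ and let $\varphi_i^r$ be obtained from $\varphi_i$ by replacing every modulo atom on $y$ (which by $y$-modulo-tameness has the shape $y\equiv r'\pmod M$ or $y\not\equiv r'\pmod M$ for the common modulus $M$) by $\top$ or $\bot$ according to whether $r'=r$; then $\varphi_i^r$ is $y$-modulo-free and, since $r,r'\in[0,M-1]$, satisfies $\varphi_i^r(\bar u,v)=\varphi_i(\bar u,v)$ for every $v\equiv r\pmod M$, which is all that is needed for the subsequent Skolem-circuit construction. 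Since $\varphi_i$ is in NNF and maximal conjunctive, $\varphi_i^r$ is a positive Boolean combination of atoms, each atom mentioning $y$ being a linear inequality that, for fixed $\bar x$, is a lower bound, an upper bound, or no constraint on $y$; hence $\{v:\varphi_i^r(\bar u,v)\}$ is a finite union of integer intervals. I would prove by induction on subformulas $\psi$ of $\varphi_i^r$ that there is an interval-computing circuit outputting bounded intervals $[\alpha_1,\beta_1],\dots,[\alpha_k,\beta_k]$ and a tuple $\langle\lflag,\lbound,\uflag,\ubound\rangle$ whose union is exactly $\{v:\psi(\bar u,v)\}$, with every endpoint computed by a small Presburger circuit, maintaining a size invariant that bounds $k$ linearly in the number of atoms of $\psi$; at the root this gives $k_r\le|\varphi_i|$ and the claimed circuits for $\alpha_j^r,\beta_j^r,\lflag^r,\lbound^r,\uflag^r,\ubound^r$.

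For an atom $ay+t(\bar x)\ge 0$ with $a>0$, equivalent to $y\ge\lceil-t(\bar x)/a\rceil$, the circuit outputs an empty bounded list, the right-open part with $\uflag=1$ and $\ubound$ computed by a linear gate and an $\fdiv_a$ gate, and an empty left-open part; $a<0$ is symmetric (a left-open part, dividing by $-a$), and $a=0$ gives $\Z$ or $\emptyset$, encoded by setting both flags to $\Cp(t(\bar x),1)$ and both unbounded endpoints to $0$. For a disjunction $\psi_1\vee\psi_2$ the circuit concatenates the two bounded lists and combines the tuples: $\uflag$ becomes the logical OR of the two flags, $\ubound$ the $\min$ of the two lower endpoints when both are present, symmetrically for the left-open part (OR, $\min$, $\max$ all being realisable with $\Cp$ and $\Eq$); the new $k$ is the sum of the two, hence at most the atom count of $\psi_1\vee\psi_2$.

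The conjunction case $\psi_1\wedge\psi_2$ is the crux, since intersecting two interval unions can create more intervals than either side has, and naively iterating this would blow up exponentially with the tree depth. Following the sorting-network idea from the \cref{comput-interval-endpoints} sketch, I would first push each child's output through a coalesce-and-sort network built from the coalesce-and-sort gadget (which merges two intervals if they overlap or abut and otherwise sorts them by left endpoint, pushing empty intervals to the bottom), so that each side presents $q_1$ (resp.\ $q_2$) pairwise-disjoint, sorted bounded intervals plus its tuple, where $q_1,q_2$ are at most the atom counts of $\psi_1,\psi_2$. The circuit then forms all $q_1q_2$ pairwise intersections of bounded intervals (each a $\max$ of two left endpoints and a $\min$ of two right endpoints), the intersections of each bounded interval with the two unbounded pieces of the other side, and the intersection of the two tuples (producing a new tuple plus at most two bounded ``crossing'' intervals); it feeds all resulting $O(q_1q_2)$ bounded intervals through a final coalesce-and-sort network and retains only the top slots, up to the atom count of $\psi_1\wedge\psi_2$. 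This is sound because if $A$ is a union of $q_1$ and $B$ a union of $q_2$ pairwise-disjoint intervals then $A\cap B$ has at most $q_1+q_2-1$ nonempty maximal intervals---sweeping left to right, a new component of $A\cap B$ begins only at a left endpoint of a component of $A$ while inside $B$, or at a left endpoint of a component of $B$ while inside $A$, and the leftmost such endpoint cannot open one---so no nonempty interval of the intersection is discarded. Each coalesce-and-sort network acts on $O(q_1q_2)=O(|\varphi_i|^2)$ items, so it has polynomial size, and there are at most $|\varphi_i|$ internal nodes, giving an interval-computing circuit of polynomial size that is constructible in polynomial time.

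I expect all the genuine difficulty to lie in this conjunction step, on three points: (a) making the coalesce-and-sort gadget correct over $\Z$ (merging not only overlapping but also abutting intervals, and reliably pushing empty intervals to the bottom so they never crowd genuine intervals out of the retained slots); (b) keeping the unbounded tuple consistent through intersections (a bounded interval met with an unbounded piece yields a bounded interval, a left-open met with a right-open can yield either kind, and two same-sided unbounded pieces combine via a single $\min$ or $\max$); and (c) stating the size invariant in a form that gives the sharp bound $k_r\le|\varphi_i|$ rather than just a polynomial one---the clean route is to show that, for every subformula, the retained left endpoints together with the lower endpoint of the right-open piece originate from pairwise distinct lower-bound atoms, and symmetrically for right endpoints. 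The base case, the disjunction case, and the final time/size accounting are then routine once this gadgetry and invariant are in place.
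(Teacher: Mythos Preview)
Your proposal takes essentially the same route as the paper: fix $r$, replace the modulo atoms on $y$ by $\top$/$\bot$, and build the interval data bottom-up on the NNF tree, handling conjunctions via a coalesce-and-sort network. One difference is worth recording. For the $\wedge$-step you (correctly) bound the number of components of the intersection of $q_1$ and $q_2$ pairwise-disjoint intervals by $q_1+q_2-1$, whereas the paper asserts the intersection has at most $\max(q_1,q_2)$ components and retains only that many slots; the latter bound is in fact false over $\Z$ (take $A=[0,3]\cup[6,9]\cup[12,15]$ and $B=[2,7]\cup[9,13]$: four components, but $\max=3$). Since both bounds are $\le q_1+q_2$, the paper's additive invariant (I1) and the conclusion $k_r\le|\varphi_i|$ are unaffected either way. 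Your remark (c)---tracing each retained left endpoint back to a distinct lower-bound atom, and dually for right endpoints---is a cleaner way to secure that sharp linear bound than the additive invariant the paper states, and it also transparently accounts for the extra bounded intervals created when an unbounded piece from one side meets the opposite unbounded piece from the other, a case the paper glosses over.
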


  {\em Proof of Claim.} Fix $r\in R$ and consider values of $y$
  s.t. $y \equiv r \pmod M$ holds. Since $\varphi_i$ is $y$-modulo
  tame, all constraints of the form $y \equiv r \pmod M$ in
  $\varphi_i$ evaluate to true, while all constraints of the form
  $y \equiv r' \pmod M$, where $r \neq r'$, evaluate to false.  We now
  inductively construct the set of bounded intervals $L^r$ and
  unbounded intervals $B^r$ bottom-up at each level in the tree
  representation of $\varphi_i$, such that
  Condition~(\ref{eq:intervals}) holds at each level. At the leaves,
  each inequality involving $y$ can be converted into a lower or upper
  bound using $\fdiv_M$. That is, if we have $\sum a_j x_j +b y
  +c \geq 0$ where each $a_j, b, c$ are constants, then we can rewrite
  this as $y\geq \fdiv_b (-c -\sum a_i x_i)$ if $b>0$, else we write
  it as $y\leq \fdiv_{-b} (c +\sum a_i x_i)$. Each of these is an
  unbounded interval; accordingly, we set $\uflag^r(\bar{x})$ (resp.
  $\lflag^r(\bar{x})$) to $1$, and use $\fdiv_b (-c -\sum a_i x_i$
  (resp. $\fdiv_{-b} (c +\sum a_i x_i)$) for $\ubound(\bar{x})$
  (resp. $\lbound(\bar{x})$).  Furthermore, $L^r$ is empty at the
  level of the leaves. It is then easy to see that
  Condition~(\ref{eq:intervals}) holds for the formula represented by
  each leaf of the tree for $\varphi_i$.

   When moving up the tree, say from level $i-1$ to level $i$, let
   $(L^r_{lc},B^r_{lc})$ represent the intervals at the left child of
   a node $p$ at level $i$, and $(L^r_{rc},B^r_{rc})$ represent the
   intervals at its right child.  We describe below how to combine the
   intervals at the children to obtain the corresponding intervals
   $(L^r_p, B^r_p)$ at the parent $p$.  Let $|L^r_{lc}|$ denote the
   count of bounded intervals in $L^r_{lc}$, and similarly for
   $L^r_{rc}$ and $L^r_p$. Our construction satisfies the following
   additional invariants:
\begin{enumerate} \item[(I1)] $|L^r_p| \leq
   |L^r_{rc}| + |L^r_{lc}|$, i.e. the count of bounded intervals at a
   node grows no faster than the total count of bounded intervals at
   its children \item[(I2)] the total size of Presburger circuits
   representing bounds in $L^r_p$ is in
   $\bigO{|L^r_{lc}|^2.|L^r_{rc}|^2 + |L^r_{lc}|.|L^r_{rc}|.S}$, where
   $S$ is the maximum size of a Presburger circuit representing a
   bound in $L^r_{lc}$ and $L^r_{rc}$. \item[(I3)] the total size of
   Presburger circuits representing flags and bounds in $B^r_p$ is in
   $\bigO{T^r_{lc} + T^r_{rc}}$, where $T^r_{lc}$ (resp. $T^r_{rc}$)
   is the total size of Presburger circuits representing flags and
   bounds in $B^r_{lc}$ (resp. $B^r_{rc}$)\end{enumerate}

   We have two cases to consider, corresponding to node $p$ being
   labeled $\vee$ or $\wedge$.  If $p$ is labeled $\vee$, we obtain
   $L^r_{p}$ as $L^r_{rc}\cup L^r_{lc}$ by taking the union of the
   bounded intervals from each child. The lower/upper bound and flag
   information at the parent is defined as:
   $\lflag^r_p=\min(\lflag^r_{rc}+\lflag^r_{lc},1)$,
   $\uflag^r_p=\min(\uflag^r_{rc}+\uflag^r_{lc},1)$. And $\lbound^r_p$
   is defined to be $\max(\lbound^r_{rc},\lbound^r_{lc})$ if
   $\lflag^r_{rc}+\lflag^r_{lc}>1$, and otherwise it is propagated
   from whichever $\lflag^r$ is 1. Similarly for $\ubound^r_p$ which
   is the min of both $\ubound^r$ if both $\uflag$s are 1, else it is
   propagated from whichever is 1. Note that doing this ensures
   Condition~(\ref{eq:intervals}) and invariants (I1), (I2) and (I3) at node $p$.

If $p$ is labeled $\wedge$, the situation is a bit trickier. The
propagation of $B^r_p$ is similar to before, except that $\max$ are
replaced with $\min$ and vice versa. However the definition of $L^r_p$
requires care. If we take the $s$ intervals for left child and $t$
intervals for right child, a naive procedure to obtain the pairwise
intersections will satisfy Condition~(\ref{eq:intervals}) but it may
result in $s t$ intervals, violating Invariant~(I1).  To ensure
Invariant (I1), we use Lemma~\ref{lem:conj-of-lists} inspired from
sorting networks~\cite{CLRS,AKS83} that takes $s t$ lists of intervals
and outputs a list of $\max(s, t)$ coalesced-and-sorted disjoint
intervals.  Note that Condition~(\ref{eq:intervals}) and invariants (I1), (I2) and (I3) are now
satisfied even when the node $p$ is labeled $\wedge$.  By inductively
continuing this construction, we obtain a list of at most polynomially many
intervals, each represented by polynomial sized Presburger circuits,
when we reach the root of the tree representing $\varphi_i$.

This completes the proof of the Claim~\ref{cl:intervals}. Now it remains to define the Skolem functions. From the Claim, we can easily obtain $F_i$, as a Presburger circuit for a Skolem function for $y$ in $\varphi_i$ by choosing deterministically some point in the intervals. Recall that $\ite$ was defined earlier and the ceiling function can also be easily defined. 
  \begin{align*}
    f_i=\ite(&\max_r F^r_i\geq 1,\max_r F^r_i,\ite(\min_r F^r_i=0,0,\min F^r_i))\\
    \text{where,}&\text{ for all } r\in R,\\
    f^r_i=\ite &(\lflag^r(\bar{x})=1,M(c^r)+r, \\
    &(\ite(\uflag^r(\bar{x})=1,M(d^r)+r,\\
    &\quad (\ite(a^r_1<b^r_1,Ma^r_1+r,\\
    &\quad\quad (\ite\ldots \\
    &\quad \quad\quad (\ite(a^r_{k_r}<b^r_{k_r},Ma^r_{k_r}+r,0))\ldots))))))\\
    \text{where,} &\text{ for all } 1\leq j\leq k_r,\\
    &a^r_{j}=\lceil{\fdiv_M(\alpha_{j}^r(\bar{x})-r)\rceil},\\
    &b^r_j=\fdiv_M(\beta_j^r(\bar{x})-r),\\
    &c^r=\fdiv_M(\lbound^r-r),\\
    &d^r=\lceil{\fdiv_M(\ubound^r-r)\rceil}
  \end{align*}
 To see that  $f_i$ as defined above are Skolem functions note that for any valuation to $\bar{x}$, if $F^r_i(\bar{x})$ gives a non-zero value  when we compute value of $y$, $y\equiv r\mod M$ is true for only one $r$ and in its corresponding $L_r$, we can choose any non-empty interval, which we do using the nested $\ite$. If all intervals are empty then we then fix $0$ as the output. Now, in $f_i(\bar{x})$ we just check if any $F^r_i$ gave a non-zero value and if so, we pick that (with a preference of the max over min), and if all are 0, then we just pick 0.

It remains to show Lemma~\ref{lem:conj-of-lists} which we used in the proof above. We state this Lemma with proof in \cref{app:conj-of-lists}.
\begin{restatable}{lemma}{lemConjOfLists}\label{lem:conj-of-lists}
Let $L_1 = \{[\alpha_{1,1}, \beta_{1,1}], \ldots [\alpha_{1,s},
  \beta_{1,s}]\}$ and $L_2 = \{[\alpha_{2,1}, \beta_{2,1}], \ldots
[\alpha_{2,t}, \beta_{2,t}]\}$ be two sets of intervals, where
$\alpha_{i,j}$ and $\beta_{k,l}$ are Presburger functions, represented
as Presburger circuits. There exist Presburger functions $g_1, h_1,
\ldots g_{\max(s,t)}, h_{\max(s,t)}$ such that $\set{L_1} \cap
\set{L_2} = \bigcup_{i=1}^{\max(s,t)} \set{[g_i, h_i]}$.  Moreover,
Presburger circuits for all $g_i$'s and $h_i$'s can be constructed in
time $\bigO{s^2.t^2 + s.t.S}$, where $S = \max\big(\max_{i=1}^s
|\alpha_{1,i}|, \max_{i=1}^s |\beta_{1,i}|, \max_{i=1}^t
|\alpha_{2,i}|, \max_{i=1}^t |\beta_{2,i}|\big)$.
\end{restatable}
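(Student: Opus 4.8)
The plan is to realize $\set{L_1}\cap\set{L_2}$ through three phases, each implemented by small Presburger circuits built from $\ite$, $\max$, and $\min(a,b)=-\max(-a,-b)$: normalize both lists into sorted, pairwise-disjoint intervals; take all pairwise intersections; then re-normalize and read off the top $\max(s,t)$ intervals. For the first phase I would build a \emph{coalesce-and-sort network} out of the basic two-interval gadget used in the proof of \cref{comput-interval-endpoints}: on input $[r,s],[r',s']$ it returns two disjoint ordered intervals with the same union, with non-empty intervals sorted strictly above empty ones and non-empty intervals ordered by their endpoints. Overlap is detected by comparing endpoints (e.g.\ $r'\le s{+}1$), and all output endpoints are $\min$'s and $\max$'s of the inputs, so the gadget is a constant-size circuit. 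Wiring these into any fixed sorting network on $q$ lines (say an insertion/bubble network with $O(q^2)$ comparators) turns $q$ arbitrary intervals into $\le q$ sorted pairwise-disjoint intervals with the same union; applying this to $L_1$ and $L_2$ yields $L_1'$ ($\le s$ intervals) and $L_2'$ ($\le t$ intervals).

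For the second phase, since the intersection of two intervals is again an interval, I would output for each $A_i=[\alpha_{1,i}',\beta_{1,i}']\in L_1'$ and $B_j=[\alpha_{2,j}',\beta_{2,j}']\in L_2'$ the single interval $[\max(\alpha_{1,i}',\alpha_{2,j}'),\min(\beta_{1,i}',\beta_{2,j}')]$ (empty if the endpoints cross). By distributivity, $\set{L_1}\cap\set{L_2}=\set{L_1'}\cap\set{L_2'}=\bigcup_{i,j}\set{A_i\cap B_j}$, a union of $st$ possibly-empty intervals, each of whose endpoints is a $\max$ or $\min$ of two original endpoints, hence of circuit size $O(S)$. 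For the third phase I would feed these $st$ intervals into a second coalesce-and-sort network, obtaining $\le st$ sorted pairwise-disjoint intervals $[g_1,h_1],\ldots$ with empties at the bottom. The crucial combinatorial step is then to show that, because $L_1'$ and $L_2'$ are internally disjoint, $\set{L_1'}\cap\set{L_2'}$ has at most $\max(s,t)$ maximal intervals; I would argue this by a simultaneous left-to-right sweep of $A_1<\cdots<A_s$ and $B_1<\cdots<B_t$, noting that each maximal interval of the intersection is closed by the right endpoint of a distinct $A_i$ or $B_j$ and accounting carefully for which list closes each one (and for the final interval). Then only the first $\max(s,t)$ outputs can be non-empty, so returning $g_1,h_1,\ldots,g_{\max(s,t)},h_{\max(s,t)}$ is correct.

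For the size/time accounting: the two networks use $O(s^2+t^2)$ and $O((st)^2)$ gadgets respectively, and the pairwise intersections are $st$ circuits of size $O(S)$. To prevent the output endpoint circuits from growing with the depth of the networks, I would have the networks sort small $O(\log st)$-bit \emph{tags} identifying the originating endpoints rather than the endpoint values themselves, performing a single $\ite$-cascade at the very end to turn each output tag into its $O(S)$-size endpoint circuit; each final $g_i,h_i$ is then a $\max$/$\min$ of original endpoints and has circuit size $O(S)$. Summing the contributions gives the claimed bound $\bigO{s^2t^2+st\cdot S}$.

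I expect the main obstacle to be the third phase, on two fronts. First, the combinatorial bound on the number of connected components of $\set{L_1'}\cap\set{L_2'}$: getting $\max(s,t)$ (rather than the easier $s+t$) requires a genuinely careful counting argument, and one must also ensure the coalesce-and-sort gadget always pushes empty intervals strictly below non-empty ones, so that ``the top $\max(s,t)$'' really captures all non-empty pieces. Second, on the implementation side, controlling the size of the endpoint circuits so that the final bound is $st\cdot S$ rather than, say, $(st)^2\cdot S$ — hence the tag-sorting trick. The interval gadgets and the sorting-network plumbing themselves are routine, but the bookkeeping needs care.
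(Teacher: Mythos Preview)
Your approach is essentially the paper's: compute the $st$ pairwise intersections $[\max(\alpha_{1,i},\alpha_{2,j}),\min(\beta_{1,i},\beta_{2,j})]$, feed them through a coalesce-and-sort network built from the two-interval gadget (what the paper calls $\scomp$), and keep the top few outputs. Your extra first phase (pre-normalizing $L_1,L_2$) is not in the paper but is harmless, and the paper's inductive construction of $\sortnet_{k+1}$ from $\sortnet_k$ via $\bubbleup$/$\bubbledown$ modules is just one concrete sorting network where you leave the choice open.

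The genuine problem is exactly the step you flag as ``the main obstacle'': that $\set{L_1'}\cap\set{L_2'}$ has at most $\max(s,t)$ maximal intervals when $L_1',L_2'$ are internally disjoint. This is false, and no careful accounting will rescue it. Take $L_1'=\{[0,1],[3,5]\}$ and $L_2'=\{[0,3],[5,7]\}$, so $s=t=2$; then $\set{L_1'}\cap\set{L_2'}=\{0,1,3,5\}=[0,1]\cup[3,3]\cup[5,5]$, three non-coalescable intervals, not $\max(2,2)=2$. Your sweep argument (``each maximal interval is closed by the right endpoint of a distinct $A_i$ or $B_j$'') actually proves the bound $s+t-1$, not $\max(s,t)$. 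The paper makes the identical unproved assertion verbatim, so the lemma as stated is simply false. The fix is painless for the intended application: output $s+t$ intervals instead; the only consumer is invariant (I1) in the proof of \cref{cl:intervals}, which is already additive ($|L^r_p|\le|L^r_{lc}|+|L^r_{rc}|$), so \cref{presburger-circuit-one-output} survives with the same polynomial bound.

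A minor point: your tag-sorting trick is unnecessary. Presburger circuits are DAGs, so the $st$ endpoint sub-circuits of size $O(S)$ are built once and then referenced by wires; each $\scomp$ gadget then adds only $O(1)$ gates on top, giving total size $O(st\cdot S+(st)^2)$ directly without any indirection.
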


We end this subsection by observing that the proof of Theorem~\ref{presburger-circuit-one-output}, in fact, shows something stronger, namely that every $\PA$-definable Skolem function can be obtained as a $\PA$-circuit using the computation of $L^r$ and $B^r$ above (see Corollary~\ref{all-PA-one-output}).

\subsection{Proof of Lemma~\ref{lem:conj-of-lists}}
\label[appsec]{app:conj-of-lists}
\newcommand{\scomp}{\ensuremath{\mathsf{IComp}}}
\newcommand{\sortnet}{\ensuremath{\mathsf{CandS}}}
\newcommand{\bubbleup}{\ensuremath{\mathsf{BubbleUp}}}
\newcommand{\bubbledown}{\ensuremath{\mathsf{BubbleDown}}}
  
Recall from Section~\ref{sec:semantic-nf} that we represent intervals using
ordered pairs of Presburger functions.  For non-empty intervals $I_1 =
[\alpha_1, \beta_1]$ and $I_2 = [\alpha_2, \beta_2]$, we say that
$I_1$ and $I_2$ are \emph{coalescable} if $\alpha_2 \le \beta_1 + 1$.
In such cases, $\set{[\min(\alpha_1, \alpha_2),
    \max(\beta_1,\beta_2)]}$ represents the coalesced interval.  For
arbitray intervals $I_1$ and $I_2$, we say that $I_1 \prec I_2$ if one
of the following holds:
\begin{itemize}
\item $I_1$ is an empty interval
\item $\alpha_1 \le \beta_1 < \alpha_2 \le \beta_2$, and $I_1$
  and $I_2$ are not coalescable.
\end{itemize}
It is easy to see that $\prec$ is a transitive relation.  A sequence
of intervals $\big([\alpha_1,\beta_1], \ldots [\alpha_k, \beta_k]
\big)$ is said to be \emph{coalesced-and-sorted w.r.t. $\prec$} iff
for all $1 \le i < j \le k$, we have $[\alpha_i, \beta_i] \prec
[\alpha_j, \beta_j]$.  As an example, the sequence $\big([1,0], [4,2],
[3, 7], [9, 10], [12, 20]\big)$ is coalesced-and-sorted, where the
first two intervals are empty.  On the other hand, the sequence
$\big([3, 7], [8, 10], [1, 0], [12, 20]\big)$ is \emph{not}
coalesced-and-sorted because $[3, 7]$ can be coalesced with $[8,10]$
to yield $[3,10]$, and additionally, $[8, 10] \not\prec [1,0]$.

\lemConjOfLists*
\begin{figure*}[h!]
  \scalebox{0.8}{
    \begin{tikzpicture}[font=\small,thick,>=latex]

\def\dist{0.8}   %
\def\nlines{6}   %
\pgfmathsetmacro\H{\nlines+4}
\draw[thin] (-1,1) rectangle (17, -\dist*\H - 0.5);

\foreach \i in {1,2,3,4,6,7,8} {
  \pgfmathsetmacro\y{-(\i-1)*\dist}
  \draw[->] (-1.5,\y) -- (-0.5,\y);

  \ifnum\i=4
    \node[left] at (-1.2,\y-1) {$\large{\vdots}$};
  \else
    \ifnum\i=1
      \node[left] at (-1.5,\y) {$I_1$};
    \else
      \ifnum\i=2
        \node[left] at (-1.5,\y) {$I_2$};
      \else
        \ifnum\i=8
           \node[left] at (-1.5,\y) {$I_k$};
        \fi
      \fi
    \fi
  \fi    
      
}

\pgfmathsetmacro\klb{-(\nlines+1.7)*\dist}
\pgfmathsetmacro\kru{0.5}
\draw[thick] (-0.5,\klb) rectangle (1.5, \kru);
\node (candSk) at (0.5,-3.0) {\large CandS$_k$};

\foreach \i in {1,2,3,4,6,7,8} {
  \pgfmathsetmacro\y{-(\i-1)*\dist}
  \draw[->] (1.5,\y) -- (17.5,\y);

  \ifnum\i=4
    \node[right] at (1.9,\y-1) {$\large{\vdots}$};
  \else
    \ifnum\i=1
      \node[above] at (1.9,\y) {$\widehat{I}_1$};
    \else
      \ifnum\i=2
        \node[above] at (1.9,\y) {$\widehat{I}_2$};
      \else
        \ifnum\i=8
           \node[above] at (1.9,\y) {$\widehat{I}_k$};
        \fi
      \fi
    \fi
  \fi    
      
}

\draw[->] (-1.5,-8*\dist) -- (-0.5,-8*\dist);
\draw[->] (-0.5,-8*\dist) -- (17.5,-8*\dist);
\node[left] at (-1.5, -8*\dist) {$I_{k+1}$};

\foreach \i in {1,2,3,4,6,7,8} {
  \pgfmathsetmacro\yl{-(\i)*\dist-0.2}
  \pgfmathsetmacro\yu{-(\i-1)*\dist +0.2}
  \pgfmathsetmacro\xx{6.5-(\i-1)*0.5}

  \ifnum\i=4
    \node[above] at (\xx,\yl) {$\large{\iddots}$};
  \else
    \draw[line width = 1.5mm] (\xx, \yl) -- (\xx, \yu);
  \fi
}

\pgfmathsetmacro\klb{-(\nlines+1.7)*\dist}
\pgfmathsetmacro\kru{0.5}
\draw[thick] (2.8,-8*\dist-0.5) rectangle (7, 0.7);
\node (BUk) at (5,-7.5) {\large BubbleUp$_{k+1}$};

\node[above] at (7.3,  0)          {$\widetilde{I}_1$};
\node[above] at (7.3, -1*\dist)    {$\widetilde{I}_2$};

\node[above] at (7.3, -4.8*\dist)    {$\large{\vdots}$};
\node[above] at (7.4, -8*\dist)    {$\widetilde{I}_{k+1}$};

\foreach \i in {1,2,3,4,6,7,8} {
  \pgfmathsetmacro\yl{-(\i)*\dist-0.2}
  \pgfmathsetmacro\yu{-(\i-1)*\dist +0.2}
  \pgfmathsetmacro\xx{16.8-(\i-0.6)*0.6}

  \ifnum\i=4
    \node[above] at (\xx,\yl) {};%
  \else
    \draw[line width = 1.5mm] (\xx, \yl) -- (\xx, \yu);
  \fi
}

\foreach \i in {1,2,3,4,6,7} {
  \pgfmathsetmacro\yl{-(\i)*\dist-0.2}
  \pgfmathsetmacro\yu{-(\i-1)*\dist +0.2}
  \pgfmathsetmacro\xx{15.8-(\i-0.6)*0.6}

  \ifnum\i=4
    \node[above] at (\xx,\yl) {$\large{\iddots}$};
  \else
    \draw[line width = 1.5mm] (\xx, \yl) -- (\xx, \yu);
  \fi
}

\foreach \i in {1,2,3,4,6} {
  \pgfmathsetmacro\yl{-(\i)*\dist-0.2}
  \pgfmathsetmacro\yu{-(\i-1)*\dist +0.2}
  \pgfmathsetmacro\xx{14.8-(\i-0.6)*0.6}

  \ifnum\i=4
    \node[above] at (\xx,\yl) {};%
  \else
    \draw[line width = 1.5mm] (\xx, \yl) -- (\xx, \yu);
  \fi
}

\foreach \i in {1,2,3} {
  \pgfmathsetmacro\yl{-(\i)*\dist-0.2}
  \pgfmathsetmacro\yu{-(\i-1)*\dist +0.2}
  \pgfmathsetmacro\xx{8.7+(\i-0.4)*0.4}

    \draw[line width = 1.5mm] (\xx, \yl) -- (\xx, \yu);
}

\foreach \i in {1,2,3,4} {
  \pgfmathsetmacro\yl{-(\i)*\dist-0.2}
  \pgfmathsetmacro\yu{-(\i-1)*\dist +0.2}
  \pgfmathsetmacro\xx{9.7+(\i-0.4)*0.4}

  \ifnum\i=4
    \node[above] at (\xx,\yl) {$\huge{\ddots}$};
  \else
    \draw[line width = 1.5mm] (\xx, \yl) -- (\xx, \yu);
  \fi
}

\foreach \i in {1,2,3} {
  \pgfmathsetmacro\yl{-(\i)*\dist-0.2}
  \pgfmathsetmacro\yu{-(\i-1)*\dist +0.2}
  \pgfmathsetmacro\xx{10.7+(\i-0.4)*0.4}

    \draw[line width = 1.5mm] (\xx, \yl) -- (\xx, \yu);
}

\pgfmathsetmacro\klb{-(\nlines+1.7)*\dist}
\pgfmathsetmacro\kru{0.5}
\draw[thick] (8.2,-8*\dist-0.5) rectangle (16.8, 0.6);
\node (BUk) at (11,-7.5) {\large BubbleDown$_{k+1}$};

\node[below] at (6, -11*\dist) {\Large CandS$_{k+1}$};

\draw[->] (9, -11*\dist) -- (10, -11*\dist) node[right] {$I_{\mathrm{comp}_h}$};
\draw[->] (9, -11.6*\dist) -- (10, -11.6*\dist) node[right] {$I_{\mathrm{comp}_l}$};
\draw[line width = 1.5mm] (9.5,-10.8*\dist) -- (9.5,-11.8*\dist);
\node[right] at (11, -11.3*\dist) {\Large IComp block};

    \end{tikzpicture}

  }
  
    \caption{Sorting network inspired coalese-and-sort network}
 \label{fig:sortnet}
\end{figure*}
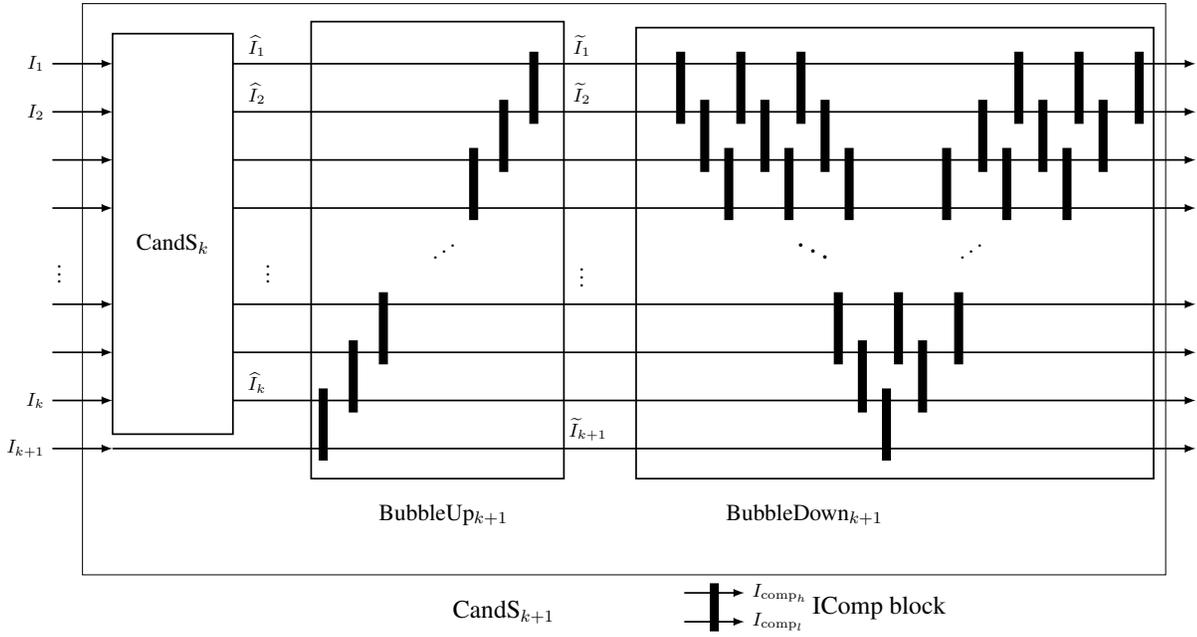

Towards proving Lemma~\ref{lem:conj-of-lists}, we describe an
algorithm that constructs Presburger circuits for all $g_i$'s and $h_i$'s
in $\bigO{s^2.t^2 + s.t.S}$ time, where $S = \max\big(\max_{i=1}^s
  |\alpha_{1,i}|, \max_{i=1}^s |\beta_{1,i}|, \max_{i=1}^t
  |\alpha_{2,i}|, \max_{i=1}^t |\beta_{2,i}|)$.
  The algorithm works in two phases. In the first phase, we intersect
  each interval in $L_1$ with each interval in $L_2$ to obtain $s.t$
  intervals.  In the second phase, we coalesce and sort the resulting
  $s.t$ intervals w.r.t. the $\prec$ relation, and retain the
  ``highest'' $\max(s,t)$ intervals.  Since the intersection of a
  union of $s$ intervals and a union of $t$ intervals cannot have more
  than $\max(s,t)$ intervals, such an intersection can always be
  represented as the union of at most $\max(s,t)$ non-coalescable
  intervals.  The lemma follows immediately.

It is easy to see that the lower (resp. upper) end-point of each of
the non-coalescable $\max(s,t)$ intervals mentioned above is a lower
(resp. upper) end-point of the given $s + t$ intervals.  Hence,
computing the ``highest'' $\max(s,t)$ intervals effectively amounts to
choosing the right pairs of end-points of these intervals from the
end-points of the given intervals.  This is done using a
coalesce-and-sort network described below.

For every pair of intervals $[\alpha_{1,i}, \beta_{1,i}]$ and
$[\alpha_{2,j}, \beta_{2,j}]$, the intersection interval is given by
$[\max(\alpha_{1,i},\alpha_{2,j}), \min(\beta_{1,i},\beta_{2,j})]$. It
is easy to see that this works in all cases, even if one of the
intervals is empty.  Constructing the corresponding Presburger
circuits takes time at most linear in $|\alpha_{i,i}| + |\beta_{1,i}|
+ |\alpha_{2,j}| + |\beta_{2,j}|$, i.e. linear in $S$, where $S =
\max\big(\max_{i=1}^s |\alpha_{1,i}|, \max_{i=1}^s |\beta_{1,i}|,
\max_{i=1}^t |\alpha_{2,i}|, \max_{i=1}^t |\beta_{2,i}|)$.

In order to coalesce-and-sort the $s.t$ intervals obtained above, we
discuss below an adaptation of sorting networks~\cite{CLRS}.
Specifically, we consider an interval comparator (or basic
coalesce-and-sort gadget) $\scomp$ that takes two intervals
$[\alpha_1, \beta_1]$ and $[\alpha_2, \beta_2]$ as inputs, and
produces two intervals as output.  We use $\scomp_h$
(resp. $\scomp_l$) to denote the ``high'' (resp. ``low'') interval
output of $\scomp$.  We want $\scomp$ to behave as follows.
\begin{itemize}
  \item If any input is an empty interval, then $\scomp_l = [1,0]$ (empty interval) and $\scomp_h$ is the other (possibly empty) input
    interval.
  \item If both inputs are non-empty intervals,
    \begin{itemize}
      \item If $[\alpha_i, \beta_i] \prec [\alpha_j, \beta_j]$, then
        $\scomp_l = [\alpha_i, \beta_i]$ and $\scomp_h = [\alpha_j,
        \beta_j]$.
      \item In all other cases, $\scomp_l = [1,0]$ (empty interval)
        and $\scomp_h$ is the interval resulting from coalescing
        the input intervals.
    \end{itemize}
\end{itemize}
It is easy to see that the interval comparator can be constructed
using Presburger functions of size polynomial in $\sum_{i=1}^2
(|\alpha_i| + |\beta_i|)$. Specifically, if
$\scomp_l([\alpha_1,\beta_1], [\alpha_2, \beta_2]) = [\lambda_l,
  \mu_l]$, then
\begin{itemize}
  \item $\lambda_l := \ite(\alpha_1 > \beta_1, 1, \ite(\alpha_2
    > \beta_2, 1, \ite(\beta_1 < \alpha_2, \alpha_1,
    \ite(\beta_2 < \alpha_1, \alpha_2, 1))))$
  \item $\mu_l := \ite(\alpha_1 > \beta_1, 0, \ite(\alpha_2
    > \beta_2, 0, \ite(\beta_1 < \alpha_2, \beta_1,
    \ite(\beta_2 < \alpha_1, \beta_2, 0))))$
\end{itemize}
Similarly, if $\scomp_h([\alpha_1,\beta_1], [\alpha_2,\beta_2]) =
[\lambda_h, \mu_h]$, the Presburger functions for $\lambda_h$ and
$\mu_h$ can be analogously defined.

We now construct a coalesce-and-sort network inductively using the
comparator $\scomp$ defined above.  With only two input intervals to
coalesce-and-sort, $\scomp$ itself serves as the network.  For $k \ge
2$, let $\sortnet_k$ (for Coalesce-and-Sort) represent the network for
$k~(\ge 2)$ inputs. We build $\sortnet_{k+1}$ inductively from
$\sortnet_k$, as shown in Fig.~\ref{fig:sortnet}.  Here the module
$\bubbleup_{k+1}$ consists of $k$ copies of $\scomp$ connected such
that the $k+1^{st}$ interval input $I_{k+1}$ can bubble up (possibly
after coalescing with other intervals) to its rightful position in the
sequence of output intervals.  The module $\bubbledown_{k+1}$ consists
of $\bigO{k^2}$ copies of $\scomp$ connected such that all empty
intervals can bubble down to the later part of the output sequence of
intervals.

\begin{claim}\label{clm:sort-net-property}
  For every $k \ge 2$, let $\sortnet_k\big((I_1, \ldots I_k)\big) =
  (\widehat{I}_1, \ldots \widehat{I}_k)$. Then, the following hold:
  \begin{enumerate}
    \item For every $1 \le j < i \le k$, we have $\widehat{I}_i \prec
      \widehat{I}_j$, and
    \item $\bigcup_{i=1}^k \set{I_i} ~=~ \bigcup_{i=1}^k \set{\widehat{I}_i}$.
  \end{enumerate}
\end{claim}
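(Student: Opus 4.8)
\emph{Proof strategy.} The plan is to prove both statements by induction on $k$, using as the only building block the pairwise correctness of the comparator $\scomp$. For the base case $k=2$ we have $\sortnet_2=\scomp$, and I would verify directly from the definition of $\scomp$ that in each of its branches (a)~$\set{\scomp_l}\cup\set{\scomp_h}=\set{I_1}\cup\set{I_2}$ --- coalescing two coalescable intervals preserves their union, and in the remaining branches the two intervals are merely reordered --- and (b)~$\scomp_l\prec\scomp_h$, which holds because an empty interval is $\prec$ everything, two non-coalescable non-empty intervals are $\prec$-comparable, and when $\scomp$ coalesces, $\scomp_l$ is the empty interval. The single fact I extract and reuse is: \emph{every $\scomp$ gadget preserves the union of its two input intervals and returns them $\prec$-ordered, with an empty output (if any) in the low position.}

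Statement~2 (union preservation) then needs essentially nothing further: $\sortnet_{k+1}$, being a composition of $\scomp$ gadgets, is a composition of maps each preserving the union of the two wires it touches and leaving all other wires unchanged, so $\bigcup_i\set{I_i}$ is an invariant of the whole network. I would therefore concentrate the remaining effort on Statement~1. For the inductive step I would follow the decomposition in Figure~\ref{fig:sortnet}: apply $\sortnet_k$ to $(I_1,\ldots,I_k)$ to obtain, by the induction hypothesis, a coalesced-and-sorted sequence $(\widehat I_1,\ldots,\widehat I_k)$; feed $(\widehat I_1,\ldots,\widehat I_k,I_{k+1})$ through $\bubbleup_{k+1}$; and then pass the result through $\bubbledown_{k+1}$.

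The heart of the argument is two invariant lemmas, each proved by a secondary induction on the number of $\scomp$ stages traversed. For $\bubbleup_{k+1}$: starting from a sequence whose non-empty entries are coalesced-and-sorted, plus one arbitrary extra entry at the bottom, after the chain the non-empty entries are again coalesced-and-sorted in the order given by their positions, although empty intervals (created by coalescing) may now be interspersed. The proof tracks the interval that is ``bubbling up'', which at each stage either reorders with a $\prec$-sorted neighbour (leaving everything below untouched) or coalesces with it, in which case the merged interval keeps bubbling and an empty interval is dropped into the vacated slot. Here I would use that the $\widehat I_i$ are pairwise ``gapped'' --- for non-empty $\widehat I_i,\widehat I_j$ with $j<i$, $\widehat I_i\prec\widehat I_j$ forces $\alpha_{\widehat I_j}\ge\beta_{\widehat I_i}+2$ --- to conclude that the absorbed intervals form a contiguous block and the surviving non-empties remain pairwise non-coalescable and $\prec$-ordered. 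For $\bubbledown_{k+1}$: it is a comparator network of $\bigO{k^2}$ gadgets (essentially bubble sort) applied to a sequence whose non-empty entries are already coalesced-and-sorted; since an $\scomp$ on two $\prec$-ordered non-coalescable non-empty intervals leaves them fixed while an $\scomp$ involving an empty interval sends the empty to the low position, the network pushes all empty intervals below all non-empty ones without disturbing the latter's relative order, producing a fully coalesced-and-sorted output. Combining the two invariants with the induction hypothesis yields Statement~1 for $\sortnet_{k+1}$.

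The step I expect to be the main obstacle is the bubble-up invariant, specifically the bookkeeping around coalescing-induced empty intervals: showing precisely that the bubbling interval absorbs exactly a contiguous block of its predecessors, that the empty intervals it leaves behind end up below all surviving non-empties of that block, and that no two surviving non-empties become coalescable. The base case, the union-preservation statement, and the bubble-down phase should be routine once these invariants are correctly formulated.
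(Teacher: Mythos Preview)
Your proposal is correct and follows essentially the same approach as the paper: induction on $k$, the base case handled directly from the definition of $\scomp$, union-preservation as an invariant of every $\scomp$ gadget, and the sortedness argument factored through the $\bubbleup_{k+1}$/$\bubbledown_{k+1}$ decomposition. Your post-$\bubbleup$ invariant (``non-empty entries remain coalesced-and-sorted by position, with empties possibly interspersed'') is a slightly cleaner formulation of what the paper states as ``the sequence splits into at most two coalesced-and-sorted subsequences with an empty interval at the division point, and all non-empty intervals are pairwise non-coalescable and $\prec$-ordered''; the two are equivalent here, and your treatment of $\bubbledown$ matches the paper's.
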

We prove the claim inductively for $k \ge 2$.  For the base case
($k=2$), we have $\sortnet_2\big((I_1, I_2)\big) =
\big(\scomp_l(I_1,I_2), \scomp_h(I_1, I_2)\big)$.  The claim follows
immediately from the definition of $\scomp$ above.  Next, we assume
that the claim holds for some $k \ge 2$.  Since $\widehat{I}_i \prec
\widehat{I}_j$ for $1 \le i < j \le k$, it follows from the definition
of $\prec$ that if $\set{\widehat{I}_j} = \emptyset$, then all
$\set{\widehat{I}_i}$ for $1 \le i < j$ are also $\emptyset$.
Furthermore, all non-empty intervals in $\widehat{I}_1, \ldots
\widehat{I}_k$ are disjoint and non-coalescable.

For the inductive step, we refer to the construction of
$\sortnet_{k+1}$ in Fig.~\ref{fig:sortnet}.  Let
$\big(\widetilde{I}_1, \ldots \widetilde{I}_{k+1}\big)$ be the
sequence of outputs of the $\bubbleup_{k+1}$ module in
Fig.~\ref{fig:sortnet}.  From the inductive properties of
$\widehat{I}_1, \ldots \widehat{I}_k$ above, from the structure of
$\bubbleup_{k+1}$ and from the properties of $\scomp$, it can be
easily shown that the following hold:
\begin{itemize}
  \item $\bigcup_{i=1}^{k+1} \set{\widetilde{I}_i} ~=~ \bigcup_{i=1}^k
    \set{\widehat{I}_i} \cup \set{I_{k+1}}$ $~=~ \bigcup_{i=1}^{k+1}
    \set{I_i}$. This holds invariantly at every level of the
    $\bubbleup_{k+1}$ module.
  \item For every non-empty interval $\widetilde{I}_i$ and $\widetilde{I}_j$,
    where $1 \le j < i \le k$, $\widetilde{I}_i$ and $\widetilde{I}_j$ are
    not coalescable. This holds invariantly for $1 \le i < j \le l$ after
    every level $l$ of the $\bubbleup_{k+1}$ module.
  \item The sequence $\big(\widetilde{I}_1, \ldots
    \widetilde{I}_{k+1}\big)$ can be divided into at most two
    sub-sequences $\big(\widetilde{I}_1, \ldots
    \widetilde{I}_{p}\big)$ and $\big(\widetilde{I}_{p+1}, \ldots
    \widetilde{I}_{k+1}\big)$ such that each of the sub-sequences are
    coalesced-and-sorted w.r.t. $\prec$ and $\set{\widetilde{I}_p} =
    \emptyset$, although the entire sequence $\big(\widetilde{I}_1,
    \ldots \widetilde{I}_{k+1}\big)$ may not be
    coalesced-and-sorted. This happens if $I_{k+1}$ coalesces with
    $\widehat{I_{p+1}}$ for the first time at level $p$ of the
    $\bubbleup_{k+1}$ module.
\end{itemize}
Thus, the sequence $\big(\widetilde{I}_1, \ldots
\widetilde{I}_{k+1}\big)$ \emph{almost} satisfies the properties of our
claim, except perhaps for a contiguous stretch of empty intervals
starting from $\widetilde{I}_p$.  Significantly, all pairs of distinct
intervals in $\big(\widetilde{I}_1, \ldots \widetilde{I}_{k+1}\big)$
are non-coalescable and can be ordered w.r.t. $\prec$. Therefore, the output
of the $\bubbledown_{k+1}$ network correctly orders all of the
intervals $\widetilde{I}_1, \ldots \widetilde{I}_{k+1}$, ensuring that
the claim holds at the output of $\sortnet_{k+1}$.

Note that the complete coalesce-and-sort network can be constructed
using Presburger circuits in $\bigO{k^2}$ time where $k$ is the count
of intervals to be coalesced and sorted.  In our case, $k = s.t$;
hence the coalesce-and-srt network can be constructed in
$\bigO{s^2.t^2}$ time.  The overall circuit also has $s.t$ intervals
generated from the first phase of the algorithm, and each such
interval can be constructed in time $\bigO{S}$, where $S =
\max\big(\max_{i=1}^s |\alpha_{1,i}|, \max_{i=1}^s |\beta_{1,i}|,
\max_{i=1}^t |\alpha_{2,i}|, \max_{i=1}^t |\beta_{2,i}|)$.  Hence,
the overall time-complexity is in $\bigO{s^2.t^2 + s.t.S}$.

\subsection{All $\PA$-definable Skolem functions}
We observe that the proof of Theorem~\ref{presburger-circuit-one-output}, in fact, shows something stronger, namely that every $\PA$-definable Skolem function can be obtained as a $\PA$-circuit using the computation of $L^r$ and $B^r$ above. This follows from the Claim~\ref{cl:intervals}, where the set of all possible correct values of Skolem functions is being represented. As a result, if we determine the output of $f_i$ by using the given $\PA$-definable Skolem function, say $f^\star$, to choose the interval and point within these intervals, we get a $\PA$ circuit that computes $f^\star$. More formally,

\begin{corollary} \label{all-PA-one-output}
Let $\varphi(\bar{x},y)$ be a $y$-modulo-tame $\PA$ formula, where $M$
is the maximum modulus in any modular constraint involving $y$.  For
every Skolem function $f^\star(\bar{x})$ for $y$ in $\forall \bar{x} \exists
y\, \varphi(\bar{x}, y)$, there exist choice functions $\rho:
\mathbb{Z}^n \rightarrow \{0,1,\ldots M-1\}$, $\pi: \mathbb{Z}^n
\times \{0,1,\ldots M-1\} \rightarrow \mathsf{Intrvl}$ and $\sigma:
\mathbb{Z}^n \times \mathsf{Intrvl} \rightarrow \mathbb{Z}$ such that
$f^\star(\bar{x}) = \sigma\big(\bar{x}, \pi(\bar{x},\rho(\bar{x}))\big)$,
where
\begin{itemize}
  \item $\mathsf{Intrvl}$ is the set of intervals corresponding to
    $L^r$ and $B^r$ for $0 \le r < M$, as used in the proof of
    Theorem~\ref{presburger-circuit-one-output}.
  \item $\pi(\bar{u}, r)$ evaluates to an interval $I$ in
    $L^r$ or $B^r$.
  \item $\sigma(\bar{u}, I)$ evaluates to an integer in $\set{I}$.
\end{itemize}
Morover, if $f^\star(\bar{x})$ is Presburger definable, all the choice
functions $\rho, \pi$ and $\sigma$ above are also Presburger definable.
\end{corollary}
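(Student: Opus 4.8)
The plan is to show that the interval decomposition built inside the proof of \cref{presburger-circuit-one-output} is \emph{canonical}: every Skolem function $f^\star$ for $y$ can be recovered from that construction by letting $f^\star$ itself decide, on each input $\bar u$, which residue class, which maximal conjunctive disjunct, and which of the constructed intervals the value $f^\star(\bar u)$ falls into. Write $\varphi=\bigvee_{i=1}^k\varphi_i$ with each $\varphi_i$ maximal conjunctive, and for $r\in[0,M-1]$ let $\varphi_i^{(r)}$ be $\varphi_i$ with every modulo constraint on $y$ replaced by its truth value under $y\equiv r\pmod M$; since $\varphi_i$ is $y$-modulo-tame this is well defined, and $\varphi_i(\bar u,v)\iff\varphi_i^{(r)}(\bar u,v)$ for every $v\equiv r\pmod M$. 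By \cref{cl:intervals}, the proof of \cref{presburger-circuit-one-output} produces for each $i$ and $r$ bounded intervals $L^r_{(i)}$ and an unbounded interval $B^r_{(i)}$ with $\set{L^r_{(i)}(\bar u)}\cup\set{B^r_{(i)}(\bar u)}=\{v\mid\varphi_i^{(r)}(\bar u,v)\}$; the set $\mathsf{Intrvl}$ consists of all these interval-valued functions (each belonging to a unique $L^r_{(i)}$ or $B^r_{(i)}$), and we fix an enumeration of this finite set. I would also record one fact: if $\neg\exists y\colon\varphi(\bar u,y)$ then every $B^r_{(i)}(\bar u)$ is empty, since an infinite ray in the solution set of the modulo-free relaxation $\varphi_i^{(r)}$ would meet the residue class $r+M\Z$ and hence yield a solution of $\varphi$.

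Next I would define the choice functions. On inputs where $\varphi$ is unsatisfiable a Skolem function is unconstrained, so I first replace $f^\star(\bar u)$ there by $0$ — still a Skolem function, and still Presburger-definable if $f^\star$ is, because $\{\bar u\mid\exists y\colon\varphi(\bar u,y)\}$ is $\PA$-definable. Then set $\rho(\bar u):=f^\star(\bar u)\bmod M$; let $\pi(\bar u,r)$ be the first $I$ in the fixed enumeration with $I\in L^r_{(i)}\cup B^r_{(i)}$ for some $i$ and $f^\star(\bar u)\in\set{I(\bar u)}$, and $\pi(\bar u,r):=B^r_{(1)}$ if there is no such $I$; and let $\sigma(\bar u,I):=f^\star(\bar u)$ if $f^\star(\bar u)\in\set{I(\bar u)}$, and otherwise let $\sigma(\bar u,I)$ be a fixed endpoint of $\set{I(\bar u)}$, or $0$ when $\set{I(\bar u)}=\emptyset$. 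By construction $\sigma(\bar u,I)\in\set{I(\bar u)}$ whenever the latter is non-empty.

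Then I would verify $f^\star(\bar x)=\sigma\big(\bar x,\pi(\bar x,\rho(\bar x))\big)$ pointwise. If $\varphi(\bar u,\cdot)$ is satisfiable, the Skolem property gives $\varphi(\bar u,v^\star)$ for $v^\star:=f^\star(\bar u)$, hence $\varphi_i(\bar u,v^\star)$ for some $i$; with $r:=v^\star\bmod M=\rho(\bar u)$, the equivalence $\varphi_i\iff\varphi_i^{(r)}$ on $v^\star+M\Z$ gives $\varphi_i^{(r)}(\bar u,v^\star)$, so $v^\star\in\set{L^r_{(i)}(\bar u)}\cup\set{B^r_{(i)}(\bar u)}$, that is, $v^\star\in\set{I(\bar u)}$ for some $I\in L^r_{(i)}\cup B^r_{(i)}$; then $\pi(\bar u,r)$ returns some such $I$, and $\sigma(\bar u,I)=v^\star$. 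If $\varphi(\bar u,\cdot)$ is unsatisfiable, then $f^\star(\bar u)=0=\rho(\bar u)$; if $0\in\set{I(\bar u)}$ for some $I$ in the enumeration with label $0$, then $\pi$ returns it and $\sigma$ outputs $0$; otherwise $\pi(\bar u,0)=B^0_{(1)}$, which is empty by the recorded fact, so again $\sigma$ outputs $0$. In both cases the composition equals $f^\star(\bar u)$.

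Finally, for the last claim of the corollary: every $I\in\mathsf{Intrvl}$ is described by Presburger circuits for its endpoints and flags, hence is a Presburger-definable interval-valued function of $\bar x$, so the condition $f^\star(\bar x)\in\set{I(\bar x)}$ is a $\PA$ formula in $\bar x$ once $f^\star$ is Presburger-definable; consequently $\rho$ (an arithmetic expression in $f^\star$ using $\fdiv_M$), $\pi$ (a finite case distinction over $\PA$-definable predicates, valued in the finite set $\mathsf{Intrvl}$) and $\sigma$ (a finite $\ite$-combination of $f^\star$ and the endpoint functions) are all Presburger-definable. I expect the main obstacle to be exactly this bookkeeping around unsatisfiable inputs, where $\sigma$ cannot genuinely land inside an empty $\set{I}$ and one must route the definition through the harmless normalization of $f^\star$ together with the necessarily-empty fallback $B^r_{(1)}$; a closely related subtlety is ensuring that the residue invoked in \cref{cl:intervals} is precisely $v^\star\bmod M$, which is exactly where $y$-modulo-tameness is needed so that $\varphi_i$ and $\varphi_i^{(r)}$ agree on $v^\star+M\Z$.
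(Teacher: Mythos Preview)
Your approach is essentially the paper's: the paper's own justification is nothing more than the observation that, since \cref{cl:intervals} represents the full set of satisfying values of $y$, one can let $f^\star$ itself choose the residue, the interval, and the point inside it. You supply far more detail than the paper does, and your recorded fact (that the unbounded intervals $B^r_{(i)}$ are empty whenever $\exists y\colon\varphi(\bar u,y)$ fails) is correct and exactly the kind of thing the paper glosses over.

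There is one small gap. You replace $f^\star$ by $0$ on unsatisfiable inputs, but the corollary asserts the decomposition for the \emph{given} $f^\star$, not for some other Skolem function; after normalization you have only established the identity for the modified function. The normalization is also unnecessary: drop it, keep $\rho(\bar u)=f^\star(\bar u)\bmod M$, let $\pi$ fall back to an unbounded interval of the relevant residue when no listed interval contains $f^\star(\bar u)$, and set $\sigma(\bar u,I)=f^\star(\bar u)$ whenever $\set{I(\bar u)}=\emptyset$ (and otherwise as you already do). On satisfiable inputs some interval always contains $f^\star(\bar u)$ by the Skolem property together with \cref{cl:intervals}; on unsatisfiable inputs your recorded fact guarantees the fallback interval is empty, so $\sigma$ again returns $f^\star(\bar u)$, and the identity holds for the original $f^\star$ everywhere. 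You are right that the clause ``$\sigma(\bar u,I)$ evaluates to an integer in $\set{I}$'' cannot literally be met when $\set{I(\bar u)}=\emptyset$; this is a looseness in the corollary's statement rather than in your argument, and the paper's one-sentence justification does not address it either.
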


\subsection{Proof of Theorem~\ref{thm:optimal}}\label[appsec]{app:optimal}
\thmOptimal*

\begin{proof}
  W.l.o.g. let $\varphi(\bar{x},y)$ be of the form $\bigvee_{i=1}^k
  \varphi_i(\bar{x},y)$, where each $\varphi_i(\bar{x},y)$ is a
  maximal conjunctive formula.  Let $A$ be an algorithm that takes
  $\varphi(\bar{x},y) \in \mathfrak{S}$ and computes $\exists y:
  \varphi(\bar{x},y)$ in time polynomial in $|\varphi|$.  Let the
  corresponding quantifer-eliminated formula be denoted
  $\widetilde{\varphi}(\bar{x})$.
  
  We now construct the formula $\varphi^\star = \bigvee_{i=1}^k
  \big(\varphi_i \wedge \widetilde{\varphi}\big)$, and claim that
  $\varphi^\star$ is semantically equivalent to $\varphi$ and is in
  $\PASyn$.

  The semantic equivalence is easily seen from the observation that
  $\varphi \implies \exists y: \varphi$.  Notice also that since
  $\varphi$ is $y$-modulo tame, and since the quantifier-eliminated
  formula $\widetilde{\varphi}$ does not have $y$ in any of its atomic
  formulas, so $\bigvee_{i=1}^k(\varphi_i \wedge \widetilde{\varphi})$,
  i.e. $\varphi^\star$ is also $y$-modulo tame.

  Furthermore, $\exists^{\mathsf{local}} y: \varphi^\star$ is, by
  definition, $\bigvee_{i=1}^k\big((\exists^{\mathsf{local}} y:
  \varphi_i) \wedge \widetilde{\varphi}\big)$, since
  $\widetilde{\varphi}$ does not have any sub-formula involving $y$.
  For the same reason, $\exists y: \varphi^\star$ is also
  $\bigvee_{i=1}^k \big((\exists y: \varphi_i) \wedge
  \widetilde{\varphi}\big)$.  However, $\widetilde{\varphi}$ is
  semantically equivalent to $\bigvee_{i=1}^k \big(\exists y:
  \varphi_i)$.  Hence $\exists y: \varphi^\star$ is semantically
  equivalent to $\exists y:\varphi$, or equivalently to
  $\widetilde{\varphi}$.
  
  Now, if possible, suppose $\exists^{\mathsf{local}} y:
  \varphi^\star$ is true and $\exists y: \varphi^\star$ is false for
  some assignment $u$ of $\bar{x}$.  This implies that
  $\widetilde{\varphi}$ is false for assignment $u$ of $\bar{x}$.
  Hence $(\exists^{\mathsf{local}} y: \varphi)\wedge \widetilde{\varphi}$ is
  also false, contradicting our premise that $\exists^{\mathsf{local}} y:
  \varphi^\star$ is true.

  Clearly, the above approach of constructing $\varphi^\star$ takes
  time polynomial in $|\varphi|$ if algorithm $A$ computes
  $\widetilde{\varphi}$ in time polynomial in $|\varphi|$.
\end{proof}

\section{Additional material on Section~\ref{sec:syntactic-nf}}
\label[appsec]{app:syntactic-nf}
\subsection{Auxiliary lemma}\label[appsec]{app:syntactic-nf-preserve-integrality}
\begin{lemma}\label{preserve-integrality}
	Suppose $A\colon\Q^k\to\Q^\ell$ is an affine map and all denominators in the coefficients of $A$ divide $M\in\Z$. If $\bar{u},\bar{u}'\in\Z^k$ with $\bar{u}\equiv\bar{u}'\pmod{M}$, then
	$A(\bar{u})\in\Z^\ell$ if and only if $A(\bar{u}')\in\Z^{\ell}$.
\end{lemma}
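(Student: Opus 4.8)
The plan is to exploit the fact that congruence modulo $M$ is preserved under affine maps whose denominators all divide $M$, after clearing denominators. First I would write $A(\bar{u}) = B\bar{u} + \bar{b}$ where $B \in \Q^{k \times \ell}$ and $\bar{b} \in \Q^\ell$, and by hypothesis every entry of $B$ and $\bar{b}$ can be written with denominator $M$ (or a divisor thereof). So $MB$ and $M\bar{b}$ have integer entries. The key observation is that $M \cdot A(\bar{u}) = (MB)\bar{u} + M\bar{b}$ and $M \cdot A(\bar{u}') = (MB)\bar{u}' + M\bar{b}$ are both integer vectors, and since $\bar{u} \equiv \bar{u}' \pmod{M}$ we have $(MB)(\bar{u} - \bar{u}') \equiv \bar{0} \pmod{M^2}$, in particular $\equiv \bar{0} \pmod{M}$; hence $M \cdot A(\bar{u}) \equiv M \cdot A(\bar{u}') \pmod{M}$ as integer vectors. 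Wait — I should be a touch more careful: what I actually want is componentwise reasoning about when a rational is an integer.

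So the cleaner route: fix a coordinate $j \in [1,\ell]$ and consider the $j$-th component $A(\bar{u})_j = \sum_{i=1}^k B_{ji} u_i + b_j$. Multiplying through by $M$ gives $M A(\bar{u})_j = \sum_i (M B_{ji}) u_i + M b_j$, an integer (all of $MB_{ji}, Mb_j \in \Z$). Now $A(\bar{u})_j \in \Z$ if and only if $M$ divides the integer $M A(\bar{u})_j$. The difference $M A(\bar{u})_j - M A(\bar{u}')_j = \sum_i (MB_{ji})(u_i - u_i')$ is divisible by $M$, because each $u_i - u_i'$ is divisible by $M$ (that is exactly $\bar{u} \equiv \bar{u}' \pmod{M}$) and $MB_{ji} \in \Z$. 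Therefore $M \mid M A(\bar{u})_j$ if and only if $M \mid M A(\bar{u}')_j$, i.e. $A(\bar{u})_j \in \Z$ iff $A(\bar{u}')_j \in \Z$. Since this holds for every coordinate $j$, we conclude $A(\bar{u}) \in \Z^\ell$ iff $A(\bar{u}') \in \Z^\ell$, which is the claim.

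There is essentially no obstacle here; the only mild subtlety to get right is that "all denominators in the coefficients of $A$ divide $M$" should be read as: each coefficient, written in lowest terms $a/b$, has $b \mid M$, so that $M$ times the coefficient is an integer — and one must include the constant term $\bar{b}$ among "the coefficients". With that reading the argument is just the elementary divisibility computation above, carried out coordinatewise. I would present it in exactly the two-step form: (1) clear denominators to land in $\Z$, (2) observe the two integer quantities differ by a multiple of $M$, hence one is divisible by $M$ iff the other is.
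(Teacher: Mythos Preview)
Your proof is correct and takes essentially the same approach as the paper: write $A=\tfrac{1}{M}B$ for an affine map $B$ with integer coefficients, observe that $A(\bar{u})\in\Z^\ell$ iff $M$ divides every entry of $B(\bar{u})$, and then use $\bar{u}\equiv\bar{u}'\pmod{M}$ to conclude $B(\bar{u})\equiv B(\bar{u}')\pmod{M}$. The paper's write-up is a bit terser (it does not unfold the coordinatewise computation), but the underlying idea is identical.
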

\begin{proof}
	Since all denominators in $A$ divide $M$, we can write
	$A=\tfrac{1}{M}B$ for some affine map $B\colon\Q^k\to\Q^\ell$ that has
	only integer coefficients. Then we have $B(\bar{u})\equiv
	B(\bar{u}')\pmod{M}$ and hence the following are equivalent:
	(i)~$A(\bar{u})\in\Z^\ell$, (ii)~$M$ divides every entry of
	$B(\bar{u})$, (iii)~$M$ divides every entry of $B(\bar{u}')$,
	(iv)~$A(\bar{u}')\in\Z^\ell$.
\end{proof}
\subsection{Proof of Theorem~\ref{syntactic-implies-semantic-nf}}\label[appsec]{app:syntactic-implies-semantic-nf}
\syntacticImpliesSemanticNF*
\begin{proof}
	Suppose $\varphi(\bar{x},\bar{y})$ is in $\DNNF$. Then $\varphi$ is $y_i$-modulo-tame for each $i\in[1,m]$ by definition. Thus, it remains to show that for all $\bar{x}\in\Z^n$ and all $(y_1,\ldots,y_i)\in\Z^i$ the implication
	\[ \lexists y_{i+1},\ldots,y_m\colon\varphi(\bar{x},\bar{y})\to \exists y_{i+1},\ldots,y_m\colon\varphi(\bar{x},\bar{y}). \]
	holds. Call the left-hand side $\tau(\bar{x}^i)$. Suppose $\tau$ is satisfied
	for some $\bar{x}^i$. Among all maximal conjunctive subformulas of
	$\tau$, at least one, say $\tau'$, must be satisfied. Then $\tau'$ is
	obtained by locally quantifying $y_{i+1},\ldots,y_m$ in some maximal
	conjunctive subformula $\varphi'$ of $\varphi$.  Since $\varphi$ is in
	$\DNNF$, and $\varphi'$ is a maximal conjunctive subformula, there are $M\in\Z$, a vector $(\bar{r},\bar{s})\in[0,M-1]^{n+m}$, and 
	affine transformations $A_1,\ldots,A_m\colon \Q^{n+i}\to\Q^{m-i}$ such
	that $\varphi'$ is a positive Boolean combination of building blocks as
	in \eqref{piece-dnnf}, where $\psi$ is an atomic formula. Moreover, $M$ divides all denominators of coefficients in $A_i$ for $i\in[1,m]$, and we have $A_i(\bar{r}^i)\in\Z^{m-i}$. 

	Observe
	that after locally quantifying $y_{i+1},\ldots,y_m$ in such a buildling
	block, the resulting formula will still imply
	$\psi(\bar{x}^i,A_i(\bar{x}^i))$ and $\bar{x}^i\equiv \bar{r}^i$,
	because these subformulas do not contain any of the variables
	$y_{i+1},\ldots,y_m$. In particular, since $A_i(\bar{r}^i)\in\Z^{m-i}$, we may conclude $A_i(\bar{x}^i)\in\Z^{m-i}$ (see \cref{preserve-integrality}).

	This implies that
	$\varphi'(\bar{x}^i,A_i(\bar{x}^i))$ holds, because every building
	block whose counterpart in $\tau'$ that is satisfied, will be satisfied
	in $\varphi'$ by $(\bar{x}^i,A_i(\bar{x}^i))$. This implies that
	$(\bar{x}^i,A_i(\bar{x}^i))$ is an integral solution to the entire formula $\varphi$.
\end{proof}

\subsection{Proof of Theorem~\ref{construct-syntactic-nf}}\label[appsec]{app:construct-syntactic-nf}
Before we prove \cref{construct-syntactic-nf}, we need a version of \cref{affine-transformations} in the setting of linear inequalities \emph{and modulo constraints}.
\begin{proposition}\label{affine-transformations-conjunctions}
	Let $\psi(\bar{x},\bar{y})$ be a conjunction of atomic formulas over
	the variables $\bar{x}=(x_1,\ldots,x_n)$ and $\bar{y}=(y_1,\ldots,y_m)$, and let
	$\bar{b}\in\Z^m$ be a vector.  If the formula $\psi(\bar{x},\bar{b})$
	is satisfiable over $\Z^n$, then it is satisfied by an integral
	vector of the form $D\bar{b}+\bar{d}$, where $D\in\Q^{n\times m}$,
	$\bar{d}\in\Q^n$ with $\fracnorm{D}$ and $\fracnorm{\bar{d}}$ are
	exponential in the size of $\psi$.
\end{proposition}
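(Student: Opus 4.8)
The plan is to reduce to \cref{affine-transformations} by eliminating the modulo constraints in $\psi$ at the cost of introducing new variables, and then carefully tracking how the affine-transformation bound propagates back. First I would write $\psi(\bar{x},\bar{b})$ as a conjunction of linear inequalities and modulo constraints; let $M$ be the least common multiple of all moduli appearing in $\psi$. Each modulo constraint $\sum_j a_j x_j + \sum_k a'_k y_k \equiv r \pmod{M'}$ (with $M' \mid M$) is semantically equivalent to the existence of a fresh integer variable $z$ with $\sum_j a_j x_j + \sum_k a'_k y_k = M' z + r$. Introducing one such fresh variable per modulo constraint, we obtain an equivalent system $\psi'(\bar{x},\bar{y},\bar{z})$ consisting \emph{only} of linear equalities and inequalities, where the number of new variables is bounded by $|\psi|$. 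Encoding the equalities as pairs of inequalities, $\psi'(\bar{x},\bar{b},\bar{z})$ becomes a system $A\binom{\bar{x}}{\bar{z}} \le C\bar{b} + \bar{c}$ for integer matrices $A,C$ and integer vector $\bar{c}$, all of size polynomial in $|\psi|$ (since the moduli, encoded in binary, contribute polynomially many bits, and $M'$ divides $M$ whose bit-length is at most $|\psi|$).

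Next I would apply \cref{affine-transformations} to this enlarged system, viewing $C\bar{b}+\bar{c}$ as the right-hand side ``$\bar{b}$'' of that proposition. Here the subtlety is that \cref{affine-transformations} gives a solution as an affine function of the right-hand side $C\bar{b}+\bar{c}$, i.e.\ of the form $D'(C\bar{b}+\bar{c}) + \bar{d}'$ with $\fracnorm{D'}, \fracnorm{\bar{d}'} \le N \Delta^2$, where $N$ is the number of columns of $A$ and $\Delta$ bounds the subdeterminants of $A$. By the Hadamard bound, $\log \Delta$ is polynomial in the bit-size of $A$, hence polynomial in $|\psi|$; so $N\Delta^2$, and therefore $\fracnorm{D'}$ and $\fracnorm{\bar{d}'}$, are at most exponential in $|\psi|$. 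Composing, the solution for $\binom{\bar{x}}{\bar{z}}$ is $D'C\bar{b} + (D'\bar{c} + \bar{d}')$, an affine function of $\bar{b}$; and the fractional norm of the product $D'C$ and of $D'\bar{c}+\bar{d}'$ remains at most exponential in $|\psi|$ (a product of two matrices multiplies numerators and common-clears denominators, so $\fracnorm{\cdot}$ grows at most by a product over entries, staying exponential since the dimensions are polynomial). Finally I would project: discarding the rows of $D'C$ and $D'\bar{c}+\bar{d}'$ corresponding to the auxiliary variables $\bar{z}$ leaves a pair $(D,\bar{d})$ with $D\in\Q^{n\times m}$, $\bar{d}\in\Q^n$, $\fracnorm{D},\fracnorm{\bar{d}}$ exponential in $|\psi|$, such that $D\bar{b}+\bar{d}$ is an integral $\bar{x}$-solution of $\psi(\bar{x},\bar{b})$ — integral because the full vector $\binom{\bar{x}}{\bar{z}}$ returned by \cref{affine-transformations} is integral, and integrality is inherited by any subvector.

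The main obstacle I anticipate is the bookkeeping around composition of affine maps: one must verify that multiplying $D'$ (with exponentially-bounded fractional norm) by the \emph{integer} matrix $C$ of polynomial bit-size, and adding $\bar{d}'$, keeps the fractional norm exponential rather than, say, doubly exponential. This is fine because $C$ has polynomial bit-entries and polynomially many rows/columns, so each entry of $D'C$ is a sum of polynomially many products of an exponential-norm rational with a polynomial-size integer — still exponential after reduction to lowest terms — but it requires a careful lemma on how $\fracnorm{\cdot}$ behaves under matrix multiplication and addition. A secondary point to be careful about is ensuring the count of auxiliary variables $\bar{z}$ is genuinely at most $|\psi|$ and that the moduli $M'$ all divide the global $M$, so that clearing denominators is uniform; both follow directly from the definition of $|\psi|$ and of modulo constraints given in the preliminaries.
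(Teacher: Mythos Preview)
Your proposal is correct. Both your argument and the paper's reduce to \cref{affine-transformations} by eliminating the modulo constraints and then pulling the affine bound back through a composition, but the elimination step differs. You introduce one fresh integer variable per modulo constraint, turning $\psi(\bar{x},\bar{b})$ into a pure inequality system in $(\bar{x},\bar{z})$, apply \cref{affine-transformations}, and project onto the $\bar{x}$-coordinates. The paper instead fixes the residue class $\bar{r}\in[0,M-1]^n$ of some solution (with $M$ the lcm of all moduli), substitutes $\bar{x}=M\bar{z}+\bar{r}$, and works with the resulting inequality system in $\bar{z}$ alone; the back-substitution is then just $\bar{x}=M(E\bar{b}+\bar{e})+\bar{r}$. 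Both routes require exactly the composition bookkeeping you flag as the main obstacle---the right-hand side handed to \cref{affine-transformations} is already an affine function of $\bar{b}$, so one must check that composing with the returned $(D',\bar{d}')$ keeps $\fracnorm{\cdot}$ singly exponential. You make this explicit and argue it correctly (polynomially many terms, each a product of an exponential-norm rational and a polynomial-bit integer, summed over a common denominator of at most exponential size); the paper's proof glosses over the same point. Your introduction of the global lcm $M$ is harmless but not actually needed in your variant, since each constraint carries its own modulus $M'$ into the equality.
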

\begin{proof}
	Let $M$ be the least common multiple of all moduli in $\psi$. Since
	$\psi(\bar{x},\bar{b})$ is satisfiable over $\Z^n$, there is a vector
	$\bar{r}\in[0,M-1]^n$ such that there is a solution $\bar{x}\in\Z^n$
	with $\bar{x}\equiv\bar{r}\pmod{M}$. There is also some
	$\bar{s}\in[0,M-1]^m$ with $\bar{b}\equiv\bar{s}\pmod{M}$. We may
	therefore assume that $\psi(\bar{x},\bar{y})$ is of the form 
	\[\varphi(\bar{x},\bar{y})~\wedge~ (\bar{x},\bar{y})\equiv(\bar{r},\bar{s})\pmod{M}, \]
	where $\varphi$ is a conjunction of linear inequalities. In this form, we can observe that $\bar{x}$ is a solution to $\psi(\bar{x},\bar{b})$ if and only if there exists a vector $\bar{z}\in\Z^m$ such that $\bar{x}=M\cdot\bar{z}+\bar{r}$ and $\varphi(M\bar{z}+\bar{r})$. Therefore, the conjunction
	\[ \varphi(M\cdot\bar{z}+\bar{r},\bar{b}) \]
	of linear inequalities in $\bar{z}$ has a solution. By \cref{affine-transformations}, it has an integral solution of the form $E\bar{b}+\bar{e}$, where $E\in\Q^{n\times m}$, $\bar{e}\in\Q^n$ with $\fracnorm{E}$ and $\fracnorm{\bar{e}}$ at most exponential in the size of $\psi$. But then $M\cdot (E\bar{b}+\bar{e})+\bar{r}=(ME)\bar{b}+(M\bar{e}+\bar{r})$ is an integral solution to $\psi(\bar{x},\bar{b})$, with size bounds as desired.
\end{proof}

\constructSyntacticNF*
\begin{proof}
	Let $\varphi$ be a quantifier-free PA formula. 
	Since we can write $\varphi$ as an exponential disjunction of
	polynomial-sized conjunctions of atoms (by bringing it into DNF), and a disjunction of a $\DNNF$\ formula is again in $\DNNF$, it suffices to perform the translation for a
	single conjunction of atoms. 

	Now \cref{affine-transformations-conjunctions} tells us that if for some $\bar{x}^i$, there is a $\bar{y}^i$ with $\varphi(\bar{x}^i,\bar{y}^i)$, then there is such a $\bar{y}^i$ that can be written as $D(\bar{x}^i)$ for some affine transformation $D\colon\Q^{n+i}\to\Q^{m-i}$, where the coefficients in $D$ have at most polynomially many bits.

	This means, for each $i=1,\ldots,m$, there is a list of exponentially many affine transformations $D_{i,1},\ldots,D_{i,s}$ (of polynomial bit-size) such that if $\varphi(\bar{x}^i,\bar{y}^i)$ has a solution $\bar{y}^i$, then it has one of the form $D_{i,j}(\bar{x}^i)$ for some $j\in[1,s]$.

	Let $\varphi_1,\ldots,\varphi_\ell$ be the atomic formulas in $\varphi$.
	Let $F$ be the set of functions $f\colon[1,m]\to[1,s]$, which we think of as assignments of an affine transformation $D_{i,f(i)}$ to each $i\in[1,m]$. Moreover, for each $f\in F$, let $M_f$ be the least common multiple of the denominators occurring in $D_{1,f(1)},\ldots,D_{m,f(m)}$ and of all modulo occurring in $\varphi$. Note that then, $M_f$ has polynomial bit-size. Finally, let $I_f\subseteq [0,M-1]^{m+n}$ be the set of all vectors $(\bar{r},\bar{s})\in[0,M-1]^{m+n}$ such that $D_{i,f(i)}(\bar{r}^i)\in\Z^{m-i}$.
	Now consider the formula $\tau=\bigvee_{f\in F}\bigvee_{(\bar{r},\bar{s})\in I_f}\bigwedge_{k=1}^\ell
	\tau_{f,\bar{r},\bar{s},k}$, where
	$\tau_{f,\bar{r},\bar{s},k}$ is the formula
	\begin{multline} \left(\varphi_k(\bar{x},\bar{y})\vee \bigvee_{i=1}^m \bar{y}^i=D_{i,f(i)}(\bar{x}^i)\right) \\
	\wedge \bigwedge_{i=1}^m\varphi_k(\bar{x}^i,D_{i,f(i)}(\bar{x}^i))\wedge (\bar{x},\bar{y})\equiv (\bar{r},\bar{s})\pmod{M_f}. \label{building-block-dnnf}\end{multline}
	Note that $\tau$ is almost in $\DNNF$---the only condition that might be violated is modulo-tameness: It is possible that modulo constraints inside some $\varphi_k$ are not w.r.t.\ $M_f$. However, this is easy to repair: One can replace each $\tau_{f,\bar{r},k}$ by a disjunction of (exponentially many) building blocks where these modulo constraints are written modulo $M_f$. This is possible because by construction of $M_f$, all the moduli in $\varphi$ divide $M_f$. This new formula will be in $\DNNF$ and equivalent to $\tau$. Thus it remains to show that $\tau$ is equivalent to $\varphi$.

	Suppose $\varphi(\bar{x},\bar{y})$ holds. Then there is a function
	$f\in F$ such that
	$\varphi(\bar{x}^i,D_{i,f(i)}(\bar{x}^i))$ and the vectors $D_{i,f(i)}(\bar{x}^i)$ for $i\in[1,m]$ are all integral. This means, there is a vector $(\bar{r},\bar{s})\in I_f$ with $(\bar{x},\bar{y})\equiv (\bar{r},\bar{s})\pmod{M_f}$.
	In particular,
	$\varphi_k(\bar{x}^i,D_{i,f(i)}(\bar{x}^i))$ holds for each
	$k=1,\ldots,\ell$. This implies that \eqref{building-block-dnnf} holds
	with the left disjunct in the large parenthesis.

	Conversely, suppose \cref{building-block-dnnf} is satisfied for some $f\in F$, $(\bar{r},\bar{s})\in I_f$, and some $\bar{x}\in\Z^n$ and some $\bar{y}\in\Z^m$. Clearly, if in the large parenthesis, we satisfy $\varphi_k(\bar{x},\bar{y})$ for every $k=1,\ldots,\ell$, then $\varphi(\bar{x},\bar{y})$ is satisfied by definition. So consider the case where for some $i\in[1,m]$, we satisfy $\bar{y}^i=D_{i,f(i)}(\bar{x}^i)$ rather than $\varphi_k(\bar{x},\bar{y})$. Then the second conjunct of $\tau_{f,\bar{r},\bar{s},k}$ tells us that in particular, the assertion $\varphi_k(\bar{x}^i,D_{i,f(i)})=\varphi_k(\bar{x}^i,\bar{y}^i)=\varphi_k(\bar{x},\bar{y})$ holds for each $k$. The latter means that $(\bar{x},\bar{y})$ satisfies $\varphi$.
\end{proof}

\subsection{Proof of Theorem~\ref{succinctness-semantic-syntactic}}\label[appsec]{app:succinctness-semantic-syntactic}
\succinctnessSemanticSyntactic*
\begin{proof}
	Consider $\Psi_n(x,y):=x<y\le x+2^n \wedge y\equiv 0\pmod{2^n}$. Then
	$\Psi_n$ is in $\PASyn$: It is $y$-modulo-tame, and for
	every $x\in\Z$, there is a $y\in\Z$ with $\Psi_n(x,y)$, so that the
	condition on local and global quantification holds as well.

	Suppose $\psi_n$ is a $\DNNF$ equivalent. 
	Since $\psi_n(x,y)$ is in $\DNNF$, there is a list of affine
	transformations $A_1,\ldots,A_\ell$ (one for each maximal conjunctive
	subformula), such that for every $x$ with $\exists y\colon\psi(x,y)$,
	we also have $\psi(x,A_j(x))$ for some $j\in[1,\ell]$. These
	transformations must syntactically appear in the formula $\psi_n$,
	hence $\ell\le|\psi_n|$. We claim that $\ell\ge 2^n$, which implies
	$|\psi_n|\ge 2^n$ and thus the
	\lcnamecref{succinctness-semantic-syntactic}.

	Observe that $\Phi_n$ defines a function $f\colon\Z\to\Z$, which maps
	$x$ to the smallest multiple of $2^n$ above $x$. This means, for every
	$x\in\Z$, there must be a $j\in[1,\ell]$ with $A_j(x)=f(x)$. Write
	$A_j(x)=a_jx+b_j$ for some $a_j,b_j\in\Q$. Note that if $a_j\ne 1$,
	then $A_j$ can only provide the correct value for some finite interval
	$[-M,M]$ of numbers $x$, because $a_jx+b_j=f(x)$ implies
	$|(a_j-1)x+b_j|=|a_jx+b_j-x|\le 2^n$.  Thus, for $|x|>M$, the only
	remaining $A_j$ are those with $A_j(x)=x+b_j$.  However, this means for
	every residue $r\in[0,2^n-1]$, there must be some $j\in[1,\ell]$ with
	$b_j=r$. This implies $\ell\ge 2^n$.
\end{proof}

\label{afterbibliography}
\newoutputstream{pagestotal}
\openoutputfile{main.pagestotal.ctr}{pagestotal}
\addtostream{pagestotal}{\getpagerefnumber{afterbibliography}}
\closeoutputstream{pagestotal}

\newoutputstream{todos}
\openoutputfile{main.todos.ctr}{todos}
\addtostream{todos}{\arabic{@todonotes@numberoftodonotes}}
\closeoutputstream{todos}
\end{document}